\newlength\aftertitskip     \newlength\beforetitskip
\newlength\interauthorskip  \newlength\aftermaketitskip
\def\maketitle{\par
	\begingroup
	\def\thefootnote{\fnsymbol{footnote}}
	\def\@makefnmark{\hbox to 4pt{$^{\@thefnmark}$\hss}}
	\@maketitle \@thanks
	\endgroup
	\let\maketitle\relax \let\@maketitle\relax
	\gdef\@thanks{}\gdef\@author{}\gdef\@title{}\let\thanks\relax}
\def\@startauthor{\noindent \normalsize\bf}
\def\@endauthor{}
\def\@starteditor{\noindent \small {\bf Editor:~}}
\def\@endeditor{\normalsize}
\def\@maketitle{\vbox{\hsize\textwidth
		\linewidth\hsize \vskip \beforetitskip
		{\begin{center} \LARGE\@title \par \end{center}} \vskip \aftertitskip
		{\def\and{\unskip\enspace{\rm and}\enspace}%
			\def\addr{\small\it}%
			\def\email{\hfill\small\tt}%
			\def\name{\normalsize\bf}%
			\def\AND{\@endauthor\rm\hss \vskip \interauthorskip \@startauthor}
			\@startauthor \@author \@endauthor}
}}
\newsavebox{\imagebox}
\newcommand{\reals}{\mathbb{R}}
\newcommand{\posdef}{\mathbb{P}}
\newcommand{\pd}{\posdef}
\newcommand{\frc}{FRC\xspace}
\newcommand{\orc}{ORC\xspace}
\newcommand{\Ric}{{\rm Ric}}
\newcommand{\w}{\omega}
\newcommand{\Bp}{\mathcal{B}}
\newcommand\restr[2]{{% we make the whole thing an ordinary symbol
		\left.\kern-\nulldelimiterspace % automatically resize the bar with \right
		#1 % the function
		\vphantom{\big|} % pretend it's a little taller at normal size
		\right|_{#2} % this is the delimiter
}}
\newtheorem{theorem}{Theorem}[section]
\newtheorem{lem}[theorem]{Lemma}
\theoremstyle{definition}
\newtheorem{defn}[theorem]{Definition}
\newtheorem{rmk}[theorem]{Remark}
\newtheorem{example}[theorem]{Example}
\numberwithin{equation}{section}
\newcommand{\EE}{\mathbb{E}}
\newcommand{\RR}{\mathbb{R}}
\newcommand\blfootnote[1]{%
	\begingroup
	\renewcommand\thefootnote{}\footnote{#1}%
	\addtocounter{footnote}{-1}%
	\endgroup
}
\title{Curvature-based Clustering on Graphs}
\author{\name Yu Tian$^*$ \email yu.tian@su.se \\
	\addr Nordita\\
	Stockholm University and KTH Royal Institute of Technology\\
	Stockholm, SE-106 91, Sweden
	\AND
	\name Zachary Lubberts$^*$ \email zlubber1@jhu.edu \\
	\addr Department of Applied Mathematics and Statistics\\
	Johns Hopkins University\\
	Baltimore, MD 21218, USA
	\AND
	\name Melanie Weber \email{mweber@seas.harvard.edu}\\
	  \addr{Harvard University\\
	  Cambridge, MA 02138, USA}
	}
\begin{document}
	\maketitle
	\blfootnote{$^*$equal contribution}
	\blfootnote{A preliminary version of part of this work was presented at the NeurIPS Workshop on Symmetry and Geometry in Neural Representations 2022~\citep{tian2022mixed}.}
	
	%\textbf{Tasks:}
	%\begin{itemize}
	%    \item {\color{olive} TODO Yu}
	%    \item {\color{orange} TODO Zach}
	%    \item {\color{teal} TODO Melanie}
	%    \item {\color{blue} to be addressed later}
	%    \item {\color{red} double-check in final pass}
	%\end{itemize}
	
	\begin{abstract}
		Unsupervised node clustering (or \emph{community detection}) is a classical graph learning task. In this paper, we study algorithms, which exploit the geometry of the graph to identify densely connected substructures, which form clusters or communities.  Our method implements discrete Ricci curvatures and their associated geometric flows, under which the edge weights of the graph evolve to reveal its community structure. We consider several discrete curvature notions and analyze the utility of the resulting algorithms. In contrast to prior literature, we study not only single-membership community detection, where each node belongs to exactly one community, but also mixed-membership community detection, where communities may overlap.
		For the latter, we argue that it is beneficial to perform community detection on the line graph, i.e., the graph's dual. We provide both theoretical and empirical evidence for the utility of our curvature-based clustering algorithms. In addition, we give several results on the relationship between the curvature of a graph and that of its dual, which enable the efficient implementation of our proposed mixed-membership community detection approach and which may be of independent interest for curvature-based network analysis.
	\end{abstract}
	
	%\textcolor{blue}{Let's take a close look at this structure and see if this is the organization we want.}
	\tableofcontents
	
	%%%%%%%
	\section{Introduction}
	Relational data, such as graphs or networks, is ubiquitous in machine learning and data science.  Consequently, a large body of literature devoted to studying the structure of such data has been developed.  
	Clustering on graphs, also known as \emph{community detection} or \emph{(unsupervised) node clustering}, is of central importance to the study of relational data. It seeks to identify  densely interconnected substructures (\emph{clusters}) in a given graph. Such structure is ubiquitous in relational data: We may think of friend circles in social networks, metabolic pathways in biochemical networks or article categories in Wikipedia. A standard mathematical model for analyzing community structure in graphs is the \emph{Stochastic Block Model (SBM)}, which has led to fundamental insights into the detectability of communities~\citep{abbe2017community}.
	Classically, communities are identified by clustering the nodes of the graph. Popular methods include the Louvain algorithm~\citep{Blondel_2008}, the Girvan-Newman algorithm~\citep{girvan-newman} and Spectral clustering~\citep{Cheeger,fiedler1973algebraic,spielman1996spectral}. Recently, there has been growing interest in another class of algorithms, which apply a geometric lens to community detection. 
	
	\emph{Graph curvatures} provide a means to understanding the structure of networks through a characterization of their geometry. More generally, curvature is a classical tool in differential geometry, which is used to characterize the local and global properties of geodesic spaces. While originally defined in continuous spaces, discrete notions of curvature have recently seen a surge of interest. Of particular interest are discretizations of \emph{Ricci curvature}, which is a local notion of curvature that relates to the volume growth rate of the unit ball in the space of interest (geodesic dispersion). 
	Curvature-based analysis of relational data (see, e.g.,~\citep{WSJ1,WSJ2}) has been applied in many domains, including to biological~\citep{elumalai2022graph,weber2017curvature,tannenbaum2015ricci}, chemical~\citep{leal2021forman,saucan2018discrete}, social~\citep{jost_social} and financial networks~\citep{sandhu2016ricci}. 
	
	Community detection via graph curvatures is motivated by the observation that edges between communities (so called \emph{bridges}, see Fig.~\ref{fig:cluster-mixed}(a)) have low Ricci curvature. By identifying and removing such bridges, one can learn a partition of the graph into its communities. This simple idea has given rise to a series of curvature-based approaches for community detection~\citep{ni2019community,sia2019ollivier,weber2018detecting,tian2022mixed}. 
	However, the picture is far from complete. The proposed algorithms vary significantly in their implementation, in particular in the choice of the curvature notion that they utilize. In this paper, we will elucidate commonalities and differences among variants of the two most common curvature notions in terms of their utility in community detection. To this end, we propose a unifying framework for curvature-based community detection algorithms and perform a systematic analysis of the advantages and disadvantages of each curvature notion. The insights gained from this analysis give rise to several improvements, which mitigate the identified shortcomings in accuracy and scalability. 

	\begin{figure}[t]
		\centering
		\begin{subfigure}[t]{.45\linewidth}
			\centering
			\includegraphics[width=0.5\linewidth]{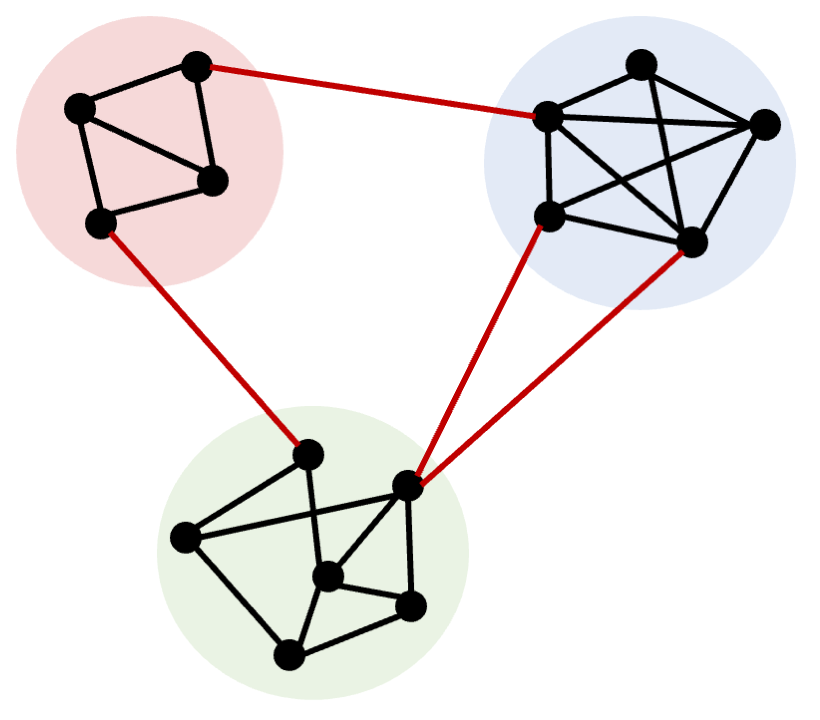}
			\caption{Single-Membership}\label{fig:cluster-single}
		\end{subfigure}
		\hfill
		\begin{subfigure}[t]{.5\linewidth}
			\centering
			\includegraphics[width=0.4\linewidth]{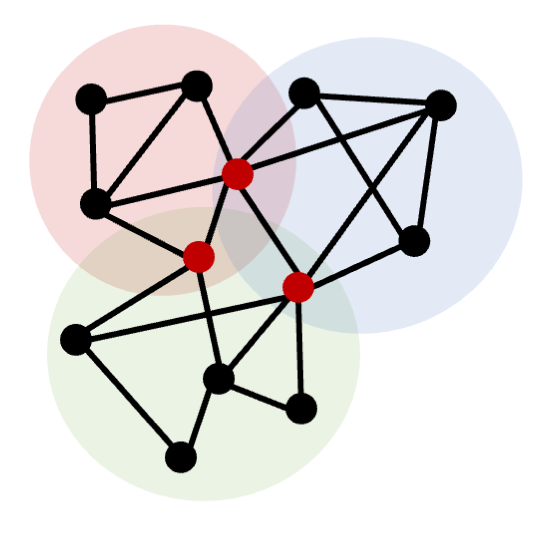}
			\caption{Mixed-Membership}\label{fig:cluster-mixed}
		\end{subfigure}
		\caption{Clustering on Graphs.}
	\end{figure}
	The present paper also addresses a second important gap in the literature. 
	Existing work on partition-based community detection, specifically on curvature-based approaches, focuses almost exclusively on detecting communities with \emph{unique} membership, i.e., each node can belong to only one community (\emph{single-membership community detection}). However, in many complex systems that generate relational data, nodes may belong to more than one community: A member of a social network might belong to a circle of high school friends and a circle of college friends, a protein may have different functional roles in a biochemical network etc. This creates a need for algorithms that can cluster nodes with \emph{mixed} membership~\citep{airoldi2008mixed,yang2013overlapping,zhang2020overlap}, so-called \emph{mixed-membership community detection}. Observe that the mixed-membership structure precludes the existence of bridges between communities, because the communities overlap (see Fig.~\ref{fig:cluster-mixed}(b)). This renders community detection methods that rely on graph partitions, such as the curvature-based methods discussed above, inapplicable. In this paper, we propose a principled curvature-based approach for detecting mixed-membership community structure. We show that while partition-based approaches do not recover the underlying community structure when applied to the original graph, they perform well on its dual, the \emph{line graph}. The line graph encodes the connectivity between the edges of the original graph (see Fig.~\ref{fig:construction}). This reformulation of the relational information in the graph allows for disentangling the overlapping community structure (see Fig.~\ref{fig:intro}). When applied to the line graph, our curvature-based community detection approach correctly identifies
	edges in the line graph, which can be cut to partition the graph into its communities and assign node labels that reflect mixed-membership community structure. Utilizing fundamental relationships between the curvature of a graph and its dual, we demonstrate that curvature-based mixed-membership community detection results in scalable and accurate methods.
	
	\subsection{Overview and Summary of Contributions}
	Our main contributions are as follows:
	\begin{enumerate}
		\item We propose a unifying framework for curvature-based community detection (unsupervised node clustering) algorithms (sec.~\ref{sec:overview}). Utilizing this framework, we systematically investigate the strengths and weaknesses of different discrete curvature notions (specifically, variants of Forman's and Ollivier's Ricci curvature, sec.~\ref{sec:exp}, sec.~\ref{sec:comparison}).
		\item We adapt our proposed framework to mixed-membership community detection (sec.~\ref{sec:algorithm}), providing the first curvature-based algorithms for this setting. %The latter requires the study of discrete curvature on the line graph, the dual of the original graph. Importantly, we show that an explicit construction of the full line graph is not always necessary. Specifically, we give one instance of our approach that circumvents this costly subroutine by exploiting the relationship between the curvature of a graph and its dual (sec.~\ref{sec:no-line-graph}).
		\item We elucidate significant scalability issues in community detection approaches that utilize Ollivier's notion of discrete curvature. To overcome these issues, we propose an effective approximation, which may be of independent interest in the broader area of curvature-based network analysis (sec.~\ref{sec:orc-approx}, sec.~\ref{sec:line-weighted}). 
		\item We demonstrate the utility of our framework through theoretical analysis (sec.~\ref{sec:orc-theory}) and through a series of systematic experiments with synthetic and real network data. We further benchmark curvature-based methods against classical community detection approaches (sec.~\ref{sec:exp}).
		\item As a byproduct, we derive several fundamental relations between discrete curvatures in a graph and its dual. Those relations may be of independent interest (sec.~\ref{sec:line-graph-curv}). 
	\end{enumerate}

	%%%%%%%
	\section{Clustering on Graphs}\label{sec:overview}
	%% overview of clustering tasks and blueprint of curvature-based algorithm
	
	Community structure is a hallmark feature of networks, characterized by clusters of nodes that have more internal connections than connections to nodes in other clusters. 
	In networks with mixed-membership structure, which quantifies the idea of overlapping communities, nodes may belong to more than one cluster. 
	
	In the following, let $G=\left(V,E,W\right)$ denote a graph with vertex set $V$ and edge set $E$.  We may associate weights with both vertices and edges, defined as functions $\w(v): V \rightarrow \reals_+$
	and $\w(e): E \rightarrow \reals_+$. %{\color{red}Throughout the paper, we require that $\w(e)$ fulfills the triangle inequality, i.e., for three edges $e,e',e''$ (with $\w(e),\w(e') < \w(e'')$) that form a triangle, we have that $\w(e) + \w(e') \geq \w(e'')$.}   
	The distance measure on $G$ is the standard \emph{path distance} $d$, i.e.,
	\begin{equation}
		d_G(u,v) := \inf_{\mathcal{P}} \sum_{i=1}^{n-1} w(\{z_i, z_{i+1}\}),
	\end{equation}
	where the infimum is taken over all $u-v$ paths $\mathcal{P}=\{z_i\}_{i=1}^n\subseteq V$, with $\{z_i,z_{i+1}\}\in E$ for each $i$, and $z_1=u, z_n=v$.
	We further define the \emph{degree} $d_u$ of a node $u$ as the weighted sum of its adjacent edges, i.e.,
	\begin{equation}
		d_u := \sum_{u \sim e} w(e) \; .
	\end{equation}
	We remark that the terms ``network'' and 	``graph'' are often used interchangeably. Following the convention in the graph learning literature, we use 	``graph'', if we refer to the mathematical object and 	``network'' when referring to a data set or specific instance of a graph. We will use the terms ``node" and 	``vertex'' interchangeably.
	
	\subsection{Single-Membership Node Clustering}
	%\color{blue}
	%\begin{itemize}
	%\item Classical single-membership community detection problem
	%\item Stochastic Block Model
	%\item Idea of partition-based methods
	%\end{itemize}
	%\color{black}
	
	\subsubsection{Single-Membership Community Structure} 
	% {\color{olive}TODO Yu} 
	In classical community detection, each node can only belong to one community, which characterizes the community structure as being single-membership. Various algorithms exist by virtue of interdisciplinary expertise \citep{Blondel_2008,girvan-newman,vonLuxburg2007spectra}, generally aiming to optimise a quality function with respect to different partitions of the network. Mathematically, if we denote a quality function as $Q$ which increases as the resulting community structure becomes clearer from some perspective, such as the modularity \citep{girvan-newman}, and denote a partition of the vertices as $\{\Bp_1,\dots, \Bp_{n_b}\}$ that are mutually exclusive and exhaustive, i.e., $\Bp_h\cap \Bp_l = \emptyset$ when $h\ne l$ and $\cup_{h=1}^{n_b}\Bp_h = V$, with $n_b$ being the number of communities, then the classical community detection can be formulated as the following combinatorial optimization problem
	\begin{align*}
		\max_{n_b}\max_{\{\Bp_1,\dots, \Bp_{n_b}\}}Q(\{\Bp_1,\dots, \Bp_{n_b}\}).
	\end{align*}
	In this generality, the community detection problem is NP-hard, so heuristics-based algorithms are typically applied in order to provide a sufficiently good solution. 
	
	The \emph{Stochastic Block Model} (short: \emph{SBM})~\citep{Holland_SBM_1983} is a random graph model, which emulates the community structure found in many real networks. It is a popular model for studying complex networks and, in particular, clustering on graphs.
	In the SBM, vertices are partitioned into subgroups called blocks, and the distribution of the edges between vertices is dependent on the blocks to which the vertices belong. Formally:
	\begin{defn}[Stochastic Block Model]
		\emph{
			Let $\{\Bp_1,\dots, \Bp_{n_b}\}$ be a partition of a graph's vertices into $n_b$ \emph{blocks}, and let $\mathbf{B} = (B_{hl})$ denote the characteristic matrix between the blocks, where $B_{hl}$ shows the probability of an edge existing between a vertex in block $\Bp_h$ and one in $\Bp_l$. Then, if we denote the random adjacency matrix of the network by $\mathbf{A} = (A_{ij})$ where $A_{ij} = 1$ if there is an edge between vertices $v_i$ and $v_j$,
			\begin{equation*}
				\pd\left(A_{ij} = 1\right) = B_{\sigma(i) \sigma(j)} \; ,
			\end{equation*}
			where $\sigma: V\to \{1,\dots, n_b\}$ indicates the block membership. A \emph{planted SBM} has the additional structure that $B_{hh} = p_{in},\, \forall h$, and $B_{hl} = p_{out},\, \forall h\ne l$, which we denote by $SBM(p_{in}, p_{out})$.}
	\end{defn}
	
	\subsubsection{Partition-based clustering methods}
	%\begin{itemize}
	%    \item Role of bridges between communities {\color{teal}TODO Melanie}
	%    \item Competing approaches: Louvain, spectral clustering {\color{olive}TODO Yu}
	%\end{itemize}
	Of particular importance for classical community detection are edges between clusters, often called \emph{bridges}.
	Bridges connect two nodes in distinct communities (see Fig.~\ref{fig:cluster-single}). In graphs with single-membership community structure, i.e., where each node belongs to one community only, such edges always exist (otherwise each community is a connected component). 
	This observation has motivated a range of \emph{partition-based} community detection approaches~\citep{Blondel_2008,girvan-newman,vonLuxburg2007spectra}, which rely on identifying and cutting such bridges to partition the graph into communities. 
	Among the most popular community detection methods is the Louvain algorithm~\citep{Blondel_2008}, which is a heuristic method proposed to maximize modularity hierarchically, starting from each node as a community and then successively merging with its neighbouring communities whenever it improves modularity. Spectral clustering~\citep{vonLuxburg2007spectra}, another widely used method, utilizes the spectrum of the graph Laplacian to find a partition of the graph with a minimal number of edge cuts (\emph{mincut problem}). This optimization problem too is NP-hard, but it can be solved efficiently after relaxing some constraints. 
	We give a brief overview of other classical partition-based community detection methods in section~\ref{sec:related-work}. In this paper, we discuss methods that utilize discrete curvature to identify bridges: Below, we introduce a blueprint for partition-based clustering via discrete Ricci curvature.

	\subsection{Mixed-Membership Node Clustering}
	%\color{blue}
	%\begin{itemize}
	%\item Mixed-membership community detection problem
	%\item Mixed-membership block model
	%\item Informal definition of line graph; why is it beneficial to perform MM community detection on the line graph
	%\end{itemize}
	%\color{black}
	
	\subsubsection{Mixed-membership Community Structure}
	In graphs with mixed-membership communities, nodes may belong to more than one community. Such community structure can be formalized via an extension of the SBM: The \emph{Mixed-Membership Stochastic Block Model (short: \emph{MMB})}~\citep{airoldi2008mixed}. %The \emph{Mixed-Membership Stochastic Block Model (short: \emph{MMB})}~\citep{airoldi2008mixed}, can be seen as an extension of the classical SBM model. 
	In both models, each element of the adjacency matrix $\mathbf{A}$ above the diagonal is an independent Bernoulli random variable whose expectation only depends on the block memberships of the corresponding nodes. In the MMB, it is possible that nodes belong to more than one block, with various affiliation strengths. Formally: 
	\begin{defn}[Mixed-Membership Block Model]
		\emph{
			We assume that the expectation $\mathbb{E}[\mathbf{A}]$ has the form
			$\mathbb{E}[\mathbf{A}] = \mathbf{X}\mathbf{B}\mathbf{X}^T$,
			where $\mathbf{X}\in[0,1]^{n\times k}$ is the community membership matrix with $X_{il}$ indicating the affiliation of node $i$ with community $l$ and $\sum_{l}X_{il} = 1$, and $n$ denotes the size of the network. The matrix $\mathbf{B} \in [0,1]^{k\times k}$ encodes the block connection probabilities. If $X_{il}=1$ for some $l$, we call vertex $i$ a \emph{pure node}; otherwise $i$ is a \emph{mixed node}.
		} 
	\end{defn}
	When all nodes are pure, we recover the ordinary SBM model. In our experiments, we consider a planted version where $\mathbf{B}_{hh} = p_{in},\ \forall h$, and $\mathbf{B}_{hl} = p_{out},\ \forall h\ne l$, which we denote by $MMB(p_{in}, p_{out})$.
	
	\subsubsection{Line Graphs}
	\begin{wrapfigure}{R}{0.4\linewidth}
		\centering
		\includegraphics[width=6cm]{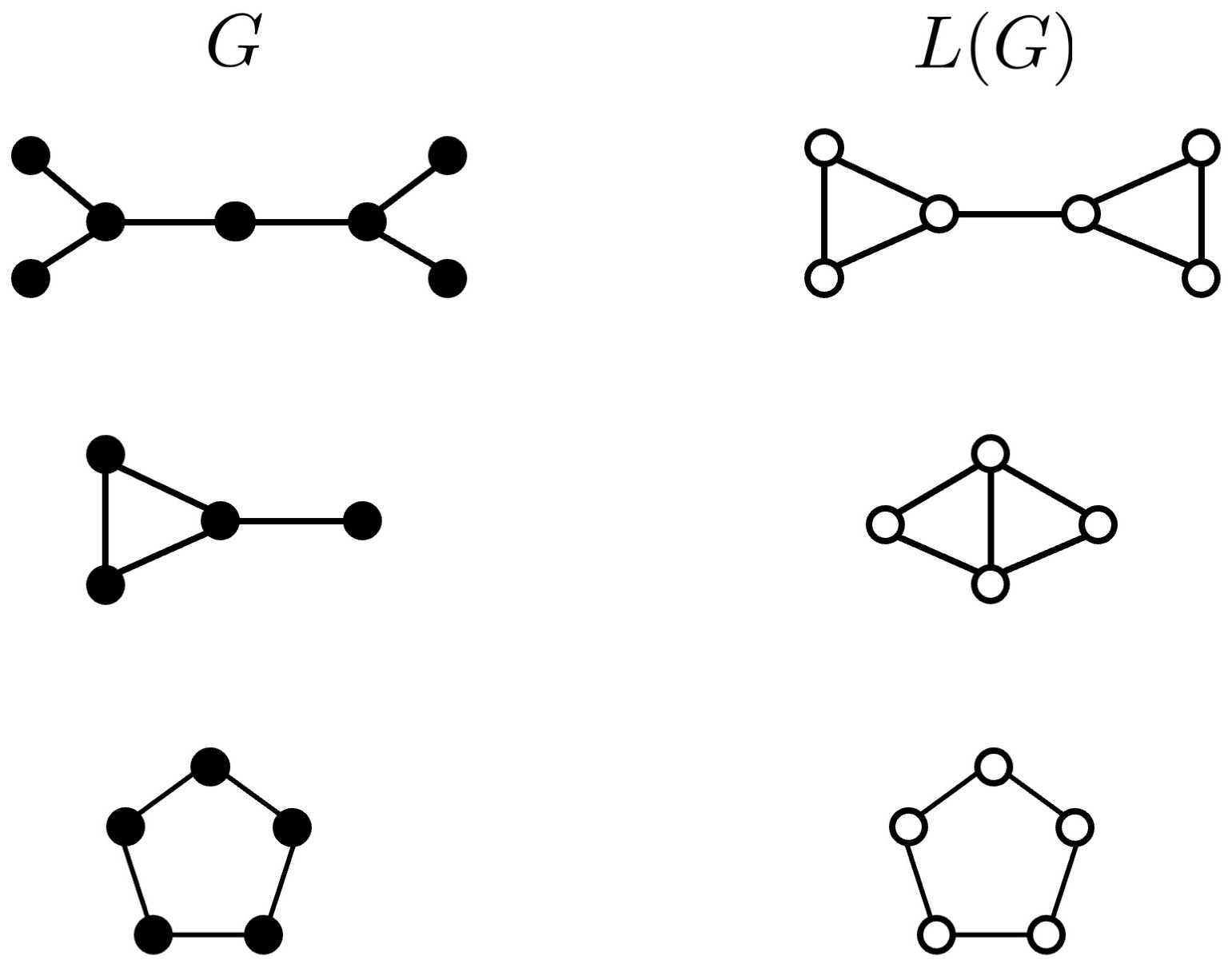}
		\caption{Graphs $G$ with their corresponding line graphs $L(G)$. Note that vertices in $G$ with degree at least 3 result in cliques in $L(G)$, and that cycles are preserved.}
		\label{fig:construction}
		\vspace{-0.5em}
	\end{wrapfigure}
	We denote the \emph{dual} of a graph as its \emph{line graph}:
	\begin{defn}
		Given an unweighted, undirected graph $G=(V,E)$, its line graph is defined as $L(G)=(E,\mathcal{E})$, where each edge $\{\{u,v\},\{r,s\}\}\in\mathcal{E}$ appears when $|\{u,v\}\cap \{r,s\}|=1$. In other words, $e,e'$ are adjacent in $L(G)$ when they are both incident on the same vertex. 
	\end{defn}
	The line graph can be seen as a re-parametrization of the input graph, which encodes higher-order information about its edges' connectivity.  Examples of graphs and their line graphs can be found in Fig.~\ref{fig:construction}. Importantly, cycles are preserved under this construction (e.g., Fig.~\ref{fig:construction}(bottom)). Cliques and cycles in the line graph may also arise from nodes of degree three or higher (see, e.g., Fig.~\ref{fig:construction}(top)).
	
	Notice that the line graph is typically larger in size than the original graph. In a graph with $n$ vertices and average degree $n\alpha_n$, we expect to have $n^2\alpha_n$ edges. In the line graph, this means $n^2\alpha_n$ vertices and $n^3\alpha_n^2$ edges, though this number increases with greater variation of the vertex degrees in the original graph. For the applications to community detection that we will discuss below, this necessitates a careful consideration of the scalability of different community detection approaches. 
	
	The classical line graph construction gives an unweighted graph, in the sense that edge weights in the original graph may be incorporated as node weights in the line graph. We will discuss below (sec.~\ref{sec:line-weighted}) how weights can be imposed on line graph edges that encode meaningful structural information in the original graph.
	
	%{\color{blue} TODO Comment on line graph construction, if $G$ is weighted. }
	
	\subsubsection{Partition-based Methods on the Line Graph.}
	\begin{wrapfigure}{R}{0.4\linewidth}
		\includegraphics[width=6cm]{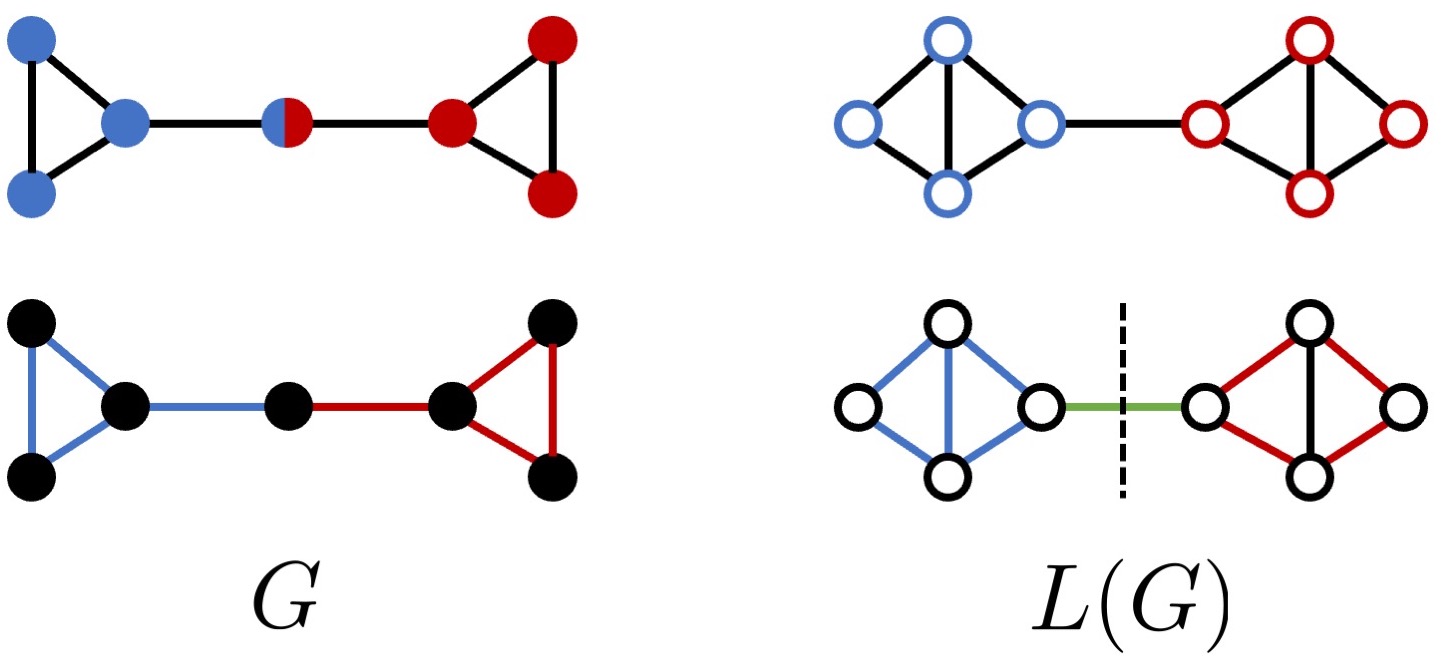}
		\caption{Community structure in a graph $G$ and its line graph $L(G)$ with node labels (top) and edge labels (bottom). %\textcolor{blue}{This L(G) isn't the line graph of G?}
		}
		\label{fig:intro}
		\vspace{-10pt}
	\end{wrapfigure}
	We have seen above that of particular importance for single-membership community detection are edges between clusters, often called \emph{bridges}. However, overlaps between communities preclude the existence of these bridges (see Fig.~\ref{fig:cluster-mixed}), which limits the applicability of partition-based approaches.
	In this paper we argue that the line graph provides a natural input for partition-based mixed-membership community detection. 
	While nodes may not have a unique label in this model, the adjacent edges may still be internal, connecting two nodes that are in the same community (at least partially). In this case, each edge is associated with a single community (see Fig.~\ref{fig:intro}(left)). Consequently, by representing the relationships among edges in a line graph, we disentangle the overlapping communities. Each edge in the line graph appears at a vertex in the original graph. When the vertex has mixed membership, a bridge between the communities arises in the line graph (Fig.~\ref{fig:intro}(right)).

	\subsection{Curvature-based Clustering Algorithm}
	%\color{blue}
	%\begin{itemize}
	%\item Informal definition of graph curvature and curvature flow
	%\item Blueprint of curvature-based clustering algorithm
	%\item Theoretical motivation for curvature-based approach
	%\end{itemize}
	%\color{black}
	%
	%
	\begin{algorithm}[ht]
		\begin{algorithmic}[1]
			\State \textbf{Input:} Graph $G=(V,E,W_0)$, hyperparameters $\nu$ (step size, flow), $\epsilon$ (tolerance), $\epsilon_d$ (drop threshold), $T$ (number of iterations)
			\For{$t=1, \dots, T$}
			\For{$\{u,v \} \in E$}
			\State Compute curvature $\kappa( \{u,v\})$.
			\State Evolve weight under Ricci flow: $w^{t}(\{u,v \}) \gets (1- \nu \cdot \kappa( \{u,v\}))d_G(u,v)$.
			\EndFor
			\State Renormalize edge weights $W^{t}$: $w^{t}(\{u,v\}) \gets \frac{\vert E \vert d_G(u,v)}{\sum_{\{u',v'\} \in E} d_G(u',v')}$ for all $\{u,v \} \in E$.
			\EndFor 
			\State Construct cut-off points $\lbrace x_0, x_1, \dots, x_{n_f} \rbrace$.
			\item Initialize $\mathbf{l} \in \mathbb{R}^{\vert V \vert}$ (list of node labels), $Q_{-1}, Q_{best} := \epsilon$.
			\For{$i = 0, \dots n_f$} 
			\State Construct $\tilde{G}_i = (V, E_i, W_i^T)$ with $E_i = \{\{u,v\}: w^T(\{u,v\}) > x_i\}$ and $W_i^T = \restr{W^T}{E_i}$.
			\State Determine connected components of $\tilde{G}_i$. Assign node labels by components.
			\State Compute the modularity $Q_i$ of the resulting  community assignment. 
			\If{$Q_i > Q_{best}$, $\frac{Q_i - Q_{i-1}}{Q_i} > \epsilon_d$}
			\State Store label assignments in $\mathbf{l}$. $Q_{best}=Q_i$.
			\EndIf
			\EndFor
			\State If $Q_{best}>\epsilon$, return $\mathbf{l}$.
		\end{algorithmic}
		\caption{Curvature-based Clustering via Ricci Flow} \label{alg.1}
	\end{algorithm}

	In this section, we will describe the blueprint of our \emph{curvature-based clustering algorithm}, which we will then specialize to the single- and mixed-membership problem described above. In the community detection literature, approaches based on discrete Ricci curvature have been studied for unique-membership communities~\citep{ni2019community,sia2019ollivier,weber2018detecting}. Here, we propose a curvature-based method for \emph{mixed-membership community detection}.
	
	Like previous approaches that utilize Ricci curvature (e.g.,~\citet{ni2019community}), we build on a notion of discrete Ricci flow first proposed by~\citet{Ol2}:
	\begin{equation}
		\frac{d}{d t} d_G(u,v)(t) = -\kappa(\{u,v\})(t) \cdot d_G(u,v)(t) \quad (\{u, v\} \in E)\; ,
	\end{equation}
	where $d_G(u,v)$ denotes the shortest path distance between adjacent nodes $u,v \in E$ and $\kappa(\{u,v\})$ the Ricci curvature along that edge. In this work, $\kappa(\{u,v\})$ may denote either Forman's or Ollivier's Ricci curvature. We will describe the two resulting methods in detail below.
	
	Curvature-based clustering implements \emph{discrete Ricci flow} via a combinatorial evolution equation, which evolves edge weights according to the local geometry of the graph. 
	Consider a family of weighted graphs $G^t = \{V, E, W^t \}$, which is constructed from an input graph $G$ by evolving its edge weights as
	\begin{equation}\label{eq:ricci-flow}
		w^t(\{u,v\}) \gets (1-\kappa(\{u,v\}))d_G(u,v) \qquad (\{u, v \} \in E)\; ,
	\end{equation}
	where the curvature $\kappa(\{u,v\})$ and shortest-path distance $d_G(u,v)$ is computed on the graph $G^{t-1}$. Equation~\eqref{eq:ricci-flow} can be viewed as a discrete analog of Ricci flow on Riemannian manifolds; it was first introduced by~\citet{Ol} with $\kappa$ denoting ORC. Since then, several variants have been studied in the context of clustering, utilizing ORC~\citep{ni2019community,sia2019ollivier,tian2022mixed}, as well as FRC~\citep{weber2018detecting}. Each of these algorithms is a special case of the general framework that we describe below.
	
	The clustering procedure is initialized with the (possibly unweighted) input graph, i.e., $G^0 := G$. We then evolve the edge weights for $T$ iteration under Ricci flow (Eq.~\eqref{eq:ricci-flow}). In each iteration, the edge weights are renormalized using
	\begin{equation}
		w^t(\{u,v\}) \gets \frac{\vert E \vert d_G(u,v)}{\sum_{\{u',v'\} \in E} d_G(u',v')} \; .
	\end{equation}
	Over time, the negative curvature of edges that bridge communities becomes stronger, since edges with a lower Ricci curvature contract slower under Ricci flow. On the other hand, edges with higher Ricci curvature contract faster. This results in a decrease of the weight of internal edges over time, while the weight of the bridges increases (see Fig.~\ref{fig:weights}). With that, the discrete Ricci flow reinforces the meso-scale structure of the network. 
	\begin{figure}[htb]
		\centering
		\begin{tabular}{ccc}
			\includegraphics[width=.3\textwidth]{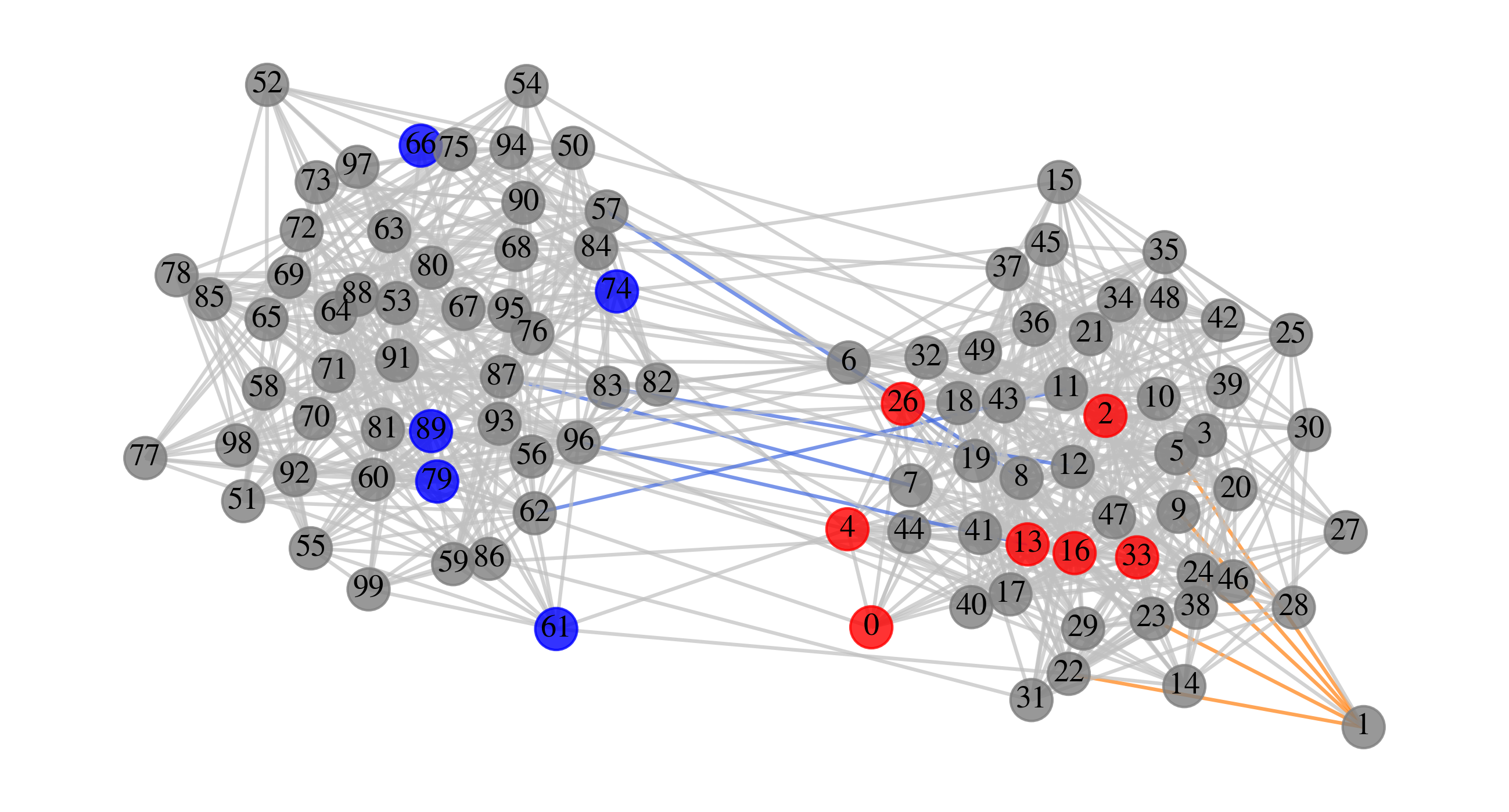} & \includegraphics[width=.3\textwidth]{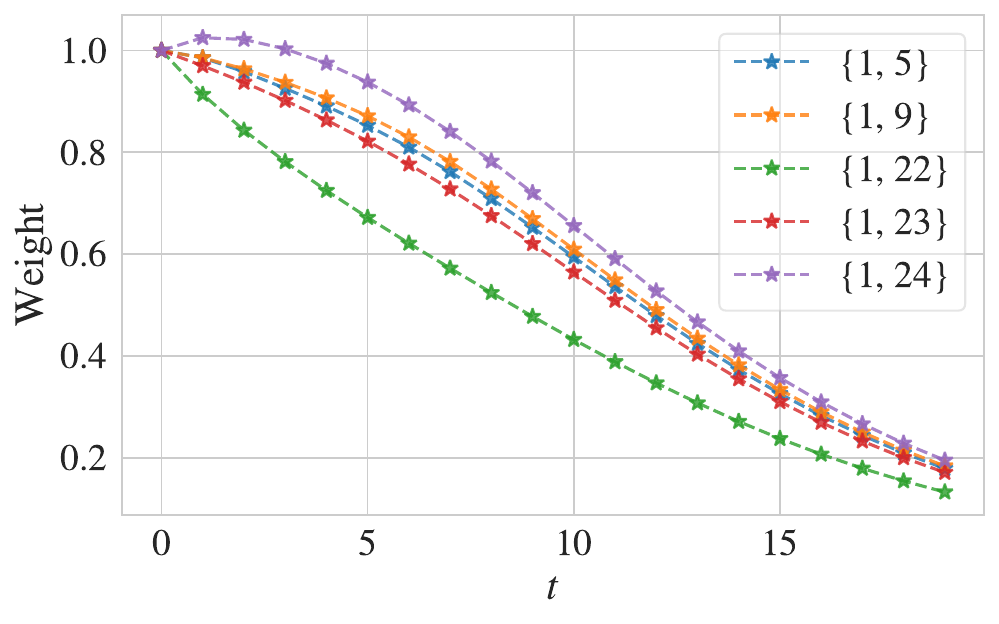} & \includegraphics[width=.3\textwidth]{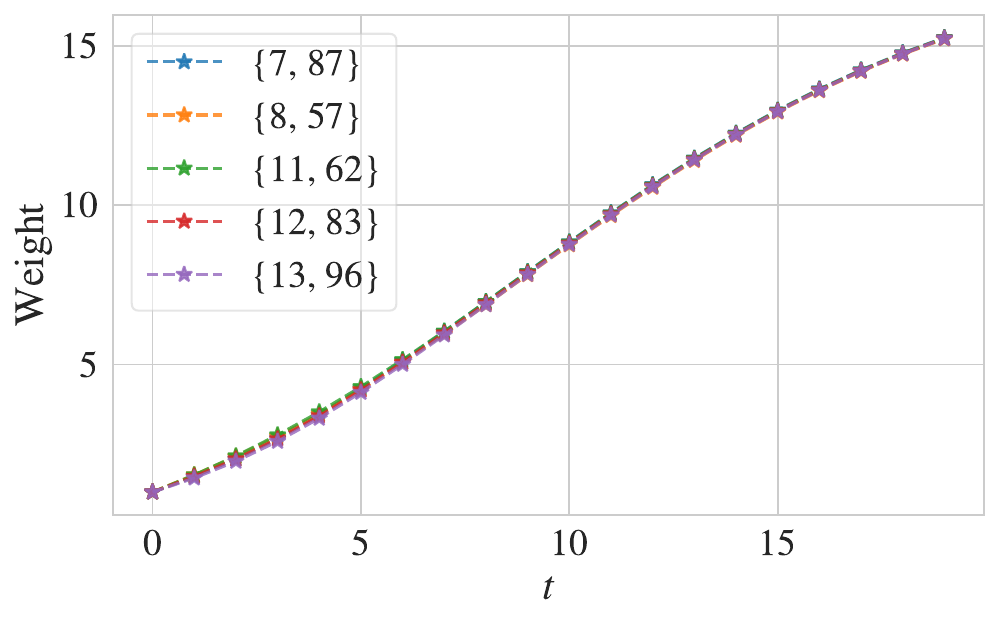}
		\end{tabular}
		\caption{To illustrate the effect of the discrete Ricci flow in Alg.~\ref{alg.1}, we consider a planted SBM of size $n=100$, two equally-sized blocks, $p_{in}=0.3$ and $p_{out}=0.02$, visualized in the left panel. We select representative edges within a community, shown in orange on the graph, and plot the evolution of their weights in the middle panel. In the right panel, we show this evolution for selected edges between communities (i.e., bridges), shown in blue on the graph.}
		\label{fig:weights}
	\end{figure}
	
	%\textcolor{red}{Evolving edge weights under the Ricci flow has a \emph{coarsening} effect: In each iteration, we remove edges with weights below a predefined threshold $\Delta$. While bridges are preserved under the Ricci flow, edges with less structural importance (such as internal edges) are removed successively as their weight decreases. The sparsified graph retains the meso-scale structure of the original graph. In particular, its community structure is preserved. }
	
	In order to recover single-membership labels, we cut the edges in the graph with high weights (or, equivalently, with very negative curvature) after evolving edge weights for $T$ iterations. The resulting partition delivers a node clustering. In the mixed-membership case, we cut the edges in the line graph with the highest weight (or, equivalently, the most negative curvature) after evolving edge weights for $T$ iterations. The resulting partition of the graph delivers an edge clustering from which we infer the community labels: We obtain a mixed-membership label vector $y$ for each node $v$ by computing 
	\begin{equation}\label{eq:labeling}
		y_l(v)=\frac{1}{|E_v|}\sum_{e\in E_v} \chi_{l}(e) \; ,
	\end{equation}
	where $\chi_{l}$ is the indicator for the edge cluster $C_l$. Intuitively, each edge belonging to cluster $l$ that is incident on a node $v$ adds more evidence of affinity between the node $v$ and the cluster $C_l$.
	
	In both cases, the success of the method depends crucially on identifying a good weight threshold for cutting edges in the original graph (single-membership case) or the line graph (mixed-membership case). We propose to perform a hyperparameter search over \emph{cut-off points} $\lbrace x_0, x_1, \dots, x_{n_f} \rbrace$. The construction of the cut-off points is an important design choice and depends on the curvature notion used in the approach.
	%, i.e., we set $\{x_i\}_{i=0}^{n_t}$, where $x_0 = \max_{\{u,v\}\in\mathcal{E}}w_{uv}^T$, $x_1 = x_0 - \delta$, ..., $x_n = x_0 - n \delta$, where $\delta$ denotes a predefined step size. 
	We then compute the \emph{modularity}, a classical quality metric in the community detection literature, for the label assignments corresponding to each cut-off point $x_i$. Specifically, modularity~\citep{girvan-newman,gomez2009analysis} is defined as
	\begin{equation}\label{eq:mod}
		Q := \frac{1}{2 w} \sum_{\{u,v\}\in E} \left( w(\{u,v\}) - \frac{d_u d_v}{2 w}\right) \delta(\sigma(u),\sigma(v)) \; ,
	\end{equation}
	where $2w:=\sum_{ij} w(\{i,j\})$. The larger the modularity, the better we expect the clustering to reflect the underlying community structure induced by the graph's connectivity.
	Our approach is schematically shown in Alg.~\ref{alg.1}. 
	
	%{\color{red} Can we show that, if weights in $G$ fulfill triangle inequality, the updated weights under the Ricci flow (i) in $G$ and (ii) in $L(G)$ also fulfill the triangle inequality. ZL: I don't know if we want the triangle inequality, but I can say that under ORC flow, we definitely have triangle inequality: $\kappa(u,v)= 1- W_1(m_u,m_v)/w(u,v),$ so $(1-\kappa(u,v))d_G(u,v)= W_1(m_u,m_v)$, which is already a metric. }

	%%%%%%%
	\section{Related Work}
	\label{sec:related-work}
	%\color{blue}
	%\begin{itemize}
	%\item Community detection algorithms
	%\item Higher-order structures in networks,  line graphs
	%\item Curvature-based Network Analysis
	%\end{itemize}
	%\color{black}
	
	\paragraph{Community detection.} 
	Mixed-membership community detection is widely studied in the network science and data mining communities. Notable approaches include Bayesian methods~\citep{airoldi2008mixed,hopkins_bayesian_2017}, matrix factorization~\citep{yang2013overlapping}, spectral clustering~\citep{zhang2020overlap}, and vertex hunting \citep{jin2017estimating}, among others. 
	In addition to the mixed-membership model that we consider here~\citep{airoldi2008mixed}, there is a significant body of literature on closely related overlapping community models~\citep{Lancichinetti2009nmi,xie2013overlapping}, which also study the problem of learning non-unique node labels. Curvature-based community detection methods for non-overlapping communities have recently received growing interest~\citep{ni2019community,sia2019ollivier,gosztolai2021unfolding,weber2018detecting}. 
	Such approaches utilize notions of discrete curvature~\citep{Ol,Ol2,forman2003bochner} 
	to partition networks into communities, based on the observation that edges between communities have low curvature. The absence of such bridges in overlapping communities renders these approaches inapplicable to the setting studied in this paper. To the best of our knowledge, our algorithm is the first to study mixed-membership community structure with curvature-based methods.
	
	\paragraph{Higher-order structure in Networks.} Historically, much of the network analysis literature has focused on the structural information encoded in nodes and edges. Recently, the analysis of higher-order structures has received increasing attention. A plethora of methods for analyzing higher-order structure in relational data has been developed~\citep{benson2016higher,battiston2020networks}.
	Our work follows this line of thought, in that it focuses on the relations \emph{between edges} and the structural information encoded therein. Curvature of higher-order structure has previously been studied in~\citep{WSJ2,hypernets,leal2021forman}. Here, networks have been studied as polyhedral complexes~\citep{WSJ2} or interactions between group of nodes have been encoded in hypergraphs~\citep{hypernets,leal2021forman}. To the best of our knowledge, this is the first work that studies the curvature of line graphs. 
	
	\paragraph{Line Graphs.} The notion of the \emph{line graph} goes back at least to~\citep{whitney1932congruent}, where it is shown that whenever $|V|\geq 5$, two graphs are isomorphic if and only if their line graphs are. An effective version of this result, which reports whether a given graph is a line graph, and if it is, returns the base graph, appears in \cite{lehot1974optimal}. Community detection and more generally network analysis via the line graph has been recently studied in~\citep{chen2017supervised,krzakala2013spectral,lubberts2021beyond,evans_line_2010}. 
	
	\paragraph{Discrete Curvature.}
	There exists a large body of literature on discrete notions of curvature. Notable examples include Gromov's $\delta$-hyperbolicity~\citep{gromov}, Bakry-Emre curvature~\citep{erbar2012ricci}, as well as Forman's and Ollivier's Ricci curvatures~\citep{forman2003bochner,Ol}. While there is some work on studying those curvature notions on higher-order networks~\citep{bloch2014combinatorial,WSJ2} and hypernetworks~\citep{hypernets,leal2021forman}, they have not been studied on line graphs. Curvature-based analysis of relational data (see, e.g.,~\citep{WSJ1,WSJ2}) has been applied in many domains, including to biological~\citep{elumalai2022graph,weber2017curvature,tannenbaum2015ricci}, chemical~\citep{leal2021forman,saucan2018discrete}, social~\citep{jost_social} and financial networks~\citep{sandhu2016ricci}. Discrete curvature has also found applications in Representation Learning, in particular for identifying representation spaces in graph embeddings~\citep{lubold2023identifying,pmlr-v108-weber20a,weber2018curvature}.

	%%%%%%%
	\section{Discrete Graph Curvature}
	\label{sec:graph-curv}
	Curvature is a classical tool in Differential Geometry, which is used to characterize the local and global properties of geodesic spaces. In this paper, we investigate \emph{discrete} notions of curvature that are defined on graphs. Specifically, we focus on discretizations of \emph{Ricci curvature}, which is a local notion of curvature that relates to the volume growth rate of the unit ball in the space of interest (geodesic dispersion).
	
	In the following,  we only consider \emph{undirected} networks.  We introduce two classical notions of \emph{discrete Ricci curvature}, which were originally introduced by~\citet{Ol} and~\citet{forman2003bochner}, respectively.
	\begin{figure}[t]
		\begin{subfigure}[t]{.5\linewidth}
			\centering
			\includegraphics[width=0.65\linewidth]{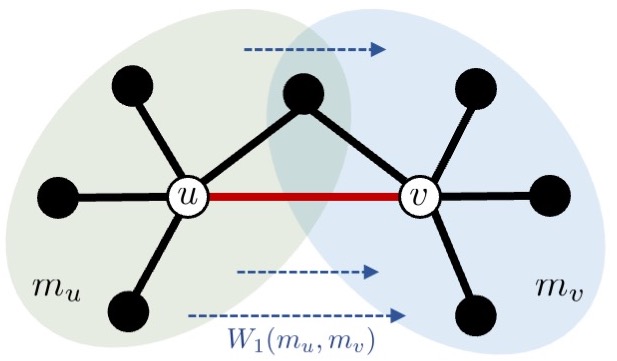}
			\caption{ORC}\label{fig:orc}
		\end{subfigure}
		\hfill
		\begin{subfigure}[t]{.5\linewidth}
			\centering
			\includegraphics[width=0.65\linewidth]{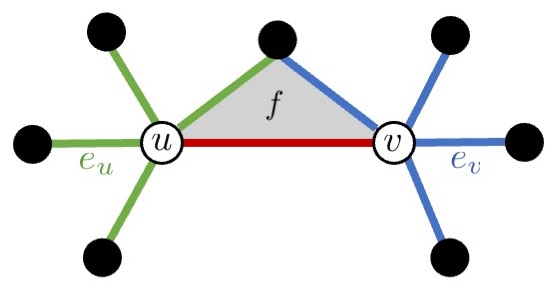}
			\caption{FRC}
		\end{subfigure}
		\caption{Computation of discrete Ricci curvature.}\label{fig:frc}
	\end{figure}
	%
	
	%%%%%%%%%%%%%%%%%%%%%%%
	%%%%%%%%%%%%%%%%%%%%%%%
	\subsection{Ollivier's Ricci Curvature}\label{sec:orc}
	%\color{blue}
	%\begin{itemize}
	%\item Formal definition via optimal transport
	%\item Combinatorial bounds
	%\item Computational considerations: typical implementations via EMD and Sinkhorn; fast approximation via arithmetic mean of combinatorial bounds
	%\end{itemize}
	%\color{black}
	
	\noindent Our first notion of discrete curvature relates geodesic dispersion to optimal mass transport on the graph. 
	
	%%%%%%%%%%%%%%%
	\subsubsection{Formal Definition}  
	Consider the transportation cost between two distance balls (i.e.,  vertex neighborhoods) along an edge in the network. In an unweighted graph, we endow the neighborhoods of vertices $u,v$ adjacent to an edge $e=\{u,v\}$ with a uniform measure, i.e.,
	\begin{equation}
		m_{u}(z) := \frac{1}{d_u} \qquad \forall z, \; {s.t.} \; u \sim z ,
	\end{equation}
	and analogously for $m_v(z)$.
	Here, $u \sim z$ indicates that $u,z$ are neighbors. In a weighted graph, for $\alpha\in[0,1]$ and $p\geq 0$, we set
	\begin{equation}
		\label{eq:nodemeasure}
		m_u^{\alpha,p} (z) := \begin{cases}
			\alpha&\text{if }z=u,\\
			\frac{1-\alpha}{C_u}\exp(-d_G(u,z)^p) &\text{if }z \sim u, \\
			0 &{\rm else}.
		\end{cases} \; 
	\end{equation}
	where the normalizing constant $C_u=\sum_{z\sim u} \exp(-d_G(u,z)^p)$. Notice that for neighboring vertices $u,z$ we have $d_G(u,z)=w(\{u,z\}) := w_{u,z}$. When $\alpha=0$, and $p=0$ or we have an unweighted graph, this reduces to the uniform measure.  We define \emph{Ollivier's Ricci curvature}~\citep{Ol} (short: \orc) with respect to the Wasserstein-1 distance $W_1$ between those measures, i.e.,
	\begin{equation}\label{eq:orc-e}
		\Ric_O (e) := 1 - \frac{W_1 (m_{u}, m_{v})}{d_G(u,v)} \; .
	\end{equation}
	The computation of \orc is illustrated in Fig.~\ref{fig:orc}.
	We can further define \orc for vertices with respect to the curvature of its adjacent edges. Formally, let  $E_v := \{ e \in E: v\in e\}$ denote the set of edges adjacent to a vertex $v$. Then its curvature is given by
	\begin{equation}\label{eq:orc-v}
		\Ric_O(v) = \sum_{e_z \in E_z} \Ric_O (e_z) \; .
	\end{equation}
	
	%%%%%%%%%%%%%%%
	%%%%%%%%%%%%%%%%%
	\subsubsection{Computational Considerations} 
	\begin{figure}[!t]
		\centering
		\begin{tabular}{ccc}
			\includegraphics[width=.3\textwidth]{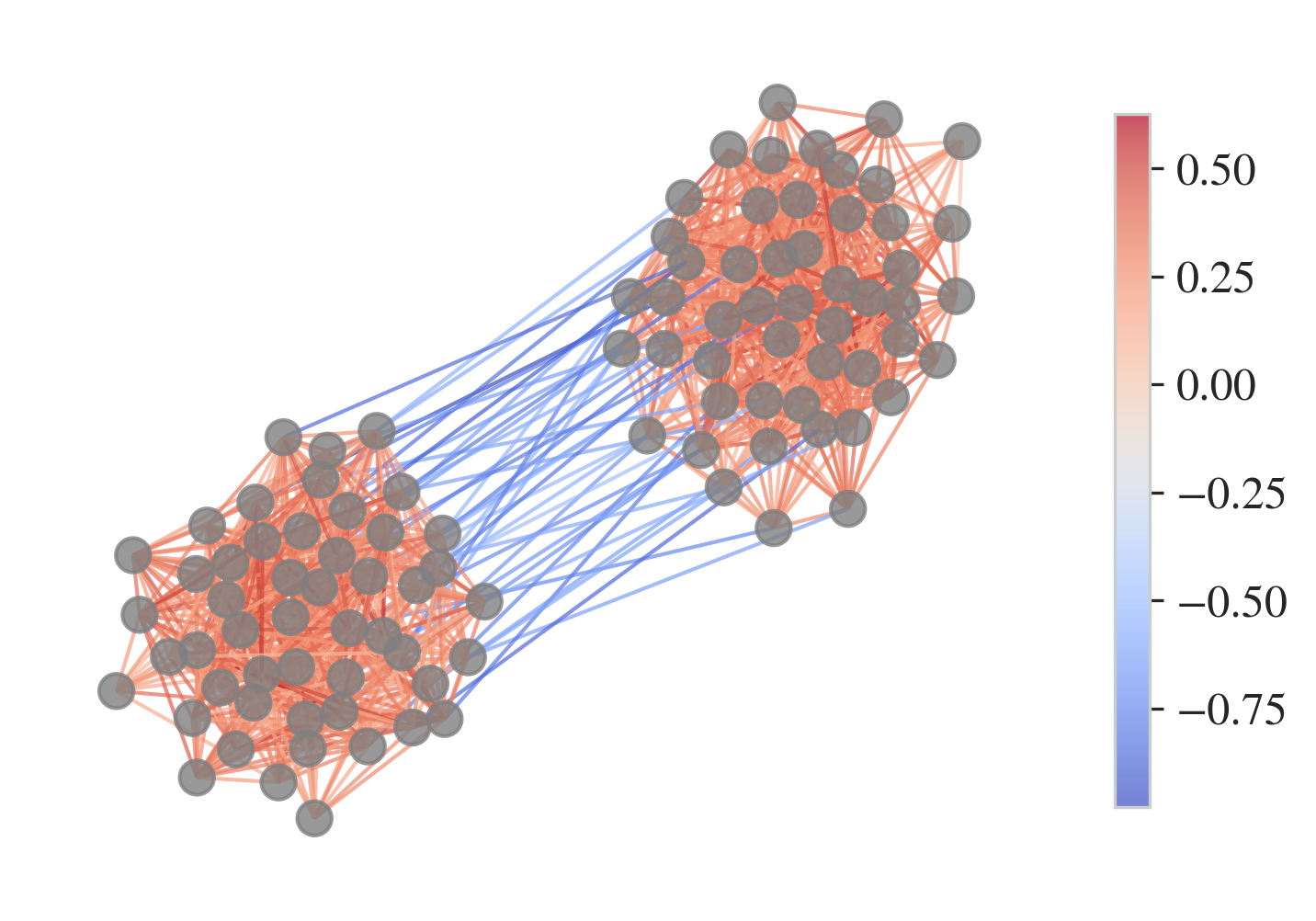} & \includegraphics[width=.25\textwidth]{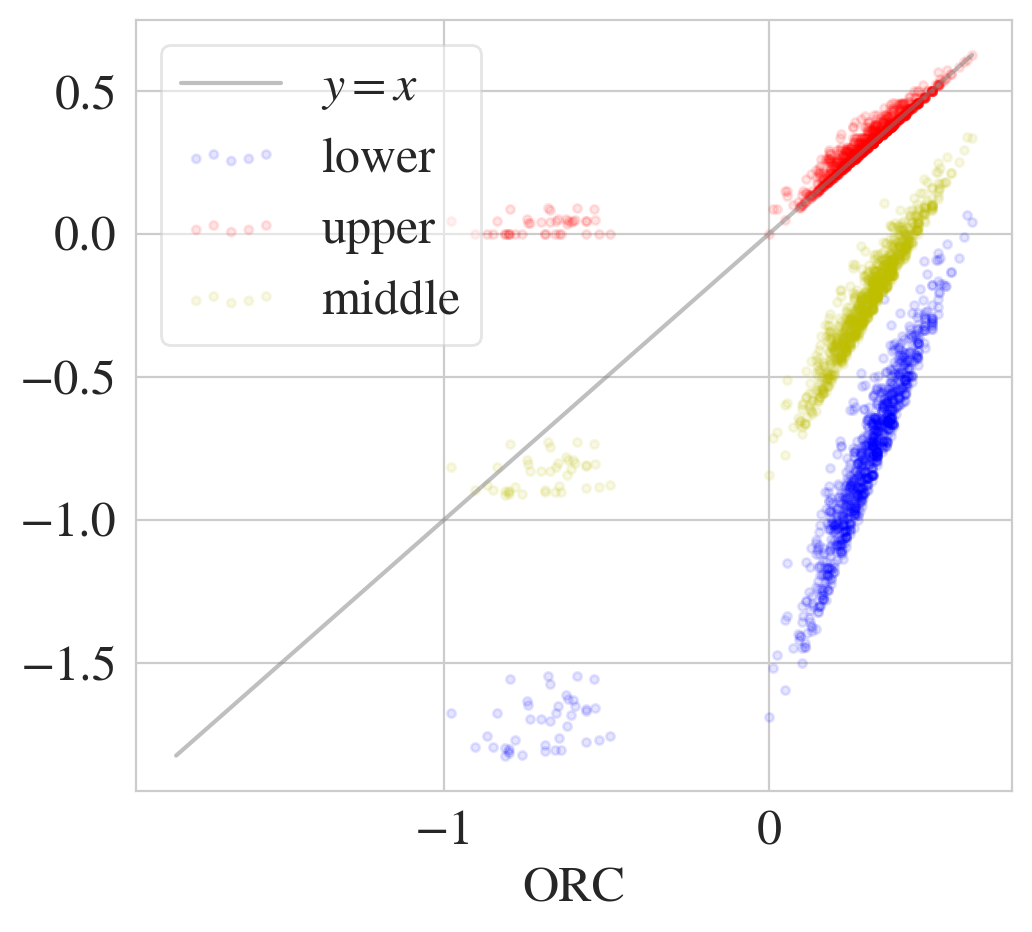} &  \includegraphics[width=.25\textwidth]{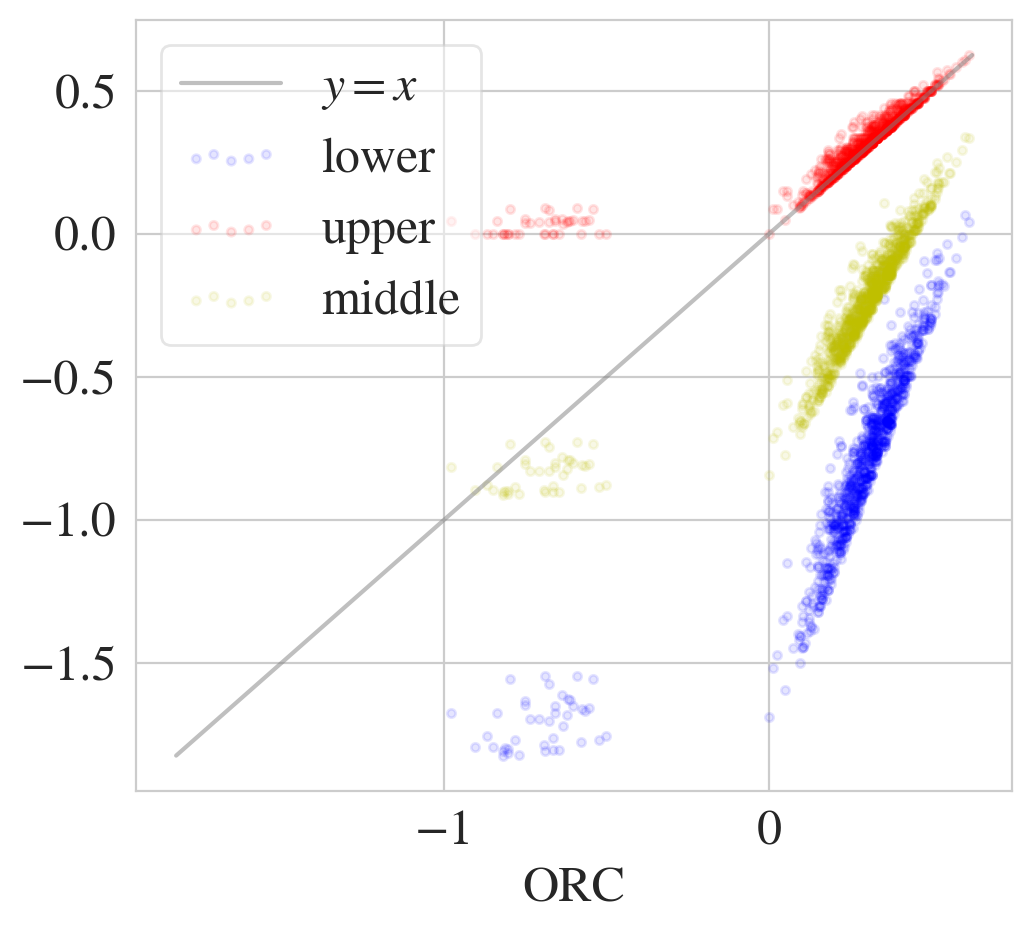}\\
			\includegraphics[width=.3\textwidth]{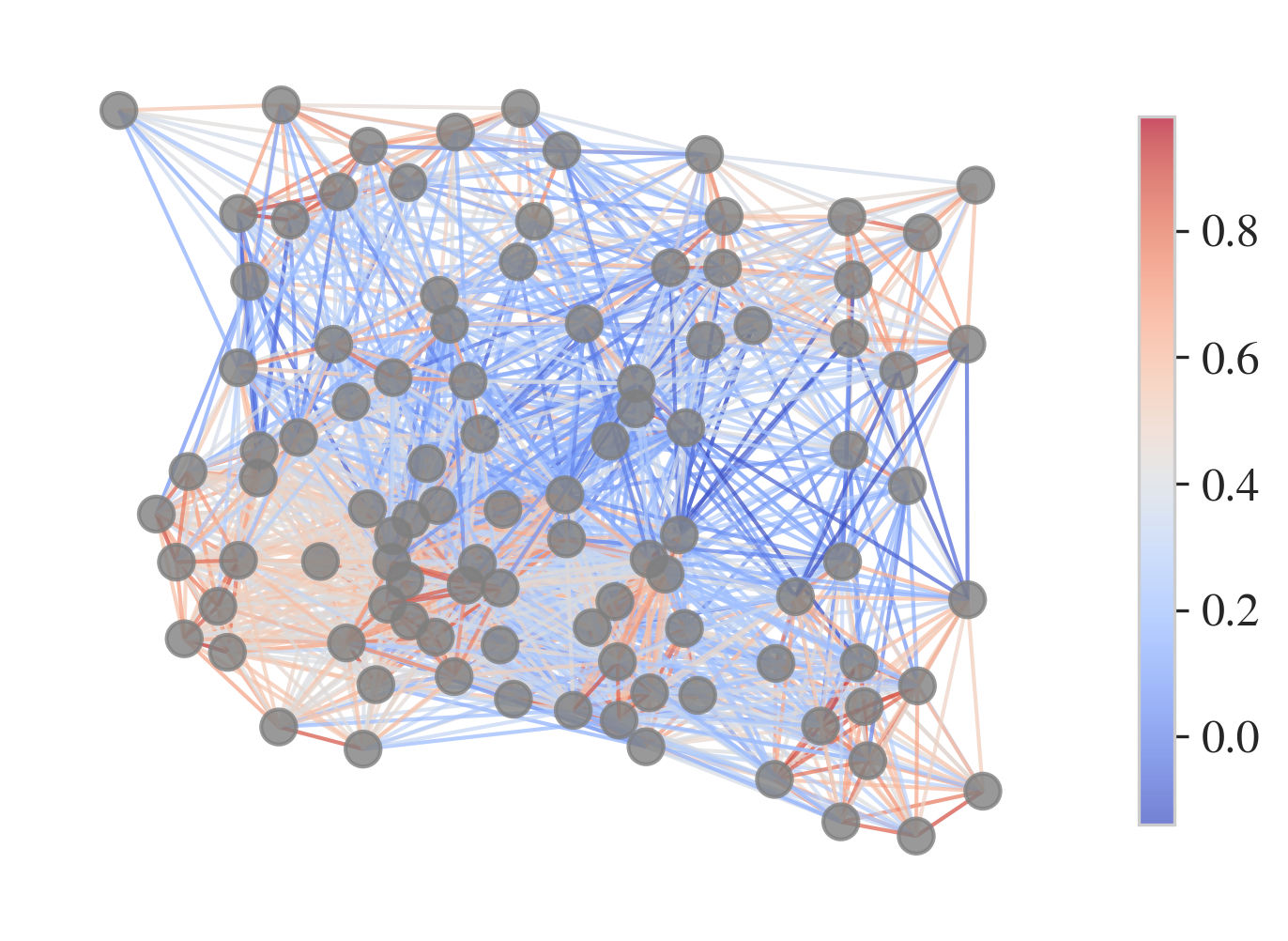} & \includegraphics[width=.25\textwidth]{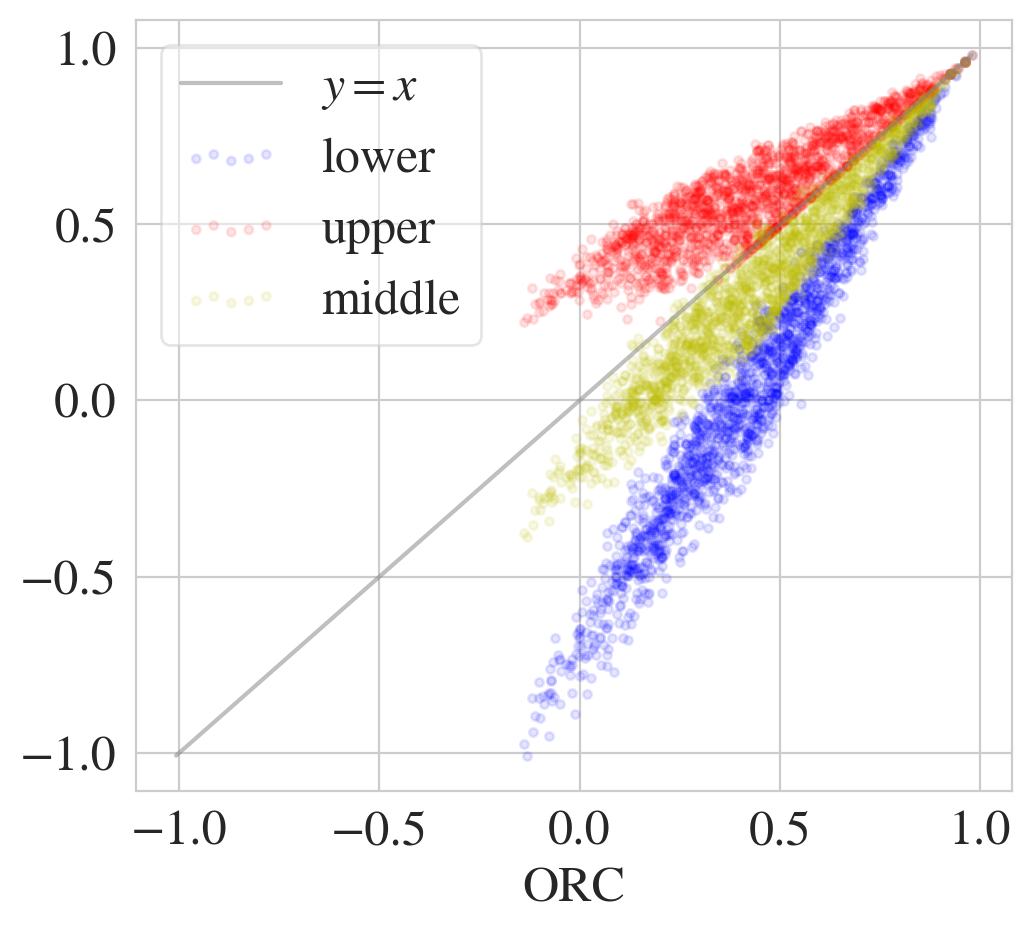} &  \includegraphics[width=.25\textwidth]{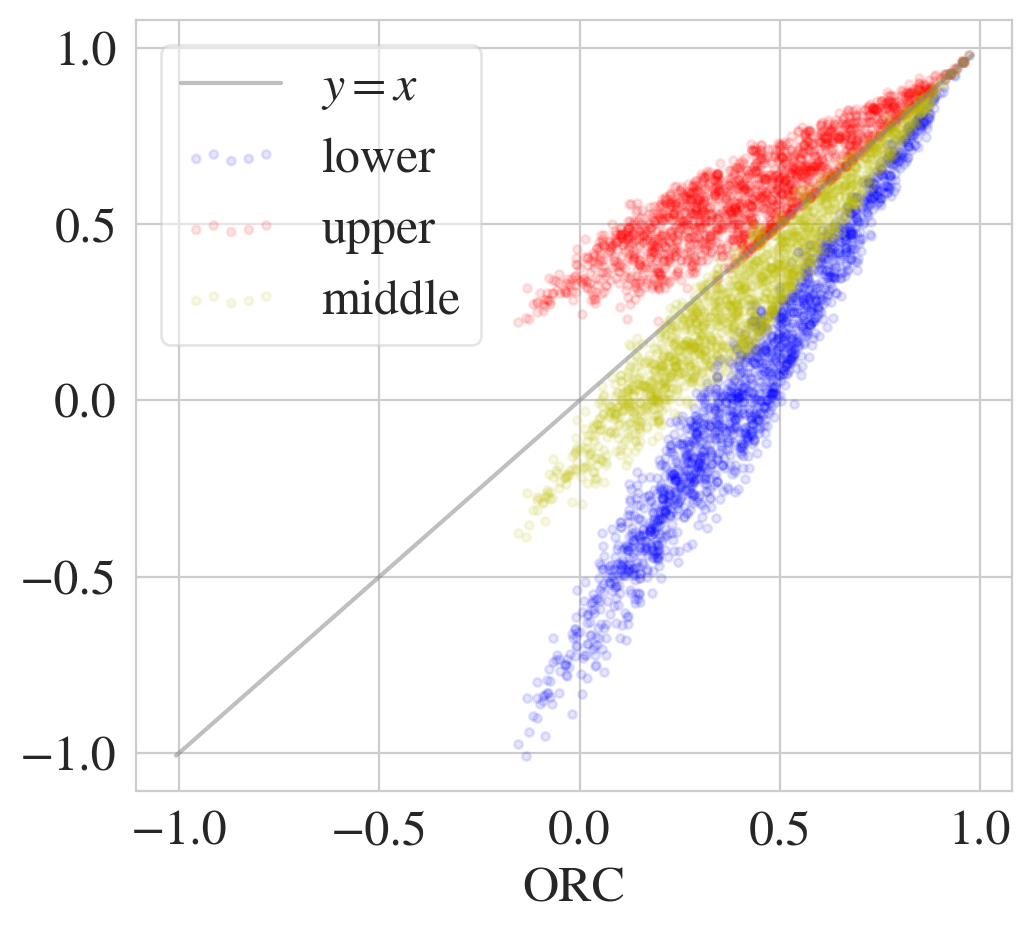} \\
			\includegraphics[width=.3\textwidth]{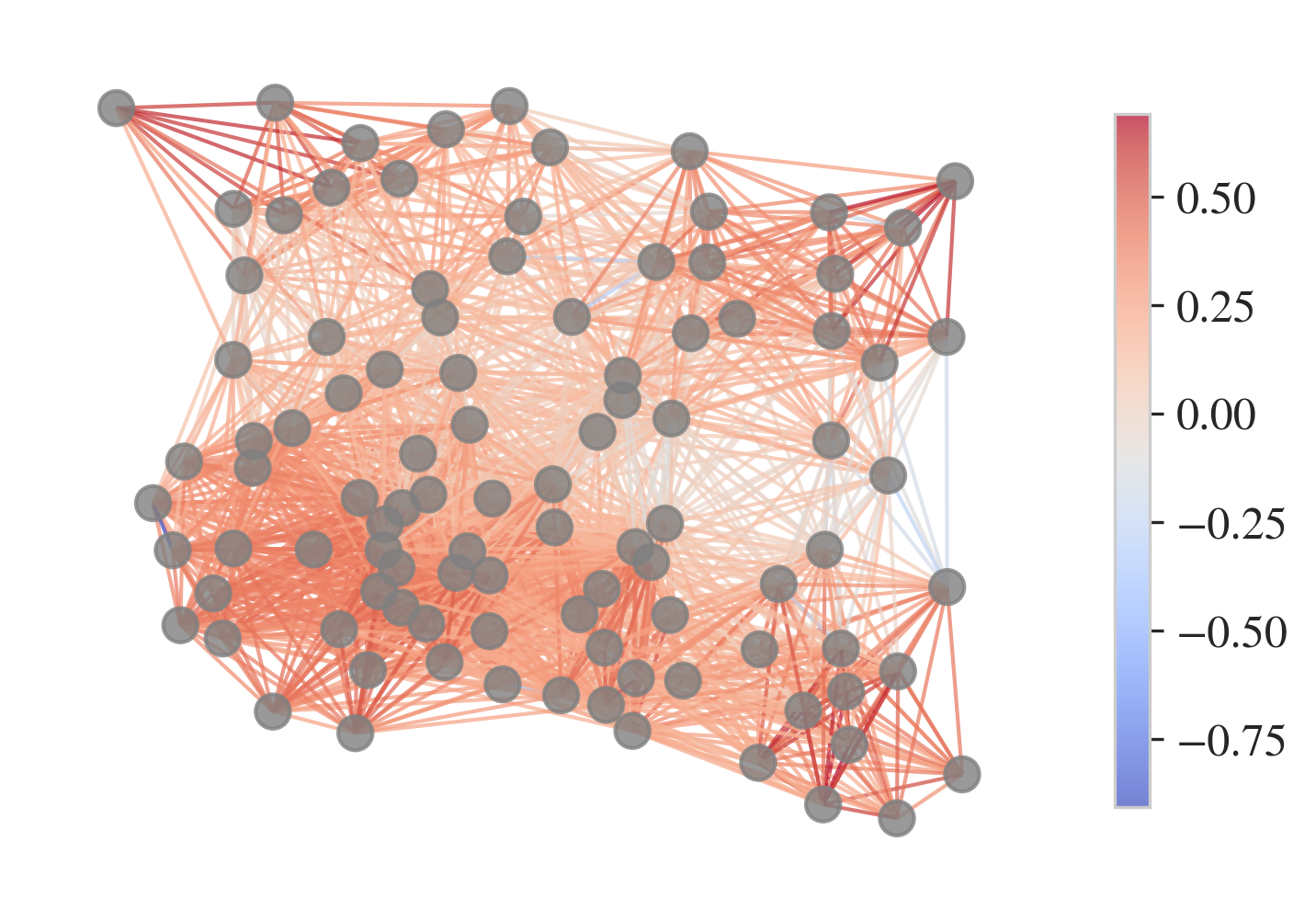} & \includegraphics[width=.25\textwidth]{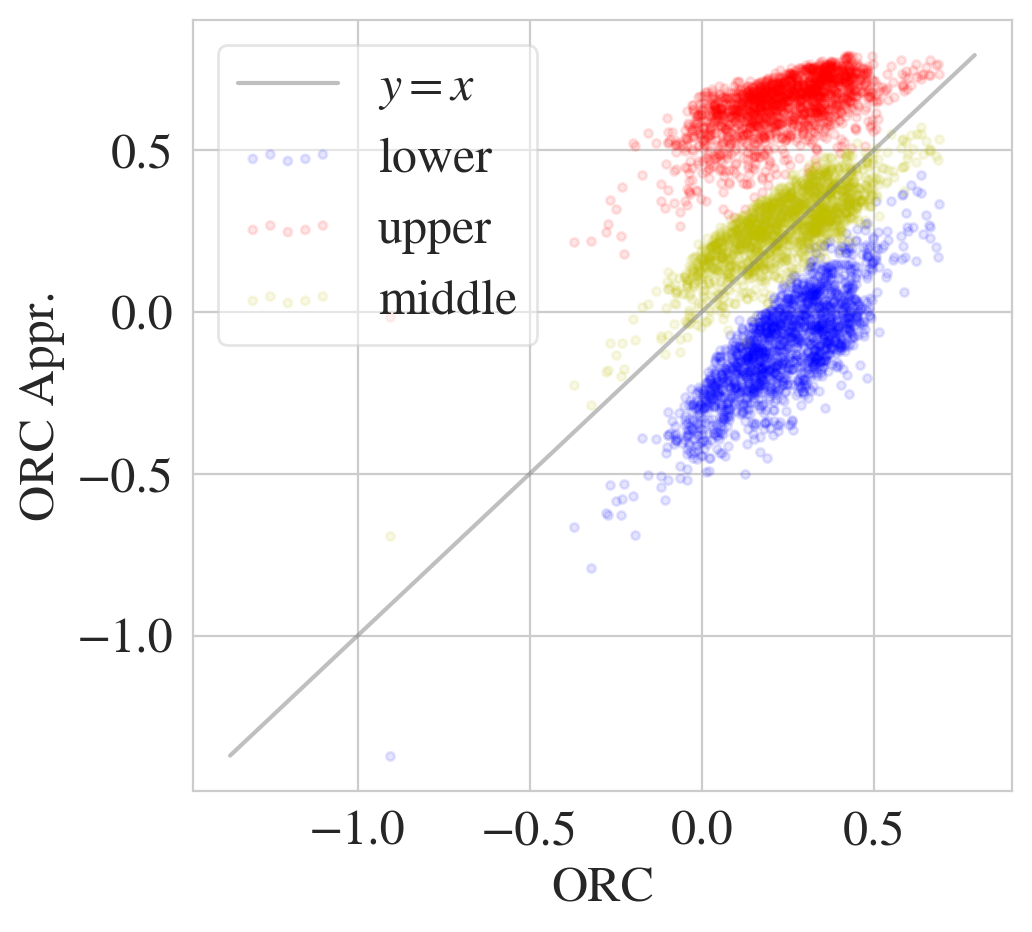} &  \includegraphics[width=.25\textwidth]{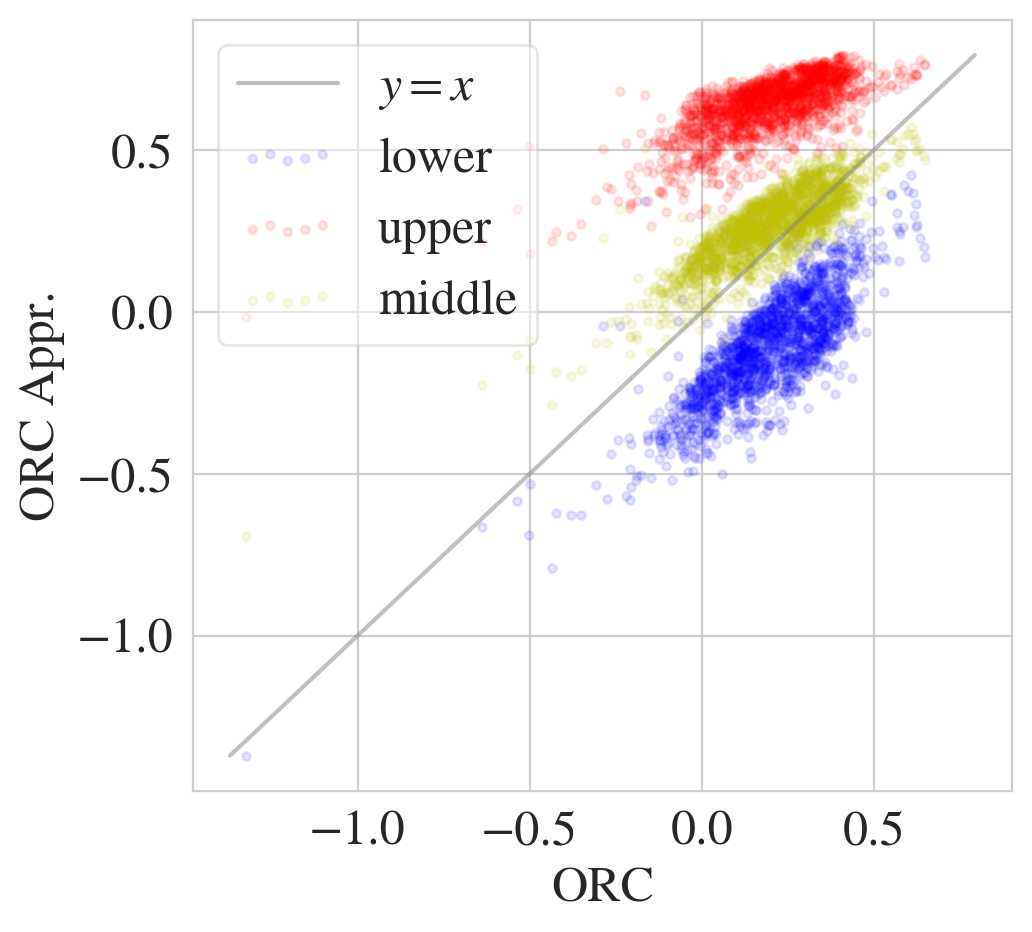}
		\end{tabular}
		\caption{Scatter plot of the ORC on $G$ obtained from the optimization versus its approximation, where the optimization is done by solving the earth mover's distance exactly (middle) and approximately with Sinkhorn (right), and $G$ is obtained from the planted SBM of size $n=100$, two equally-sized blocks, $p_{out}=0.02$ and $p_{in}=0.4$, with weights being $1$ (top), and the $2$-d RGG of size $n=100$ and radius $r = 0.4$ with weights being $1$ (middle), and proportional to the distance (bottom). $G$ is visualised in the left column, with the color indicating the ORC obtained by solving the earth mover's distance exactly.} 
		\label{fig:ORC-approx}
	\end{figure}
	Computing ORC on a graph is quite expensive. The high cost arises mainly from the computation of the $W_1$-distance, which, in the discrete setting, is also known as \emph{earth mover's distance}. Computing $W_1(m_u,m_v)$ between the neighborhoods of two vertices $u,v$ that are connected by an edge, corresponds to solving the linear program
	\begin{align}
		\min\quad &W_1(m_u, m_v) = \inf_{\Lambda} \sum_{i \sim x} \sum_{j \sim y} \lambda_{ij} d_G(i,j)\label{eq:lp}\\
		{\rm s.t.} \quad &\sum_{i \sim x} \lambda_{ij} = m_y^j \;\; \forall \; j \sim y;\qquad 
		\sum_{j \sim y} \lambda_{ij} = m_x^i \; \forall \; i \sim x; \notag \\
		&\lambda_{ij} \geq 0 \;\; \forall \; i,j \; . \notag
	\end{align}
	Classically, $W_1(m_u,m_v)$ is computed using the \emph{Hungarian algorithm}~\citep{kuhn1955hungarian}, the fastest variant of which has a cubic complexity. With growing graph size, the computational cost of the Hungarian algorithm becomes prohibitively large. Alternatively, $W_1(m_u,m_v)$ can be approximated with the faster Sinkhorn algorithm~\citep{sinkhorn1967concerning}, whose complexity is only quadratic. However, even a quadratic cost introduces a significant computational bottleneck on the large-scale network data that we encounter in machine learning and data science applications. Here, we propose a different approach. Instead of approximating the $W_1$-distance, we propose an approximation of ORC. Our approximation can be written as a simple combinatorial quantity, which can be computed in linear time. 
	
	We begin by recalling a few classical bounds on ORC, first proven by~\citet{jost2014ollivier}.
	Let $\#(u,v)$ be the number of triangles that include the edge $(x,y)$, $a \wedge b := \min \{ a,b \}$ and $a \vee b := \max \{ a,b \}$. With respect to these quantities, upper and lower bounds on ORC are given as follows:
	\begin{theorem}[Unweighted case~\citep{jost2014ollivier}]
		\label{the:ollilow}
		\hfill
		\begin{enumerate}
			\item Lower bound on \orc:
			\begin{align}\label{eq:ollilow}
				\Ric_O (\{u,v\}) &\geq - \left(
				1 - \frac{1}{d_v} - \frac{1}{d_u} - \frac{\#(u,v)}{d_u \wedge d_v}\right)_+ \\
				&\quad - \left(
				1 - \frac{1}{d_v} - \frac{1}{d_u} - \frac{\#(u,v)}{d_u \vee d_v}\right)_+ + \frac{\#(u,v)}{d_u \vee d_v} \; . \nonumber
			\end{align}
			Note that a simpler, but less tight lower bound with respect to node degrees only is given by
			\begin{equation*}
				\Ric_O (\{u,v\}) \geq -2 \left(
				1 - \frac{1}{d_v} - \frac{1}{d_u} \right) \; .
			\end{equation*}
			\item Upper bound on \orc:
			\begin{equation}\label{eq:olliup}
				\Ric_O (\{u,v\}) \leq \frac{\#(u,v)}{d_u \vee d_v}\; .
			\end{equation}
		\end{enumerate}
	\end{theorem}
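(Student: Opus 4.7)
The plan is to prove the upper bound via Kantorovich--Rubinstein duality and the lower bound by constructing an explicit transport plan; the simpler degree-only estimate falls out as a specialization.

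\textbf{Upper bound.} I would apply the duality
\begin{equation*}
W_1(m_u,m_v) \;=\; \sup_{f:\,\mathrm{Lip}(f)\le 1}\Bigl[\textstyle\int f\,dm_u - \int f\,dm_v\Bigr]
\end{equation*}
with the $1$-Lipschitz test function $f(z) := d_G(z, N_v)$, where $N_v := \{z : z\sim v\}$ is the open neighborhood of $v$ (distance-to-a-set is automatically $1$-Lipschitz). By construction, $f$ vanishes on $N_v$ (so on $u$ and on the $\#(u,v)$ common neighbors $T := N_u\cap N_v$), equals $1$ at $v$, and equals $1$ on every $z\in N_u\setminus(T\cup\{v\})$: such a $z$ is adjacent to $u\in N_v$ (forcing $f(z)\le 1$) while lying outside $N_v$ (forcing $f(z)\ge 1$). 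Direct counting yields $\int f\,dm_u = (d_u-\#(u,v))/d_u$ and $\int f\,dm_v = 0$, hence $W_1(m_u,m_v)\ge 1 - \#(u,v)/d_u$. The symmetric witness $f(z):=d_G(z,N_u)$ gives $W_1\ge 1-\#(u,v)/d_v$, and taking the stronger of the two upper bounds on $\Ric_O(\{u,v\}) = 1 - W_1(m_u,m_v)$ (the one with the larger denominator) delivers \eqref{eq:olliup}.

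\textbf{Lower bound.} I would upper-bound $W_1(m_u,m_v)$ by the cost of an explicit transport plan. Assume without loss of generality $d_u\ge d_v$, write $a:=\#(u,v)$, and partition $N_u=\{v\}\sqcup T\sqcup R_u$ and $N_v=\{u\}\sqcup T\sqcup R_v$, with $T=N_u\cap N_v$, $|R_u|=d_u-1-a$, $|R_v|=d_v-1-a$. The plan has four pieces: (i) send the mass $1/d_u$ at $v$ to some vertex of $R_v$ at distance $1$; (ii) identity-pair the common mass $1/d_u$ at each $t\in T$ at cost $0$, leaving a residual demand $1/d_v-1/d_u$ per $t$; (iii) route $1/d_v$ mass from $R_u$ to $u$ at distance $1$ and $1/d_v-1/d_u$ mass from $R_u$ to each $t\in T$ at distance $\le 2$ (through $u$); (iv) send the remaining $R_u$-mass to the still-unfilled portion of $R_v$ at distance $\le 3$ (through $u\sim v$). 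A direct supply-demand check confirms that these four pieces exhaust $m_u$ and saturate $m_v$, and summing the cost gives
\begin{equation*}
W_1(m_u,m_v) \;\le\; 1 + A + B - a/d_u,
\end{equation*}
where $A := 1 - 1/d_u - 1/d_v - a/d_v$ and $B := 1 - 1/d_u - 1/d_v - a/d_u$ are exactly the two parenthesized expressions of \eqref{eq:ollilow}. Since $A\le A_+$, $B\le B_+$, and $d_u = d_u\vee d_v$, the right-hand side is at most $1 + A_+ + B_+ - a/(d_u\vee d_v)$, yielding \eqref{eq:ollilow}. The simpler degree-only lower bound is the $a = 0$ specialization of the same computation.

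\textbf{Main obstacle.} The real bookkeeping lies in the degenerate configurations where $R_u$ or $R_v$ is empty (e.g., $a = d_u-1$ forces $R_u=\emptyset$) or where a step would formally transport a negative amount. In those regimes one reroutes the plan---for instance, $v$'s mass flows directly to $u$ when $R_v=\emptyset$---and one checks that the corresponding $(\cdot)_+$ term vanishes at the same time, so the claimed inequality is preserved. One must also argue that the ``distance $\le 2$'' and ``distance $\le 3$'' bounds used in steps (iii) and (iv) are correct, which follows immediately from the two-step path through $u$ and the three-step path through $u\sim v$. The upper bound, by contrast, hinges only on selecting the right Lipschitz witness and is the short half of the argument.
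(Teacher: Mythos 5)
The paper itself does not prove this theorem; it recalls it from \citet{jost2014ollivier}, so there is no in-paper argument to compare against, and your route (Kantorovich duality for the upper bound, an explicit coupling for the lower bound) is the standard one. Your upper bound is correct: with $f(z)=d_G(z,N_v)$ one indeed gets $W_1(m_u,m_v)\ge 1-\#(u,v)/d_u$, and symmetrizing gives $\Ric_O(\{u,v\})\le \#(u,v)/(d_u\vee d_v)$. Your generic lower-bound plan is also correct as far as it goes: with $a:=\#(u,v)$ and $d_u\ge d_v$ its cost is exactly $1+A+B-a/d_u$, and it is feasible precisely when $A\ge 0$ (note $A=|R_v|/d_v-1/d_u$ and $B=|R_u|/d_u-1/d_v$, and under $d_u\ge d_v$ one has $A\le B$, so $A\ge0$ already forces $B\ge 0$). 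In that regime the stated inequality follows as you say.

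The gap is in the degenerate regime you defer to the ``main obstacle'' paragraph, which under the normalization $d_u\ge d_v$ is exactly $A<0$, i.e.\ $R_v=\emptyset$, $d_v=a+1$. There your proposed reroute (``$v$'s mass flows directly to $u$'') provably does \emph{not} yield the claimed bound: its cost is $1/d_v+2a/d_v-2a/d_u$, which exceeds the target $1+B_+-a/d_u$ by $1/d_u$ when $B\ge 0$ and by $a(1/d_v-1/d_u)$ when $B<0$. A concrete counterexample: let $G$ be the triangle $\{u,v,t\}$ with one pendant vertex attached to $u$, so $d_u=3$, $d_v=2$, $a=1$; both positive-part terms vanish and the theorem asserts $W_1(m_u,m_v)\le 2/3$, but your rerouted plan costs $5/6$. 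The missing idea is that when $R_v=\emptyset$ the plan must use the distance-one edges from $v$ to the common neighbors: in the example one sends $1/6$ of $v$'s mass to $t$ and $1/6$ to $u$, with the pendant's $1/3$ going to $u$, achieving $2/3$; in general $v$'s surplus should fill the deficit on $T$ first (cost $1$ per unit instead of the cost-$2$ route from $R_u$ through $u$), with $R_u$ redirected to $u$. So the assertion that ``the corresponding $(\cdot)_+$ term vanishes at the same time, so the claimed inequality is preserved'' is not a proof and is false for the reroute you specify; the case $R_v=\emptyset$ (including its sub-case $B<0$, where in fact all mass can be moved at distance one for a cost of $1-a/d_u$) needs to be argued separately. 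A minor additional point: the degree-only bound is not literally the $a=0$ specialization of the proven inequality for the given graph; it follows from running your generic plan while ignoring the common neighbors, which is fine but should be stated as such.
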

	\noindent In the following section, we will derive an extension of these results to weighted graphs.\\
	
	\noindent Now, let $\Ric_O^{up}$ and $\Ric_O^{low}$ denote the upper and lower bounds, respectively. We propose to approximate $\Ric_O(\{u,v\})$ as the arithmetic mean of the upper and lower bounds, i.e.,
	\begin{equation}\label{eq:orc-approx}
		\widehat{\Ric_O}(\{u,v\}) := \frac{1}{2} \left(\Ric_O^{up}(\{u,v\}) + \Ric_O^{low}(\{u,v\}) \right) \; .
	\end{equation}
	Like the bounds themselves, $\widehat{\Ric_O}$ has a simple combinatorial form and can be computed in linear time. Since the computation relies only on local structural information, the procedure can be parallelized easily. Fig.~\ref{fig:ORC-approx} demonstrates the utility of this approximation in practise (in both the SBM and Random Geometric Graphs (short: RGG); see Appendix \ref{apx:data} for more explanation).

	%%%%%%%%%%%%%%%%
	\subsubsection{Bounds on Ollivier's curvature in weighted graphs}\label{sec:orc-approx}
	
	Since our curvature-based algorithms work by modifying the edge weights in a graph according to their curvature values, we require an analogue of Theorem~\ref{the:ollilow} for weighted graphs, in order to use a simple combinatorial approximation to the curvature as in Equation~\eqref{eq:orc-approx}. To do so, we make use of the dual description of the Wasserstein distance between two measures $m_1, m_2$:
	$$W_1(m_1,m_2)= \inf_{\pi \in \Pi(m_1,m_2)} \EE_{(x,y)\sim \pi} d(x,y) = \sup_{\|h\|_L\leq 1} \left[\EE_{x\sim m_1} [h(x)] - \EE_{y\sim m_2} [h(y)] \right],$$ where $\Pi(m_1,m_2)$ is the set of couplings of $m_1, m_2$, and the supremum is over all functions $h:V\rightarrow\RR$ which are Lipschitz with constant at most 1 (with respect to shortest path distance). We may obtain an upper bound on $W_1(m_1,m_2)$ by finding any coupling $\pi$ of $m_1,m_2$, and a lower bound by finding any 1-Lipschitz function $h$. 
	%%%%
	\paragraph{Upper bound on $W_1(m_1,m_2)$.} By constructing a transport plan taking $m_x$ to $m_y$, we obtain the following upper bound:
	\begin{lem}
		\label{lem:up}
		Let $m_x, m_y$ denote the measures on the node neighborhoods of $x$ and $y$ given by taking $\alpha=0$ and $p=1$ in Equation~\eqref{eq:nodemeasure}. Denote the vertices in $N(x)\setminus N(y)$ with $\ell$, $N(y)\setminus N(x)$ with $r$, and $N(x)\cap N(y)$ with $c$. Introduce the shorthands $L_x := \sum_\ell m_x(\ell)$, $L_y := \sum_\ell m_y(\ell)$, as well as
		$X_x:=m_x(x)$, $X_y:=m_y(x)$. Then
		\begin{align*}
			W_1(m_x,m_y)&\leq \sum_\ell w_{\ell,x} m_x(\ell)+\sum_r w_{r,y} m_y(r)\\
			&\quad+\sum_c w_{c,y}(m_x(c)-m_y(c))_+ +w_{c,x}(m_y(c)-m_x(c))_+\\
			&\quad+ \left|L_x+X_x - X_y - \sum_c (m_y(c)-m_x(c))_+\right|w_{x,y} \; .
		\end{align*}
	\end{lem}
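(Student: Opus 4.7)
The plan is to exhibit an explicit signed flow $f$ on the edges of $G$ whose divergence equals $m_x-m_y$, and then invoke the Kantorovich--Rubinstein duality stated in the text to conclude $W_1(m_x,m_y)\le \sum_e |f(e)|\,w(e)$, matching the claimed bound. I use the convention that the index sets $\{\ell\},\{r\}$ exclude $y$ and $x$ respectively, so that the edge $\{x,y\}$ carries only the balancing flow.

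\textbf{Flow construction.} Define $f$ by: $f(\ell\to x):=m_x(\ell)$ for each $\ell$; $f(y\to r):=m_y(r)$ for each $r$; $f(c\to y):=(m_x(c)-m_y(c))_+$ and $f(x\to c):=(m_y(c)-m_x(c))_+$ for each $c$; and $f(x\to y):=F$ with
\[ F \;:=\; L_x + X_x - X_y - \sum_c (m_y(c)-m_x(c))_+ \]
(a negative value meaning flow from $y$ to $x$). All other edges carry no flow.

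\textbf{Divergence verification.} At each $\ell,r,c$, direct inspection gives $\mathrm{div}(f)(v)=m_x(v)-m_y(v)$; e.g., at a common neighbor $c$, the net outflow is $(m_x(c)-m_y(c))_+ - (m_y(c)-m_x(c))_+ = m_x(c)-m_y(c)$. At $v=x$, the outflow is $\sum_c(m_y(c)-m_x(c))_+ + F$ and the inflow is $L_x$, so by the definition of $F$ we have $\mathrm{div}(f)(x)=X_x-X_y$, as required. Total mass conservation $\sum_v m_x(v)=\sum_v m_y(v)=1$ then forces $\mathrm{div}(f)(y)=m_x(y)-Y_y$, so $f$ realizes the divergence $m_x-m_y$.

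\textbf{Duality and cost.} For any $1$-Lipschitz $h:V\to\RR$, summation by parts gives
\[ \sum_v h(v)\bigl(m_x(v)-m_y(v)\bigr) \;=\; \sum_v h(v)\,\mathrm{div}(f)(v) \;=\; \sum_{\{u,v\}\in E} f(u\to v)\bigl(h(u)-h(v)\bigr) \;\le\; \sum_{e\in E} |f(e)|\,w(e), \]
since $|h(u)-h(v)|\le w_{u,v}$ on each edge by $1$-Lipschitzness with respect to the path distance. Taking the supremum over $h$ gives $W_1(m_x,m_y)\le \sum_e |f(e)|\,w(e)$, and reading off the magnitudes $|f(e)|$ on the edges carrying flow recovers exactly the right-hand side of the lemma.

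\textbf{Main obstacle.} The main bookkeeping difficulty is the consistent treatment of $x$ and $y$: one must check that the single scalar $F$ determined by the divergence equation at $x$ is also compatible with the divergence equation at $y$. The identity $\sum_v m_x(v)=\sum_v m_y(v)=1$ makes the two balance equations dependent, which is what permits $F$ to simultaneously balance both endpoints of $\{x,y\}$; otherwise an additional correction would be required.
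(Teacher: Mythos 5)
Your proof is correct. The mass movements you encode are exactly the ones behind the paper's bound (shared-neighbor mass of type $\ell$ goes to $x$, type-$r$ mass comes from $y$, the common neighbors $c$ exchange their surplus with $x$ or $y$, and the edge $\{x,y\}$ carries the residual $F$), but your formal justification takes a different route. The paper argues primally: it constructs an explicit transport plan on the subgraph spanned by $x,y$ and their neighbors, first "pre-transporting" the $\ell$- and $r$-mass and passing to modified measures $\tilde m_x,\tilde m_y$ supported on $\{x\}\cup\{c\}\cup\{y\}$, then finishing the coupling there and adding up costs. You instead package the same moves as a signed edge flow with divergence $m_x-m_y$, and bound the Kantorovich dual objective uniformly over $1$-Lipschitz $h$ by summation by parts, using $|h(u)-h(v)|\le d_G(u,v)\le w_{u,v}$ on each edge; taking the supremum then gives $W_1\le\sum_e|f(e)|\,w(e)$. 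Your divergence checks at $\ell,r,c,x$ are right, and the conservation argument pinning down the balance at $y$ is legitimate (it can also be verified directly). What your route buys is a cleaner treatment of the relayed transport (e.g.\ $\ell\to x\to y\to r$): the paper's reduction to $\tilde m_x,\tilde m_y$ implicitly uses a triangle-inequality step to justify splitting the cost, whereas your flow formulation makes this local and automatic. What the paper's primal construction buys is an explicit plan that is then reused (in the subsequent remark) to show the bound is attained on weighted trees by exhibiting a matching $1$-Lipschitz function; your argument yields the inequality equally well but does not by itself produce that certificate of tightness.
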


	\begin{rmk}
		This upper bound is exact for weighted trees. Consider $x\sim y$ in the weighted tree $G$, and define $\epsilon=\mathrm{sgn}(L_x+X_x-X_y)$. Define the 1-Lipschitz function $h$ as follows:
		$$
		h(v)= \begin{cases}
			w_{\ell,x}+\epsilon w_{x,y}&\text{if }v=\ell\in N(x)\setminus N(y)\\
			\epsilon w_{x,y}&\text{if }v=x\\
			0&\text{if }v=y\\
			-w_{r,y}&\text{if }v=r\in N(y)\setminus N(x)\\
			w_{\ell,x}+\epsilon w_{x,y}&\text{if }v\in \mathcal{X}_\ell\\
			-w_{r,y}&\text{if }v\in \mathcal{Y}_r,
		\end{cases}
		$$
		where $\mathcal{X}_\ell$ is the set of vertices $v\in V\setminus(N(x)\cup N(y))$ such that $d_G(v,x)<d_G(v,y)$, and the shortest $(v,x)$ path in $G$ passes through $\ell$; $\mathcal{Y}_r$ is the set of vertices $v\in V\setminus (N(x)\cup N(y))$ such that $d_G(v,x)>d_G(v,y)$, and the shortest $(v,y)$ path in $G$ passes through $r$. In other words, $h$ is constant on any branches of the tree starting from an edge $\{\ell,x\}$ or $\{y,r\}$. This function $h$ is 1-Lipschitz since there are no additional paths in $G$ that have not been considered in the subgraph $H$: this would create a cycle, violating the assumption that $G$ is a tree. Let the upper bound from the lemma be denoted $U_{x,y}$. We have
		\begin{align*}
			W_1(m_x,m_y)&\geq \EE_{v\sim m_x} [h(v)] - \EE_{v\sim m_y} [h(v)]\\
			&= \sum_{v\in V} h(v) (m_x(v)-m_y(v))\\
			&= \sum_{\ell} (w_{\ell,x}+\epsilon w_{x,y})m_x(\ell)+ \epsilon w_{x,y}(m_x(x)-m_y(x))- \sum_r w_{r,y} (-m_y(r))\\
			&= U_{x,y}\geq W_1(m_x,m_y).
		\end{align*}
		Thus, $W_1(m_x,m_y)=U_{x,y}$ in this setting, as we wanted to show.
	\end{rmk}
	As a computational tool, this bound is a step in the right direction: Consider the usual procedure for calculating ORC. We need to compute $m_x, m_y$, and $d_G(u,v)$ for each pair of vertices $u,v\in V_H$. Then we need to carry out the optimization, which can be realized as a linear program as in Equation~\eqref{eq:lp}. This lower bound still requires the first step, but avoids the need for the optimization over couplings, which is the major bottleneck in the curvature computation.
	
	%%%%%%%%
	\paragraph{Lower bound on $W_1(m_1,m_2)$.} 
	We now provide a lower bound on $W_1(m_x,m_y)$:

	\begin{lem}
		\label{lem:low}
		Let $m_x, m_y$ be as defined above, and let $\mathcal{P}= \{v\in N(x)\cup N(y): m_x(v)-m_y(v)>0\}$, $\mathcal{N}=\{v\in N(x)\cup N(y): m_x(v)-m_y(v)<0\}$. For $S\subseteq V$ and $v\in V$, let $d_G(v,S)=\min\{d_G(v,u): u\in S\}$. Then
		$$W_1(m_x,m_y)\geq \max\left\{ \sum_{v\in\mathcal{P}} d_G(v,\mathcal{N})(m_x(v)-m_y(v)), \sum_{v\in\mathcal{N}} d_G(v,\mathcal{P})(m_y(v)-m_x(v))\right\}.$$
	\end{lem}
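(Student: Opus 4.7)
The plan is to use the Kantorovich--Rubinstein dual formulation of $W_1$ stated earlier in the excerpt, and construct two explicit $1$-Lipschitz witnesses, one for each candidate lower bound. Since the two expressions inside the max are symmetric in the roles of $\mathcal{P}$ and $\mathcal{N}$, it suffices to prove each separately and then take the maximum.

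For the first bound, I would define $h_1 : V \to \RR$ by $h_1(v) := d_G(v, \mathcal{N})$, the shortest-path distance from $v$ to the set $\mathcal{N}$. This function is $1$-Lipschitz with respect to $d_G$ by the standard triangle-inequality argument: for any $u, v \in V$ and any $w \in \mathcal{N}$, $d_G(u, w) \leq d_G(u,v) + d_G(v,w)$, so taking the minimum over $w \in \mathcal{N}$ on both sides and then reversing the roles of $u, v$ gives $|h_1(u) - h_1(v)| \leq d_G(u,v)$. Plugging $h_1$ into the dual, I get
\begin{equation*}
W_1(m_x, m_y) \;\geq\; \sum_{v \in V} h_1(v)\,(m_x(v) - m_y(v)).
\end{equation*}
Now decompose the sum by partitioning the support of $m_x - m_y$, which is contained in $N(x)\cup N(y)$, into $\mathcal{P}$, $\mathcal{N}$, and the \emph{zero set} $\mathcal{Z}$ where $m_x = m_y$. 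On $\mathcal{N}$, $h_1 \equiv 0$; on $\mathcal{Z}$, the differences $m_x(v)-m_y(v)$ vanish; hence only the sum over $\mathcal{P}$ survives, yielding exactly $\sum_{v\in\mathcal{P}} d_G(v,\mathcal{N})(m_x(v)-m_y(v))$.

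For the second bound, I would take $h_2(v) := -d_G(v, \mathcal{P})$, which is again $1$-Lipschitz by the same argument (negating a $1$-Lipschitz function preserves the Lipschitz constant). Applying the dual formula with $h_2$ and decomposing as before — now $h_2$ vanishes on $\mathcal{P}$ and the $\mathcal{Z}$ terms vanish from $m_x - m_y = 0$ — leaves $-\sum_{v\in \mathcal{N}} d_G(v,\mathcal{P})(m_x(v)-m_y(v)) = \sum_{v\in\mathcal{N}} d_G(v,\mathcal{P})(m_y(v)-m_x(v))$, which is the second expression. Taking the maximum of the two lower bounds concludes the proof.

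There is no real obstacle here; the result is essentially a direct consequence of the dual characterization of $W_1$ combined with the fact that $d_G(\,\cdot\,, S)$ is always a $1$-Lipschitz function on any metric graph. The only point worth being careful about is that the bound uses only the support of $m_x - m_y$, which is why restricting the definitions of $\mathcal{P}, \mathcal{N}$ to $N(x)\cup N(y)$ (the support of both measures, since $\alpha = 0$) is harmless: terms outside this set contribute $0$ to the dual sum. One minor remark is that when $\mathcal{N} = \emptyset$ the convention $d_G(v, \emptyset) = +\infty$ would create issues, but this degeneracy cannot occur unless $m_x = m_y$ identically (in which case $W_1 = 0$ and both sums are trivially zero), so the lemma is vacuous in that case.
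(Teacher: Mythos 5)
Your proof is correct and follows essentially the same route as the paper: the paper also lower-bounds $W_1$ via the Kantorovich dual with the distance-to-set functions, merely packaging your two witnesses as the endpoints $h_0=-d_G(\cdot,\mathcal{P})$ and $h_1=d_G(\cdot,\mathcal{N})$ of a convex family $h_\lambda$ and invoking linearity in $h$ to reduce to $\lambda\in\{0,1\}$. Your direct use of the two functions, together with the same triangle-inequality Lipschitz argument and the $\mathcal{P},\mathcal{Z},\mathcal{N}$ decomposition, is equivalent.
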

	
	\begin{rmk}\normalfont
		A lower bound for $W_1(m_x,m_y)$ was previously given by~\citet{jost2014ollivier}. However, the ORC notion considered therein differed from ours in the choice of the mass distribution imposed on node neighborhoods. In addition, we believe that the result in~\citet{jost2014ollivier} has a slight inaccuracy; for details, see the discussion in Remark \ref{rmk:jost-liu-issue}.
	\end{rmk}

	%%%%%%
	\paragraph{Combinatorial bounds on ORC in the weighted case.}
	We now utilize the combinatorial upper and lower bounds on $W_1(m_x,m_y)$ to obtain combinatorial bounds on ORC itself.
	
	\begin{theorem}[ORC bounds (weighted case)]
		\label{thm:weightollibnds}
		For $x,y\in V, x\sim y$, we have the following bounds: 
		\begin{enumerate}
			\item Upper bound on \orc:
			\begin{equation*}
				\Ric_O(\{x,y\})\leq 1- \max \left\{\sum_v \frac{d_G(v,\mathcal{N})}{w_{xy}}(m_x(v)-m_y(v))_+, \sum_v \frac{d_G(v,\mathcal{P})}{w_{xy}}(m_y(v)-m_x(v))_+ \right\}.
			\end{equation*}
			\item Lower bound on \orc:
			\begin{align*}
				\Ric_O(\{x,y\})&\geq 1- \sum_\ell \frac{w_{\ell,x}}{w_{x,y}} m_x(\ell)-\sum_r \frac{w_{r,y}}{w_{x,y}} m_y(r)\\
				&\quad-\sum_c \left[\frac{w_{c,y}}{w_{x,y}}(m_x(c)-m_y(c))_+ +\frac{w_{c,x}}{w_{x,y}}(m_y(c)-m_x(c))_+\right]\\
				&\quad- \left|L_x+X_x - X_y - \sum_c (m_y(c)-m_x(c))_+\right| \; .
			\end{align*}
		\end{enumerate}
	\end{theorem}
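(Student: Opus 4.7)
The proof is essentially a direct application of the two preceding lemmas together with the definition of ORC. The plan is to use the elementary observation that, since $x \sim y$, the shortest-path distance $d_G(x,y)$ equals the edge weight $w_{x,y}$, and then combine this with the definition
\begin{equation*}
\Ric_O(\{x,y\}) = 1 - \frac{W_1(m_x, m_y)}{d_G(x,y)} = 1 - \frac{W_1(m_x, m_y)}{w_{x,y}} \, .
\end{equation*}
Because dividing by $w_{x,y} > 0$ preserves inequalities and then subtracting from $1$ flips them, upper bounds on $W_1(m_x, m_y)$ translate into lower bounds on $\Ric_O(\{x,y\})$, and vice versa.

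For the upper bound on $\Ric_O(\{x,y\})$, I would invoke Lemma~\ref{lem:low}, which yields a lower bound on $W_1(m_x, m_y)$ in terms of $\max\{\sum_{v \in \mathcal{P}} d_G(v,\mathcal{N})(m_x(v)-m_y(v)), \sum_{v \in \mathcal{N}} d_G(v,\mathcal{P})(m_y(v)-m_x(v))\}$. Dividing both sides by $w_{x,y}$ and negating gives the claimed upper bound, after rewriting the sums using that $(m_x(v)-m_y(v))_+ = m_x(v)-m_y(v)$ when $v \in \mathcal{P}$ and is zero otherwise (and similarly for $\mathcal{N}$), so the sums over $\mathcal{P}$ and $\mathcal{N}$ can equivalently be written as sums over all $v$ with the positive-part operator.

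For the lower bound on $\Ric_O(\{x,y\})$, I would invoke Lemma~\ref{lem:up}, which provides an upper bound on $W_1(m_x, m_y)$ as a combinatorial expression involving $w_{\ell,x} m_x(\ell)$ terms over $\ell \in N(x) \setminus N(y)$, $w_{r,y} m_y(r)$ terms over $r \in N(y) \setminus N(x)$, contributions from the common neighborhood $N(x) \cap N(y)$ with the $(\cdot)_+$ structure, and the residual mass term scaled by $w_{x,y}$. Dividing through by $w_{x,y}$ and subtracting from $1$ yields the stated lower bound termwise; the last term loses its $w_{x,y}$ factor after this division, matching the theorem.

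There is no real obstacle here: both steps are one-line algebraic manipulations from the already-established lemmas, and the only thing to double-check is that the positive-part rewriting in the upper bound on ORC matches exactly the form stated in Lemma~\ref{lem:low} (which restricts sums to $\mathcal{P}$ or $\mathcal{N}$), and that the $w_{x,y}$ factor cancels cleanly in each term coming from Lemma~\ref{lem:up}. No additional technical work or new ideas are required beyond what was done in the proofs of the two lemmas.
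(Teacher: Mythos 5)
Your proposal is correct and follows essentially the same route as the paper, which obtains Theorem~\ref{thm:weightollibnds} directly from Lemma~\ref{lem:up} and Lemma~\ref{lem:low} via the identity $\Ric_O(\{x,y\}) = 1 - W_1(m_x,m_y)/w_{x,y}$ for adjacent $x,y$ (using $d_G(x,y)=w_{x,y}$ as stated in the paper), so that the lower bound on $W_1$ yields the upper bound on ORC and vice versa. Your rewriting of the sums over $\mathcal{P}$ and $\mathcal{N}$ using the positive-part operator, and the cancellation of the $w_{x,y}$ factor in the final term of the lower bound, are exactly the bookkeeping steps needed and are handled correctly.
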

	While technical, both bounds are purely combinatorial and can be computed in linear time without the need to solve a linear program.

	%%%%%%%
	%%%
	\begin{example}[$d$-regular asymmetric unweighted graphs]
		Suppose $G$ is a $d$-regular asymmetric unweighted graph. Now take any $x\sim y$ in $G$. 
		
		We have:
		\begin{align*}
			X_x+L_x-X_y-\sum_c (m_y(c)-m_x(c))_+&= \alpha+(1-\alpha)\left(\frac{\#\ell}{d}-\frac{1}{d}- \frac{\#c}{d}+\frac{\#c}{d}\right)\\
			&= \alpha+(1-\alpha)\frac{\#\ell-1}{d}\geq0,
		\end{align*}
		since the assumption of asymmetry ensures that $x$ and $y$ each have at least one unshared neighbor (otherwise there would be a nontrivial automorphism of $G$ swapping $x$ and $y$). Now suppose $\alpha=1/(d+1)$, so that $m_x(x)=m_x(v)=1/(d+1)$ for each $v\sim x$. Note that $m_x(c)=m_y(c)$ for any common vertices, and $x\not\in \mathcal{N}$. Then a shortest path for a vertex of type $\ell$ to the set $\mathcal{N}$ must take one of three forms:
		\begin{enumerate}[(a)]
			\item $(\ell,x,y,r)$
			\item $(\ell,c,y,r)$
			\item $(\ell,c,r)$: Note that this case is impossible if $d=3$, since it requires $c$ to be adjacent to $x,y,\ell,r$.
			\item $(\ell,r)$: This corresponds to a chord-free 4-cycle in $G$ with vertices $(x,y,r,\ell)$.
		\end{enumerate}
		Note that the only vertices in $\mathcal{P}$ must be of type $\ell$, and the only vertices in $\mathcal{N}$ must be of type $r$. If $v\in \mathcal{P}$ has a path of type (d) to $\mathcal{N}$, then $d(\ell,\mathcal{N})=1$; if it has no path of type (d), but has a path of type (c) to $\mathcal{N}$, then $d(\ell,\mathcal{N})=2$; otherwise, it must have $d(\ell,\mathcal{N})=3$. If we partition the vertices of type $\ell$ into the sets $D, C, A$ ($A$ for all other vertices of type $\ell$) based on which type of path they have to $\mathcal{N}$, we get
		$$\Ric_O(\{x,y\})\leq 1-\frac{|D|}{d+1}-2\frac{|C|}{d+1}-3\frac{|A|}{d+1}.$$
		For a lower bound, we get 
		$$\Ric_O(\{x,y\})\geq 1- 2\frac{\#\ell}{d+1}-\frac{\#r}{d+1}=1-3\frac{|A|+|C|+|D|}{d+1},$$ since $\#r=\#\ell=|A|+|C|+|D|.$ In particular, when $|C|=|D|=0$, these are equal.
	\end{example}

	%%%%%%%%%%%%%%%%%%%%%%%%%%%
	\subsection{Forman's Ricci Curvature}
	%\color{blue}
	%\begin{itemize}
	%\item Graphs as CW complexes
	%\item Formal definition (2-complex version from~\citep{WSJ2}
	%\item Importance of accounting for curvature contributions of 2-faces up to order 4
	%\item Computational considerations (complexity)
	%\end{itemize}
	%\color{black}
	
	\noindent Our second notion of discrete curvature utilizes an analogy between spectral properties of manifolds and CW complexes to define a discrete Ricci curvature.
	
	\subsubsection{Formal Definition}
	Forman's discrete Ricci curvature (short: \frc) was originally defined as a measure of geodesic dispersion on CW complexes~\citep{forman2003bochner}.  In its most general form,  \frc is defined via a discrete analogue of the Bochner-Weitzenböck identity
	\begin{align*}
		\square_d = B_d + F_d \; ,
	\end{align*}
	which establishes a connection between the (discrete) Riemann-Laplace operator $\square_d$,  the Bochner Laplacian $F_d$ and Forman's curvature tensor $F_d$.  Networks can be viewed as polyhedral complexes, which is one instance of a CW complex.  In particular, by viewing a network's edges as 1-cells,  we can define \frc for edges $e=(v_1,v_2)$  as 
	\begin{align}\label{eq:frc-e}
		\Ric_{F} (e) = \w_e \left( \frac{\w_{v_1}}{\w_e} + \frac{\w_{v_2}}{\w_e} - \sum_{e_{v_1} \sim e} \frac{\w_{v_1}}{\sqrt{\w_e \w_{e_{v_1}}}} -\sum_{e_{v_2}\sim e} \frac{\w_{v_2}}{\sqrt{\w_e \w_{e_{v_2}}}}
		\right)
	\end{align}
	To arrive at this expression, we have specialized Forman's general notion for $F_d$ to the case of 1-cells (see also~\citep{WSJ1,WSJ2}).  Here,  $e_{v}$ denotes an edge that shares the vertex $v$ with $e$. 
	Note that if $G$ is unweighted, then this reduces to
	\begin{align*}
		\Ric_F (e) = 4 - d_{v_1} - d_{v_2} \; .
	\end{align*}
	We can define \frc for vertices (i.e., 0-cells) with respect to the curvature of its adjacent edges $E_v$ (i.e.,  $E_v := \lbrace e \in E: \; e=(\cdot,v) \; {\rm or} \; e=(v,\cdot)$):
	\begin{align}\label{eq:frc-v}
		\Ric_F(v) &= \sum_{e_u \in E_v} \Ric_F (e_u) \; .
	\end{align}
	
	% {\color{red}Add definition of augmented FRC.}
	It is well-known that higher-order structures impact the community structure in a graph, as well as the ability of many well-known methods to detect community structure. For example, the clustering coefficient, a well-known graph characteristic, can be defined via triangle counts (\citet{watts-strogatz}, see also sec.~\ref{sec:clustering-coeff}). Therefore, we
	additionally define an FRC notion, which takes contributions of higher-order structures into account. To characterize such structure, we introduce the following notion: We say that two edges $e,\hat{e}$ are \emph{parallel}, denoted as $e \| \hat{e}$, if they are adjacent to either the same vertex ($v \sim e,\hat{e}$) or the same face ($f \sim e,\hat{e}$), but not both.
	In the following, we consider 
	the 2-complex version of FRC~\citep{WSJ2}:
	\begin{align}\label{eq:frc-f}
		\Ric_{F} (e) &:= \w_{e} \left( \Big(\sum_{f \sim e}\frac{\w_{e}}{\w_{f}} \Big) + \frac{\w_{v_1}}{\w_{e}} + \frac{\w_{v_2}}{\w_{e}} 
		- \sum_{ \hat{e} \parallel e}\left\vert
		\sum_{f \sim \hat{e},e} \frac{\sqrt{\w_{e} \cdot \w_{\hat{e}}}}{\w_f} - \sum_{v \sim e, \hat{e}} \frac{\w_v}{\sqrt{\w_e \cdot \w_{\hat{e}}}} \right\vert \right)  \; .
	\end{align}
	Here, $f$ denotes a 2-face of order $k$, such as a triangle ($k=3$), a quadrangle ($k=4$), a pentagon ($k=5$), etc.

	\subsubsection{Computational Considerations} Notice that Eq.~\eqref{eq:frc-e} and Eq.~\eqref{eq:frc-v} give a simple, combinatorial curvature notion, which can be efficiently computed even on large-scale graphs. 
	To compute the 2-complex FRC of an edge $e$ exactly, one needs to identify all higher order faces in the neighborhood of an edge $e$ and compute their respective contribution in Eq.~\eqref{eq:frc-f}. This limits the scalability of this notion significantly. However, notice that the likelihood of finding a face of order $k$ in the neighborhood of an edge decreases rapidly as $k$ increases.  Fig.~\ref{fig:k-faces} illustrates this observation with summary statistics for simulations of two popular graph models (the SBM and RGG). We propose to approximate the 2-complex FRC by only considering triangular faces and ignoring structural information involving $k$-faces with $k>3$. This choice reduces the computational burden of the 2-complex FRC significantly. We demonstrate below that, experimentally, the performance of our FRC-based clustering algorithm does not improve, if higher-order faces are taking into account for the curvature contribution (see Table \ref{tab:sbm4-nmi}). Specifically, we will utilize the following \emph{augmented FRC}:
	\begin{equation}\label{eq:frc-2-complex}
		\Ric_{F} (e) := \w_e \Biggl( \Big(\sum_{ \triangle \sim e}\frac{\w_e}{\w_{\triangle}}\Big) + \frac{\w_{v_1}}{\w_{e}} + \frac{\w_{v_2}}{\w_{e}} - \Biggl[
		\sum_{\substack{e_{v_1} \sim e\\ e_{v_1} \parallel e}}\frac{\w_{v_1}}{\sqrt{\w_{e} \w_{e_{v_1}}}} + \sum_{\substack{e_{v_2} \sim e\\ e_{v_2} \parallel e}}\frac{\w_{v_2}}{\sqrt{\w_e  \w_{e_{v_2}}}}
		\Biggr] \Biggr) \; .
	\end{equation}
	We consider the following weighting scheme for faces, which utilizes Heron's formula to determine face weights via edge weights: Let $f=(e_i,e_j,e_k)$ denote a triangle in $G$, i.e., $e_i \sim e_j$, $e_j \sim e_k$ and $e_i \sim e_k$. Then we set
	\begin{align}
		\label{equ:heron}
		\w_f &:= \sqrt{s(s-\w_{e_i})(s-\w_{e_j})(s-\w_{e_k})} \\
		s &= \frac{\w_{e_i}+\w_{e_j}+\w_{e_k}}{2} \; . \nonumber
	\end{align}
	This weighting scheme was previously used in~\citep{WSJ2}.
	\begin{figure}[tp]
		\centering
		\begin{tabular}{cc}
			\includegraphics[width=.35\textwidth]{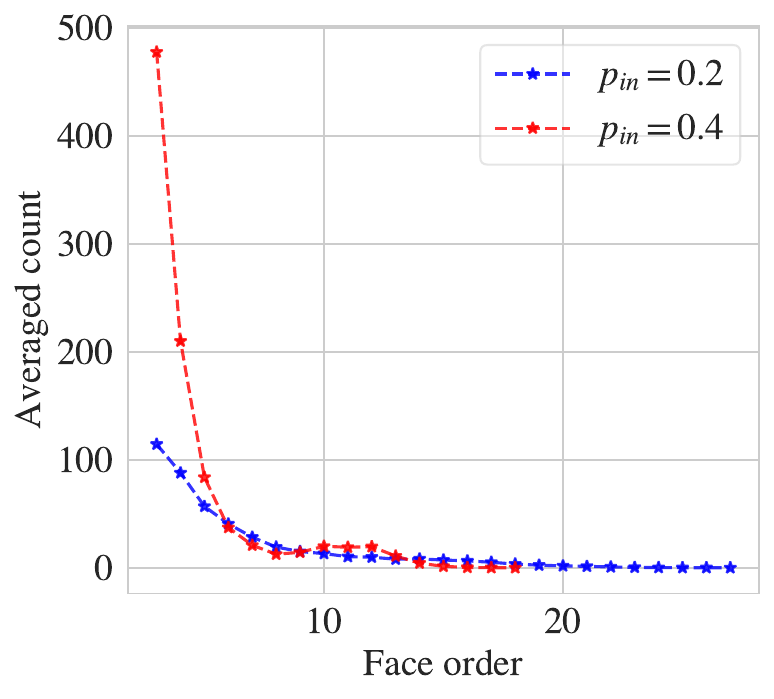} & \includegraphics[width=.35\textwidth]{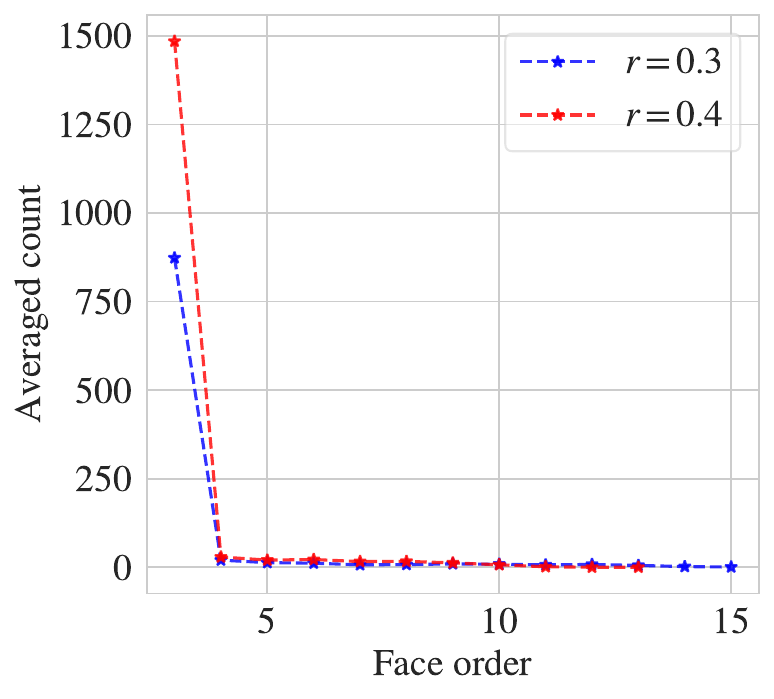} 
		\end{tabular}
		\caption{Number of $k$-faces in graphs obtained from the planted SBM of size $n=100$, two equally-sized blocks, $p_{out}=0.02$, and the $2$-d RGG of size $n=100$, where the results are averaged over $n_s = 50$ realisations of each set of parameters.}
		\label{fig:k-faces}
	\end{figure}
	%
	
	%%%%%%%%%%%%%%%%%%%
	\section{Discrete Curvature on the Line Graph}\label{sec:line-graph-curv}
	In this section, we investigate the computation of discrete Ricci curvature on the line graph. Again, we consider two notions of curvature, Ollivier's (ORC) and Forman's (FRC). Specifically, we want to describe the relationship between a graph's curvature and the curvature of its line graph.
	
	We have seen in Section~\ref{sec:graph-curv} that curvature can be defined on the node- and edge-level. Since nodes in the line graph correspond to edges in the underlying graph, it is natural to study the relationship between node-level curvature in the line graph and edge-level curvature in the original graph. In the context of the clustering applications considered in this paper, we are further interested in the insight that curvature can provide on differences between node clustering (based on connectivity in the original graph) and edge clustering (based on connectivity in the line graph). In this context, it is instructive to study the relationship between edge-level curvature in the line graph and edge-level curvature in the underlying graph.
	
	Recall that, given an unweighted, undirected graph $G=(V,E)$, its line graph is defined as $L(G)=(E,\mathcal{E})$, where an edge $\{\{u,v\},\{r,s\}\}\in\mathcal{E}$ in the line graph implies that $|\{u,v\}\cap \{r,s\}|=1$. We will make use of the following simple observation throughout the section:
	\begin{lem}\label{lem:degree}
		Node degrees in the line graph are given as
		\begin{equation*}
			\label{eq:degreerelation}
			d_{\{u,v\}} = d_u + d_v -2 \; ,
		\end{equation*}
		where the left hand side denotes node degrees in the line graph $L(G)$, and the right hand side the degrees of the vertices $u$ and $v$ in $G$.
	\end{lem}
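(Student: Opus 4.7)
The plan is to unpack the definition of the line graph adjacency and count directly. A node in $L(G)$ is an edge $\{u,v\}$ of $G$, and by definition the neighbors of $\{u,v\}$ in $L(G)$ are exactly those edges $e'\in E$ with $e'\neq \{u,v\}$ and $|e'\cap\{u,v\}|=1$. So computing the degree $d_{\{u,v\}}$ in $L(G)$ reduces to counting edges of $G$ that share exactly one endpoint with $\{u,v\}$.

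First, I would split this count according to which endpoint is shared. Let $A_u := \{e'\in E: u\in e', e'\neq \{u,v\}\}$ and similarly $A_v := \{e'\in E: v\in e', e'\neq \{u,v\}\}$. Since $G$ is simple and undirected, the edges incident to $u$ other than $\{u,v\}$ number exactly $d_u - 1$, and likewise $|A_v|=d_v-1$. Moreover, $A_u\cap A_v=\emptyset$, since any $e'\in A_u\cap A_v$ would be an edge distinct from $\{u,v\}$ containing both $u$ and $v$, which is impossible in a simple graph.

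Combining these two observations by the principle of inclusion-exclusion (which here is just additivity on disjoint sets) gives
\begin{equation*}
d_{\{u,v\}} = |A_u| + |A_v| = (d_u-1) + (d_v-1) = d_u + d_v - 2,
\end{equation*}
as claimed. There is no genuine obstacle here: the only subtlety worth flagging is the simple-graph assumption used to conclude $A_u\cap A_v=\emptyset$, which is implicit throughout the paper's line-graph setup.
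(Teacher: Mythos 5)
Your argument is correct and matches the paper's own justification, which notes that the neighbors of $\{u,v\}$ in $L(G)$ are precisely the edges $\{u,z\}$ or $\{v,z\}$ with $z\neq u,v$; you simply spell out the count $(d_u-1)+(d_v-1)$ and the disjointness in more detail.
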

	This formula follows from the fact that the edges connecting to $\{u,v\}$ in $L(G)$ are precisely those edges of the form $\{u,z\}\in E$ or $\{v,z\}\in E$, where $z\neq u,v$.
	
	\subsection{Numerical observations}
	\noindent We explore the relationship of (1) edge-level curvature in the original graph $G$ and node-level curvature in the corresponding line graph $L$ numerically, as well as (2) edge-level curvatures in the original graph $G$ and its line graph $L$. Our empirical results consider the two notions of curvature introduced above: Forman's Ricci curvature (FRC) and Ollivier's Ricci curvature (ORC). Our data sets are described in Appendix~\ref{apx:data}.
	To compute the classical FRC and ORC notions in the original graph $G$ and its line graph $L$ numerically, we use the built-in ORC computation both for edges and for nodes in \citep{ni2019community}. Our implementations for augmented FRC and our ORC approximation build on the code in this library. 
	%{\color{olive} TODO Yu: Rerun experiment with approximate ORC $\widehat{\Ric}_O$?} 
	Results from each random graph model are obtained from $n_s = 50$ realisations.

	\begin{figure}[!th]
		\centering
		\hspace*{-1.5em}
		\begin{tabular}{cccc}
			\includegraphics[width=.24\textwidth]{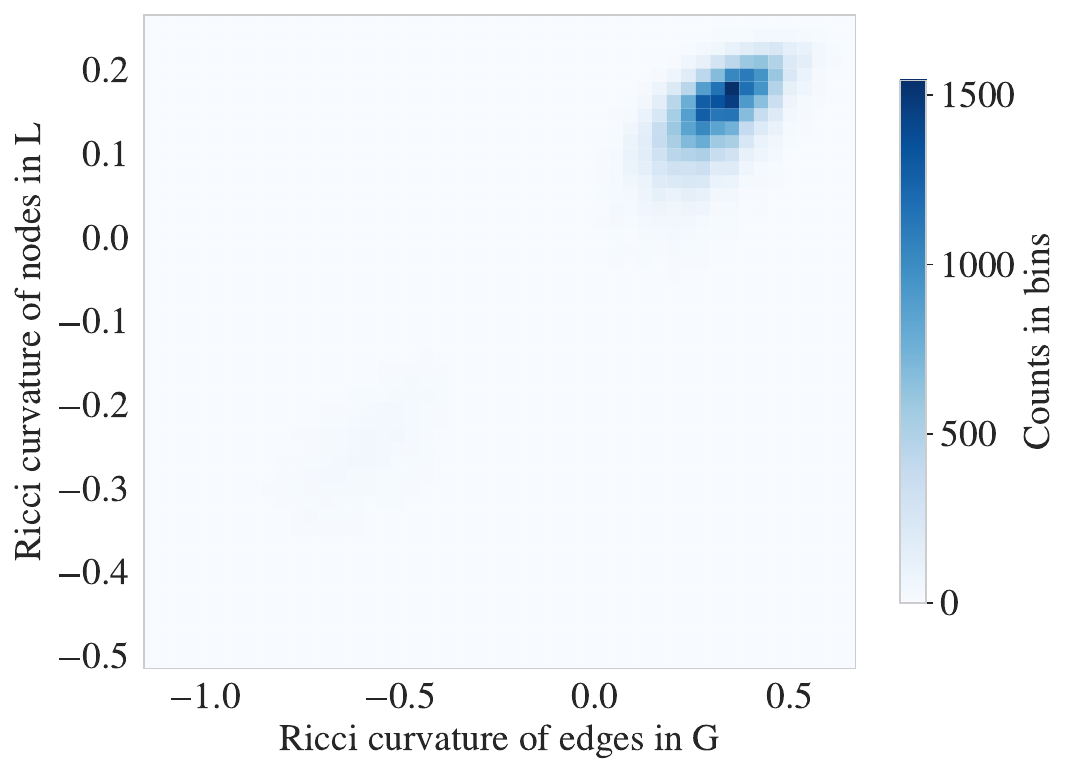} & \includegraphics[width=.24\textwidth]{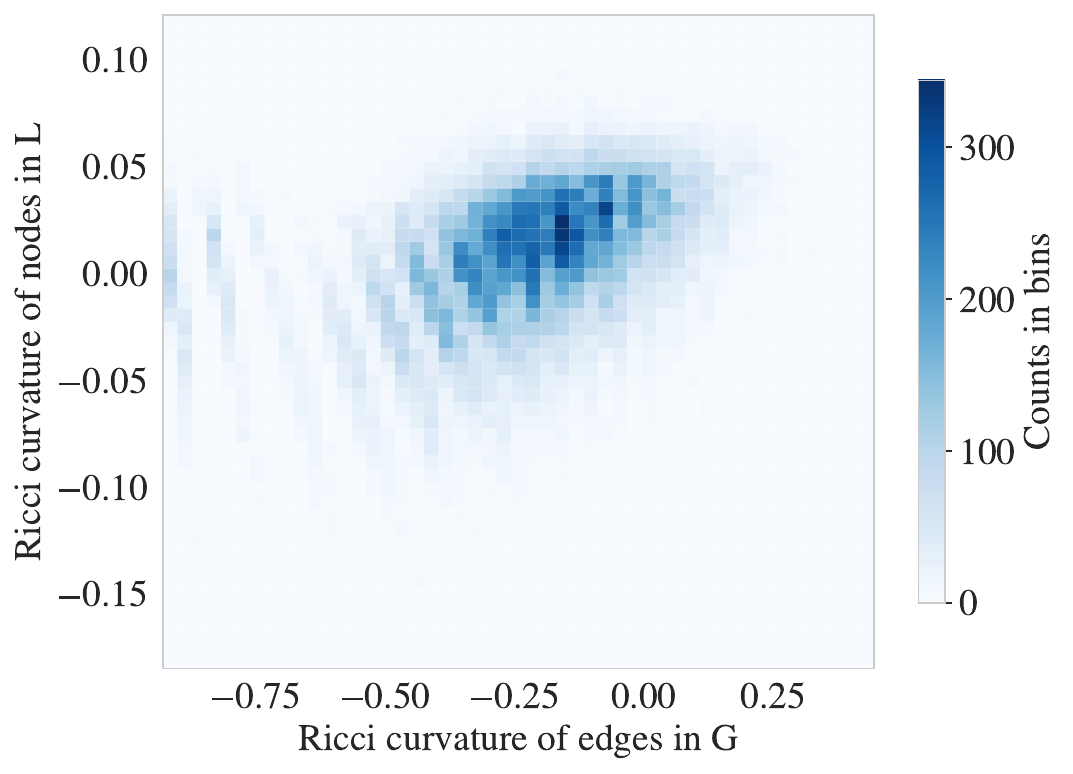} & \includegraphics[width=.24\textwidth]{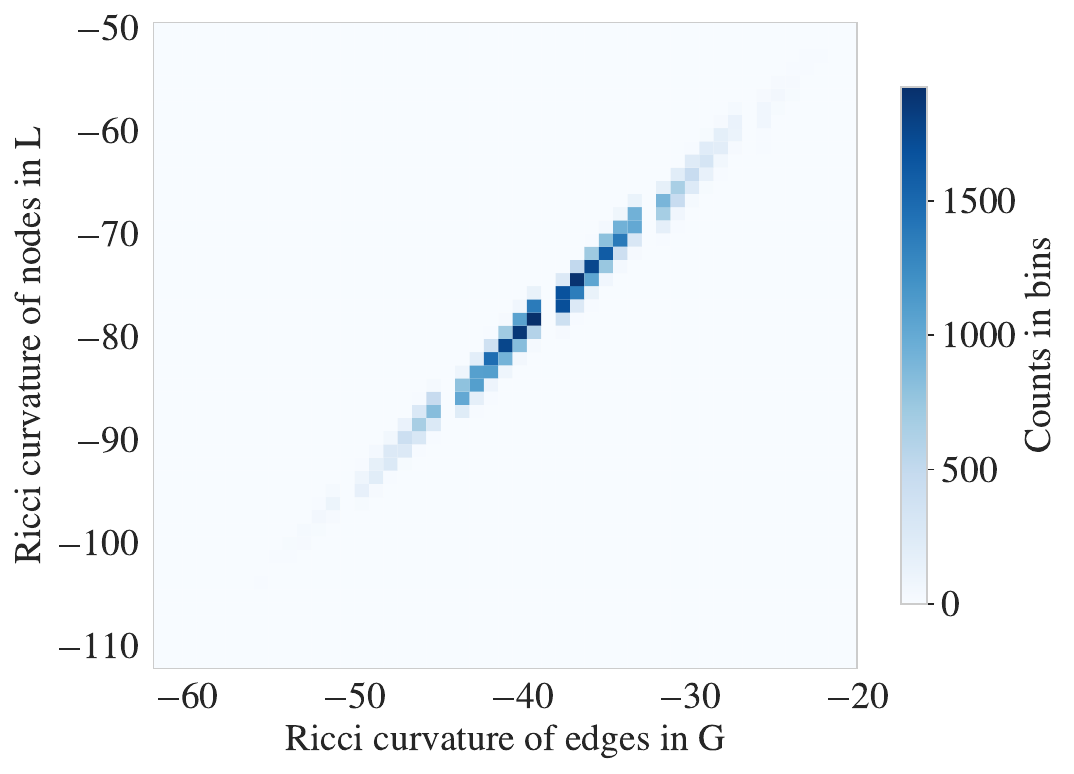} & \includegraphics[width=.24\textwidth]{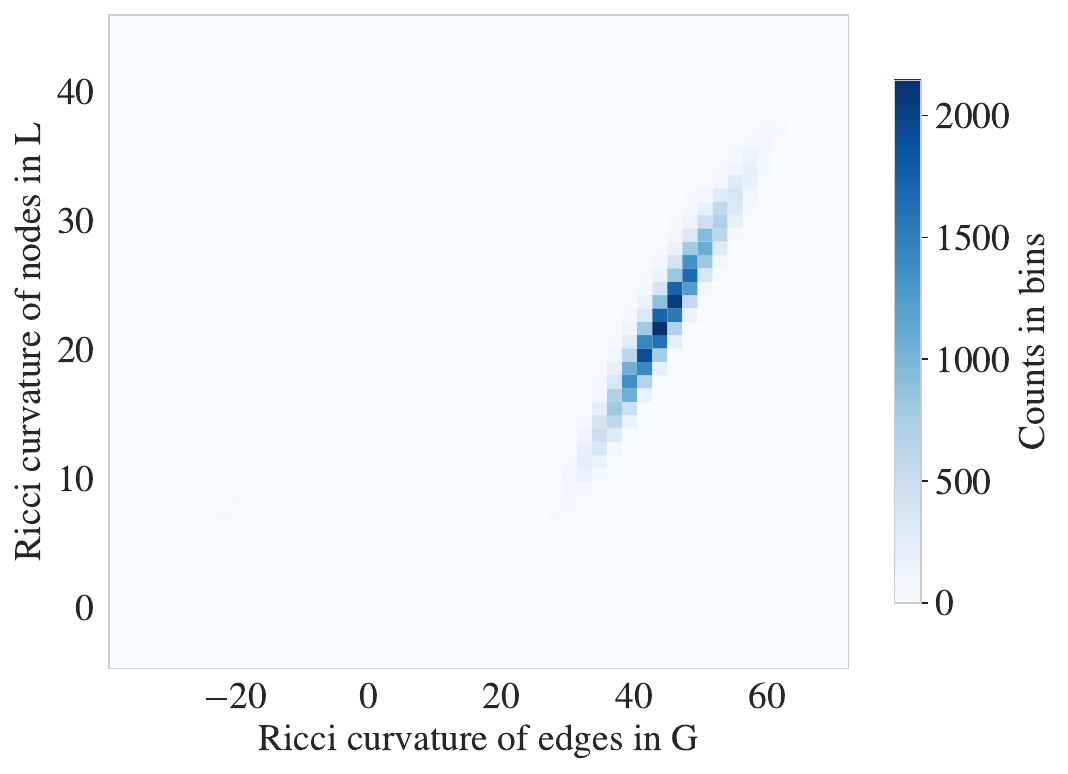}\\
			\includegraphics[width=.24\textwidth]{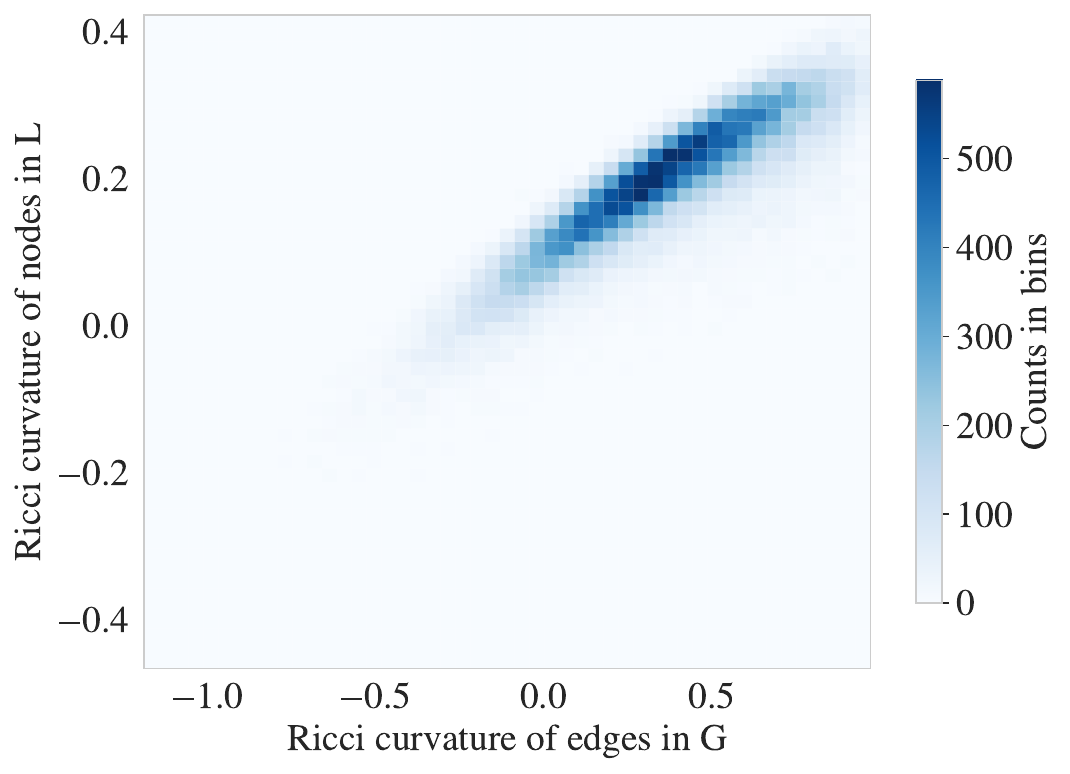} & \includegraphics[width=.24\textwidth]{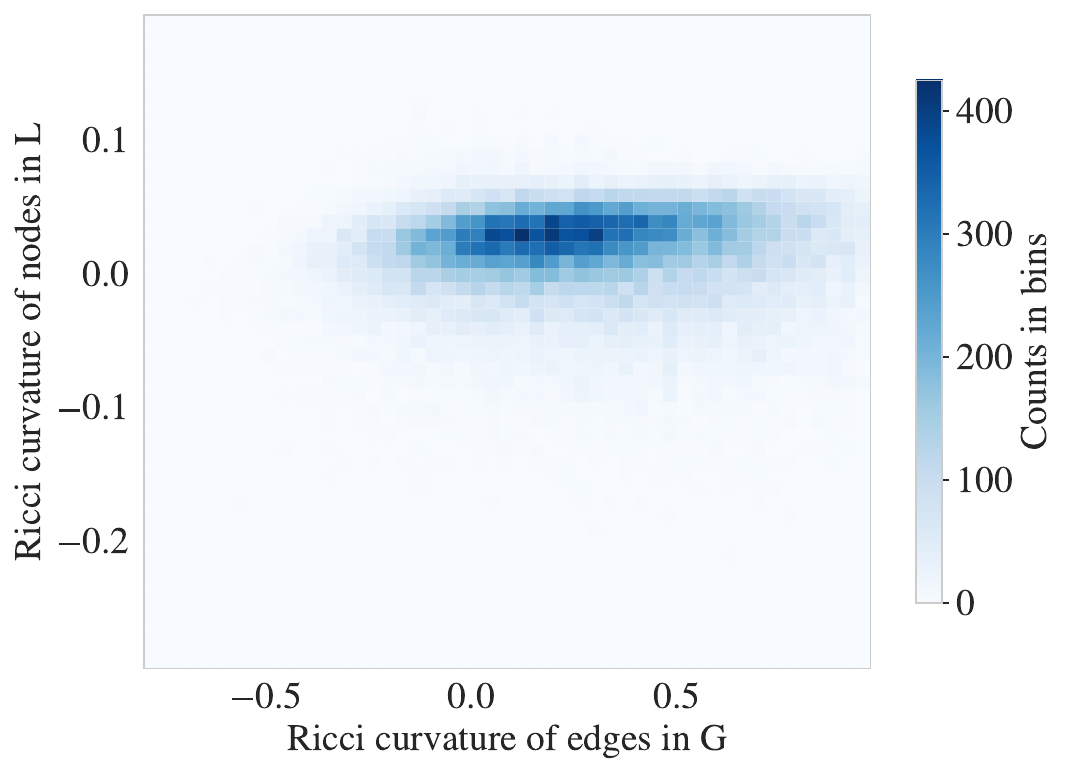} & \includegraphics[width=.24\textwidth]{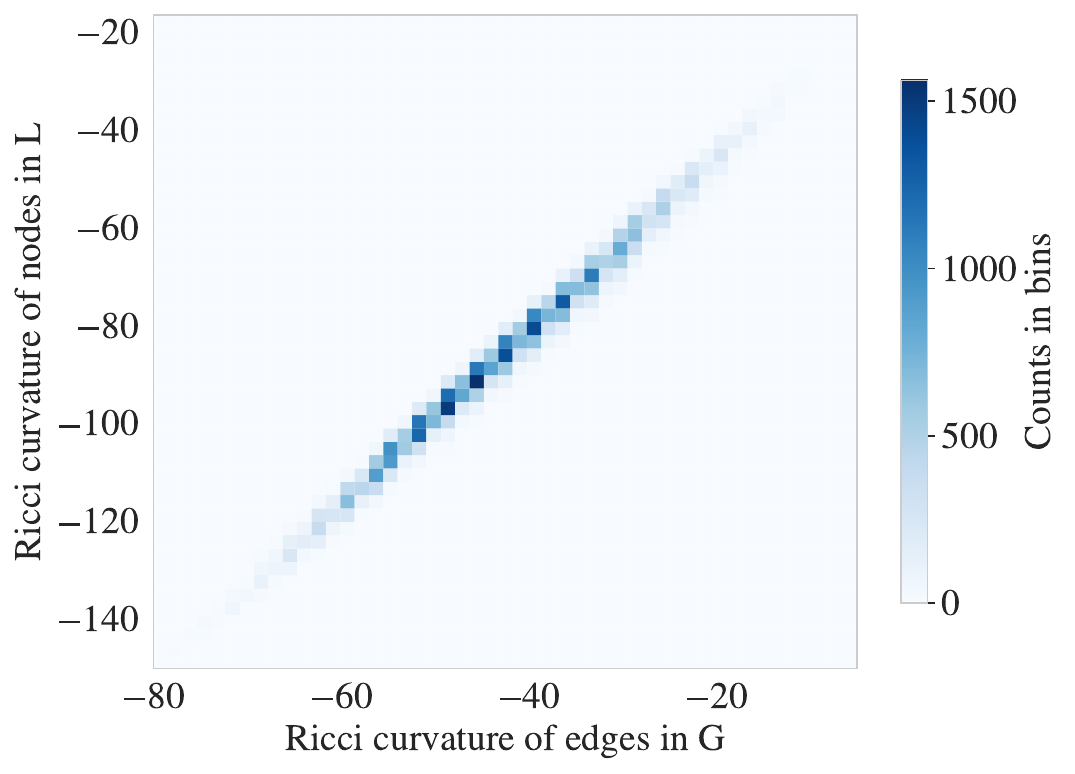} & \includegraphics[width=.24\textwidth]{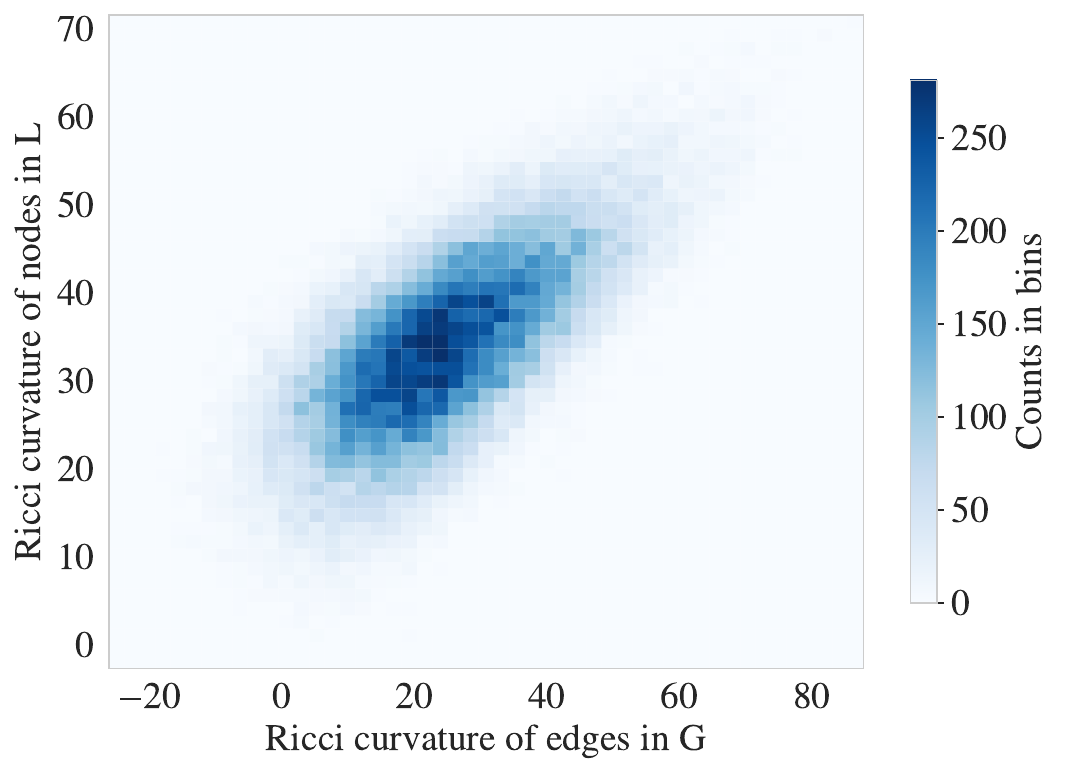}\\
			\includegraphics[width=.24\textwidth]{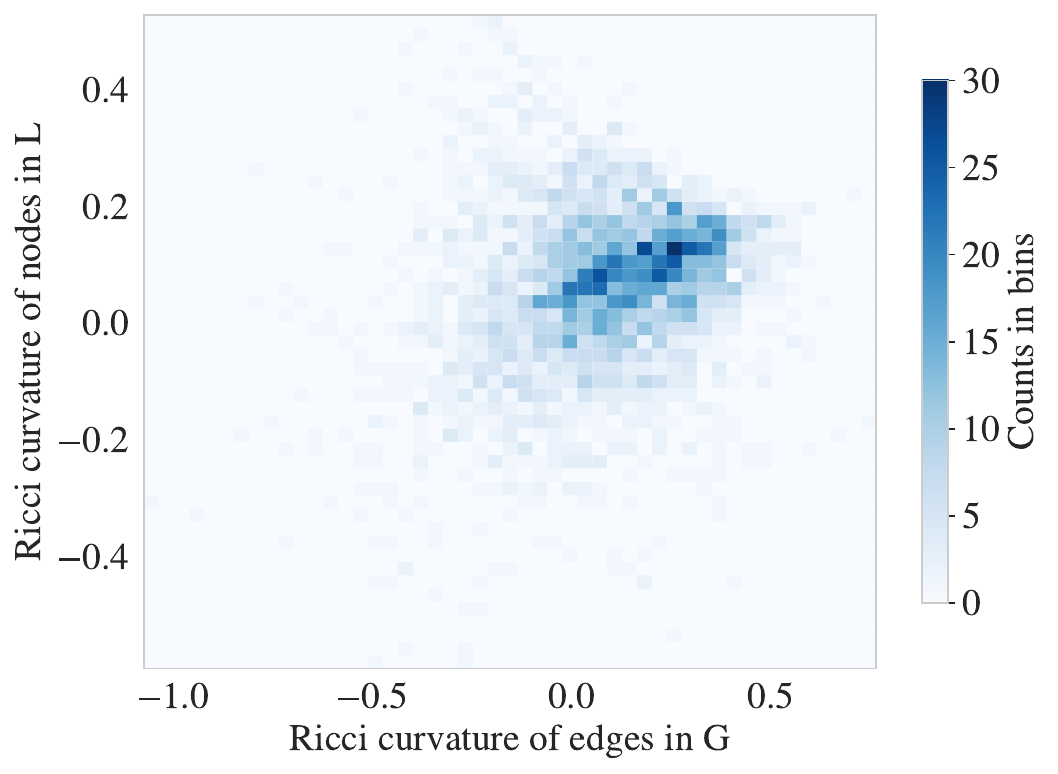} & \includegraphics[width=.24\textwidth]{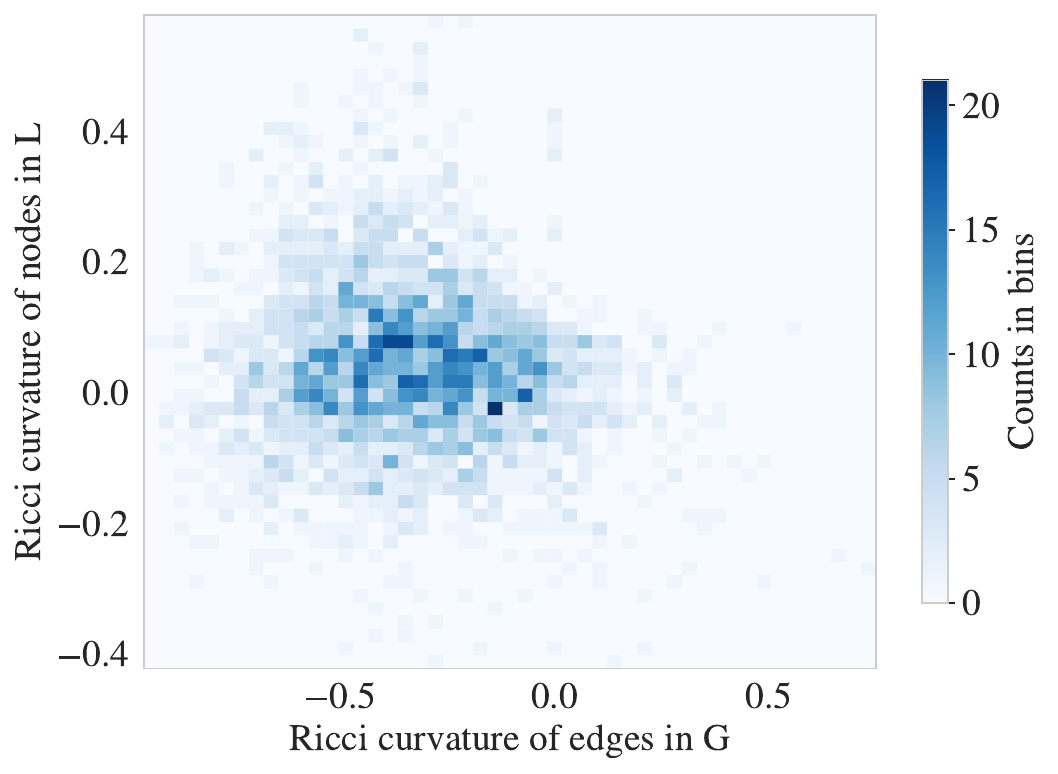} & \includegraphics[width=.24\textwidth]{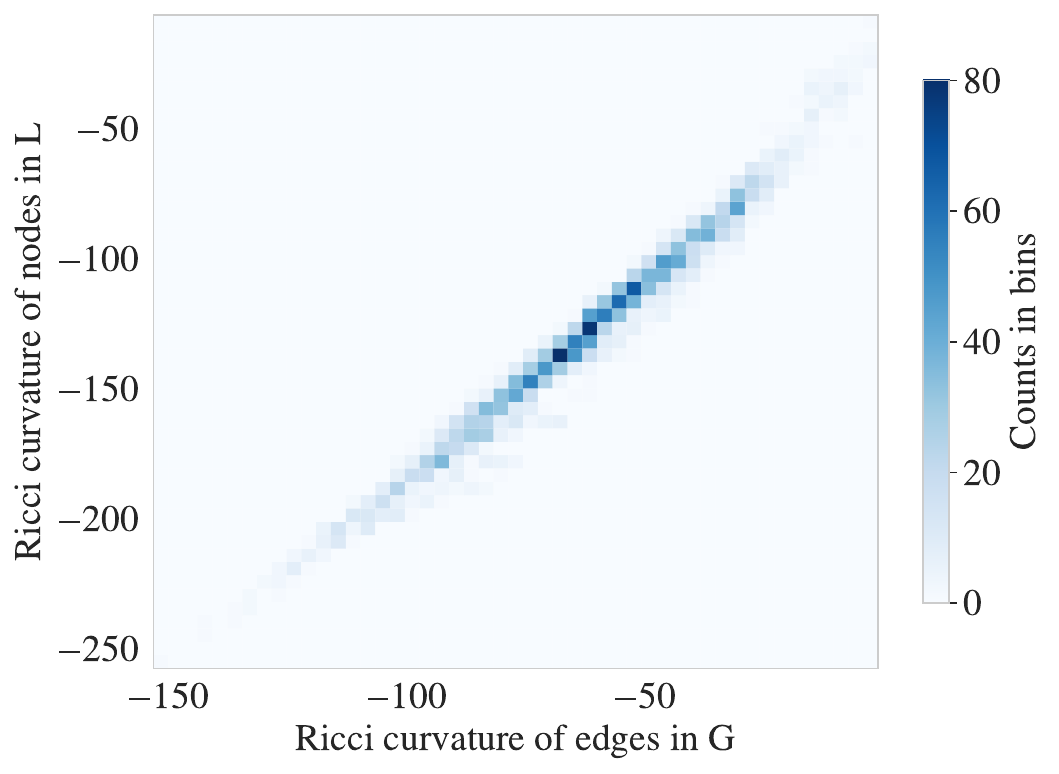} & \includegraphics[width=.24\textwidth]{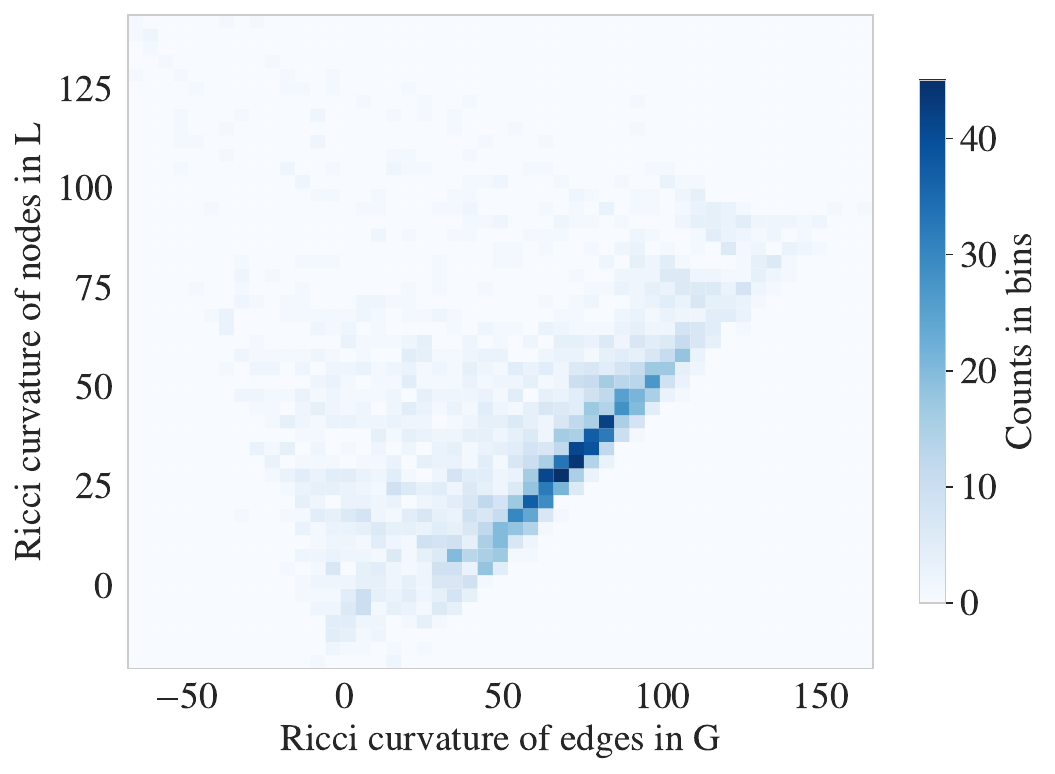}
		\end{tabular}
		\caption{Correlation of the distributions of the classical ORC (left), the approximation of ORC from the average of the upper and lower bounds (middle left), the classical FRC (middle right), and the augmented FRC (right) \textbf{for edges in $G$ versus that for vertices in $L$} where the networks are obtained from (top) the planted SBM of size $n=100$, two equally-sized blocks, $p_{out}=0.02$ and $p_{in} = 0.4$; (middle) the 2-d RGG of size $n=100$, and radius $r = 0.3$; (bottom) the ``Worm" data set. Notice that the range of curvature values varies between curvature notions, but the shape of the distributions has a close resemblence.
		}
		\label{fig:line-curv_e-v}
	\end{figure}
	\begin{figure}[ht]
		\centering
		\hspace*{-1.5em}
		\begin{tabular}{cccc}
			\includegraphics[width=.24\textwidth]{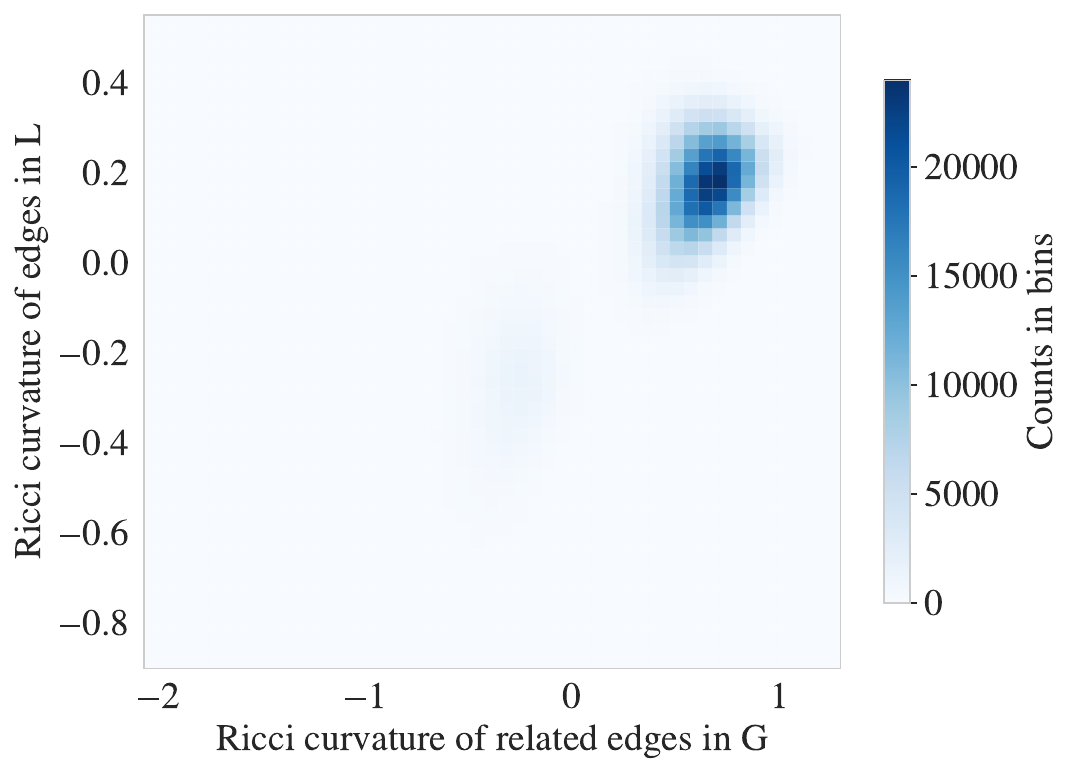} & \includegraphics[width=.24\textwidth]{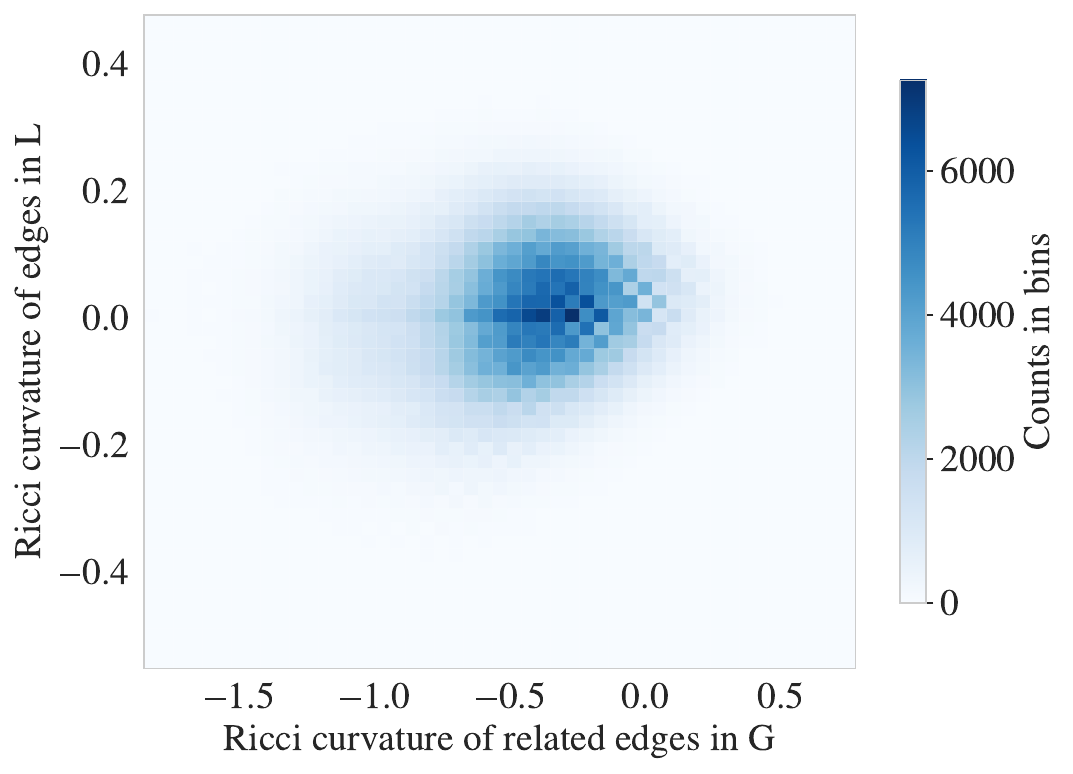} & \includegraphics[width=.24\textwidth]{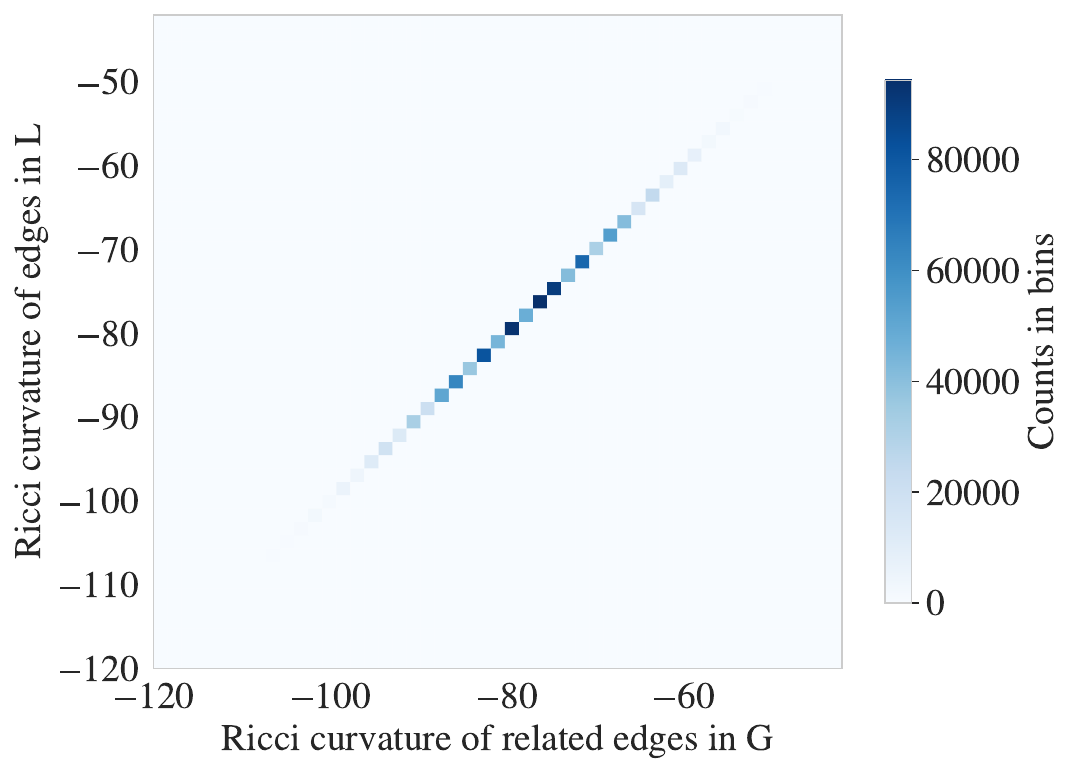} & \includegraphics[width=.24\textwidth]{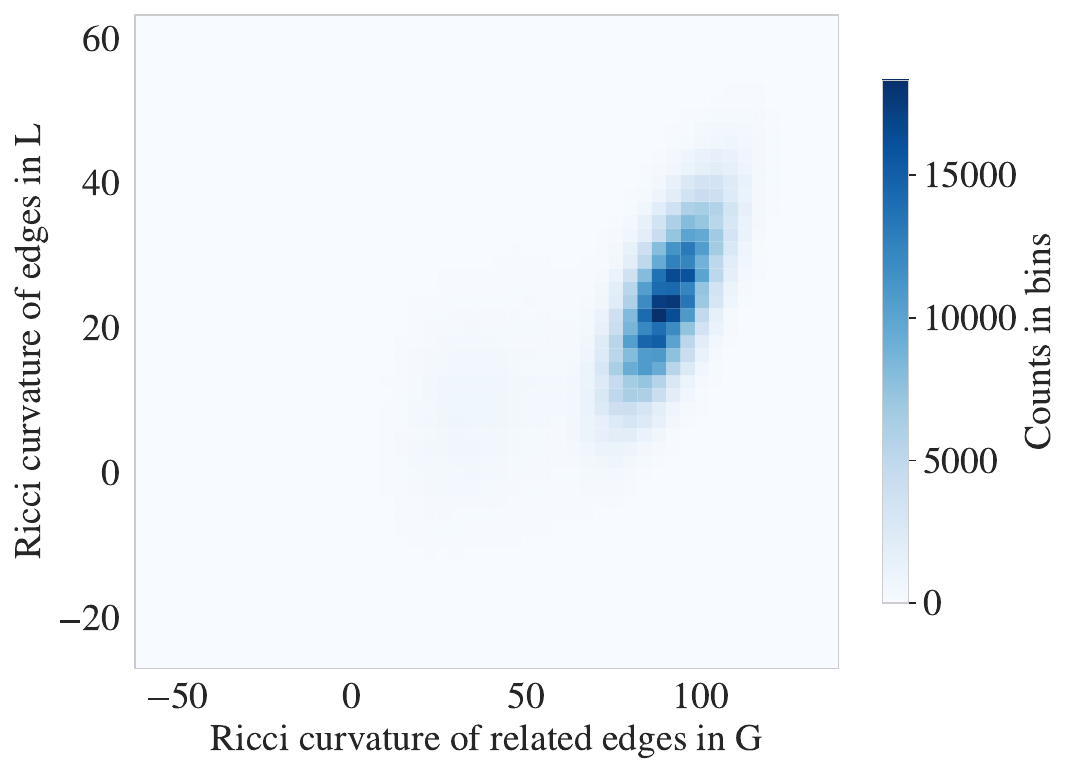}\\
			\includegraphics[width=.24\textwidth]{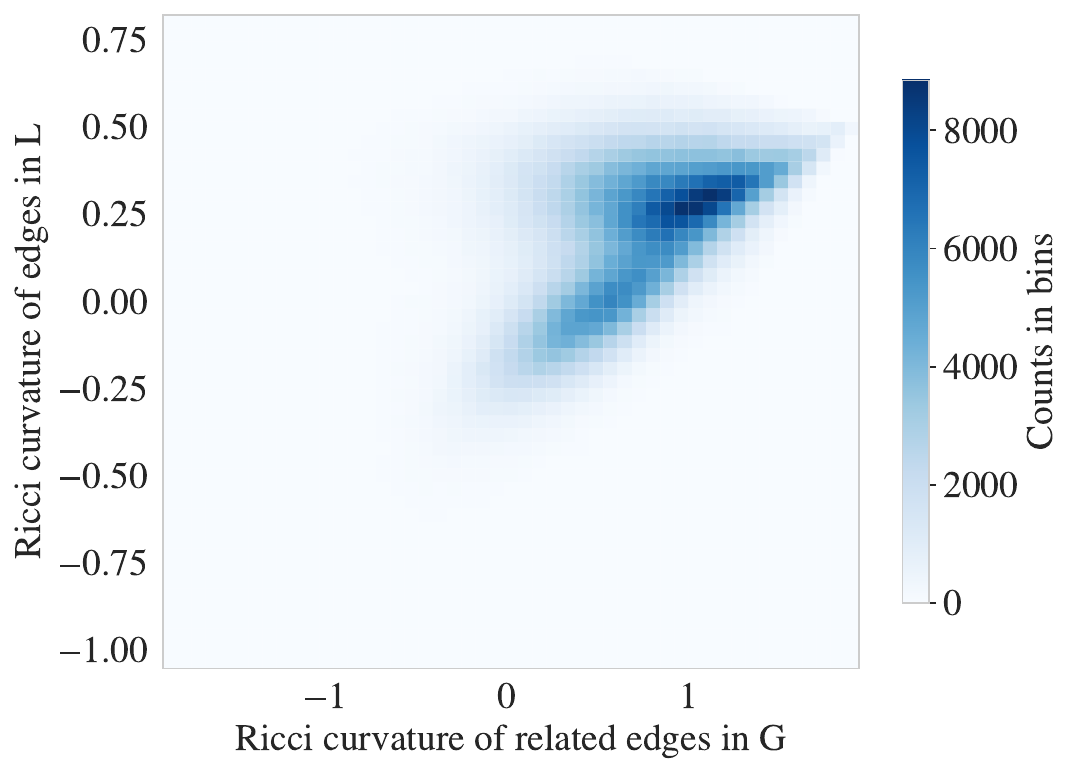} & \includegraphics[width=.24\textwidth]{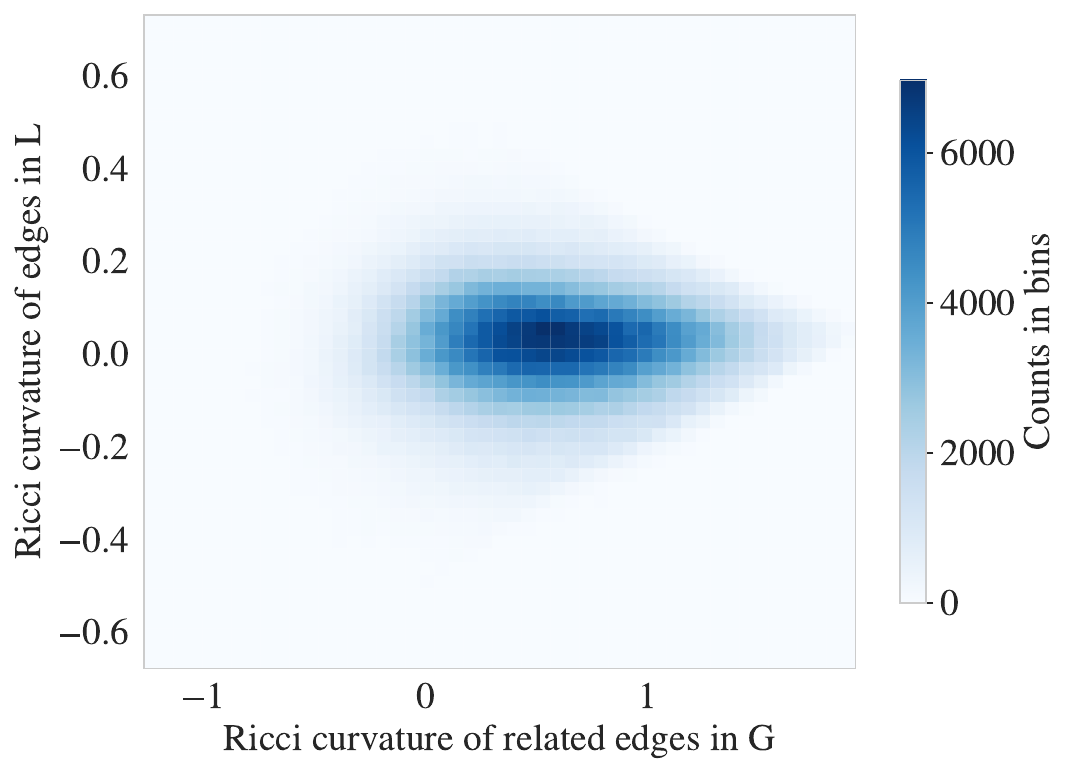} & \includegraphics[width=.24\textwidth]{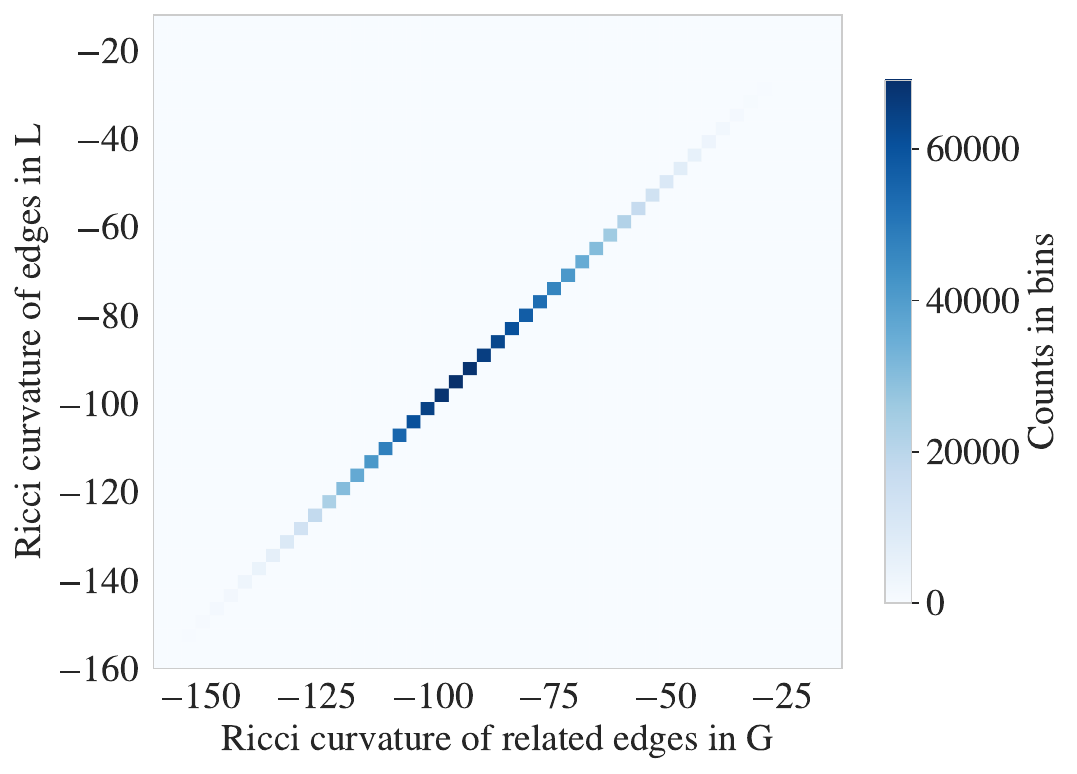} & \includegraphics[width=.24\textwidth]{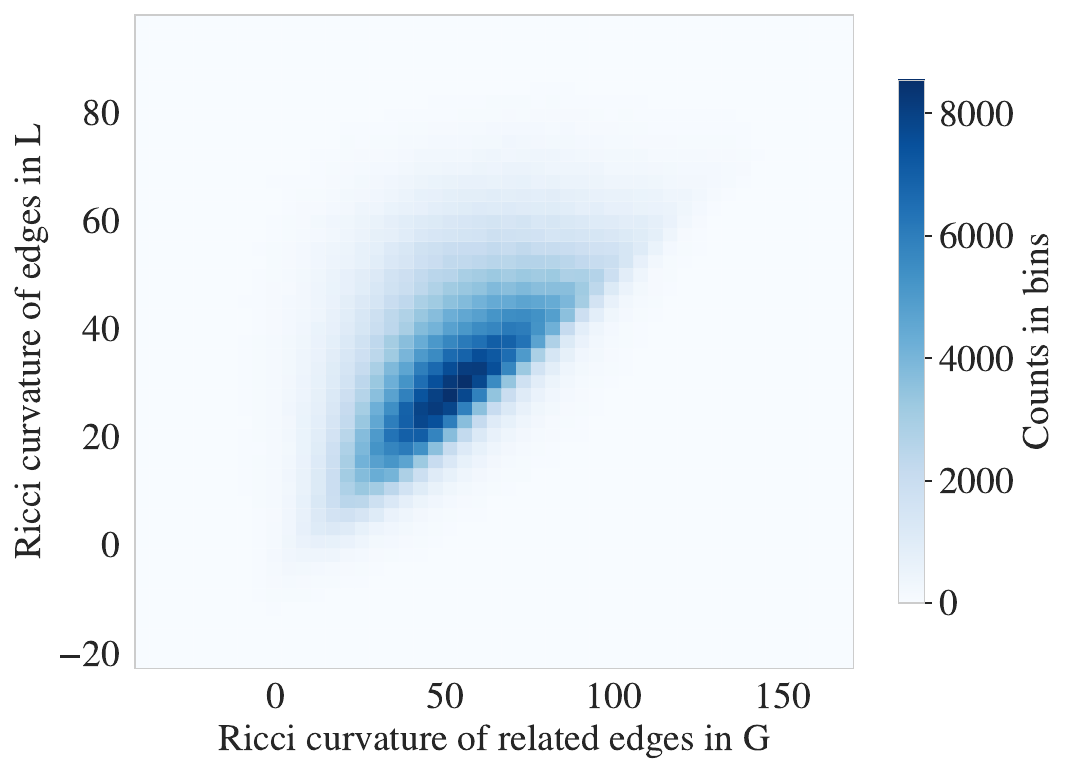}\\
			\includegraphics[width=.24\textwidth]{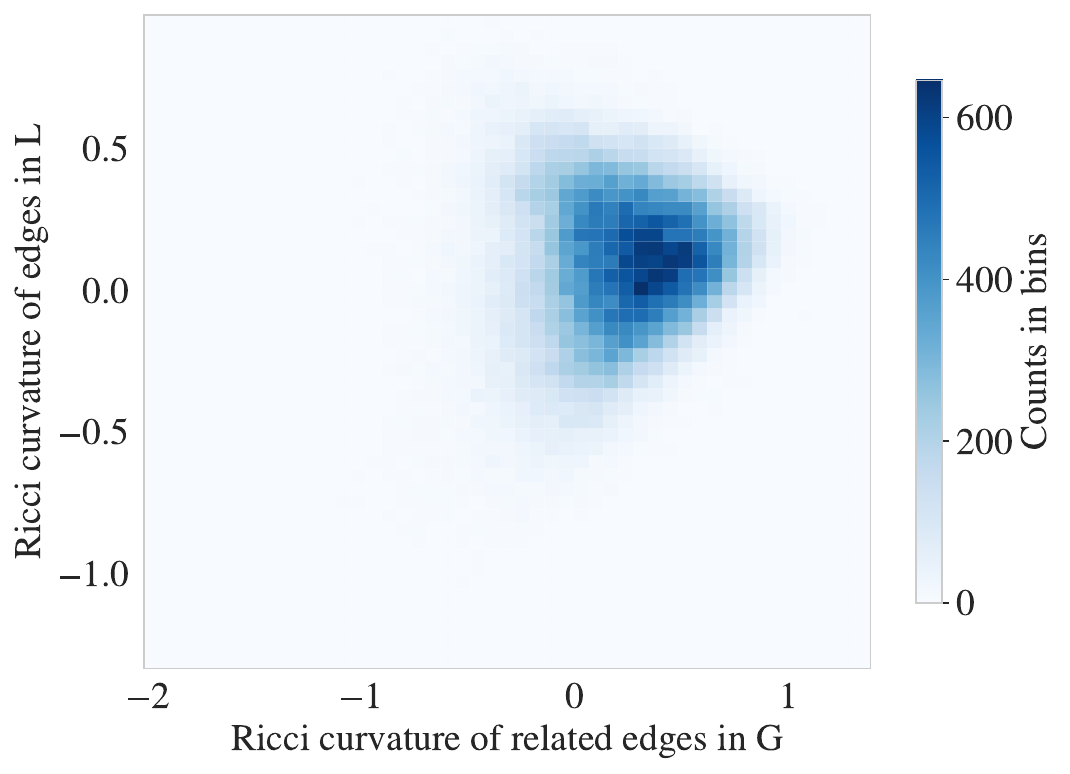} & \includegraphics[width=.24\textwidth]{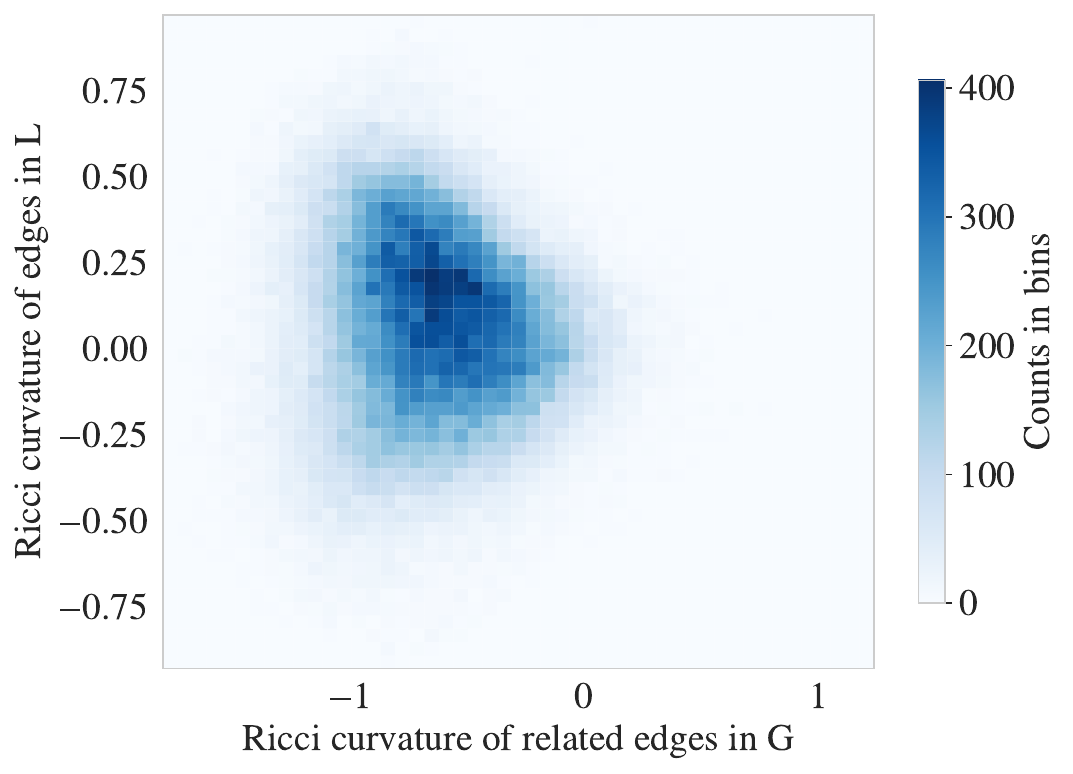} & \includegraphics[width=.24\textwidth]{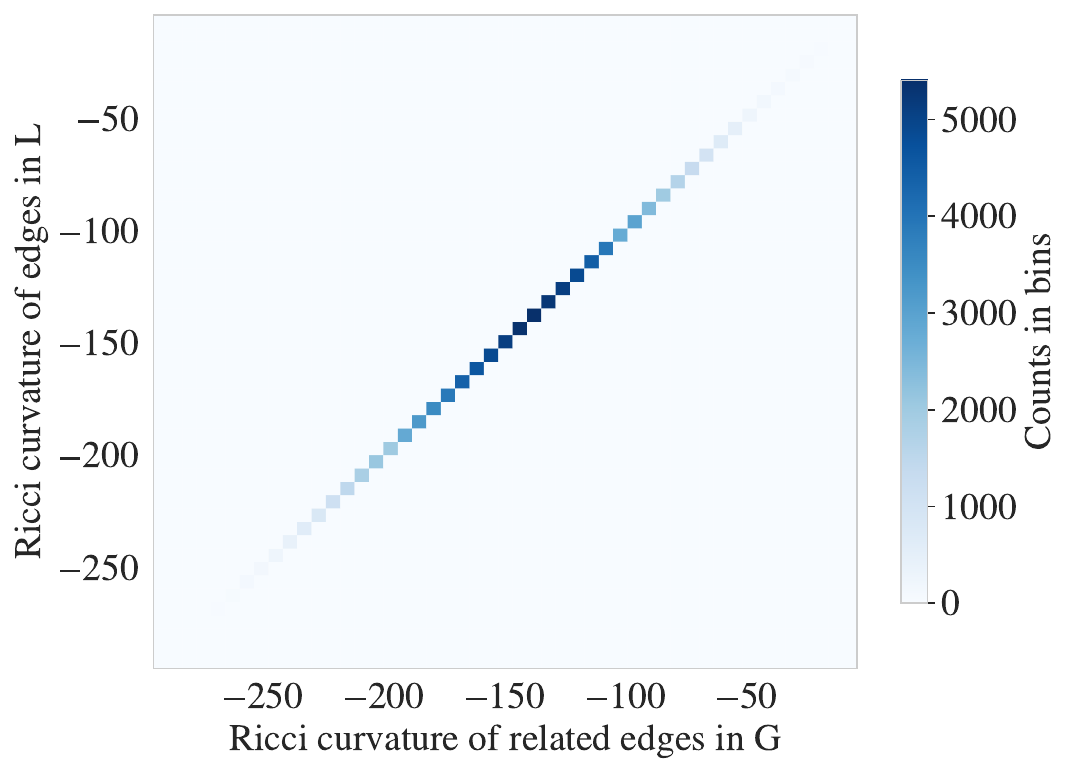} & \includegraphics[width=.24\textwidth]{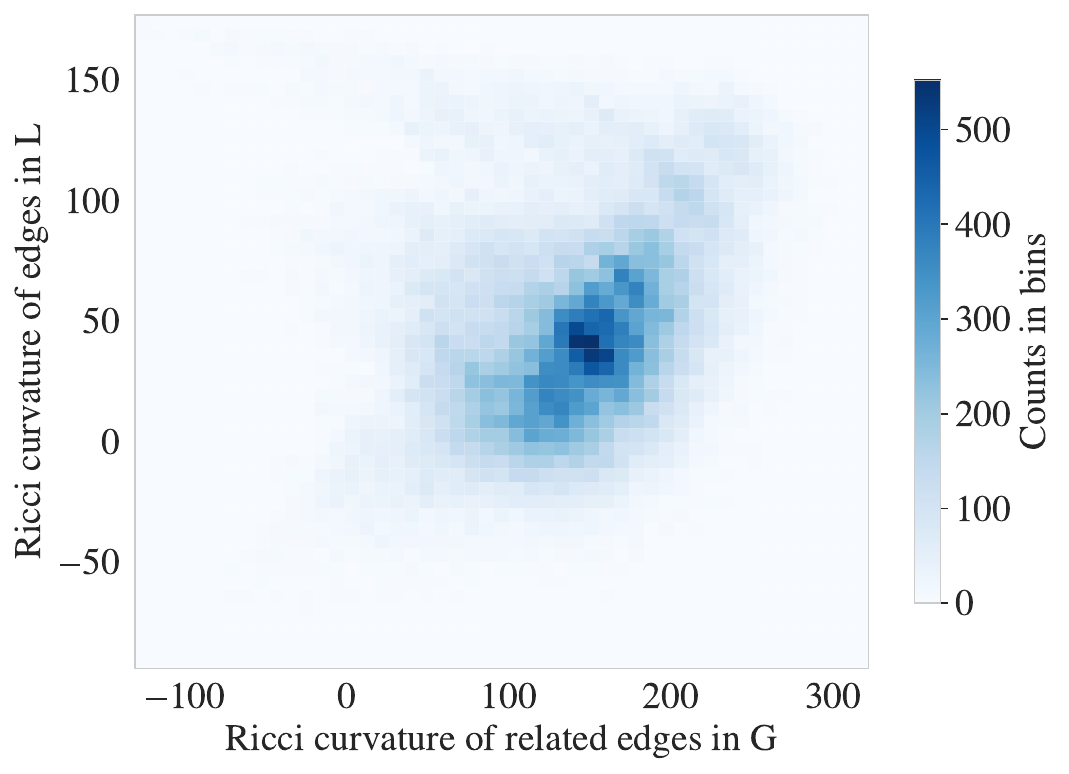}
		\end{tabular}
		\caption{Correlation of the distributions of the classical ORC (left), the approximation of ORC from the average of the upper and lower bounds (middle left), the classical FRC (middle right), and the augmented FRC (right) \textbf{for edges in $G$ versus that for edges in $L$} where the networks are obtained from (top) the planted SBM of size $n=100$, two equally-sized blocks $p_{out}=0.02$ and $p_{in} = 0.4$; (middle) the 2-d RGG of size $n=100$, and radius $r = 0.3$; (bottom) the ``Worm" data set. Notice that the range of curvature values varies between curvature notions, but the shape of the distributions has a close resemblance.
		}
		\label{fig:line-curv_e-e}
	\end{figure}

	\subsection{Ollivier's Ricci Curvature}
	%\color{blue}
	%\begin{itemize}
	%\item Numerical intution for relationship between ORC in $G$ and $L$
	%\item Combinatorial bound on edge-ORC in $L$ (unweighted case)
	%\item Combinatorial bound on edge-ORC in $L$ (weighted case)
	%\item Efficient computation of ORC in the line graph (possibly without explicitely constructing the line graph). MOVED to Algorithms section
	%\end{itemize}
	%\color{black}
	
	In this section, we establish theoretical evidence for the observed relationship between Ollivier's curvature (ORC) of a graph and its line graph.
	
	\subsubsection{Unweighted case}
	
	Consider an unweighted graph $G$, and denote ORC in $G$ and $L(G)$ by $\Ric_O^G$, $\Ric_O^{L(G)}$, respectively.
	We prove combinatorial upper and lower bounds for ORC in $L(G)$, which can be computed solely from structural information in the underlying graph $G$. 
	These results are the analogues of Equations~\eqref{eq:ollilow}, \eqref{eq:olliup}, but make use of the relationships between the line graph $L(G)$ and its base graph to avoid computing curvatures in the line graph.

	\begin{lem}[Bounds on line graph ORC]
		Let $d_u$ denote the degree of the vertex $u\in V$ in the graph $G$, and $d_{\{u,v\}}$ the degree of the vertex $\{u,v\}\in E$ in the line graph $L(G)$. Then we have the following bounds on the ORC of edges in $L(G)$:
		\begin{enumerate}
			\item Upper bound on ORC:
			\begin{equation}
				\Ric_O^{L(G)}(\{\{u,v\},\{v,w\}\})\leq \frac{d_v-2+\mathbf{1}_E(\{u,w\})}{(d_u \vee d_w)+d_v-2} \; ;
			\end{equation}
			\item Lower bound on ORC:
			\begin{align}
				\Ric_O^{L(G)}(\{\{u,v\},\{v,w\}\})&\geq 
				-\left(1-\frac{1}{d_u+d_v-2}-\frac{1}{d_v+d_w-2}-\frac{d_v-2+\mathbf{1}_E(\{u,w\})}{(d_u \wedge d_w)+ d_v -2}\right)_+ \nonumber \\ 
				&-\left(1-\frac{1}{d_u+d_v-2}-\frac{1}{d_v+d_w-2}-\frac{d_v-2+\mathbf{1}_E(\{u,w\})}{(d_u \vee d_w)+d_v-2}\right)_+ \nonumber \\
				&+\frac{d_v -2+\mathbf{1}_E(\{u,w\})}{(d_u \vee d_w)+d_v-2} \; . 
			\end{align}
		\end{enumerate}
		Here, $\mathbf{1}_E(\{u,w\})=1$, if $u,w$ are connected by an edge in $G$, and  $\mathbf{1}_E(\{u,w\})=0$ otherwise.
	\end{lem}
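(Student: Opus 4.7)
The plan is to derive the stated bounds by applying Theorem~\ref{the:ollilow} (the Jost--Liu bounds for unweighted graphs) directly to the edge $\{\{u,v\},\{v,w\}\}$ in $L(G)$, and then to rewrite every quantity appearing on the right-hand side of that theorem in terms of structural data of the base graph $G$. Two ingredients are needed: (i) the degrees of the two endpoints in $L(G)$, and (ii) the number of triangles in $L(G)$ that contain this edge.

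For (i), I would simply invoke Lemma~\ref{lem:degree}, which gives $d_{\{u,v\}} = d_u + d_v - 2$ and $d_{\{v,w\}} = d_v + d_w - 2$. Taking minima and maxima then yields
\begin{equation*}
d_{\{u,v\}} \vee d_{\{v,w\}} = (d_u \vee d_w) + d_v - 2,\qquad d_{\{u,v\}} \wedge d_{\{v,w\}} = (d_u \wedge d_w) + d_v - 2,
\end{equation*}
which matches exactly the denominators in the claimed bounds.

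For (ii), the central combinatorial step, I would enumerate the edges $e \in E$ whose image in $L(G)$ closes a triangle with $\{u,v\}$ and $\{v,w\}$, i.e., edges sharing exactly one endpoint with each of $\{u,v\}$ and $\{v,w\}$. Splitting on whether $v \in e$: if $v \in e$, then $e = \{v,z\}$ with $z \neq u, w$, contributing $d_v - 2$ such edges; if $v \notin e$, then $e$ must contain both $u$ and $w$, so $e = \{u,w\}$, contributing $\mathbf{1}_E(\{u,w\})$. Thus the triangle count in $L(G)$ through this edge is $\#(\{u,v\},\{v,w\}) = d_v - 2 + \mathbf{1}_E(\{u,w\})$, matching exactly the numerators in the claimed bounds.

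With these two translations in hand, the upper bound follows immediately by substituting into \eqref{eq:olliup}, and the lower bound follows by substituting into \eqref{eq:ollilow}. The main obstacle, such as it is, is the careful triangle-counting argument for (ii), in particular correctly accounting for the edge $\{u,w\}$, if it exists, as a triangle separate from those supplied by the common neighbor $v$. Everything else is a direct substitution, so no further lemmas are required.
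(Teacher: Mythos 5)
Your proposal is correct and follows essentially the same route as the paper: apply the unweighted Jost--Liu bounds (Theorem~\ref{the:ollilow}) to the edge $\{\{u,v\},\{v,w\}\}$ of $L(G)$, translate the degrees via Lemma~\ref{lem:degree}, and identify the triangle count through that edge as $d_v-2+\mathbf{1}_E(\{u,w\})$ by splitting on whether the third edge contains $v$. Your triangle-counting step, including the extra triangle coming from $\{u,w\}\in E$, is exactly the combinatorial computation the paper performs.
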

	\begin{proof}  Let $\Delta(u,v)$ denote the number of triangles in $G$ which include the edge $\{u,v\}$, and $\Delta(\{u,v\},\{v,w\})$ the number of triangles in $L(G)$, which include $\{\{u,v\},\{v,w\}\}$. Both bounds follow from a simple combinatorial computation: For the upper bound, we have
		\begin{align*}
			\Ric_O^{L(G)}(\{\{u,v\},\{v,w\}\})&\leq \frac{\Delta(\{u,v\},\{v,w\})}{d_{\{u,v\}} \vee d_{\{v,w\}}}
			= \frac{d_v-2+\mathbf{1}_E(\{u,w\})}{(d_u \vee d_w)+d_v-2} \; .
		\end{align*}
		The lower bound follows from
		\begin{align*}
			\Ric_O^{L(G)}(\{\{u,v\},\{v,w\}\})&\geq -\left(1-\frac{1}{d_{\{u,v\}}}-\frac{1}{d_{\{v,w\}}}-\frac{\Delta(\{u,v\},\{v,w\})}{d_{\{u,v\}} \wedge d_{\{v,w\}}}\right)_+\\
			&\quad -\left(1-\frac{1}{d_{\{u,v\}}}-\frac{1}{d_{\{v,w\}}}-\frac{\Delta(\{u,v\},\{v,w\})}{d_{\{u,v\}} \vee d_{\{v,w\}}}\right)_+ + \frac{\Delta(\{u,v\},\{v,w\})}{d_{\{u,v\}} \vee d_{\{v,w\}}}\\
			&=-\left(1-\frac{1}{d_u+d_v-2}-\frac{1}{d_v+d_w-2}-\frac{d_v-2+\mathbf{1}_E(\{u,w\})}{(d_u \wedge d_w)+d_v-2}\right)_+\\
			&\quad -\left(1-\frac{1}{d_u + d_v -2}-\frac{1}{d_v+d_w-2}-\frac{d_v-2+\mathbf{1}_E(\{u,w\})}{(d_u \vee d_w)+d_v-2}\right)_+\\
			&\quad +\frac{d_v-2+\mathbf{1}_E(\{u,w\})}{(d_u \vee d_w)+d_v-2} \; .
		\end{align*}
	\end{proof}
	A crucial feature of this lemma is that it does not require the computation of triangle counts in $L(G)$: It only requires evaluating the degrees of the constituent vertices, and the existence of the edge $\{u,w\}\in E$. The latter results in an additional triangle in the base graph $G$ and, since $L(K_3)$ is isomorphic to $K_3$, an additional triangle in the line graph.
	
	\subsubsection{Weighted case}
	\label{sec:line-weighted}
	Importantly, we can show a version of these bounds in the weighted case, too. Let $\omega(\{u,v\},\{v,w\})$ denote the weight of the edge $\{\{u,v\},\{v,w\}\}\in\mathcal{E}$. Then for an edge $\{e_1,e_2\}\in\mathcal{E}$, Theorem~\ref{thm:weightollibnds} gives us
	
	\begin{enumerate}
		\item Upper bound on \orc:
		\begin{equation*}
			\Ric_O(\{e_1,e_2\})\leq 1- \max \left\{\sum_e \frac{d_{L(G)}(e,\mathcal{N})}{\omega(e_1,e_2)}(m_1(e)-m_2(e))_+, \sum_e \frac{d_{L(G)}(e,\mathcal{P})}{\omega(e_1,e_2)}(m_2(e)-m_1(e))_+ \right\}.
		\end{equation*}
		\item Lower bound on \orc:
		\begin{align*}
			\Ric_O(\{e_1,e_2\})&\geq 1- \sum_\ell \frac{\omega(\ell,e_1)}{\omega(e_1,e_2)} m_1(\ell)-\sum_r \frac{\omega(r,e_2)}{\omega(e_1,e_2)} m_2(r)\\
			&\quad-\sum_c \left[\frac{\omega(c,e_2)}{\omega(e_1,e_2)}(m_1(c)-m_2(c))_+ +\frac{\omega(c,e_1)}{\omega(e_1,e_2)}(m_2(c)-m_1(c))_+\right]\\
			&\quad- \left|L_1+X_1 - X_2 - \sum_c (m_2(c)-m_1(c))_+\right| \; .
		\end{align*}
	\end{enumerate}
	
	\paragraph{Line Graph edge weights.} 
	The classical construction of the line graph produces an unweighted graph; while node weights may be inherited from edge weights in the original graph, it is not clear how one may impose edge weights that align with the geometric information in the underlying graph. In this section, we propose one such choice of edge weights: 
	Let $\omega_{e_1,e_2}=\sqrt{\omega_{e_1}\omega_{e_2}}$ denote weights on the line graph edges, which arise from edge weights $\omega_{e_1},\omega_{e_2}$ in the original graph (node weights in the line graph). Here, $e_1,e_2\in E$, $e_1\sim e_2$, as usual. We can define a measure on node neighborhoods in the line graph as
	\begin{equation*}
		m_{\{u,v\}}^{\alpha,p}(\{x,y\})=\begin{cases}
			\alpha&\text{if }\{x,y\}=\{u,v\}\\
			\frac{1-\alpha}{C} \exp(-\omega_{\{u,w\}}^{p/2} \omega_{\{u,v\}}^{p/2})&\text{if }\{x,y\}=\{u,w\}\\
			\frac{1-\alpha}{C} \exp(-\omega_{\{v,w\}}^{p/2} \omega_{\{u,v\}}^{p/2})&\text{if }\{x,y\}=\{v,w\}\\
			0&\text{otherwise} \; .
		\end{cases}  
	\end{equation*}
	Here, the normalizing factor is given by
	\begin{align*}
		C&=\sum_{\{x,y\}\sim\{u,v\}} \exp(-\omega_{\{x,y\}}^{p/2}\omega_{\{u,v\}}^{p/2})\\
		&= \sum_{r: \{u,r\}\in E} \exp(-\omega_{\{u,r\}}^{p/2}\omega_{\{u,v\}}^{p/2}) + \sum_{r: \{v,r\}\in E} \exp(-\omega_{\{v,r\}}^{p/2}\omega_{\{u,v\}}^{p/2}) \; .
	\end{align*}

	Setting $s_{w}(e)=(m_{\{v,w\}}(e)-m_{\{u,v\}}(e))_+$, and $s_u(e)=(m_{\{u,v\}}(e)-m_{\{v,w\}}(e))_+,$ we may write the lower bound as:
	\begin{align*}
		\Ric_O(\{u,v\},\{v,w\})&\geq 1- \sum_{\substack{\ell\sim u\\ \ell\ne w}} \frac{\omega_{\{u,\ell\}}^{1/2}}{\omega_{\{v,w\}}^{1/2}} m_{\{u,v\}}(\{u,\ell\})
		- \sum_{\substack{r\sim w\\ r\ne u}}\frac{\omega_{\{w,r\}}^{1/2}}{\omega_{\{u,v\}}^{1/2}}m_{\{v,w\}}(\{w,r\})\\
		&-\sum_{e=\{v,c\},\{u,w\}} \omega_{e}^{1/2}\left(\frac{s_w(e)}{\omega_{\{v,w\}}^{1/2}}+\frac{s_u(e)}{\omega_{\{u,v\}}^{1/2}}\right)\\
		&-\left|\sum_{\substack{\ell\sim u\\ \ell\ne w}} m_{\{u,v\}}(\{u,\ell\})+\alpha- m_{\{v,w\}}(\{u,v\})-\sum_{e=\{v,c\},\{u,w\}} s_w(e)\right| \; .
	\end{align*}
	
	Note that the upper bound for ORC in the line graph still requires one to compute the line graph. However, we can always use the trivial upper bound $\Ric_O(\{e_1,e_2\})\leq 1$ if an approximation will suffice.

	\paragraph{Summary.}
	In the applications discussed in the next section, we again approximate ORC as the arithmetic mean of the upper and lower bounds for the sake of computational efficiency:
	\begin{defn}[Approximate ORC in the Line Graph]
		\begin{equation}\label{eq:orc-approx-line-graph}
			\widehat{\Ric_O}^{L(G)}(\{u,v\}) := \frac{1}{2} \Big( \big(\Ric_O^{L(G)} \big)^{up}(\{u,v\}) + \big(\Ric_O^{L(G)} \big)^{low}(\{u,v\}) \Big) \; .
		\end{equation}
	\end{defn}
	
	In Figure~\ref{fig:ORC-L-approx}, we compare this approximation to the original ORC for some simulated graphs.
	
	\begin{figure}
		\centering
		\hspace*{-2em}
		\begin{tabular}{cccc}
			\includegraphics[width=.24\textwidth]{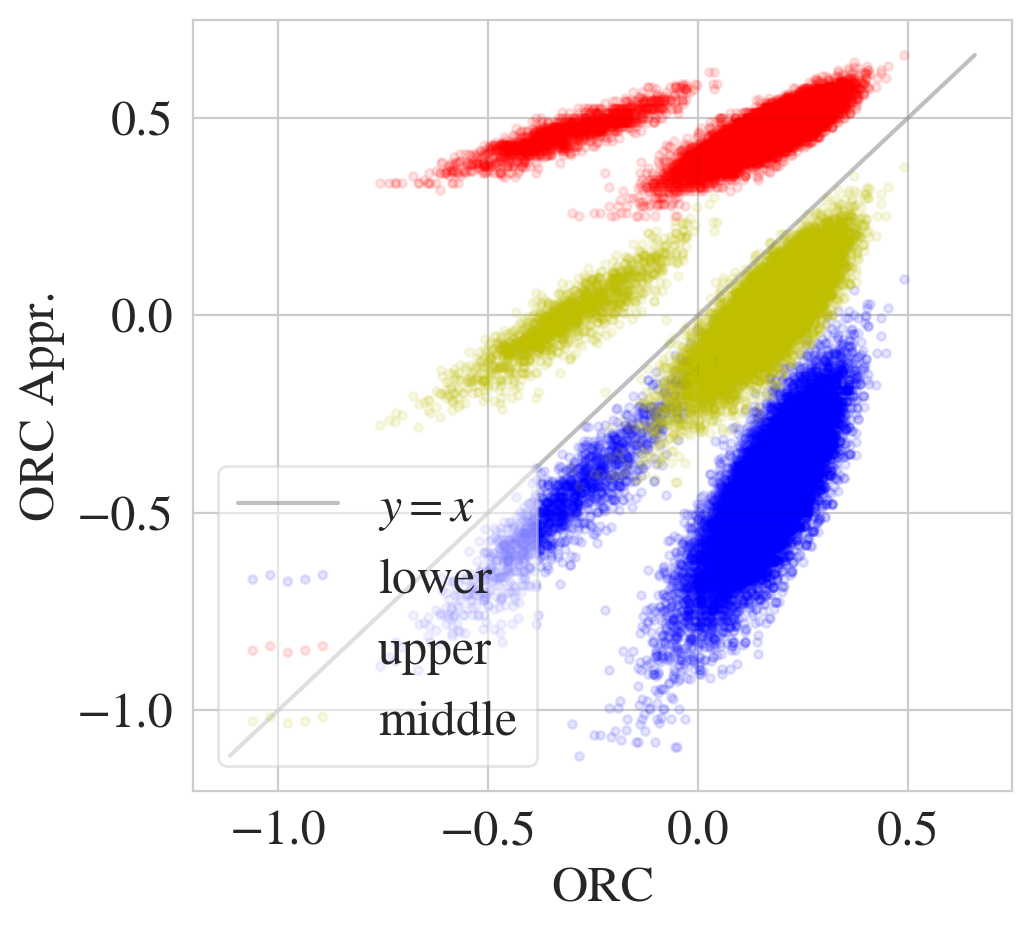} & \includegraphics[width=.24\textwidth]{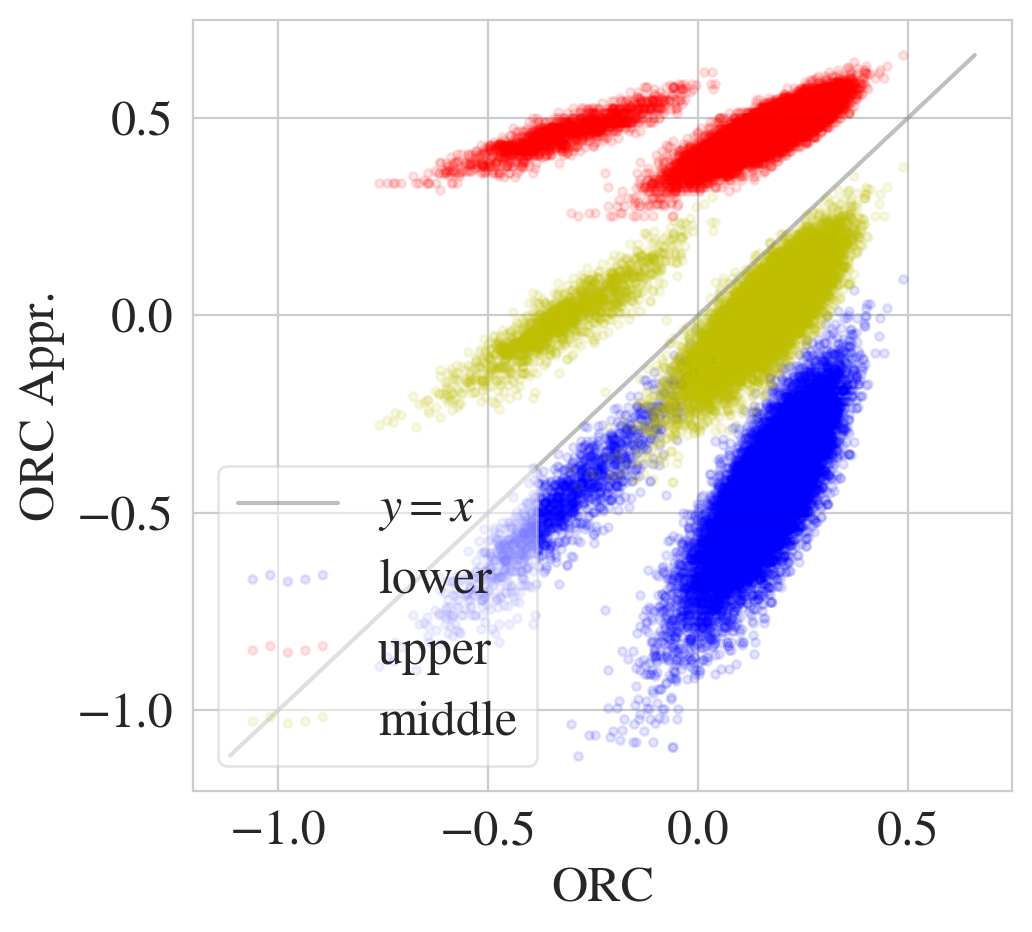} & \includegraphics[width=.24\textwidth]{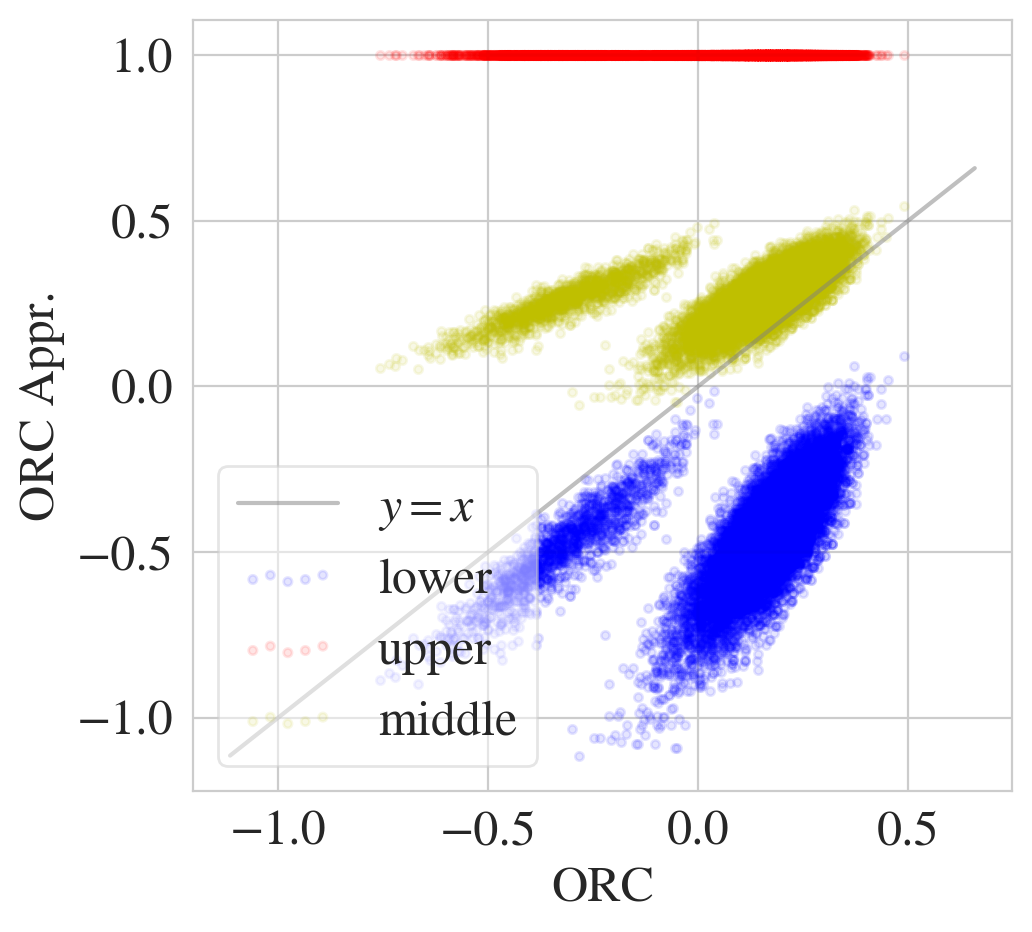} & \includegraphics[width=.24\textwidth]{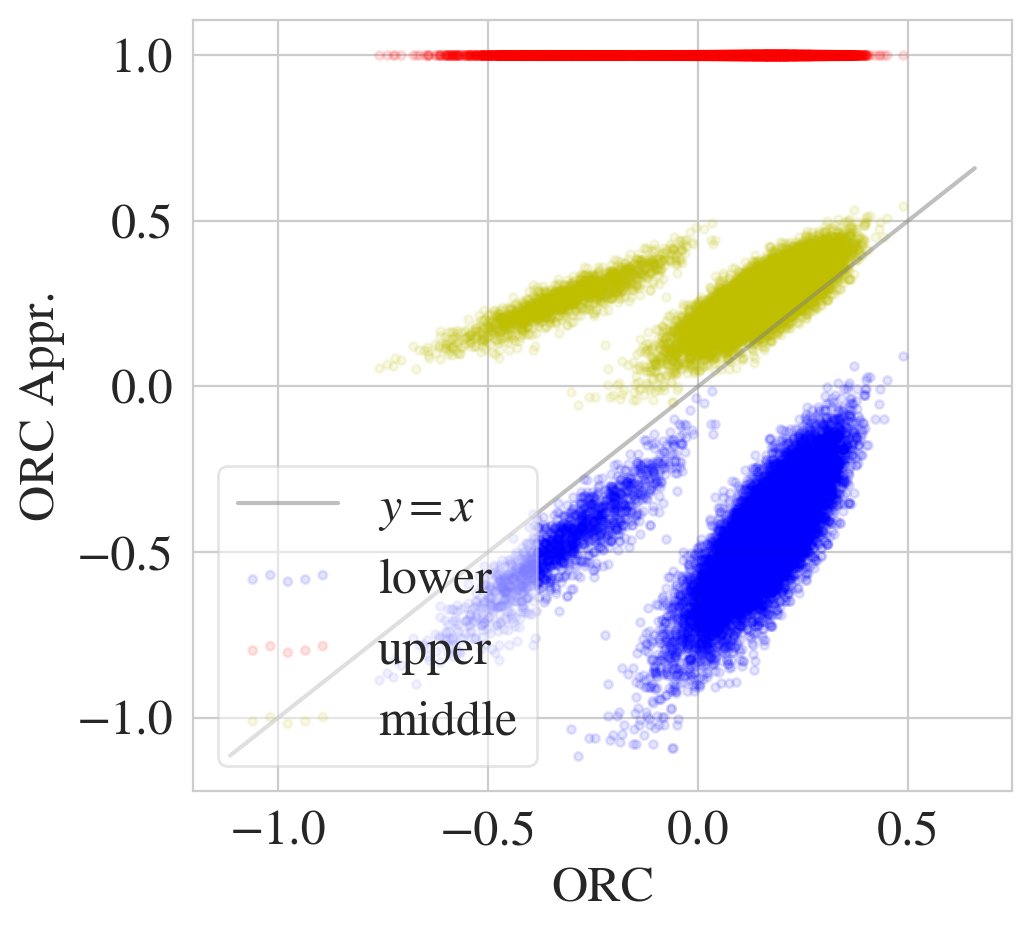}\\
			\includegraphics[width=.24\textwidth]{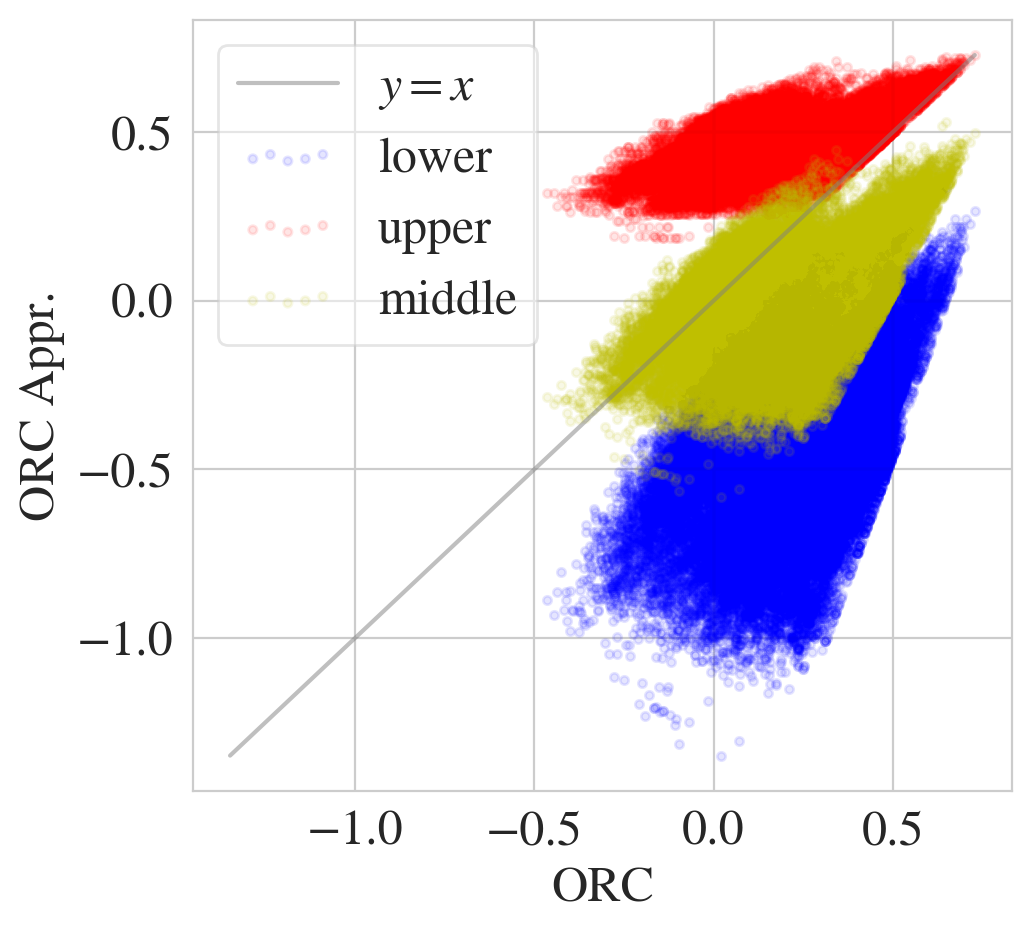} & \includegraphics[width=.24\textwidth]{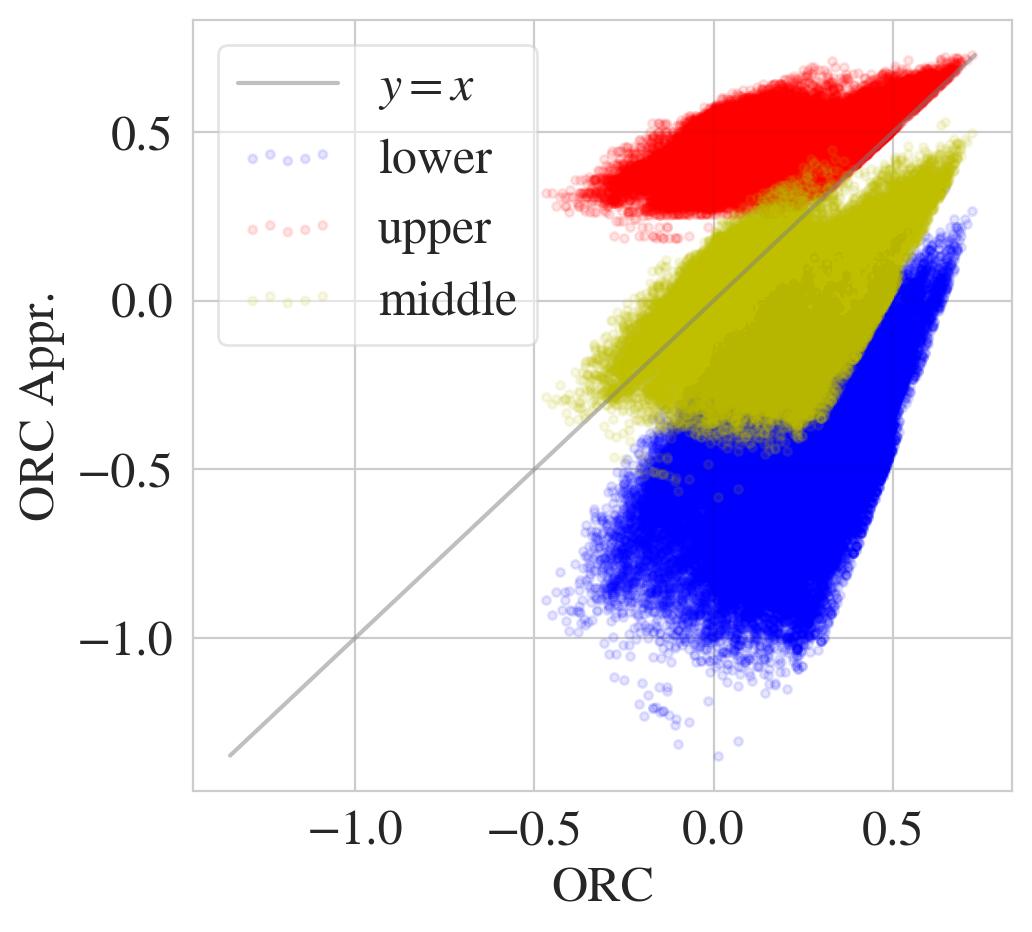} & \includegraphics[width=.24\textwidth]{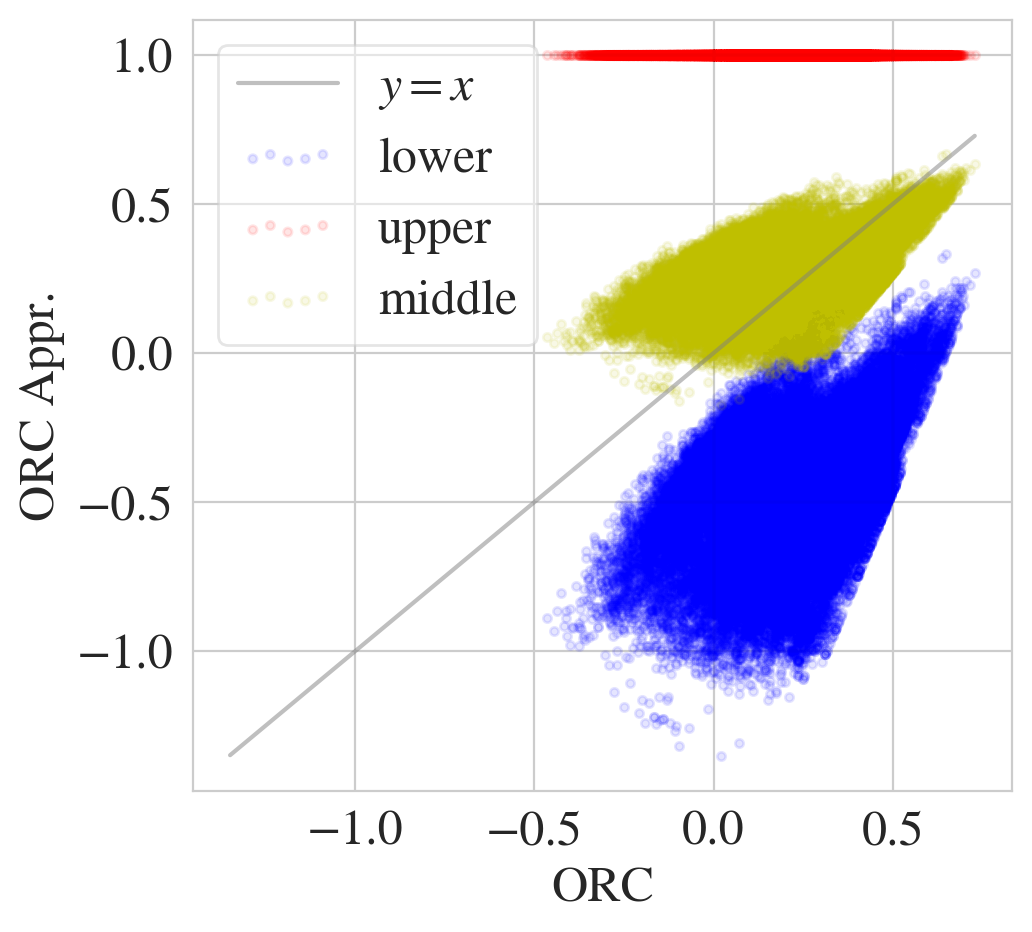} & \includegraphics[width=.24\textwidth]{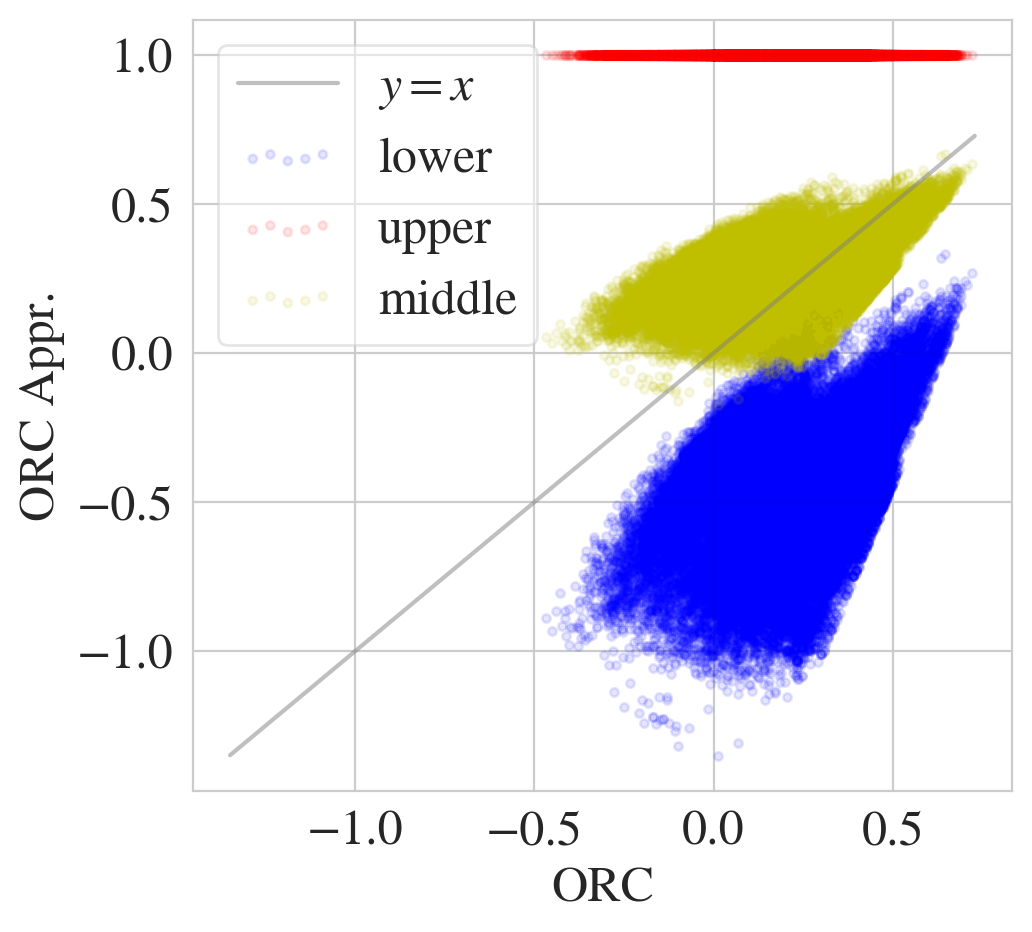}
		\end{tabular}
		\caption{Scatter plot of the ORC on $L$ obtained from optimization versus its approximation with $L$ constructed (left two columns) and without (right two columns), where the optimization is done by solving the earth mover's distance exactly (left most and middle right) and the approximately with Sinkhorn (middle left and right most), and the original graph $G$ is obtained from the planted SBM of size $n=100$, two equally-sized blocks, $p_{out}=0.02$ and $p_{in}=0.4$ (top) and the $2$-d RGG of size $n=100$ and radius $r = 0.4$ (bottom).} 
		\label{fig:ORC-L-approx}
	\end{figure}

	\subsection{Forman's Ricci Curvature}
	
	In this section, we establish a series of relationships between Forman's curvature of a graph and its line graph. We further discuss how higher order structures in the line graph may be incorporated in the curvature computation.

	\subsubsection{Unweighted case} 
	Consider an unweighted graph $G$, and denote FRC in $G$ and $L(G)$ by $\Ric_F^G$, $\Ric_F^{L(G)}$, respectively. At first, we only consider the 1-complex curvature notion (Eqs.~\eqref{eq:frc-e}, \eqref{eq:frc-v}), which does not consider contributions of higher-order structures, such as triangles or quadrangles. The argument specifies, whether node- or edge-level curvature is considered. 
	We begin by investigating the relationship of edge-level Forman curvature in a graph $G$ and its line graph $L(G)$, which is given by the following result:
	\begin{lem}
		Edge-level FRC in $G$ relates to edge-level FRC in $L$ as
		\begin{equation}\label{eq:lgfrc_edge}
			\Ric_{F}^{L(G)}(\{\{u,v\},\{v,w\}\}) =\Ric_F^G(\{u,v\})+\Ric_F^G(\{v,w\}) \; .
		\end{equation}
	\end{lem}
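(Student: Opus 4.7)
The plan is to combine the unweighted specialization of Forman's curvature formula (stated just after Equation~\eqref{eq:frc-e}) with the degree relation in Lemma~\ref{lem:degree}. In the unweighted 1-complex case, for any edge $e=\{v_1,v_2\}$ we have $\Ric_F(e)=4-d_{v_1}-d_{v_2}$, so the entire computation reduces to bookkeeping with node degrees.

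First, I would apply this formula directly in the line graph $L(G)$ to the edge $\{\{u,v\},\{v,w\}\}\in \mathcal{E}$, whose endpoints are the nodes $\{u,v\}$ and $\{v,w\}$ of $L(G)$. This gives
\begin{equation*}
\Ric_F^{L(G)}(\{\{u,v\},\{v,w\}\}) = 4 - d^{L(G)}_{\{u,v\}} - d^{L(G)}_{\{v,w\}}.
\end{equation*}
Next, invoke Lemma~\ref{lem:degree} to substitute $d^{L(G)}_{\{u,v\}} = d_u + d_v - 2$ and $d^{L(G)}_{\{v,w\}} = d_v + d_w - 2$, obtaining $8 - d_u - 2d_v - d_w$. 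On the other side, apply the same unweighted formula to the two edges $\{u,v\},\{v,w\}$ of $G$ individually, yielding $(4-d_u-d_v)+(4-d_v-d_w) = 8 - d_u - 2d_v - d_w$. The two expressions agree, establishing the identity.

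I do not anticipate a genuine obstacle here: the only subtlety is being careful to distinguish degrees in $G$ from degrees in $L(G)$, since the endpoints of the line-graph edge are themselves edges of $G$. The result is essentially a consequence of the fact that the constant ``$4$'' in the unweighted 1-complex Forman formula decomposes as $(4-2)+(4-2) + \text{correction}$ in a way that exactly cancels the $-2$ terms from the degree identity $d_{\{u,v\}}=d_u+d_v-2$. If desired, one could note that the same calculation goes through whenever one uses a Forman formula of the form $\alpha-\sum d_{v_i}$ with an additive constant $\alpha$, provided $2\alpha - 4 = \alpha$ (i.e.\ $\alpha=4$), which is precisely the unweighted 1-complex case.
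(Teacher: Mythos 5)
Your proof is correct and follows essentially the same route as the paper: apply the unweighted formula $\Ric_F(e)=4-d_{v_1}-d_{v_2}$ in $L(G)$, substitute the degree identity $d_{\{u,v\}}=d_u+d_v-2$ from Lemma~\ref{lem:degree}, and observe that the resulting expression $8-d_u-2d_v-d_w$ equals $\Ric_F^G(\{u,v\})+\Ric_F^G(\{v,w\})$. Your closing remark about the constant $\alpha=4$ being exactly what makes the cancellation work is a nice (correct) observation beyond what the paper states, but the core argument is identical.
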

	\begin{proof}
		The relationship arises from simple combinatorial arguments.
		Recalling Lem.~\ref{lem:degree}, we have
		\begin{align*}
			\Ric_{F}^{L(G)}(\{\{u,v\},\{v,w\}\})&=4-d_{\{u,v\}}-d_{\{v,w\}} \notag\\
			&=4-(d_u - d_v -2)-(d_v+ d_w -2)\notag\\
			&=8- d_u - d_v - d_v - d_w \notag\\
			&=\Ric_F^G(\{u,v\})+\Ric_F^G(\{v,w\}),
		\end{align*}
		which gives the claim.
	\end{proof}
	
	\begin{lem}
		Edge-level FRC in $G$ relates to node-level FRC in $L$ as
		\begin{equation}\label{eq:lgfrc_vtx}
			\Ric_F^{L(G)}(\{u,v\}) =\Ric_F^G(u)+\Ric_F^G(v)-\Ric_F^G(\{u,v\})^2 \; .
		\end{equation}
	\end{lem}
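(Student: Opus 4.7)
The plan is to expand the node-level FRC in the line graph at $\{u,v\}$ as a sum of edge-level FRCs in $L(G)$ over its adjacent edges, and then reduce each term to edge-level FRCs in $G$ using the previously established identity~\eqref{eq:lgfrc_edge}. Concretely, by definition~\eqref{eq:frc-v},
\begin{equation*}
\Ric_F^{L(G)}(\{u,v\}) = \sum_{e' \sim_{L(G)} \{u,v\}} \Ric_F^{L(G)}(\{\{u,v\},e'\}) = \sum_{e' \sim_{L(G)} \{u,v\}} \bigl[\Ric_F^G(\{u,v\}) + \Ric_F^G(e')\bigr].
\end{equation*}

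Next I would compute the two pieces separately. The first piece is $\Ric_F^G(\{u,v\})$ multiplied by the degree of $\{u,v\}$ in $L(G)$, which by Lemma~\ref{lem:degree} equals $d_u + d_v - 2$. For the second piece, observe that the edges $e'$ adjacent to $\{u,v\}$ in $L(G)$ are exactly the edges of $G$ of the form $\{u,w\}$ with $w \sim u$, $w \neq v$, together with the edges $\{v,w\}$ with $w \sim v$, $w \neq u$. Hence
\begin{equation*}
\sum_{e' \sim_{L(G)} \{u,v\}} \Ric_F^G(e') = \bigl[\Ric_F^G(u) - \Ric_F^G(\{u,v\})\bigr] + \bigl[\Ric_F^G(v) - \Ric_F^G(\{u,v\})\bigr],
\end{equation*}
by definition~\eqref{eq:frc-v} applied at the vertices $u$ and $v$ in $G$.

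Combining these yields
\begin{equation*}
\Ric_F^{L(G)}(\{u,v\}) = (d_u + d_v - 4)\,\Ric_F^G(\{u,v\}) + \Ric_F^G(u) + \Ric_F^G(v),
\end{equation*}
and the final step is to recognize that in the unweighted case $\Ric_F^G(\{u,v\}) = 4 - d_u - d_v$, so $d_u + d_v - 4 = -\Ric_F^G(\{u,v\})$, giving the desired $-\Ric_F^G(\{u,v\})^2$ cross term.

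There is no real obstacle here: the argument is an accounting exercise that leverages the edge-level identity~\eqref{eq:lgfrc_edge} and the degree formula of Lemma~\ref{lem:degree}. The only mildly subtle point is the ``$-2\,\Ric_F^G(\{u,v\})$'' correction coming from removing the edge $\{u,v\}$ itself from both vertex-sums at $u$ and at $v$; this is precisely what combines with the $(d_u+d_v-2)\,\Ric_F^G(\{u,v\})$ term to produce the clean quadratic $-\Ric_F^G(\{u,v\})^2$.
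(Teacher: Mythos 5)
Your proposal is correct and follows essentially the same route as the paper: expand the node-level FRC in $L(G)$ via Eq.~\eqref{eq:frc-v}, reduce each summand with the edge-level identity~\eqref{eq:lgfrc_edge}, use $d_{\{u,v\}}=d_u+d_v-2$ from Lemma~\ref{lem:degree} together with the vertex-sums at $u$ and $v$ (each missing the $\{u,v\}$ term), and conclude with the unweighted formula $\Ric_F^G(\{u,v\})=4-d_u-d_v$ to obtain the $-\Ric_F^G(\{u,v\})^2$ term. No gaps.
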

	Importantly, this lemma allows us to compute node-level FRC in the line graph with respect to quantities in the original graph only. Specifically, the right hand side can be computed from edge-level FRC only, utilizing Eq.~\eqref{eq:frc-v}.
	%{\color{blue}We will utilize this notion below to avoid the line graph construction in downstream applications.}
	%{\color{teal}TODO Melanie: Add comment on expressing $\Ric(v)^G$ w.r.t. $\Ric(e)^G$. 
		%}
	%
	\begin{proof}
		Similarly,
		\begin{align*}    
			\Ric_F^{L(G)}(\{u,v\})&=\sum_{\{\{u,v\},\{v,w\}\}\in \mathcal{E}}\Ric_F^{L(G)}(\{\{u,v\},\{v,w\}\})  +\sum_{\{\{u,v\},\{u,w\}\}\in \mathcal{E}}\Ric_F^{L(G)}(\{\{u,v\},\{u,w\}\})\notag\\
			&=\sum_{\substack{\{v,w\}\in E\\ w\neq u}} \Ric_F^G(\{u,v\})+\Ric_F^G(\{v,w\}) + \sum_{\substack{\{u,w\}\in E \\ w\neq v}}\Ric_F^G(\{u,v\})+\Ric_F^G(\{u,w\})\notag\\
			&=d_{\{u,v\}} \Ric_F^G(\{u,v\})+\sum_{\substack{\{v,w\}\in E \\ w\neq u}}\Ric_F^G(\{v,w\})+\sum_{\substack{\{u,w\}\in E \\ w\neq v}}\Ric_F^G(\{u,w\})\notag\\
			&=(d_u + d_v -2)\Ric_F^G(\{u,v\})+\Ric_F^G(v)-\Ric_F^G(\{u,v\})+\Ric_F^G(u)-\Ric_F^G(\{u,v\})\notag\\
			&=\Ric_F^G(u)+\Ric_F^G(v)-\Ric_F^G(\{u,v\})^2.
		\end{align*}
	\end{proof}
	
	\subsubsection{Weighted case} 
	
	When the vertex weights $\omega_u=1$ for all $u\in V$, and using the line graph edge weights $\omega_{\{e_1,e_2\}}=\omega_{e_1}\omega_{e_2},$ we get the following formula for the relationship between the edge-level FRC in $G$ versus the edge-level FRC in $L(G)$:
	
	\begin{lem}
		\label{lem:lgfrc_weighted}
		\begin{align*}
			\Ric_F^{L(G)}(\{\{u,v\},\{v,w\}\})&= \omega_{\{u,v\}}\left(2- \sqrt{\frac{\omega_{\{v,w\}}}{\omega_{\{u,v\}}}}(2-\Ric_F^G(\{u,v\}))\right)\\
			&+\omega_{\{v,w\}}\left(2- \sqrt{\frac{\omega_{\{u,v\}}}{\omega_{\{v,w\}}}}(2-\Ric_F^G(\{v,w\}))\right).
		\end{align*}
	\end{lem}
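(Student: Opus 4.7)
The plan is to apply the definition of weighted FRC in Equation~\eqref{eq:frc-e} directly to the line graph edge $f = \{\{u,v\},\{v,w\}\}$, substitute the prescribed weight scheme, and then recognize that the resulting sums are essentially the same sums appearing in $\Ric_F^G(\{u,v\})$ and $\Ric_F^G(\{v,w\})$, modulo pulling out constant factors and removing one term to avoid double counting.

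First, I would specialize Equation~\eqref{eq:frc-e} to the case of unit node weights. For any edge $e \in E(H)$ in a graph $H$ whose endpoints have unit weight, the formula simplifies to
\begin{equation*}
\Ric_F^H(e) \;=\; 2 \;-\; \sum_{e' \sim e} \sqrt{\frac{\omega_e}{\omega_{e'}}} \; .
\end{equation*}
Applied to $G$, this gives $\sum_{e' \sim \{u,v\}} \sqrt{\omega_{\{u,v\}}/\omega_{e'}} = 2 - \Ric_F^G(\{u,v\})$, and analogously for $\{v,w\}$. The node weights in $L(G)$ inherit the edge weights of $G$, so this simplified form applies at level $L(G)$ as well, with $\omega_f = \omega_{\{u,v\}}\omega_{\{v,w\}}$.

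Next, I would evaluate the two sums arising in $\Ric_F^{L(G)}(f)$. The line-graph edges $f'$ incident to $f$ at the endpoint $\{u,v\}$ correspond to edges $e' \in E(G)$ with $e' \sim \{u,v\}$ and $e' \neq \{v,w\}$, and by the weight rule $\omega_{f'} = \omega_{\{u,v\}} \omega_{e'}$. Hence
\begin{equation*}
\sqrt{\frac{\omega_f}{\omega_{f'}}} \;=\; \sqrt{\frac{\omega_{\{u,v\}}\omega_{\{v,w\}}}{\omega_{\{u,v\}}\omega_{e'}}} \;=\; \sqrt{\frac{\omega_{\{v,w\}}}{\omega_{e'}}} \; ,
\end{equation*}
and an analogous simplification holds at the endpoint $\{v,w\}$. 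Rewriting $\sqrt{\omega_{\{v,w\}}/\omega_{e'}} = \sqrt{\omega_{\{v,w\}}/\omega_{\{u,v\}}}\cdot\sqrt{\omega_{\{u,v\}}/\omega_{e'}}$, I can pull the constant factor outside the sum and then use the identity from the first step, adjusting by the omitted term $\sqrt{\omega_{\{u,v\}}/\omega_{\{v,w\}}}$ (corresponding to the removed edge $e' = \{v,w\}$). The resulting cross-cancellation $\sqrt{\omega_{\{v,w\}}/\omega_{\{u,v\}}}\cdot\sqrt{\omega_{\{u,v\}}/\omega_{\{v,w\}}} = 1$ produces a clean $-1$ correction, giving
\begin{equation*}
\sum_{\substack{e' \sim \{u,v\}\\ e' \neq \{v,w\}}} \sqrt{\frac{\omega_{\{v,w\}}}{\omega_{e'}}} \;=\; \sqrt{\frac{\omega_{\{v,w\}}}{\omega_{\{u,v\}}}}\bigl(2 - \Ric_F^G(\{u,v\})\bigr) - 1 \; ,
\end{equation*}
and symmetrically at the other endpoint.

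Finally, I would substitute these two expressions back into the specialized FRC formula at level $L(G)$, using that the first two terms contribute $\omega_{\{u,v\}} + \omega_{\{v,w\}}$. The $-1$ corrections combine with these to yield $2\omega_{\{u,v\}} + 2\omega_{\{v,w\}}$, matching the leading terms in the target identity, while the remaining pieces reproduce the two bracketed expressions in the lemma. The main obstacle is purely bookkeeping: one must carefully exclude the term $e' = \{v,w\}$ from the sum over neighbors of $\{u,v\}$ (and vice versa) to avoid counting $f$ itself as an adjacent line-graph edge, and track the square-root normalizations consistently. Once this bookkeeping is done, the identity follows by direct algebra.
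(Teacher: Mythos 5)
Your proposal is correct and takes the only natural route, which is also how the paper handles the analogous unweighted identities (the paper states this weighted lemma without an explicit proof): expand Eq.~\eqref{eq:frc-e} on the line-graph edge with node weights inherited from the $G$-edge weights and edge weight $\omega_{\{u,v\}}\omega_{\{v,w\}}$, use $\sqrt{\omega_f/\omega_{f'}}=\sqrt{\omega_{\{v,w\}}/\omega_{e'}}$ at the endpoint $\{u,v\}$, exclude $e'=\{v,w\}$ to get the $-1$ correction, and collect terms. One caution on your wording: the claim that the unit-node-weight ``simplified form'' applies at the level of $L(G)$ is not literally true, since the line-graph node weights are the $G$-edge weights rather than $1$; the two neighbor sums must carry the prefactors $\omega_{\{u,v\}}$ and $\omega_{\{v,w\}}$, otherwise one would obtain $4-\sqrt{\omega_{\{v,w\}}/\omega_{\{u,v\}}}\,(2-\Ric_F^G(\{u,v\}))-\sqrt{\omega_{\{u,v\}}/\omega_{\{v,w\}}}\,(2-\Ric_F^G(\{v,w\}))$, which is not the lemma — but your final step does keep these prefactors (that is precisely why the first two terms give $\omega_{\{u,v\}}+\omega_{\{v,w\}}$ and the $-1$ corrections produce $2\omega_{\{u,v\}}+2\omega_{\{v,w\}}$), so the argument as executed is sound.
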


	\subsubsection{Incorporating higher-order structures} \label{sec:DC-frc-higher}
	Finally, we discuss how to account for contributions of higher-order structures, such as triangles or quadrangles in the computation of curvature in the line graph. As discussed in sec.~\ref{sec:graph-curv}, the curvature contribution of such structures is crucial in curvature-based clustering. We have demonstrated that the successful application of the FRC-based approach to single-membership community detection relies on the consideration of triangular faces, such as in our 2-complex FRC notion (Eq.~\eqref{eq:frc-f}). In the following, we will illustrate the necessity of including contributions of \emph{quadrangular faces} in the curvature contribution on the line graph. 
	We first define a 2-complex FRC notion on the line graph, which accounts for curvature contributions of triangular faces.
	
	Let $\Gamma(v) = \{v': \{v,v'\}\in E\}$ denote the set of neighbors for a vertex $v \in V$ in $G$.
	Suppose $\{u,v\}, \{v,w\}\in E$ denote neighboring edges in $G$ and $(\{u,v\}, \{v,w\})\in \mathcal{E}$ the corresponding edge in the line graph.
	We define the set of neighbours for $\{u,v\}, \{v,w\}$ as follows:
	\begin{align*}
		\Gamma \big(\{u,v\}\big) &:= \Big\{\{u,u'\}: u'\in \Gamma (u)\backslash\{v\}\Big\} \cup \Big\{\{v,v'\}: v'\in\Gamma (v)\backslash\{u\}\Big\}\\
		\Gamma \big(\{v,w\}\big) &:= \Big\{\{v,v'\}: v'\in \Gamma (v)\backslash\{w\}\Big\} \cup \Big\{\{w,w'\}: w'\in\Gamma (w)\backslash\{v\}\Big\} \; .
	\end{align*}
	The set of triangles in the line graph, which contain $\{u,v\}, \{v,w\}$ is then given by
	\begin{align*}
		\mathcal{F}_{\Delta} &:= \Big\{\big(\{u,v\}, \{v,w\}, e\big): \; e\in \Gamma (\{u,v\})\cap\Gamma (\{v,w\})\Big\}\\
		&= \Big\{\big(\{u,v\}, \{v,w\}, e\big): \; e\in \big\{\{v,v'\}: v'\in\Gamma (v)\backslash\{u,w\}\big\} \cup \big(\{\{u,w\}\}\cap E \big)\Big\} \; . 
	\end{align*}
	Hence, there are only two possible structures in $G$, which give rise to triangles in its line graph $L$: (i) edges incident to the common node of the two edges under consideration, or (ii) the other endpoints of the two edges are connected with each other. The 2-complex FRC with triangle contributions can then be written as follows:
	\begin{defn}[2-complex FRC in the Line Graph ($k \leq 3$)]
		\begin{align*}
			&\Ric_{F} \big(\{\{u,v\}, \{v,w\}\}\big) \\
			&\qquad =\ \w_{\{u,v\}, \{v,w\}} \Biggl( \sum_{f\in \mathcal{F}_{\Delta}}\frac{\w_{\{u,v\}, \{v,w\}}}{\w_f}  + \frac{\w_{\{u,v\}}}{\w_{\{u,v\}, \{v,w\}}} + \frac{\w_{\{v,w\}}}{\w_{\{u,v\}, \{v,w\}}} \Big.\\
			&\qquad \qquad \left.- \Biggl[
			\sum_{\substack{e\in \Gamma \big(\{u,v\} \big)\\ \big(\{u,v\}, \{v,w\}, e \big) \notin \mathcal{F}_{\Delta}}}\frac{\w_{\{u,v\}}}{\sqrt{\w_{\{u,v\}, \{v,w\}} \cdot \w_{\{u,v\}, e}}} \right.  + \sum_{\substack{e\in \Gamma \big(\{v,w\} \big)\\ \big(\{u,v\}, \{v,w\}, e \big) \notin \mathcal{F}_{\Delta}}}\frac{\w_{\{v,w\}}}{\sqrt{\w_{\{u,v\}, \{v,w\}}\cdot \w_{\{v,w\}, e}}}
			\Biggr] \Biggr) \; .
		\end{align*}
	\end{defn}
	\noindent In the special case of an unweighted network (i.e., $\w_e = 1$ for all $e\in V$ and $\w_{\{e,e'\}} = 1$ for all $\{e,e'\}\in\mathcal{E}$), we have
	\begin{align}
		\Ric_{F} \Big(\{\{u,v\}, \{v,w\}\}\Big) 
		&= 2 + \sum_{f\in \mathcal{F}_{\Delta}}\frac{1}{\w_f} - 
		\left\vert \Big\{e\in \Gamma \big(\{u,v\} \big): \big(\{u,v\}, \{v,w\}, e\big) \notin \mathcal{F}_{\Delta} \Big\} \right\vert \nonumber \\
		&\qquad - \left\vert \Big\{e\in\Gamma (\{v,w\}): \{\{u,v\}, \{v,w\}, e\}\notin \mathcal{F}_{\Delta} \Big\} \right\vert \; .
	\end{align}
	We can specialize this notion to the case of unweighted graphs. Consider first the case where, similar to sec.~\ref{sec:graph-curv}, we include only higher-order contributions from triangles. 
	In an unweighted graph, we may assign each triangle the same weight $\w_{\Delta}$, since we typically assume that face weights are a function of edge weights.
	This simplifies the above equation as follows:
	\begin{lem}[2-complex FRC in the Line Graph (triangles only)]
		\begin{equation}
			\Ric_{F} (\{\{u,v\}, \{v,w\})
			= \frac{d_v}{\w_{\Delta}} - d_u - d_w + 4 - \frac{2}{\w_{\Delta}} + \left(\frac{1}{\w_{\Delta}} + 2\right)\mathbf{1}_E(\{u,w\})\; .
		\end{equation}
	\end{lem}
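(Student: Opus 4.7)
The plan is to specialize the unweighted 2-complex FRC formula (the display immediately preceding the lemma) by substituting the common face weight $\w_{\Delta}$ for all triangles, then reduce the remaining cardinalities to elementary counts in $G$. Under this specialization $\sum_{f \in \mathcal{F}_{\Delta}} 1/\w_f = |\mathcal{F}_{\Delta}|/\w_{\Delta}$, so the task reduces to two combinatorial counts: (i) the number of triangles in $L(G)$ containing both $\{u,v\}$ and $\{v,w\}$, and (ii) for each of the two endpoints, the number of line-graph neighbours of that endpoint (other than the other endpoint of our edge) that fail to close a triangle.

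For (i), a triangle in $L(G)$ through $\{u,v\}$ and $\{v,w\}$ is determined by a third base-graph edge $e \in E$ adjacent in $G$ to both; since $e$ must share a vertex with each, either $e$ passes through the common vertex $v$, giving $e = \{v,v'\}$ with $v' \in \Gamma(v)\setminus\{u,w\}$ and contributing $d_v - 2$ options, or $e$ connects the non-common endpoints directly, forcing $e = \{u,w\}$ and contributing $\mathbf{1}_E(\{u,w\})$. Hence $|\mathcal{F}_{\Delta}| = d_v - 2 + \mathbf{1}_E(\{u,w\})$. For (ii), Lemma~\ref{lem:degree} gives $|\Gamma(\{u,v\})| = d_u + d_v - 2$. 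The relevant sum in the 2-complex formula ranges over third edges, i.e.\ line-graph neighbours of $\{u,v\}$ that are distinct from $\{v,w\}$, so the candidate set has size $d_u + d_v - 3$. Subtracting the triangle-forming choices counted in (i) leaves $d_u - 1 - \mathbf{1}_E(\{u,w\})$, and symmetrically $d_w - 1 - \mathbf{1}_E(\{u,w\})$ for $\{v,w\}$.

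Substituting these three quantities into the unweighted 2-complex formula gives
\begin{align*}
\Ric_F(\{\{u,v\},\{v,w\}\}) &= 2 + \frac{d_v - 2 + \mathbf{1}_E(\{u,w\})}{\w_{\Delta}} - \bigl(d_u - 1 - \mathbf{1}_E(\{u,w\})\bigr) - \bigl(d_w - 1 - \mathbf{1}_E(\{u,w\})\bigr),
\end{align*}
which rearranges to the claimed identity after grouping the constant, the $1/\w_{\Delta}$ terms, and the $\mathbf{1}_E(\{u,w\})$ terms. The one subtle bookkeeping step — and the only place where a careless count would be wrong — is the exclusion of $\{v,w\}$ from the candidate third edges at $\{u,v\}$, and likewise of $\{u,v\}$ at $\{v,w\}$: omitting this exclusion shifts the constant from $4$ to $2$. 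The exclusion is forced by the definition of parallelism underlying the 2-complex FRC, since an edge parallel to $e$ in $L(G)$ must be distinct from $e$, so the relevant sums run over $\Gamma(\{u,v\}) \setminus \{\{v,w\}\}$ and $\Gamma(\{v,w\}) \setminus \{\{u,v\}\}$ rather than the full neighbourhood sets.
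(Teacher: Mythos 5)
Your proposal is correct and follows essentially the same route as the paper: count $|\mathcal{F}_{\Delta}| = d_v - 2 + \mathbf{1}_E(\{u,w\})$, count the non-triangle-forming neighbours as $d_u - 1 - \mathbf{1}_E(\{u,w\})$ and $d_w - 1 - \mathbf{1}_E(\{u,w\})$, and substitute into the unweighted 2-complex formula. Your explicit remark that $\{v,w\}$ must be excluded from the candidate third edges at $\{u,v\}$ (and vice versa) is exactly what the paper's counts do implicitly, so there is no substantive difference.
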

	\begin{proof}
		\begin{align*}
			\Ric_{F} &(\{\{u,v\}, \{v,w\}\}) \\
			&= \frac{\vert \mathcal{F} \vert}{\w_{\Delta}} + 2 - 
			\left\vert \Big\{e\in \Gamma \big(\{u,v\} \big): \big(\{u,v\}, \{v,w\}, e\big) \notin \mathcal{F}_{\Delta} \Big\} \right\vert \nonumber \\
			&\qquad - \left\vert \Big\{e\in\Gamma (\{v,w\}): \{\{u,v\}, \{v,w\}, e\}\notin \mathcal{F}_{\Delta} \Big\} \right\vert\\ 
			&= \frac{d_v-2 + \mathbf{1}_E(\{u,w\})}{\w_{\Delta}} + 2 - \big(d_u - 1 -\mathbf{1}_E(\{u,w\}) + d_w - 1 -\mathbf{1}_E(\{u,w\})\big)\\
			&= \frac{d_v}{\w_{\Delta}} - d_u - d_w + 4 - \frac{2}{\w_{\Delta}} + \left(\frac{1}{\w_{\Delta}} + 2\right)\mathbf{1}_E(\{u,w\})\; .
		\end{align*}
	\end{proof}
	The lemma implies that the triangles-only 2-complex FRC in the line graph can still be largely determined by the degree of the nodes in the original graph $G$; the local triangle count enters via the terms $\mathbf{1}_E(\{u,w\}$ (which is $1$ if $\{u,w\}\in E$ and $0$ otherwise). A simple calculation shows that $\w_{\Delta}=\sqrt{3}/4$, i.e., $1/\w_\Delta+2=(4\sqrt{3}+6)/3\approx 3.309$, while node degree could be as high as $O(\vert V \vert)$. With that, a triangle increases the curvature value by $1/\w_{\Delta} + 2$.
	However, our experimental results (see Table \ref{tab:mmbmfrc-nmi}) demonstrate that clustering methods based on the triangles-only 2-complex FRC can have low accuracy. While considering triangles was sufficient for FRC-based community detection on the original graph, we need to incorporate additional higher-order information to perform community detection on the line graph with high accuracy. Our experiments suggest that accounting for curvature contributions of quadrangles leads to a significant improvement in the performance of FRC-based mixed-membership community detection (see Table \ref{tab:mmbmfrc-nmi}). We found that incorporating curvature contributions of $k$-faces with $k \geq 5$ does not improve performance, while significantly increasing the computational burden. This is expected, given the decreasing frequency of such structures (see Fig.~\ref{fig:k-faces}). These design choices lead to the following 2-complex FRC notion, which we will utilize in our FRC-based clustering method below:
	The set of quandrangles in the line graph, which contains $\{u,v\}, \{v,w\}$ is given by
	\begin{align*}
		\mathcal{F}_{\square} &:= \Big\{\big(\{u,v\}, \{v,w\}, \{w,x\}, \{x,u\}\big): \; x\in (\Gamma (w)\cap\Gamma (u))\backslash\{v\}\Big\}\; . 
	\end{align*}
	Hence, only quadrangles in the original graph $G$ can give rise to quadrangles in its line graph $L$.
	
	\begin{defn}[2-complex FRC in the Line Graph ($k\leq 4$)]\label{def:frc-f4}
		\begin{align*}
			&\Ric_{F} \Big(\{\{u,v\}, \{v,w\}\}\Big) \\
			&\qquad =\ \w_{\{u,v\}, \{v,w\}} \Biggl( \sum_{f\in \mathcal{F}_{\Delta}}\frac{\w_{\{u,v\}, \{v,w\}}}{\w_f} + \sum_{f\in \mathcal{F}_{\square}}\frac{\w_{\{u,v\}, \{v,w\}}}{\w_f} + \frac{\w_{\{u,v\}}}{\w_{\{u,v\}, \{v,w\}}} + \frac{\w_{\{v,w\}}}{\w_{\{u,v\}, \{v,w\}}} \Big.\\
			&\qquad \left.- \Big\vert
			\sum_{\big(\{u,v\}, \{v,w\}, e_1, e_2 \big) \in \mathcal{F}_{\square}}\frac{\sqrt{\w_{\{u,v\}, \{v,w\}} \cdot \w_{e_1, e_2}}}{w_f} \right. \\
			&\qquad \qquad \left. -\sum_{\substack{e\in \Gamma \big(\{u,v\} \big)\\ \big(\{u,v\}, \{v,w\}, e \big) \notin \mathcal{F}_{\Delta}}}\frac{\w_{\{u,v\}}}{\sqrt{\w_{\{u,v\}, \{v,w\}} \cdot \w_{\{u,v\}, e}}} \right.  - \sum_{\substack{e\in \Gamma \big(\{v,w\} \big)\\ \big(\{u,v\}, \{v,w\}, e \big) \notin \mathcal{F}_{\Delta}}}\frac{\w_{\{v,w\}}}{\sqrt{\w_{\{u,v\}, \{v,w\}}\cdot \w_{\{v,w\}, e}}}
			\Big\vert \Biggr) \; .
		\end{align*}
	\end{defn}
	To determine weights of quadrangles, we utilize again Heron's formula: If we treat the two edges with the largest weights to be parallel, we can compute the weight of a quadrangular face via the area of the two triangles that form the trapezoid.  
	Specifically, let $f = (e_i,e_j,e_k,e_l)$ denote a quadrangle in $G$, i.e., $e_i \sim e_j$, $e_j \sim e_k$, $e_k \sim e_l$, $e_l \sim e_i$ while $e_i \not\sim e_k$ and $e_j \not\sim e_l$, where $\w_{e_i} \ge \w_{e_j} \ge \w_{e_k} \ge \w_{e_l}$, and we consider three different cases here: (i) $\w_{e_i} > \w_{e_j}$, i.e., when the two largest weights (or lengths) are not the same, (ii) $\w_{e_i} = \w_{e_j}$ and $\w_{e_k} > \w_{e_l}$, i.e., when the two largest weights are the same while the other two are different, and (iii) $\w_{e_i} = \w_{e_j}$ and $\w_{e_k} = \w_{e_l}$, i.e., when the two largest are the same while the others are also the same. In case (i), we assume that $e_i,e_j$ are parallel to each other, and then the area of the quadrangle can be obtained via Heron's formula:
	\begin{align}
		\label{equ:quad-lp}
		\w_{f} &:= \sqrt{s(s-(\w_{e_i}-\w_{e_j}))(s-\w_{e_k})(s-\w_{e_l})}\left(1 + 2\frac{\w_{e_j}}{\w_{e_i}-\w_{e_j}}\right) \\
		s &= \frac{(\w_{e_i}-\w_{e_j}) + \w_{e_k}+\w_{e_l}}{2} \; .
	\end{align}  
	In case (ii), we assume that $e_k,e_l$ are parallel to each other, and similarly
	\begin{align}
		\label{equ:quad-sp}
		\w_{f} &:= \sqrt{s(s-(\w_{e_k}-\w_{e_l}))(s-\w_{e_i})(s-\w_{e_j})}\left(1 + 2\frac{\w_{e_l}}{\w_{e_k}-\w_{e_l}}\right) \\
		s &= \frac{(\w_{e_k}-\w_{e_l}) + \w_{e_i}+\w_{e_j}}{2} \; .
	\end{align}  
	Finally, in case (iii), we assume that $f$ is a rectangle, thus
	\begin{align}
		\label{equ:quad-rect}
		\w_{f} &:= \w_{e_i}\w_{e_k} \; .
	\end{align}

	%%%%%%%
	\section{Algorithms}
	\label{sec:algorithm}
	%%%%%%%
	\subsection{ORC-based approach}
	%\color{blue}
	%\begin{itemize}
	%\item Specialize blueprint to ORC case
	%\item design choices, hyperparameters
	%\item Theoretical result in the style of~\citep{ni2019community}
	%\end{itemize}
	%\color{black}
	
	The curvature-based community detection algorithms (Alg.~\ref{alg.1}) is build on the fact that \emph{bridges} between communities are more negatively curved than edges within communities. We have discussed earlier that this can be observed in both graphs (in the single-membership case) and their line graphs (in the case mixed-membership communities). In this section, we provide a detailed discussion of Alg.~\ref{alg.1} with $\kappa$ denoting Ollivier's Ricci curvature (ORC). We begin by giving some intuition for the differences in curvature values of bridges and internal edges, before formalizing the argument below. By construction, ORC is closely linked to the behavior of two random walks starting at neighboring vertices~\citep{Ol2,jost2014ollivier}. Informally, they are more likely to draw apart if the edge between them has negative ORC, and to draw closer together otherwise. Random walks that start at nodes adjacent to a bridge (\emph{bridge nodes}), typically move into the communities to which the respective nodes belong. As a consequence, they draw apart quickly. Random walks that start at non-bridge nodes, i.e., nodes adjacent to internal edges, are more likely to stay close to each other, as they remain within the same community. Hence, we expect bridges to have much lower ORC than internal edges, a fact that is easily confirmed empirically (see, e.g., Fig.~\ref{fig:weights}).
	
	%{\color{teal}Melanie: Lazy-random walk ORC: Discussion of $\alpha/p$ breadth vs. depth-first search}
	\begin{rmk}\normalfont
		We note that an alternative notion of Ollivier's curvature (Eq.~\eqref{eq:orc-e}) was given by~\citet{lin-yau}, where node neighborhoods are endowed with the measure
		\begin{equation}\label{eq:orc-lazy}
			m_u^{\alpha}(z) := \begin{cases}
				\alpha, &{u=z} \\
				\frac{1-\alpha}{\vert \mathcal{N}_u \vert}, &{u \sim z} \\
				0, &{\rm else}
			\end{cases} \; .
		\end{equation}
		instead of the uniform measure. A weighted version of this curvature was considered in~\citep{ni2019community}. Notice that for $\alpha>0$, this curvature notion connects to \emph{lazy} random walks starting at neighboring nodes. Here, the parameter $\alpha$ could be seen as controlling whether the random walk is likely to revisit a node, which in turn relates to a distinction between exploration of node neighborhoods in the style of ``breadth-first search'' vs. 	``depths-first search''. A suitable choice of $\alpha$ is highly dependent on the topology of the graph; hence, replacing ORC with Eq.~\eqref{eq:orc-lazy} provides an additional means for incorporating side information on the structure of the underlying graph. However, in this work we restrict ourselves to the classical ORC notion; an exploration of notion~\eqref{eq:orc-lazy} is beyond the scope of the paper.
	\end{rmk}
	%%%
	\subsubsection{Algorithm}
	\paragraph{Single-Membership Community Detection.}
	We implement Alg.~\ref{alg.1} via ORC by setting $\kappa(\cdot)=\widehat{\Ric_O}(\cdot)$, i.e., the curvature is chosen to be the combinatorial ORC approximation proposed in sec.~\ref{sec:orc}. Notice that as the edge weights evolve under the Ricci flow, the graph is weighted, even if the input graph was unweighted. Consequently, we use weighted ORC curvature, constructing the approximation from the bounds in Thm.~\ref{thm:weightollibnds}. We further choose equally-spaced cut-off points $\{x_i\}_{i=0}^{n_f}$, where $x_0 = \max_{\{u,v\}\in E}w_{u,v}^T,\, x_1 = x_0 - \delta,\, \dots,\, x_{n_f} = ((x_0-1)\mod \delta + 1)$, and the step size for the cut-off points is set to be $\delta = 0.025$. Other hyperparameter choices include a constant step size $\nu=1$, $\epsilon=10^{-4}$, drop threshold $\epsilon_d=0.1$, and stopping time $T=10$, i.e., we evolve edge weights under Ricci flow for 10 iterations.
	Instances of Alg.~\ref{alg.1} via ORC were previously considered in~\citep{ni2019community,sia2019ollivier,gosztolai2021unfolding}. Both approaches computed ORC either exactly (via the earth mover's distance) or approximately (via Sinkhorn's algorithm). However, due to the higher computational cost of either variant of the ORC computation, the approaches were  limited in their scalability (see discussion sec.~\ref{sec:orc-comp} below).
	
	%\begin{itemize}
	%\item Here we set $\kappa$ to be ORC as in Eq.~\ref{eq:orc-approx}.   
	%\item Hyperparameters: 
	%\begin{itemize}
	%    \item cut-off points $\{x_i\}_{i=0}^{n_t}$, where $x_0 = \max_{\{u,v\}\in\mathcal{E}}w_{uv}^T,\, x_1 = x_0 - \delta,\, \dots,\, x_{n_t} = (x_0-1)\mod \delta + 1$, and the step size for the cut-off points is set to be $\delta = 0.025$.
	%    \item constant step size $\nu=1$
	%    \item $T=10$
	%    \item $\epsilon=10^{-4}$
	%\end{itemize}
	%\item Instances of Alg.~\ref{alg.1} via ORC were considered in~\citep{ni2019community,sia2019ollivier}. Both approaches utilized computed ORC either exactly (via the Earth Movers Distance) or approximately (via Sinkhorn's algorithm). However, due to the higher computational cost of either variant of the ORC computation, the approaches were severely limited in their scalability.
	%\end{itemize} 
	
	%%%
	\paragraph{Mixed-Membership Community Detection.}
	As in the single-membership case, we implement Alg.~\ref{alg.1} via our combinatorial ORC approximation (Eq.~\eqref{eq:orc-approx-line-graph}). The input graph $G$ is the line graph of the underlying graph, which is constructed before the clustering procedure is started. All edge weights are initially set to $1$, i.e., the input graph is unweighted. Hyperparameter choices are analogous to the single-membership case. Instance of Alg.~\ref{alg.1} via ORC were first considered in~\citep{tian2022mixed}. Again, due to the high computational cost of the ORC computation, the approach was limited in its scalability.
	
	%\paragraph{Algorithm.}
	%\begin{itemize}
	%    \item Here we set $\kappa$ to be ORC as in Eq.~\ref{eq:orc-approx-line-graph}. 
	%    \item The input graph $G$ is the line graph of the underlying graph, which is constructed before the clustering procedure is started. All edge weights are initially set to $1$, i.e., the input graph is unweighted.
	%    \item Hyperparameters: 
	%    \begin{itemize}
		%        \item cut-off points $\{x_i\}_{i=0}^{n_t}$, where $x_0 = \max_{\{u,v\}\in\mathcal{E}}w_{uv}^T,\, x_1 = x_0 - \delta,\, \dots,\, x_{n_t} = (x_0-1)\mod \delta + 1$, and the step size for the cut-off points is set to be $\delta = 0.025$.
		%        \item constant step size $\nu=1$
		%        \item $T=10$
		%        \item $\epsilon=10^{-4}$
		%    \end{itemize}
	%    \item Instance of Alg.~\ref{alg.1} via ORC were first considered in~\citep{tian2022mixed}. Again, due to the high computational cost of the ORC computation, the approach was limited in its scalability.
	%\end{itemize}
	
	\subsubsection{Theoretical results.}\label{sec:orc-theory}
	%{\color{red} TODO double-check calculations}
	\paragraph{Single-Membership case.} To give theoretical intuition for curvature-based community detection algorithms in the single-membership case, we considered the following model class of graphs with community structure:
	\begin{defn}\label{def:G-ab}
		Let $G_{a,b}$ ($a > b \geq 2$) be a graph constructed as follows:
		\begin{enumerate}
			\item Construct a complete graph with $b$ vertices $\{v_1,\ldots,v_b\}$.
			\item For each $i=1,\ldots,b$, introduce vertices $u_{ij}$, $j=1,\ldots,a$, and make all possible connections between vertices in $\{v_i\}\cup\{u_{ij}\}_{j=1}^a$, resulting in $b$ copies of complete graphs $K_{a+1}$.
		\end{enumerate}
		
		Notice that $G_{a,b}$ is a stochastic block model with $b$ blocks of size $a+1$.  
	\end{defn}
	\begin{wrapfigure}{R}{0.48\linewidth}
		\centering
		\includegraphics[scale=0.35]{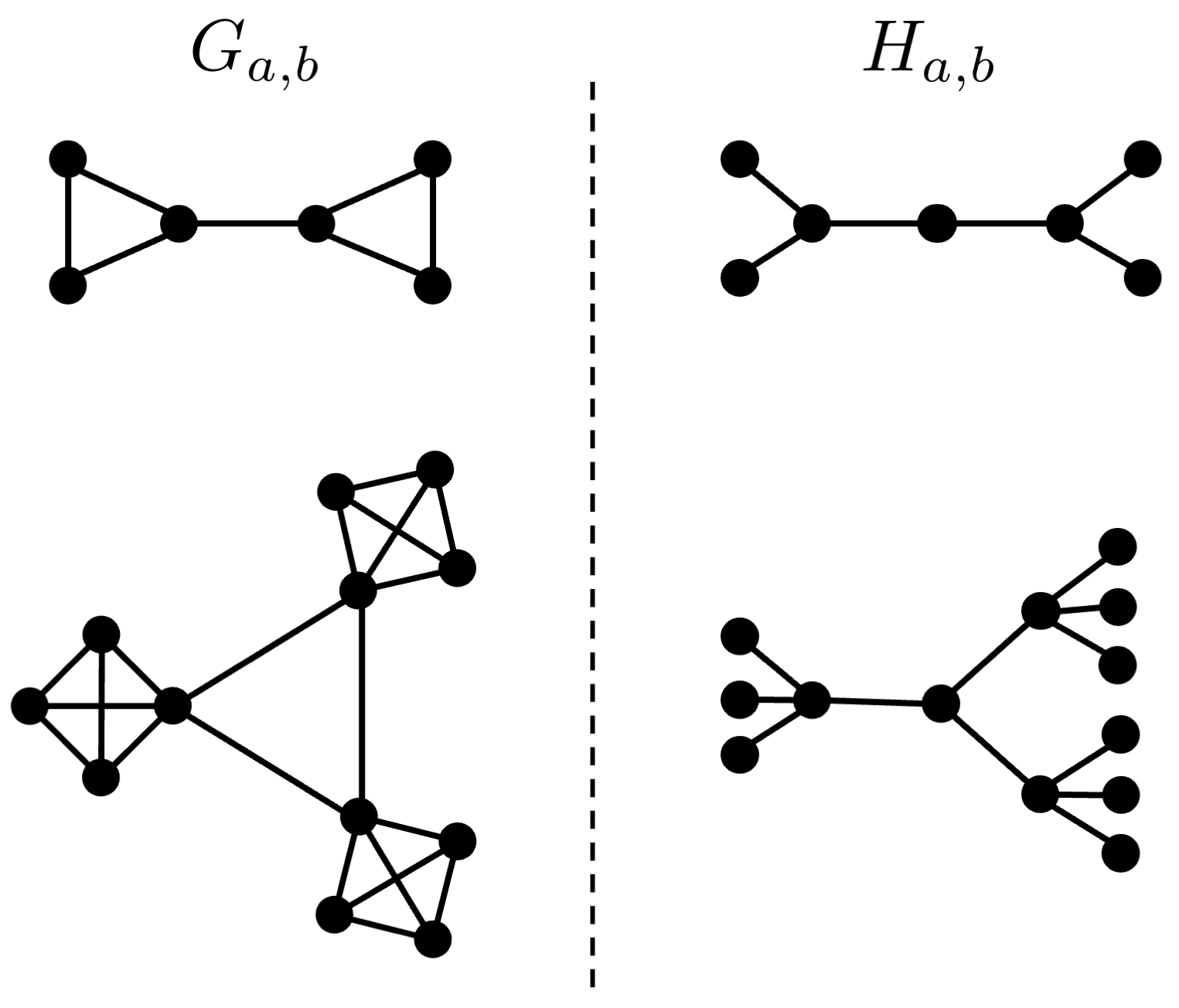}
		\caption{Examples of graphs of the model classes $G_{a,b}$ (left; top: $a=b=2$, bottom: $a=b=3$) and $L_{a,b}$ (right; top: $a=b=2$, bottom: $a=b=3$). Notice that for the top and bottom pairs of graphs, $G_{a,b}$ is the dual of $L_{a,b}$.}
		\label{fig:orc-theory}
		\vspace{-0.7em}
	\end{wrapfigure}

	\noindent We show examples of graphs of the form $G_{a,b}$ in Fig.~\ref{fig:orc-theory}.
	Notice that in a graph $G_{a,b}$ we have three types of edges: 
	\begin{enumerate}
		\item \emph{bridges} between communities, we denote vertices adjacent to bridges as \emph{bridge nodes} and all others as \emph{internal nodes};
		\item \emph{internal edges} that connect a bridge node and an internal node;
		\item \emph{internal edges} that connect two internal nodes.
	\end{enumerate}
	As before, we consider the following mass distribution on the node neighborhoods:
	\begin{equation}\label{eq:G_ab-mass}
		m_x = \begin{cases}
			\frac{1}{C} e^{-w_{x,y}},  &x \sim y \\
			0, &{\rm else}
		\end{cases} \; .
	\end{equation}
	Here, $C$ denotes the normalizing function. This choice follows from the more general scheme in Eq.~\eqref{eq:orc-lazy} by setting $\alpha=0$, $p=1$.  We assume that at initialization, all edges are assigned weight 1.  
	\begin{theorem}\label{thm:ORC-guarantee}
		Let the edge weights in $G_{a,b}$ evolve under the Ricci flow $w_i^{t+1} \gets (1-\kappa_i^t)w_i^t$ (where $\kappa_i^t$ denotes ORC for the edge type $i$ using the weights at time $t$).
		\begin{enumerate}
			\item  The weights of internal edges (type (3)) contract asymptotically, i.e.,  $w_3^{t} \rightarrow 0$ as $t \rightarrow \infty$.  
			\item The weights of bridges are larger than those of internal edges of types (2) and (3), i.e., $w_1^{t} > w_2^{t}$ and $w_1^t>w_3^t$ for all $t>0$.
		\end{enumerate}
	\end{theorem}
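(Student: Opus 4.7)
My plan is to exploit the substantial symmetry of $G_{a,b}$ to reduce the Ricci flow from one weight per edge to a three-dimensional recurrence on the weights $(w_1^t, w_2^t, w_3^t)$ of the three edge types, and then analyse this recurrence directly. The automorphism group of $G_{a,b}$ contains $S_b \ltimes S_a^b$: the outer $S_b$ permutes the blocks while the inner copies of $S_a$ permute the leaves $\{u_{ij}\}_{j=1}^a$ in each block. These orbits on the edge set are precisely the three types. Because the measures in \eqref{eq:G_ab-mass} depend only on shortest-path distances and ORC depends only on the weighted graph structure, any automorphism-invariant weight assignment yields automorphism-invariant curvatures; since the flow is local, it preserves this invariance, so starting from $w_e^0 \equiv 1$ all edges of a given type carry a common weight $w_i^t$ at every time.

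Next I would compute $\kappa_i$ as an explicit function of $(w_1, w_2, w_3)$. For a type-$3$ edge $\{u_{ij}, u_{ik}\}$ the two endpoint neighbourhoods agree on $v_i$ and on the $a-2$ common leaves $u_{il}$ with $l \neq j,k$; the only discrepancy is that $m_{u_{ij}}$ places mass $\mu_3 := e^{-w_3}/(e^{-w_2} + (a-1)e^{-w_3})$ at $u_{ik}$ while $m_{u_{ik}}$ does the reverse. The unique optimal coupling simply moves this mass directly across the edge at cost $w_3$, so
\begin{equation*}
\kappa_3 \;=\; 1 - \mu_3 \;\geq\; 1 - \tfrac{1}{a-1},
\end{equation*}
uniformly in the state (using $a \geq 3$). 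The flow then gives $w_3^{t+1} = \mu_3\, w_3^t \leq \tfrac{1}{a-1}\, w_3^t$, yielding geometric contraction and part~(1).

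For part~(2) I would derive the analogous formulas for $\kappa_2$ and $\kappa_1$ by pinning down the optimal couplings for types $2$ and $1$ via weak duality (an explicit transport plan together with a matching $1$-Lipschitz potential). The qualitative picture is that across a bridge $\{v_i, v_k\}$ a total mass $(a+1)/(a+b-1)$ must be transported from $v_i$'s community into $v_k$'s via the bridge or longer paths, so $W_1(m_{v_i}, m_{v_k})$ is large and $\kappa_1$ small; for type-$2$ edges the bulk of the mass already sits on the common in-community leaves; for type-$3$ edges almost all mass is common. Using these explicit formulas I would prove by induction on $t$ the joint invariant $w_1^t \geq w_2^t \geq w_3^t$ together with $\kappa_1^t < \min(\kappa_2^t, \kappa_3^t)$. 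The inductive step is then clean: $w_1^{t+1} = (1-\kappa_1^t) w_1^t > (1-\kappa_i^t) w_i^t = w_i^{t+1}$ for $i \in \{2,3\}$, and strict inequality already holds at $t=0$ from the computation above (e.g.\ $\kappa_3^0 = 1 - 1/a$ versus $\kappa_1^0 \leq (b-2)/(a+b-1) - a/(a+b-1) \cdot 2$, which is strictly smaller).

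The hard part will be making the inductive step for part~(2) rigorous, for two intertwined reasons. First, as the weights evolve one must certify that the shortest path between adjacent endpoints remains the direct edge in every $d_G$ computation — for example, $d_G(u_{ij}, u_{ik}) = w_3^t$ requires $w_3^t \leq 2 w_2^t$, and analogous conditions must be tracked for type-$2$ edges and bridges; these shortest-path conditions have to be carried along as part of the inductive invariant, and contraction of $w_3$ at rate $1/(a-1)$ combined with a coarse lower bound on $w_2^t/w_2^{t+1}$ should suffice. Second, the $W_1$ problems for types~$1$ and~$2$ are genuine linear programs whose optima depend on the sign pattern of $m_x - m_y$ on each orbit of common neighbours; I would identify these sign patterns once using the inductive hypothesis $w_1^t \geq w_2^t \geq w_3^t$, then verify optimality of a proposed coupling by exhibiting an explicit $1$-Lipschitz witness via Lemmas~\ref{lem:up} and~\ref{lem:low}. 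This reduces the theorem to a finite set of explicit analytic inequalities in $(w_1^t, w_2^t, w_3^t)$ that can be checked directly.
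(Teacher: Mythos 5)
Your plan is essentially the paper's own proof: use the symmetry of $G_{a,b}$ to reduce the flow to the three orbit weights, compute each $W_1$ exactly by pairing an explicit transport plan with a matching $1$-Lipschitz dual witness, obtain the closed-form update $w_3^{t+1}=w_3^t e^{-w_3^t}/C_{in}\leq w_3^t/(a-1)$ for part (1), and run an induction for part (2). The differences are in the inductive bookkeeping. The paper's invariant is only $w_1^t>w_2^t$ and $w_1^t>w_3^t$; it does \emph{not} assume $w_2^t\geq w_3^t$, and instead splits the type-(2) computation into three subcases according to the signs of $c_t=m_x(v)-m_y(v)$ and $r_t$, then compares the updated weights $w_1^{t+1}$ vs.\ $w_2^{t+1},w_3^{t+1}$ directly from the formulas. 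Your proposal instead posits the stronger joint invariant $w_1^t\geq w_2^t\geq w_3^t$ together with $\kappa_1^t<\min(\kappa_2^t,\kappa_3^t)$; both of these are plausible (and numerically they hold) but neither is established in your sketch, and they are additional obligations the paper's route avoids -- in particular, the sign pattern you want to ``identify once'' from $w_2^t\geq w_3^t$ is exactly what the paper's case analysis handles without that hypothesis. Also note that the only shortest-path-type condition the paper needs, $w_3^t\leq 2w_2^t$, comes for free from the triangle inequality for $W_1$ applied to the previous iteration's measures (since the flowed weights \emph{are} Wasserstein distances), which is cleaner than tracking such conditions as separate invariants. Finally, a small slip in your base case: $\kappa_1^0=(b-2a)/(a+b-1)$, not the bound you wrote, though the intended conclusion $\kappa_1^0<0<\kappa_3^0$ is of course still correct.
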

	\begin{rmk}
		We note that the class of graphs $G_{a,b}$ was previously considered in~\citep{ni2019community}, which also give asymptotic guarantees for the evolution of edge weights under the Ricci flow.  However,  they assume a different mass distribution, hence, their asymptotic results are not directly applicable to our setting.  We therefore include a proof for our setting below.
	\end{rmk}
	We will present the full proof of this theorem in Appendix~\ref{sec:ORC-guarantee-proof}, but we give a sketch of the proof here. The argument proceeds by induction, first calculating the exact ORC curvatures of the three edge types at iteration 1, and then calculating the exact ORC curvatures of the three edge types at iteration $t$, given the weights $w_i^t$. In each of these steps, we construct transport plans, as well as 1-Lipschitz functions with equal objective function values to show there optimality. This proves that we have the exact values of the Wasserstein distances. We use these exact formulas to show that the evolution of the weights has the properties given in the statement.

	\paragraph{Mixed-Membership case.} We modify the model class introduced in Definition~\ref{def:G-ab} to account for mixed-membership community structure:
	\begin{defn}
		Let $L_{a,b}$ ($a > b \geq 2$) be a graph constructed as follows:
		\begin{enumerate}
			\item Construct a star-shaped graph,  consisting of a center node of degree $b$ and $b$ leave nodes of degree 1.
			\item Replace each leave  node with a star-shaped graph with $a+1$ vertices.
			%\item Replace each leave  node with a complete subgraph with $a+1$ vertices.
		\end{enumerate}
		Notice that $L_{a,b}$ consists of $b$ blocks (communities) of size $a+2$ with one common node, i.e., the center node is a member of each of the $b$ communities.
	\end{defn}
	It is easy to see that the dual of a graph $L_{a,b}$ is a graph of the form $G_{a,b}$. Consequently, we recover guarantees for curvature-based clustering on the line graph from Theorem~\ref{thm:ORC-guarantee}.

	\subsubsection{Computational Considerations}\label{sec:orc-comp}
	The two key bottlenecks in ORC instances of Alg.~\ref{alg.1} are (1) the computation of the $W_1$-distance in the ORC computation (both single- and mixed-membership case), and (2) the construction of the line graph (mixed-membership case). Following our discussion in sec.~\ref{sec:orc}, we can circumvent bottleneck (1) by approximating ORC with the arithmetic mean of its upper and lower bounds, which are given by Eq.~\eqref{eq:orc-approx} in the original graph (use in single-membership case) and Eq.~\eqref{eq:orc-approx-line-graph} in the line graph (use in mixed-membership case). This reduces the complexity of Alg.~\ref{alg.1} to $O(\vert E \vert)$, in comparison with previous approaches~\citep{ni2019community,sia2019ollivier}, which achieved at best $O(\vert E \vert^2)$. The second bottleneck may also be circumvented in special cases. We discuss the details in sec.~\ref{sec:exp-mmcd} below.
	
	%%%%%%%
	\subsection{FRC-based approach}
	%\color{blue}
	%\begin{itemize}
	%\item Specialize blueprint to FRC case
	%\item design choices, hyperparameters
	%\end{itemize}
	%\color{black}
	Finally, we discuss the second instance of Alg.~\ref{alg.1}, where $\kappa$ denotes Forman's Ricci curvature (FRC). Again, we observe that, typically, bridges have lower FRC than internal edges in communities, which can be exploited for curvature-based community detection. As in the previous section, we first provide some informal intuition on this observation, before formally introducing the algorithm. Notice that bridge nodes have typically a high node degree, as they connect not only to other nodes within the same community, but also to nodes in other communities. It is easy to see from the definition (recall, $\Ric_F(e)=4-d_{v_1}-d_{v_2}$) that, as a result, bridges have usually lower curvature than internal edges in unweighted networks. This imbalance in curvature values is iteratively reinforced as edge weights evolve under the Ricci flow. In particular, we see that, in the case of bridges, the 2-complex FRC (Eq.~\eqref{eq:frc-f})
	\begin{align*}
		%\label{eq:frc-f}
		\Ric_{F} (e) &:= \w_{e} \left( \Big(\Big(\sum_{f \sim e}\frac{\w_{e}}{\w_{f}} \Big) + \frac{\w_{v_1}}{\w_{e}} + \frac{\w_{v_2}}{\w_{e}} \Big)
		- \sum_{ \hat{e} \parallel e}\left\vert
		\sum_{f \sim \hat{e},e} \frac{\sqrt{\w_{e} \cdot \w_{\hat{e}}}}{\w_f} - \sum_{v \sim e, \hat{e}} \frac{\w_v}{\sqrt{\w_e \cdot \w_{\hat{e}}}} \right\vert \right)  
	\end{align*}
	is increasingly dominated by the second term, due to the high degree of the adjacent vertices and, consequently, the large number of parallel edges. This is reinforced as the edge weight increases under Ricci flow, which decreases the first term, resulting in negative curvature with high absolute value. In contrast, for internal edges, the second term has a smaller magnitude due to the smaller degrees of the adjacent vertices. This is again reinforced as the edge weight decreases under Ricci flow.

	%%%
	\subsubsection{Algorithm}
	
	\paragraph{Single-Membership Community Detection.}
	We implement Alg.~\ref{alg.1} via FRC by setting $\kappa(\cdot)=\Ric_F(\cdot)$, chosen to be the 2-complex FRC with triangle contributions (Eq.~\eqref{eq:frc-2-complex}). In the absence of given edge weights, we initialize edge weights to 1 as before. Adaptive step sizes are chosen proportional to the inverse of the highest absolute curvature of any edge in a given iteration, i.e., $\nu_t = 1/(1.1\times \max_{\{u,v\}\in E}|\kappa_{u,v}^t|)$. In our experiments below, we evolve edge-weights for $T=10$ iterations under the Ricci flow. To infer communities, cut-off points are chosen as $\{x_i\}_{i=0}^{n_f}$, where $x_0 = \max_{\{u,v\}\in E}w_{u,v}^T,\, x_1 = \max_{\{u,v\}\in E, w_{u,v}^T < x_0}w_{u,v}^T,\, \dots,\, x_{n'_f} = w^T_{q},\, x_{n'_f+1} = x_{n'_f} - \delta,\, \dots,\, x_{n_t} = ((x_{n'_f}-1.1w_{\min})\mod\delta) + 1.1w_{\min}$. Here, $w^T_{q}$ is the $q$th quantile of all weights; we set $q = 0.999$, step size $\delta=0.25$, and $w_{\min} = \min_{\{u,v\}\in E}w_{u,v}^T$. Note that this differs from the uniformly spaced cut-off points in the ORC case. Under FRC-based Ricci flow, the magnitude of some edge weights grows rapidly, resulting in a fat-tailed edge weight distribution. Hence, adjusting the spacing between cut-off points accordingly reduces the number of cut-off points needed to achieve high accuracy in cluster assignments.
	We remark that a special case of Alg.~\ref{alg.1} was previously proposed in~\citep{weber2018detecting}, although considering the classical FRC notion (Eq.~\eqref{eq:frc-e}) instead of the 2-complex FRC and utilizing a simpler scheme to identify bridges between community using FRC.

	%\begin{itemize}
	%\item Here we set $\kappa$ to be the 2-complex FRC with triangle contributions (Eq.~\ref{eq:frc-2-complex}).
	%\item Hyperparameters: 
	%\begin{itemize}
	%    \item Adaptive step size $\nu = 1/(1.1\times \max_{\{u,v\}\in\mathcal{E}}|\kappa_{uv}^t|)$ 
	%\item $T=10$
	%\item cut-off points $\{x_i\}_{i=0}^{n_t}$, where $x_0 = \max_{\{u,v\}\in\mathcal{E}}w_{uv}^T,\, x_1 = \max_{\{u,v\}\in\mathcal{E}, w_{uv}^T < x_0}w_{uv}^T,\, \dots,\, x_{n'_t} = w^T_{q_t},\, x_{n'_t+1} = x_{n'_t} - \delta,\, \dots,\, x_{n_t} = (x_{n'_t}-1.1w_{\min})\mod\delta + 1.1w_{\min}$, where $w^T_{q_t}$ is the $q_t$ quantile of all weights and we set $q_t = 0.999$, the step size is set to be $\delta=0.025$, and $w_{\min} = \min_{\{u,v\}\in\mathcal{E}}w_{uv}^T$. The cut-off points are chosen in this way because FRC tend to have edges of very large weights, which are very far from the others, (or fat-tail distribution) and this causes the previous scheme of equal-spaced cut-off points to be unnecessarily time consuming.
	%\end{itemize} 
	%\item Special case of Alg.~\ref{alg.1} was considered in~\citep{weber2018detecting}.
	%\end{itemize}

	%%%
	\paragraph{Mixed-Membership Community Detection.}
	We implement Alg.~\ref{alg.1} via FRC on the line graph, by setting $\kappa(\cdot,\cdot)=\Ric_F(\cdot,\cdot)$, chosen to be the 2-complex FRC with triangle \emph{and quadrangle} contributions (Definition \ref{def:frc-f4}). In the mixed-membership case, bridges between communities appear only in the line graph. Hence, we first construct the line graph before the actual clustering procedure is started. The input to Alg.~\ref{alg.1} is the pre-computed line graph. All edge weights are initially set to $1$, i.e., the input graph is unweighted. The adaptive step sizes, stopping time and cut-off points are chosen analogously to the single-membership case. To the best of our knowledge, FRC-based approaches have not been considered previously in this mixed-membership setting.
	%\begin{itemize}
	%\item Here we set $\kappa$ to be the 2-complex FRC with triangle and quadrangle contributions (Eq.~\ref{...}).
	%\item The input graph $G$ is the line graph of the underlying graph, which is constructed before the clustering procedure is started. All edge weights are initially set to $1$, i.e., the input graph is unweighted.
	%\item Hyperparameters: 
	%\begin{itemize}
	%    \item Adaptive step size $\nu = 1/(1.1\times \max_{\{u,v\}\in\mathcal{E}}|\kappa_{uv}^t|)$ 
	%\item $T=10$
	%\item cut-off points $\{x_i\}_{i=0}^{n_t}$, where $x_0 = \max_{\{u,v\}\in\mathcal{E}}w_{uv}^T,\, x_1 = \max_{\{u,v\}\in\mathcal{E}, w_{uv}^T < x_0}w_{uv}^T,\, \dots,\, x_{n'_t} = w^T_{q_t},\, x_{n'_t+1} = x_{n'_t} - \delta,\, \dots,\, x_{n_t} = (x_{n'_t}-1.1w_{\min})\mod\delta + 1.1w_{\min}$, where $w^T_{q_t}$ is the $q_t$ quantile of all weights and we set $q_t = 0.999$, the step size is set to be $\delta=0.025$, and $w_{\min} = \min_{\{u,v\}\in\mathcal{E}}w_{uv}^T$. The cut-off points are chosen in this way because FRC tend to have edges of very large weights, which are very far from the others, (or fat-tail distribution) and this causes the previous scheme of equal-spaced cut-off points to be unnecessarily time consuming.
	%\end{itemize} 
	%\item To the best of our knowledge, FRC-based approaches have not been considered in this setting.
	%\end{itemize}

	\subsubsection{Computational Considerations}
	Forman's Ricci curvature for edges is given by a simple combinatorial formula, which can be computed in linear time. The variants considered in this work (2-complex Forman curvature with triangle and quadrangle contributions) require an additional subroutine, which computes the number of triangles that include the respective edge. Counting the number of triangles in a subgraph can be done in $O(\vert V \vert \cdot \vert E \vert)$, i.e., in this case $\vert V \vert$ denotes the number of vertices and $\vert E \vert$ the number of edges in the 2-hop neighborhood. Similarly, the number of quadrangles in the same 2-hop neighborhood can be computed in $O(\vert V \vert^2 \cdot \vert E \vert)$. Since this subroutine requires only local information, it is easily parallelizable.

	\section{Experiments}
	\label{sec:exp}
	%%%
	\subsection{Network Data}
	%\color{blue}
	%\begin{itemize}
	%\item Describe graph models
	%\item Describe real network data
	%\end{itemize}
	%\color{black}
	\label{sec:graph_models}
	
	\paragraph{Synthetic Data.}
	In order to explore the interaction between discrete Ricci curvature and the structure of complex networks, we have considered both Stochastic Block Model and Random Geometric Graphs in our previous analysis. In this section, we systematically examine the performance of the algorithms on Single- and mixed-membership Stochastic Block Models, where the ground-truth community membership is available. We have introduced the two variants of the stochastic block model above in sec.~\ref{sec:overview}. 
	
	\paragraph{Real network data.}
	To test our methods on real data, we consider two data sets, (i) collaboration networks from DBLP~\citep{Yangdblpdata2018}, and (ii) ego-networks from Facebook~\citep{Leskovec2012ego}, for which ground truth labels are available through SNAP~\citep{snapnets}. Specifically, the collaboration network is constructed by a comprehensive list of research papers in computer science provided by the DBLP computer science bibliography. Here, an edge from one author to another indicates that they have published at least one joint paper. There are intrinsic communities defined by the publication venue, e.g., journals or conferences. Here, we maintain the choice of two communities, and randomly select two venues whose sizes are over $n=100$ and have at least one node of mixed membership. We report results for two networks: ``DBLP-1" from publication venues no.~1347 and 1892, and ``DBLP-2" from publication venues no.~1347 and 2459. While in the second data set (ego-networks), all the nodes are friends of one central user, and the friendship circles set by this user can be used as ground truth communities. We carried out the preprocessing in a similar manner as \citep{zhang2020overlap}, and then select two networks for which the modularity of the ground-truth communities is greater than $0.5$: ``FB-1" for no.~414, and ``FB-2" for no.~1684. To better understand the characteristics of the different real networks, we provide the following summary statistics for each network (see Table \ref{tab:real-stats}): (i) average node degree $d$, (ii) degree heterogeneity measured by the standard deviation of node degrees over $d$, (iii) the actual proportion of nodes with mixed membership, and (iv) the modularity of the ground-truth community. We note that the Facebook networks tend to be denser, with more nodes of mixed membership, while the collaboration networks tend to have relatively more heterogeneous degrees.  
	\begin{table}[htb]
		\centering
		\begin{tabular}{l|cccccccc}
			\hline
			& $n$ & $|E|$ & $|\mathcal{E}|$ & $k$ & $d$ & $\sigma_d/d$ & $\hat{\pi}_o$ & Modularity\\
			\hline
			DBLP-1 & $213$ & $591$ & $5489$ & $2$ & $5.54$ & $0.93$ & $0.005$ & $0.40$\\
			DBLP-2 & $237$ & $670$ & $5935$ & $2$ & $5.65$ & $0.86$ & $0.004$ & $0.45$\\
			FB-1 & $128$ & $1593$ & $45635$ & $3$ & $24.89$ & $0.44$ & $0.055$ & $0.53$\\
			FB-2 & $621$ & $12399$ & $718267$ & $5$ & $39.93$ & $0.69$ & $0.004$ & $0.52$\\
			\hline
		\end{tabular}
		\caption{Summary statistics of the real networks.}
		\label{tab:real-stats}
	\end{table}

	%%%
	\subsection{Evaluating clustering accuracy}\label{apx:nmi}
	To evaluate the accuracy of a clustering in the single- and mixed-membership settings, we introduce the following two notions of \emph{Normalized Mutual Information} (NMI). When each node can only belong to one community, the classic NMI is defined by considering each clustering result as a random variable and then comparing the information contained in them \citep{strehl2002nmi}. Specifically, let $X,Y$ be the random variables described by two different cluster labels. Let $I(X,Y)$ denote the mutual information between $X$ and $Y$, and let $H(X), H(Y)$ denote the entropy of $X,Y$, respectively. Then NMI is defined as 
	\begin{align}
		NMI(X,Y) = \frac{I(X,Y)}{mean(H(X), H(Y))},
	\end{align}
	where $mean(H(X),H(Y))$ denote the mean of the two, and we consider the arithmetic mean in our experiments, $mean(H(X),H(Y)) = (H(X) + H(Y))/2$. The NMI ranges from $0$ to $1$, where value $1$ corresponds to perfect matching, or exact recovery if one of the clustering result is ground truth.

	We note that the classic NMI is not well-defined for mixed-membership community detection, thus we later consider the following extended NMI \citep{Lancichinetti2009nmi}. Specifically, for communities $C_1, C_2, \dots, C_k$, we now express the community membership of each node $i$ as a binary vector of length $k$. $(z_i)_l = 1$ if node $i$ belongs to $C_l$; $(z_i)_l = 0$ otherwise. The $l$-th entry of this vector can then be viewed as a random variable $Z_l$, whose probability distribution is given by $P(Z_l = 1) = n_l/n$ and $P(Z_l = 0) = 1 - P(Z_l = 1)$, where $n_l=|C_l|$ and $n=|V|$. The same holds for the random variable $Y_h$ associated with another set of communities $C'_1, C'_2,\dots, C'_{k'}$. Both the empirical marginal probability distribution $P_{Z_l}$ and the joint probability distribution $P(Z_l, Y_h)$ are used to further define entropy $H(\textbf{Z})$ and $H(Z_l, Y_h)$. The conditional entropy of $Z_l$ given $Y_h$ is defined as $H(Z_l|Y_h) = H(Z_l, Y_h) - H(Y_h)$. The entropy of $Z_l$ with respect to the entire vector $\textbf{Y}$ is based on the best matching between $Z_l$ and the component of $\textbf{Y}$ given by
	\begin{equation*}
		H(Z_l|\textbf{Y}) = \min_{h\in \{1,2,\dots,k'\}}H(Z_l|Y_h).
	\end{equation*}
	The normalized conditional entropy of $\textbf{Z}$ with respect to $\textbf{Y}$ is 
	\begin{equation*}
		H(\textbf{Z}|\textbf{Y}) = \frac{1}{k}\sum_{l}\frac{H(Z_l|Y)}{H(Z_l)}.
	\end{equation*}
	In the same way, we can define $H(\textbf{Y}|\textbf{Z})$. Finally, the extended NMI for two sets of communities $C_1, C_2,\dots, C_k$ and $C'_1, C'_2,\dots,C'_{k'}$ is given by
	\begin{equation*}
		NMI(\textbf{Z}|\textbf{Y}) = 1 - [H(\textbf{Z}|\textbf{Y}) + H(\textbf{Y}|\textbf{Z})]/2 \; .
	\end{equation*}
	Hence to use the extended NMI, we convert the label vector $y$ to a binary assignment by first normalizing it by its 2-norm and then thresholding its element by $0.8/k$.

	%%%%%%%%%%%%
	\subsection{Single-Membership Community Detection}
	% \color{olive}
	% TODO Yu: Add tables with results (runtime, accuracy) for the following:
	% \begin{enumerate}
		% \item Benchmarking
		% \begin{itemize}
			%     \item Generate SBMs with varying parameters. Compare (i) Louvain, (ii) Spectral Clustering, (iii) ORC (with $\widehat{\Ric_O}$) and (iv) FRC with triangle contributions.
			%     \item For DBLP-1, DBLP-2, FB-1, FB-2, compare (i) Louvain, (ii) Spectral Clustering, (iii) ORC (with $\widehat{\Ric_O}$) and (iv) FRC with triangle contributions.
			% \end{itemize}
		% \item Rationale for ORC approximation
		% \begin{itemize}
			%     \item Generate one SBM and pick one real network. Compare (i) ORC via Earth Mover's Distance, (ii) ORC via Sinkhorn, (iii) ORC approximation $\widehat{\Ric_O}$.
			% \end{itemize}
		% \item Rationale for 2-complex FRC
		% \begin{itemize}
			%     \item Generate one SBM and pick one real network.
			%     Compare (i) classical FRC (Eq.~\ref{eq:frc-e}, (ii) 2-complex FRC with triangles (Eq.~\ref{eq:frc-f}), (iii) 2-complex FRC with triangles and quadrangles (Eq.~\ref{}).
			% \end{itemize}
		% \end{enumerate}
	%\color{black}
	
	\paragraph{Benchmarking.} For the first set of experiments, we explore the performance of curvature-based methods on planted SBMs with two equally-sized blocks and various parameters. We start with graphs with fixed probabilities $p_{in}=0.1, p_{out}=0.01$, while varying the network size $n$ from $100$ up to $5000$, and then we also consider graphs with fixed size $n=1000$ and probability $p_{out} = 0.01$, while changing the probability $p_{in}$ from $0.05$ to $0.2$. For each set of parameters, we generate $n_s = 10$ graphs and run our algorithms for both ORC and FRC, together with other popular community detection methods, the Louvain algorithm (``Louvain") and Spectral clustering (``Spectra"). Specifically, for ORC, the optimal transport can be obtained by solving the earth mover's distance exactly (``ORC-E") or approximately with Sinkhorn (``ORC-S"), while we can also approximate it by the mean of the upper and lower bounds we have developed (``ORC-A"). For FRC, we can consider the 1-d version as in Eq.~\eqref{eq:frc-e} where no face information is included (``FRC-1"), the augmented version with triangular faces as in Eq.~\eqref{eq:frc-f} (``FRC-2") and also the one with further quadrangular faces as in Definition \ref{def:frc-f4} (``FRC-3"). For the Louvain algorithm, we run it for $100$ times, construct the co-occurrence matrix of the clustering results, and then run the algorithm one more time on the graph constructed by the co-occurence matrix, in order to obtain consistent communities. We report the mean NMI, runtime and their standard deviations (SD) of the methods. To maintain the detectability of the communities, we require the generated graphs to have modularity grater than $0.4$ with the ground-truth communities. Our experimental results (see Tables \ref{tab:sbm1-nmi} and \ref{tab:sbm2-nmi}) demonstrate that the ORC approach outperforms the reference methods, and successfully recovers single-membership community structure, with an NMI greater than $0.8$, as we vary both the network size and the density via $p_{in}$. We only show the results from ORC-E, since ORC-S has almost the same performance as ORC-E. More importantly, the approximation of ORC we have proposed manages to maintain the important information in ORC, since the corresponding community detection algorithm outperforms the Spectral clustering, and can have comparable performance to the Louvain algorithm, especially as graphs become larger or denser. Being computationally very efficient overall, the FRC approach can also have comparable performance to Spectral clustering, and retrieve the single-membership community structure when the graph is sufficiently large or dense. Here, only the results from FRC-2 is included for the reasons that we will discuss later. 
	\begin{table}[htb]
		\centering
		\begin{tabular}{c|ccccc}
			\hline
			$n$    & Louvain           & Spectra           & ORC-E             & ORC-A             & FRC-2             \\ \hline
			$100$  & $0.517$ ($0.097$) & $0.186$ ($0.163$) & $0.860$ ($0.117$) & $0.354$ ($0.070$) & $0.329$ ($0.066$) \\
			$500$  & $0.998$ ($0.006$) & $0.418$ ($0.378$) & $1$ ($0$)         & $0.662$ ($0.177$) & $0.279$ ($0.134$) \\
			$1000$ & $1$ ($0$)         & $0.526$ ($0.461$) & $1$ ($0$)         & $0.990$ ($0.015$) & $0.437$ ($0.263$) \\
			% $5000$ & $1$ ($0$)         & $0.880$ ($0.293$) & $1$ ($0$)         & $1$ ($0$)         & $0.949$ ($0.142$) \\
			$5000$ & $1$ ($0$)         & $0.672$ ($0.463$) & $1$ ($0$)         & $1$ ($0$)         & $0.911$ ($0.263$) \\\hline              
		\end{tabular}
		\caption{Mean (SD) of NMI from different methods on planted two-block SBMs with $p_{in} = 0.1$, $p_{out}=0.01$, and varying sizes $n$.}
		\label{tab:sbm1-nmi}
	\end{table}
	\begin{table}[htb]
		\centering
		\begin{tabular}{c|ccccc}
			\hline
			$n$    & Louvain            & Spectra           & ORC-E              & ORC-A              & FRC-2              \\ \hline
			$100$  & $1.135$ ($0.068$) & $0.027$ ($0.059$) & $1.910$ ($0.096$) & $1.877$ ($0.093$) & $0.727$ ($0.123$)  \\
			$500$  & $14.95$ ($0.294$) & $0.039$ ($0.025$) & $8.692$ ($0.305$) & $6.328$ ($0.205$) & $5.005$ ($0.367$)  \\
			$1000$ & $44.93$ ($0.739$) & $0.211$ ($0.311$) & $34.48$ ($1.207$) & $19.82$ ($0.795$) & $17.48$ ($1.075$) \\
			% $5000$ & $1084$ ($24.47$) & $797.3$ ($1261$)  & $2942$ ($39.53$)  & $2050$ ($71.73$) & $307.0$ ($6.392$) \\
			$5000$ & $1013$ ($17.61$) & $518.3$ ($810.7$)  & $2744$ ($17.88$)  & $1080$ ($7.679$) & $394.2$ ($7.373$) \\
			\hline
		\end{tabular}
		\caption{Mean (SD) of runtime (in seconds) from different methods on planted two-block SBMs with $p_{in} = 0.1$, $p_{out}=0.01$, and varying sizes $n$.}
		\label{tab:sbm1-time}
	\end{table}
	
	\begin{table}[htb]
		\centering
		\begin{tabular}{c|ccccc}
			\hline
			$p_{in}$ & Louvain   & Spectra           & ORC-E             & ORC-A             & FRC-2             \\ \hline
			$0.05$   & $0.999$ ($0.003$) & $0.087$ ($0.175$) & $0.992$ ($0.011$) & $0.202$ ($0.010$) & $0.177$ ($0.013$) \\
			$0.1$    & $1$ ($0$)         & $0.446$ ($0.440$) & $1$ ($0$)         & $0.993$ ($0.005$) & $0.275$ ($0.182$) \\
			$0.15$   & $1$ ($0$)         & $1$ ($0$)         & $1$ ($0$)         & $1$ ($0$)         & $0.990$ ($0.031$) \\
			$0.2$    & $1$ ($0$)         & $1$ ($0$)         & $1$ ($0$)         & $1$ ($0$)         & $1$ ($0$)         \\ \hline
		\end{tabular}
		\caption{Mean (SD) of NMI from different methods on planted two-block SBMs with $n=1000$, $p_{out}=0.01$, and varying probabilities $p_{in}$.}
		\label{tab:sbm2-nmi}
	\end{table}
	
	\begin{table}[htb]
		\centering
		\begin{tabular}{c|ccccc}
			\hline
			$p_{in}$ & Louvain             & Spectra             & ORC-E               & ORC-A              & FRC-2              \\ \hline
			$0.05$   & $48.96$ ($1.163$) & $0.105$ ($0.070$) & $14.50$ ($0.861$)  & $13.39$ ($0.78$) & $11.14$ ($1.447$) \\
			$0.1$    & $44.39$ ($0.572$) & $0.158$ ($0.125$) & $34.67$ ($0.649$)  & $20.40$ ($0.39$) & $18.12$ ($0.735$) \\
			$0.15$   & $50.88$ ($0.537$) & $0.998$ ($1.125$) & $71.12$ ($1.541$)  & $28.50$ ($0.45$) & $27.39$ ($0.992$) \\
			$0.2$    & $59.99$ ($1.107$) & $7.004$ ($6.655$) & $129.0$ ($4.211$) & $37.21$ ($0.78$) & $39.79$ ($1.702$) \\ \hline
		\end{tabular}
		\caption{Mean (SD) of runtime (in seconds) from different methods on planted two-block SBMs with $n=1000$, $p_{out}=0.01$, and varying probabilities $p_{in}$.}
		\label{tab:sbm2-time}
	\end{table}
	
	\paragraph{Rationale for ORC approximation.} ORC-S has been considered in the literature in order to improve the efficiency of ORC computation, but we notice that the improvement is not clear when the graph is not sufficiently large or dense (see Table \ref{tab:sbm3-time}). However, the approximation we have proposed almost always significantly improve the computational efficiency (see also Tables \ref{tab:sbm1-time} and \ref{tab:sbm2-time}). We also mention here that the current computations are only allocated with moderate level of memory, while we also observed that ORC-S can improve the efficiency when large amount of memory is available, as in the analysis of real data in sec.~\ref{sec:exp-mmcd}.
	% ($180$G here)
	\begin{table}[htb]
		\centering
		\begin{tabular}{c|ccc}
			\hline
			$p_{in}$ & ORC-E     & ORC-S     & ORC-A     \\ \hline
			$0.1$ & $2785$   & $5859$   & $1013$   \\
			$0.2$ & $16876$  &  $68658$  & $2933$     \\
			\hline
		\end{tabular}
		\caption{Runtime (in seconds) from different ORC-based methods on one sample of planted two-block SBMs with $n=5000$, $p_{out} = 0.01$, and varying probabilities $p_{in}$.}
		\label{tab:sbm3-time}
	\end{table}
	
	\paragraph{Rationale for 2-complex FRC.} For FRC, one can continue including higher order faces in the formulation, thus more information in the graphs, in order to further improve the performance of community detection. Our experimental results (see Tables \ref{tab:sbm4-nmi} and \ref{tab:sbm4-time}) suggest that this is the case to include triangular faces, however, as for quadrangular faces, the improvement of accuracy is not clear, while the drop in efficiency is significant. More importantly, further including quadrangular faces in FRC may harm the performance of community detection, especially when the graph is relatively dense. This could be expected from that FRC-2 has the strongest correlation with the clustering coefficients, compared with FRC-1 and FRC-3 (see Fig.~\ref{fig:clust-curv_v}). 
	\begin{table}[htb]
		\centering
		\begin{tabular}{c|ccc}
			\hline
			$p_{in}$ & FRC-1             & FRC-2             & FRC-3             \\ \hline
			$0.05$ & $0.107$ ($0.087$) & $0.177$ ($0.013$) & $0.220$ ($0.100$) \\
			$0.1$  & $0.106$ ($0.087$) & $0.229$ ($0.118$) & $0.077$ ($0.155$) \\
			$0.15$ & $0.073$ ($0.089$) & $0.990$ ($0.013$) & $0.019$ ($0.056$) \\
			$0.2$  & $0.054$ ($0.083$) & $1$ ($0$)         & $0.092$ ($0.092$) \\
			\hline
		\end{tabular}
		\caption{Mean (SD) of NMI from different FRC-based methods on planted two-block SBMs with $n=1000$, $p_{out}=0.01$ and varying probabilities $p_{in}$.}
		\label{tab:sbm4-nmi}
	\end{table}
	\begin{table}[htb]
		\centering
		\begin{tabular}{c|ccc}
			\hline
			$p_{in}$ & FRC-1               & FRC-2               & FRC-3              \\ \hline
			$0.05$ & $10.95$ ($1.386$)  & $11.14$ ($1.447$)  & $18.28$ ($15.99$) \\
			$0.1$  & $16.73$ ($1.103$)  & $17.48$ ($1.075$)  & $202.7$ ($5.996$) \\
			$0.15$ & $23.35$ ($0.353$)  & $27.39$ ($0.992$)  & $101.4$ ($133.7$) \\
			$0.2$  & $29.91$ ($0.571$)  & $39.79$ ($1.702$)  & $111.9$ ($1.680$) \\
			\hline
		\end{tabular}
		\caption{Mean (SD) of runtime (in seconds) from different FRC-based methods on planted two-block SBMs with $n=1000$, $p_{out}=0.01$ and varying probabilities $p_{in}$.}
		\label{tab:sbm4-time}
	\end{table}
	\begin{figure}[htb]
		\centering
		\begin{tabular}{cccc}
			\includegraphics[width=.24\textwidth]{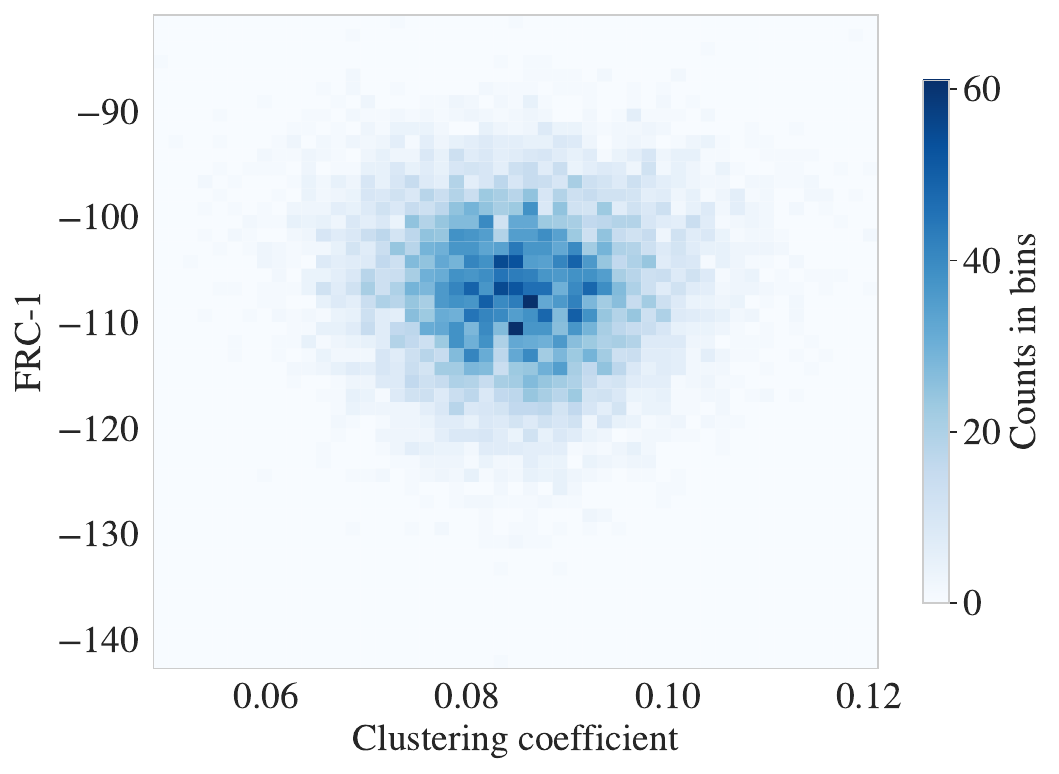} & \includegraphics[width=.24\textwidth]{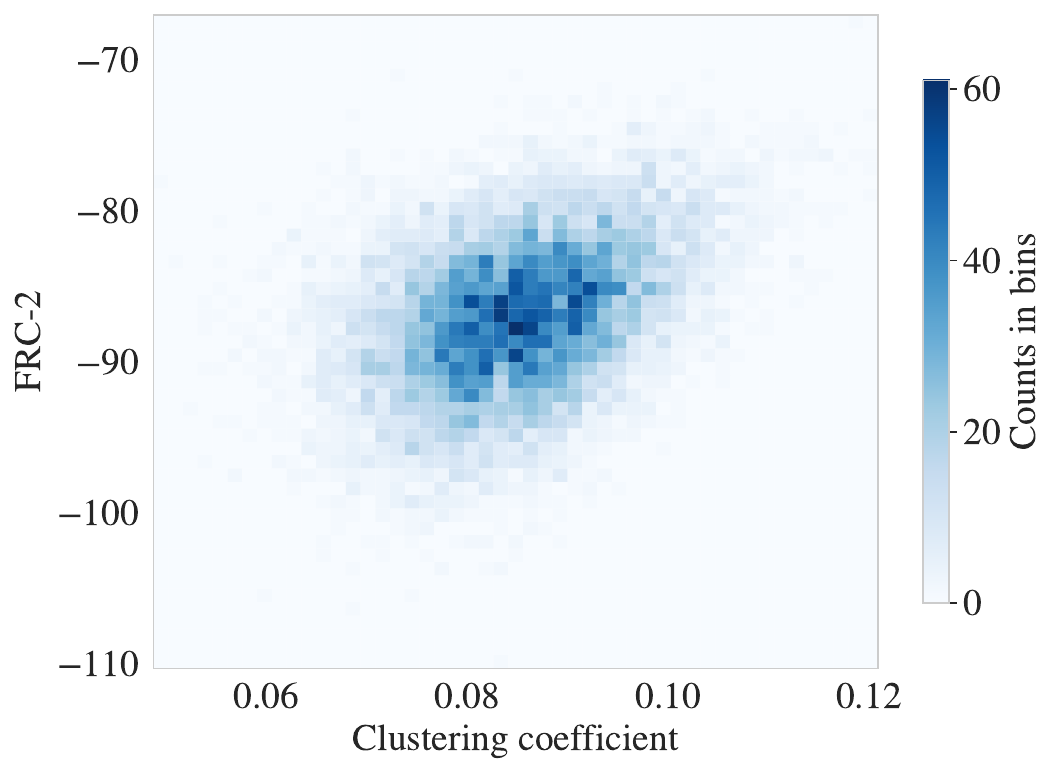} & \includegraphics[width=.24\textwidth]{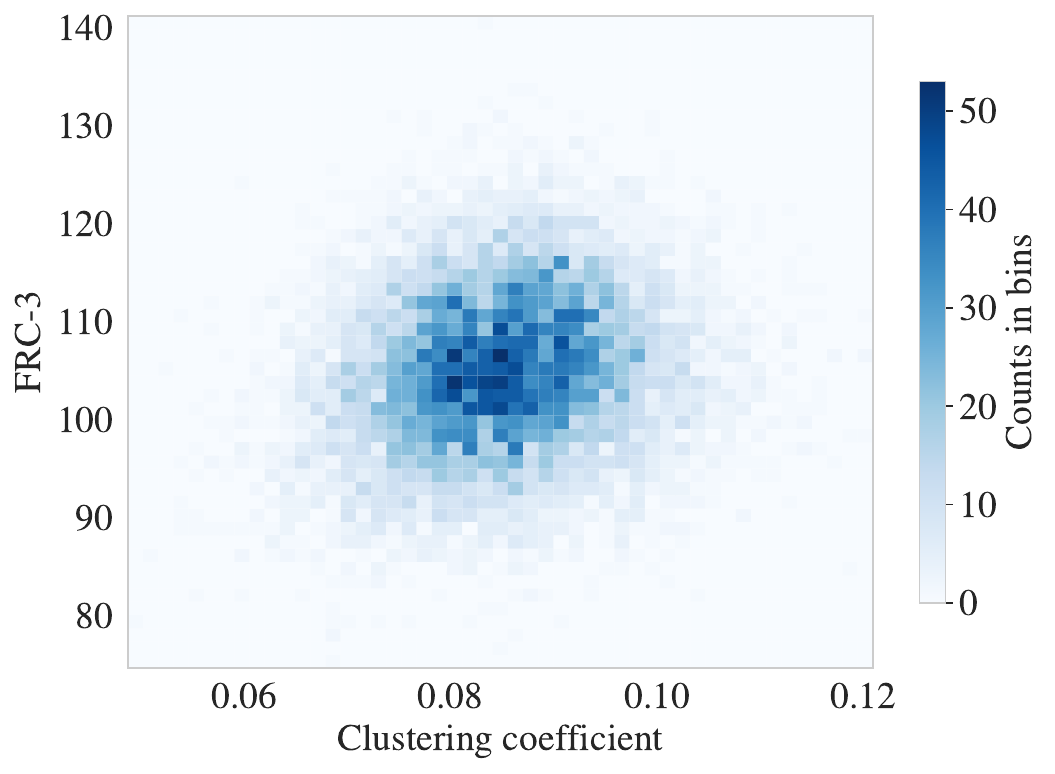} & \includegraphics[width=.24\textwidth]{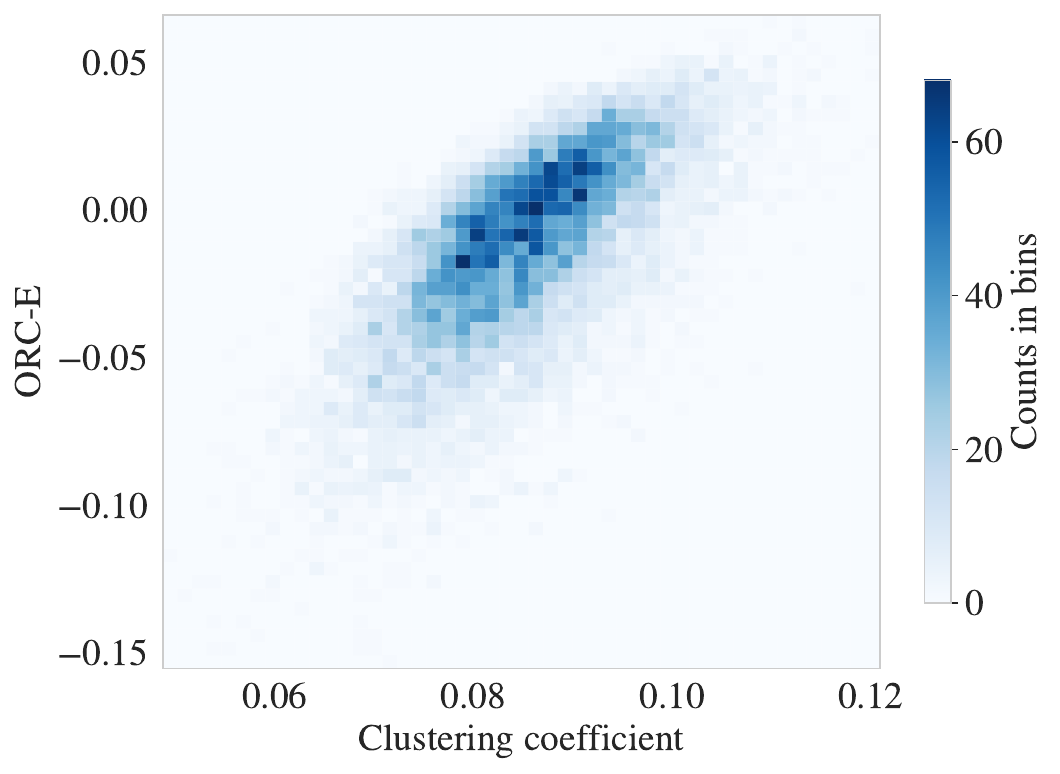}
		\end{tabular}
		\caption{2-d histogram of FRC-1 (left), FRC-2 (middle left), FRC-3 (middle right) and ORC-E (right) versus the clustering coefficients of nodes in $G$, where $n_s = 10$ networks are generated from the planted two-block SBM of size $n=1000$, $p_{in}=0.1$ and $p_{out}=0.01$.}
		\label{fig:clust-curv_v}
	\end{figure}
	
	%%%%%%%%%%%%%
	\subsection{Mixed-Membership Community Detection}\label{sec:exp-mmcd}
	% \color{olive}
	% TODO Yu: Add tables with results (runtime, accuracy) for the following:
	% \begin{enumerate}
		% \item Benchmarking
		%     \begin{itemize}
			%     \item Generate MMBMs with varying parameters. Compare (i) Louvain on $L(G)$, (ii) Spectral Clustering on $L(G)$, 
			%     (iii) Bayesian approach~\citep{airoldi2008mixed}, (iv) ORC (with $\widehat{\Ric_O^{L(G)}}$) and (iv) FRC with triangle and quadrangle contributions.
			%     \item For DBLP-1, DBLP-2, FB-1, FB-2, compare Compare (i) Louvain on $L(G)$, (ii) Spectral Clustering on $L(G)$, 
			%     (iii) Bayesian approach~\citep{airoldi2008mixed}, (iv) ORC (with $\widehat{\Ric_O^{L(G)}}$) and (iv) FRC with triangle and quadrangle contributions.
			%     \end{itemize}
		% \item Avoiding line graph construction
		% \begin{itemize}
			%     \item Generate MMBMs with varying parameters. Compare (i) Alg.~\ref{alg.1} with ORC via $\widehat{\Ric_O^{L(G)}}$, where line graph construction is included in reported runtime, and (ii) approach described in sec.~\ref{sec:no-line-graph}. Use weighting scheme as described in sec.~\ref{sec:no-line-graph}. 
			% \end{itemize}
		% \item Rationale for detecting mixed-membership communities on the line graph.
		% \begin{itemize}
			%     \item Generate MMBMs with varying parameters. Compare results for (i) Louvain on $G$, (ii) Louvain on $L$, (iii) ORC via $\widehat{\Ric_O}$ on $G$, (iv) ORC via $\widehat{\Ric_O^{L(G)}}$ on $L$, (v) FRC (with triangle contribution) on $G$, (vi) FRC (with triangle and quadrangle contribution on $L$.
			% \end{itemize}
		% \end{enumerate}
	
	%\color{black}
	\paragraph{Benchmarking.} To demonstrate the performance of our algorithms, we generate graphs from planted MMBs with two (almost) equally-sized blocks and different configurations. We first fix the probabilities $p_{in} = 0.1$, $p_{out} = 0$ and the number of mixed-membership nodes to be $n_o = 1$, while varying the network size from $50$ up to $500$, and then we fix the network size $n=300$, $p_{out}=0$ and $n_o=1$, while varying the probability $p_{in}$ from $0.05$ up to $0.2$. For each set of parameters, we construct $n_s = 10$ graphs, and run our algorithms for both ORC and FRC, together with the Louvain algorithm and Spectral clustering, in the same way as in the single-membership setting, but on the line graphs $L(G)$. We report the mean NMI, runtime, and their standard deviations (SD) of the methods. For the purposes of detectability of communities, we exclude graphs that have modularity no greater than $0.4$ with the ground-truth communities. Our experimental results (see Tables \ref{tab:mmbm1-nmi} and \ref{tab:mmbm2-nmi}) demonstrate that both ORC-based and FRC-based approaches outperform the reference methods, where the ORC approach successfully recovers the mixed-membership community structure in most cases as we vary the network size $n$ and the probability $p_{in}$, while the FRC approach can retrieve the mixed membership when the network is sufficiently large or dense. Since again the performance of ORC-S is almost the same as ORC-E, we only include the results from ORC-E. For the reasons that we have demonstrated in sec.~\ref{sec:DC-frc-higher} and will discuss later, we only include the results from FRC-3. 
	\begin{table}[htb]
		\centering
		\begin{tabular}{c|ccccc}
			\hline
			$n$    & Louvain           & Spectra           & ORC-E             & ORC-A             & FRC-3             \\ \hline
			$50$   & $0.342$ ($0.030$) & $0.169$ ($0.035$) & $0.671$ ($0.157$) & $0.774$ ($0.207$) & $0.701$ ($0.190$) \\
			$100$  & $0.351$ ($0.028$) & $0.088$ ($0.007$) & $0.894$ ($0.186$) & $0.720$ ($0.187$) & $0.771$ ($0.203$) \\
			$300$  & $0.529$ ($0.051$) & $0.044$ ($0.022$) & $0.999$ ($0.002$) & $0.625$ ($0.079$) & $0.739$ ($0.218$) \\
			$500$  & $0.553$ ($0.023$) & $0.033$ ($0.029$) & $1$ ($0$)         & $0.648$ $(0.130)$ & $0.896$ ($0.185$) \\
			\hline              
		\end{tabular}
		\caption{Mean (SD) of the extended NMI from different methods on planted two-block MMBs with $p_{in} = 0.1$, $p_{out}=0$, $n_o=1$ node of mixed membership, and varying sizes $n$.}
		\label{tab:mmbm1-nmi}
	\end{table}
	
	\begin{table}[htb]
		\centering
		\begin{tabular}{c|ccccc}
			\hline
			$n$    & Louvain            & Spectra           & ORC-E            & ORC-A             & FRC-3              \\ \hline
			$50$   & $1.315$ ($0.202$) & $0.012$ ($0.010$)  & $1.504$ ($0.261$) & $0.964$ ($0.072$) & $0.613$ ($0.065$) \\
			$100$  & $10.21$ ($0.866$) & $0.049$ ($0.023$) & $4.861$ ($1.070$) & $3.220$ ($0.377$) & $1.582$ ($0.115$)  \\
			$300$  & $504.4$ ($39.61$) & $606.3$ ($286.6$) & $106.1$ ($10.14$) & $62.80$ ($4.352$) & $94.77$ ($23.04$) \\
			$500$  & $3318$ ($199.1$) & $25332$ ($697.7$) & $628.1$ ($33.08$) & $347.1$ ($19.43$) & $1256$ ($776.1$) \\
			\hline
		\end{tabular}
		\caption{Mean (SD) of runtime (in seconds) from different methods on planted two-block MMBs with $p_{in} = 0.1$, $p_{out}=0$, $n_o=1$ node of mixed membership, and varying sizes $n$.}
		\label{tab:mmbm1-time}
	\end{table}
	
	\begin{table}[htb]
		\centering
		\begin{tabular}{c|ccccc}
			\hline
			$p_{in}$    & Louvain           & Spectra           & ORC-E          & ORC-A            & FRC-3             \\ \hline
			$0.05$  & $0.289$ ($0.046$) & $0.059$ ($0.007$) & $0.998$ ($0.003$) & $0.602$ ($0.096$) & $0.746$ ($0.184$) \\
			$0.1$   & $0.529$ ($0.051$) & $0.044$ ($0.022$) & $0.999$ ($0.002$) & $0.625$ ($0.079$) & $0.739$ ($0.218$) \\
			$0.15$  & $0.585$ ($0.027$) & $0.146$ ($0.285$) & $1$ ($0$)         & $0.746$ ($0.137$) & $0.857$ ($0.155$) \\
			$0.2$   & $0.584$ ($0.020$) & $0.023$ ($0.029$) & $1$ ($0$)         & $0.714$ ($0.140$) & $0.814$ ($0.174$) \\
			\hline              
		\end{tabular}
		\caption{Mean (SD) of the extended NMI from different methods on planted two-block MMBs with $n = 300$, $p_{out}=0$, $n_o = 1$ node of mixed membership, and varying probabilities $p_{in}$.}
		\label{tab:mmbm2-nmi}
	\end{table}
	
	\begin{table}[htb]
		\centering
		\begin{tabular}{c|ccccc}
			\hline
			$p_{in}$    & Louvain     & Spectra      & ORC-E     & ORC-A      & FRC-3      \\ \hline
			$0.05$   & $99.73$ ($4.836$) & $15.58$ ($25.62$)  & $22.33$ ($3.60$) & $19.94$ ($1.327$) & $12.08$ ($0.520$) \\
			$0.1$  & $504.4$ ($39.61$) & $606.3$ ($286.6$) & $106.1$ ($10.14$) & $62.81$ ($4.352$) & $94.77$ ($23.04$)  \\
			$0.15$  & $1221$ ($59.99$) & $3359$ ($330.0$) & $325.9$ ($17.29$) & $139.0$ ($8.562$) & $411.8$ ($115.0$) \\
			$0.2$  & $2129$ ($76.42$) & $9902$ ($249.0$) & $723.0$ ($26.64$) & $267.1$ ($17.63$) & $243.8$ ($378.2$) \\
			\hline
		\end{tabular}
		\caption{Mean (SD) of runtime (in seconds) from different methods on planted two-block MMBs with $n = 300$, $p_{out}=0$, $n_o = 1$ node of mixed membership, and varying probabilities $p_{in}$.}
		\label{tab:mmbm2-time}
	\end{table}
	
	Furthermore, we demonstrate the performance of our algorithms on real data via the two data sets described in sec.~\ref{sec:graph_models}. As in the synthetic networks, the ORC-based method outperforms the reference methods in all four different real networks, while the FRC-based method can also outperform or have comparable performance to the reference methods. The results from the proposed approximation of ORC have more variance, which is expected since the real data has more variance in the structure of the networks built and some may lie in the region where the approximation is relatively far from the exact value, but the approximated version can also outperform or have comparable performance in most cases. Additionally, we also observe an improvement of runtime from obtaining ORC by Sinkhorn optimization in ``FB-2" dataset (ORC-E: $5523$s; ORC-S: $4941$s; ORC-A: $2530$s), since we allocate much larger memory for the computation ($250$G here instead of $25$G for others in this section).  
	\begin{table}[htb]
		\centering
		\begin{tabular}{c|ccccc}
			\hline
			& Louvain  & Spectra  & ORC-E  & ORC-A  & FRC-3  \\ \hline
			DBLP-1  & $0.490$       & $0.077$       & $0.544$     & $0.137$     & $0.330$  \\
			DBLP-2  & $0.191$       & $0.087$       & $0.597$     & $0.424$     & $0.445$  \\
			FB-1    & $0.527$       & $0.078$       & $0.872$     & $0.680$     & $0.680$  \\
			FB-2    & $0.347$       & $0.073$       & $0.652$      & $0.242$     & $0.287$   \\
			\hline              
		\end{tabular}
		\caption{Extended NMI from different methods on the real data.}
		\label{tab:real-nmi}
	\end{table}
	
	\paragraph{Towards avoiding line graph construction.} We note that only the upper bound requires global information of the line graph, while local information is sufficient to compute the lower bound. Hence, one step towards avoiding the construction of line graph is to consider an upper bound that can be obtained without the information of the whole graph while performing sufficiently well in the clustering tasks. Specifically, here we consider $1$ as the upper bound, and denote the method using the average of $1$ and the lower bound by ``ORC-A1". Our numerical results indicate that ORC-A1 can further improve the efficiency from ORC-A while not significantly affecting the accuracy (see Table \ref{tab:mmbm-a1}, where the NMI information is ignored because their values are close and all three methods can find the ground-truth communities when $p_{in}=0.2$).
	\begin{table}[htb]
		\centering
		\begin{tabular}{c|ccc}
			\hline
			$p_{in}$ & ORC-E     & ORC-A     & ORC-A1     \\ \hline
			% $0.1$ & $423.8$   & $179.2$   & $141.6$   \\ %3.0350
			% $0.2$ & $2866$  &  $720.3$  & $517.6$     \\ %12.3928
			$0.1$ & $426.8$   & $182.2$   & $144.6$   \\ 
			$0.2$ & $2878$  &  $732.7$  & $530.0$     \\
			\hline
		\end{tabular}
		\caption{Runtime (in seconds, including line graph construction) from different ORC-based methods on one sample of planted two-block MMBs with $n=500$, $p_{out} = 0$, $n_o=1$ node of mixed membership, and varying probabilities $p_{in}$.}
		\label{tab:mmbm-a1}
	\end{table}
	
	\paragraph{Rationale for 2-complex FRC in line graph.} We have shown in sec.~\ref{sec:DC-frc-higher} that FRC-2 in line graph can largely depend on the degree of nodes in the original graph, while the performance of FRC-1 for the task of community detection is limited, hence the corresponding algorithm may not lead to satisfactory results. Our experimental results (see Table \ref{tab:mmbmfrc-nmi}) verifies the above statement, where FRC-2 can have comparable performance to FRC-3 when node degree is moderate, but the performance drops significantly and approaches FRC-1 as the network becomes denser.  
	\begin{table}[htb]
		\centering
		\begin{tabular}{c|ccc}
			\hline
			$p_{in}$ & FRC-1             & FRC-2             & FRC-3             \\ \hline
			$0.1$    & $0.315$ ($0.266$) & $0.816$ ($0.231$) & $0.854$ ($0.184$) \\
			$0.2$    & $0.269$ ($0.254$) & $0.856$ ($0.162$) & $0.895$ ($0.148$) \\
			$0.3$    & $0.328$ ($0.337$) & $0.859$ ($0.183$) & $0.982$ ($0.049$) \\
			$0.4$    & $0.560$ ($0.296$) & $0.673$ ($0.136$) & $0.978$ ($0.048$) \\
			\hline
		\end{tabular}
		\caption{Mean (SD) of the extended NMI from different FRC-based methods on planted two-block MMBs with $n=100$, $p_{out}=0$, $n_o=1$ node of mixed membership, and varying probabilities $p_{in}$.}
		\label{tab:mmbmfrc-nmi}
	\end{table}
	
	\paragraph{Rationale for detecting mixed-membership communities on the line graph.} In the mixed-membership setting, it is still possible to apply algorithms that are designed for the single-membership case, but the performance varies. Our experimental results (see Table \ref{tab:mmbmgl-nmi}) indicate that the improvement from applying the algorithm on the line graph is more significant when the graph is sparser, even while there is only a few nodes of mixed membership, and also when there are more nodes of mixed membership. We should notice that applying algorithms directly on the original graph cannot detect mixed-membership nodes, thus there is an upper bound on the accuracy that these methods can achieve and how far this upper bound is from $1$ depends on the portion of mixed-membership nodes.
	\begin{table}[htb]
		\centering
		\begin{tabular}{cc|ccc}
			\hline
			$p_{in}$ & $n_{o}$ & ORC-E ($G$) / ($L$) & ORC-A ($G$) / ($L$) & FRC-2 ($G$) / FRC-3 ($L$) \\ \hline
			$0.05$   & $1$     & $0.993$ / $0.999$ & $0.470$ / $0.648$ & $0.314$ / $0.679$      \\
			$0.15$   & $1$     & $0.993$ / $1$     & $0.993$ / $0.714$ & $0.784$ / $0.865$       \\
			$0.15$   & $30$    & $0.773$ / $0.969$ & $0.493$ / $0.497$ & $0.176$ / $0.385$       \\
			\hline
			% $0.05$   & $0.993$ ($0$) / $0.999$ ($0.003$) & $0.470$ ($0.196$) / $0.648$ ($0.132$) & $0.314$ ($0.214$) / $0.679$ ($0.196$)      \\
			% $0.15$    & $0.993$ ($0$) / $1$ ($0$)  & $0.993$ ($0$) / $0.714$ ($0.077$) & $0.784$ ($0.236$) / $0.865$ ($0.123$)       \\ \hline
		\end{tabular}
		\caption{Mean of the extended NMI from different methods on the original graph $G$ versus its line graph $L$, where the graphs are generated from planted two-block MMBs with $n=300$, $p_{out}=0.01$, and varying probability $p_{in}$ and number of mixed-membership node(s).}
		\label{tab:mmbmgl-nmi}
	\end{table}

	\color{black}
	
	%%%%%%%%%%%%%%%%%%%%%%%%%%%%%%%
	\section{Comparison of Curvature Notions}
	\label{sec:comparison}
	%\color{blue}
	%\begin{itemize}
	%\item Computational results/ accuracy
	%\item Complexity
	%\item Geometric considerations (which structural features are better captured by which curvature notion?)
	%\item {\color{teal} Melanie: Re-read~\citet{jost2021characterizations} to see, if observation can be backed up with theoretical argument.}
	%\item Connection between clustering coefficient and ORC~\citep{jost2014ollivier}
	%\end{itemize}
	%\color{black}
	
	In this final section we discuss differences and commonality between the two curvature notions that we considered.  Specifically,  we adress geometric considerations, i.e., which structural features are best captured by which curvature notion and how this is reflected in their performance in clustering tasks.
	
	\subsection{Relationship between 2-complex FRC and ORC.}
	There is already strong numerical~\citep{tannenbaum,samal2018comparative} and theoretical evidence~\citep{jost2021characterizations} of a close relationship between the 2-complex FRC and ORC in the prior literature.  Specifically,~\citet{jost2021characterizations} proved the following relationship between the two notions:
	\begin{theorem}[\citet{jost2021characterizations}]
		\label{thm:jost-muench}
		Olliviers and Forman$^\ast$ curvature for the edges of a graph $G$ are related as
		\begin{equation}
			\Ric_O(e)= \max_K \; \Ric_F^\ast(e) \; ,
		\end{equation}
		where the maximum is taken over all complexes $K$ that have $G$ as a 1-skeleton.
	\end{theorem}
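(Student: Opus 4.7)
The plan is to establish the identity by proving both inequalities separately: for every $2$-complex $K$ with $G$ as $1$-skeleton, $\Ric_F^\ast(e) \leq \Ric_O(e)$, and that there exists a distinguished complex $K^\ast$ on which equality is attained. The natural tool for both directions is the Kantorovich-Rubinstein duality for $W_1$ already introduced earlier in the paper,
\begin{equation*}
W_1(m_u,m_v)=\sup_{\|h\|_L\leq 1}\Big[\EE_{x\sim m_u}h(x)-\EE_{y\sim m_v}h(y)\Big],
\end{equation*}
so that a dual potential $h$ provides a lower bound on $W_1$ and hence an upper bound on $\Ric_O$, while a primal coupling provides the reverse.

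For the inequality $\Ric_F^\ast(e) \leq \Ric_O(e)$, I would interpret the cell data of an arbitrary $K$ as furnishing an admissible transport plan between $m_u$ and $m_v$: each $2$-face adjacent to $e$ gives a short alternative path along which a controlled amount of mass can be moved, and each parallel edge contributes a boundary correction that matches the Bochner-style terms in the Forman expression. Plugging this plan into the primal $W_1$ problem yields $W_1(m_u,m_v)\leq d_G(u,v)(1-\Ric_F^\ast(e))$, which is exactly the desired bound on $\Ric_O(e)$.

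For the matching direction, I would start from an optimal $1$-Lipschitz potential $h^\ast$ for the dual at edge $e$ and construct $K^\ast$ by attaching $2$-cells whose boundary cycles trace out those pairs of neighbors on which $h^\ast$ saturates its Lipschitz constraint. Face weights in $K^\ast$ are chosen precisely so that Forman's Bochner-type identity telescopes into the dual objective $\EE_{x\sim m_u}h^\ast(x)-\EE_{y\sim m_v}h^\ast(y)$; taking the supremum over such $h^\ast$ then equates $\Ric_F^\ast(e)$ in $K^\ast$ with $1-W_1(m_u,m_v)/d_G(u,v)=\Ric_O(e)$.

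The main obstacle is the construction of $K^\ast$. The inequality $\Ric_F^\ast\leq \Ric_O$ follows reasonably cleanly from duality once the cell-to-coupling correspondence is fixed, but producing a CW complex that realizes an optimal potential is delicate: optimizers are typically non-unique and need not be supported on simple geometric structures within the $1$-hop neighborhood of $e$. A natural route is to decompose the optimal coupling into elementary circulations along cycles of $G$ (a standard move in discrete optimal transport) and then attach a $2$-cell along each such cycle with a carefully calibrated weight; verifying that these attachments are compatible with the formal constraints defining $\Ric_F^\ast$, and that the resulting algebraic expression really collapses to the dual value without producing spurious cross-terms from other edges in $K^\ast$, is where the real technical work lies.
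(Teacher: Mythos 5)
This statement is not proven in the paper at all: it is quoted verbatim from \citet{jost2021characterizations}, and the paper only remarks that the proof in that reference ``translates'' between contributions of $2$-cells and costs of transport plans. So there is no internal proof to compare against; your proposal has to stand on its own, and judged that way it is a plan rather than a proof. The easy half is plausible and in the spirit of the cited argument: Kantorovich--Rubinstein duality, with a fixed complex $K$ supplying an admissible coupling whose cost is at most $d_G(u,v)\,(1-\Ric_F^\ast(e))$, giving $\Ric_F^\ast(e)\leq \Ric_O(e)$. But even here you have not specified the dictionary: Forman's expression contains absolute values of differences of face and parallel-edge terms, and it is not automatic that an arbitrary assignment of cell weights can be read off as a nonnegative transport plan with matching marginals; that correspondence is precisely the content of the cited proof, not a formality.

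The genuine gap is the equality direction, and you acknowledge it yourself. Constructing $K^\ast$ from an optimal $1$-Lipschitz potential $h^\ast$ by attaching $2$-cells ``where the Lipschitz constraint saturates'' does not work as stated: pairs of vertices on which $h^\ast$ is tight need not be joined by edges of $G$, let alone lie on cycles of $G$ that can bound a $2$-cell of a complex with $G$ as $1$-skeleton; the face weights you would need must be positive and must make the absolute-value terms in $\Ric_F^\ast$ resolve with the correct signs so that the expression telescopes to the dual objective, and nothing in the proposal guarantees either. The alternative route you mention --- decomposing an optimal coupling into elementary circulations along cycles and attaching one weighted cell per cycle --- is closer to a workable strategy (and closer to what the cited paper actually does), but verifying that these attachments are admissible, that their combined Forman contribution equals the transport cost, and that no cross-terms from other edges of $K^\ast$ spoil the identity is exactly the theorem; as written, the proposal defers this entirely. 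Hence the proposal identifies the right tools (LP duality for $W_1$, a cell-to-coupling correspondence) but does not contain the key construction on which the result rests.
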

	\begin{rmk}\normalfont
		The Forman$^\ast$ curvature considered by~\citet{jost2021characterizations} deviates from Forman's orginal notion. However, one can recover the original notion as a special case via a specific choice of the cell weights. For details, see~\citet[sec. 7.4]{jost2021characterizations}.
	\end{rmk}
	Theorem~\ref{thm:jost-muench} implies that FRC and ORC coincide on the edges,  when maximizing FRC over the choice of 2-cells whose contributions are included in curvature computation.  The proof of this result proposes a means to ``translate'' between the contributions of 2-cells and the cost of transport plans. 
	
	Our numerical results (Fig.~\ref{fig:clust-curv_v}) confirm that the correlation between FRC and ORC becomes stronger, if FRC is ``augmented", i.e., if the contributions of higher-order structures (triangles, quadrangles) are taking into account for the curvature computation. In this paper, we considered a 2-complex notion of FRC in the original graph $G$ (previously considered in~\citep{WSJ2}) and ``approximate'' the contribution of $k$-faces, by including curvature contribution of faces up to order $K$ (see Figs.~\ref{fig:exp-frcorc-L-30}, \ref{fig:exp-frcorc-L-60} and \ref{fig:exp-frcorc-L-90} in appendix \ref{app:exp-msbm}).
	Since the frequency of $k$-faces decreases sharply as $k$ increases (illustrated in Fig.~\ref{fig:k-faces}), it is enough to consider small values of $K$. In our application to mixed-membership community detection, we considered $K=4$. We found that this parameter choice balances fast computation (complexity increases with $K$) and performance (accuracy increases with $K$). Our numerical experiments confirmed that the performance of the FRC-based community detection algorithm closer resembles that of the ORC-based community detection algorithm as $K$ increases.
	
	We have previously noticed in Figs.~\ref{fig:line-curv_e-v} and \ref{fig:line-curv_e-e} that the range of curvature values varies between ORC and classical FRC, while the shapes of the distributions resemble each other. Notice that the range of curvature values between the 2-complex FRC and ORC is much better aligned, which is in agreement with their much more similar performance in downstream tasks.
	
	\subsection{Differences in structural features captured by FRC and ORC.}
	%{\color{teal} TODO Melanie}
	
	\subsubsection{Relationship with clustering coefficient}
	\label{sec:clustering-coeff}
	The \emph{clustering coefficient}, initially introduced by~\citet{watts-strogatz}, is a network-theoretic measure, which is correlated with the ability of clustering algorithms to detect communities.  Formally,  it measures how close a node neighborhood is to a clique (i.e., a fully connected subgraph).  It is defined as the ratio of the number of edges between neighbors of a node $u \in V$ and the number of edges in the clique, if the neighborhood of $u$ were fully connected:
	\begin{defn}[Clustering coefficient]
		Let $A=(a_{ij})_{1 \leq i,j, \leq n}$ denote the adjacency of a graph $G$ and $u \in V$ the node whose connectivity is characterized by the $i$th row in the adjacency. Then
		\begin{equation}
			C(u):= \frac{\sum_{v \sim u} \#(u,v)}{d_u (d_u -1)} = \frac{\sum_{j,k} a_{ij} a_{jk} a_{ki}}{d_u (d_u -1)} \; .
		\end{equation}
		% where $\#(u,v)$ counts the number of triangles including nodes $u,v$.
	\end{defn}
	It has been previously observed empirically~\citep{tannenbaum,samal2018comparative} that node-level ORC is highly correlated with the clustering coefficient. FRC is also correlated with the clustering coefficient, although the degree of correlation depends significantly on the type of higher-order structures that is included (see Fig.~\ref{fig:clust-curv_v}). We see that FRC correlates best with the clustering coefficient, if triangle contributions are included in the curvature computation, i.e., for our notion of 2-complex FRC. This is also the FRC notion that resembles ORC the closest. Importantly, the close relationship between ORC and the clustering coefficient can also be established theoretically. In particular,~\citet{jost2014ollivier} showed the following result:
	\begin{theorem}[\citet{jost2014ollivier}, Cor. 1]
		\begin{equation*}
			\frac{d_u -1}{d_u} C(u) \geq \Ric_O(u) \geq -2 + \frac{d_u -1}{d_u \vee \max_{v \sim u} d_v} C(u)\; .
		\end{equation*}
	\end{theorem}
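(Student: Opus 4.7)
The plan is to derive both inequalities by summing the edge-level ORC bounds from Theorem~\ref{the:ollilow} over edges incident to $u$, interpreting $\Ric_O(u)$ as the average $\tfrac{1}{d_u}\sum_{v \sim u} \Ric_O(\{u,v\})$, which is the Jost--Liu convention under which the stated form emerges cleanly.

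The main combinatorial ingredient is the identity
\[
\sum_{v \sim u} \#(u,v) \;=\; 2\,T(u) \;=\; d_u(d_u-1)\,C(u),
\]
where $T(u)$ is the number of triangles through $u$, since each such triangle contributes to exactly two edges incident to $u$; the second equality is just the definition of $C(u)$. For the upper bound, I would apply $\Ric_O(\{u,v\}) \leq \#(u,v)/(d_u \vee d_v) \leq \#(u,v)/d_u$ from Theorem~\ref{the:ollilow}, sum over $v \sim u$, divide by $d_u$, and use the identity to obtain $\Ric_O(u) \leq (d_u-1) C(u)/d_u$, which is exactly the upper bound.

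For the lower bound, I would use the sharper lower bound in Theorem~\ref{the:ollilow} and bound each of the two $(\,\cdot\,)_+$ terms by $1$, which yields $\Ric_O(\{u,v\}) \geq -2 + \#(u,v)/(d_u \vee d_v)$. Since $d_u \vee d_v \leq d_u \vee \max_{w \sim u} d_w$ for every $v \sim u$, this gives a uniform denominator that can be pulled out of the sum; averaging over $v \sim u$ and reapplying the triangle-count identity produces the claimed lower bound.

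The main obstacle is bookkeeping around the $(\,\cdot\,)_+$ terms: one has to verify that bounding both by $1$ (rather than by the sharper $1 - 1/d_u - 1/d_v$) is what produces the stated constant $-2$, and that no cross-terms between the two $(\,\cdot\,)_+$ contributions are being double-counted after summation. A secondary subtlety is reconciling the paper's vertex ORC definition \eqref{eq:orc-v}, which is written as an unnormalized sum, with the averaging convention I use above; I would treat the former as a typographical artifact, since only the averaged version makes the stated inequalities dimensionally consistent and sharp on regular graphs.
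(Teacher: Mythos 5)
Your derivation is correct, and it is worth noting that the paper itself offers no proof to compare against: the statement is imported verbatim as Corollary~1 of \citet{jost2014ollivier}, so Theorem~\ref{the:ollilow} is indeed the only internal ingredient available, and averaging the edge-level bounds over $v\sim u$ is essentially how the original source obtains its corollary as well. Each step checks out: for the upper bound, $\Ric_O(\{u,v\})\le \#(u,v)/(d_u\vee d_v)\le \#(u,v)/d_u$ combined with $\sum_{v\sim u}\#(u,v)=d_u(d_u-1)C(u)$ gives $(d_u-1)C(u)/d_u$ after dividing by $d_u$; for the lower bound, the argument of each $(\,\cdot\,)_+$ in Theorem~\ref{the:ollilow} is at most $1$, so each such term contributes at least $-1$, yielding $\Ric_O(\{u,v\})\ge -2+\#(u,v)/(d_u\vee d_v)\ge -2+\#(u,v)/(d_u\vee\max_{w\sim u}d_w)$, and summing and dividing by $d_u$ gives the claim — there are no cross-terms to double count, since the summation is linear, and your observation that the crude bound by $1$ (rather than $1-1/d_u-1/d_v$) is exactly what produces the constant $-2$ is accurate (the sharper bound would only strengthen the result). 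Your resolution of the convention issue is also the right call: the stated inequalities hold only for the normalized vertex curvature $\frac{1}{d_u}\sum_{v\sim u}\Ric_O(\{u,v\})$, not the unnormalized sum of Eq.~\eqref{eq:orc-v} — on $K_3$, for instance, each edge has $\Ric_O=1/2$, so the sum is $1$ while the upper bound $\frac{d_u-1}{d_u}C(u)=1/2$, so Eq.~\eqref{eq:orc-v} cannot be the convention in force in this theorem, and treating it as a normalization mismatch is the correct reading.
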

	\noindent To the best of our knowledge, no theoretical connection between FRC and the clustering coefficient is known.
	
	\subsubsection{Node clustering vs. edge clustering.}
	Notice that Alg.~\ref{alg.1} when applied to the underlying graph $G$ performs a \emph{node clustering}, whereas it performs an \emph{edge clustering} when applied to the line graph $L(G)$. Specifically, if Alg.~\ref{alg.1} is applied to the line graph, then we obtain a mixed-membership label vector $y$ for each vertex $v$ by computing $y_l(v)=\frac{1}{|E_v|}\sum_{e\in E_v} \chi_{C_l}(e)$, where $\chi_{C_l}$ is the indicator for the cluster $C_l$. This assignment is based on an \emph{edge clustering} in the line graph, i.e., in Eq.~\eqref{eq:labeling} above we assume that edge clusters correspond to node clusters. Hence, our approach solves single-membership community detection via node clustering and mixed-membership community detection via edge clustering. Both clustering approaches utilize the implicit assumption that the underlying graph is \emph{homophilic}, i.e., similar nodes are more densely connected than dissimilar nodes. Node clustering identifies clusters of nodes by finding densely connected subgraphs. We have seen above that dense connectivity increases curvature; consequently, our curvature-based algorithm finds clusters by identifying regions of low curvature. In contrast, edge clustering identifies groups of edges that connect a set of similar nodes. Due to the higher connectivity among sets of similar nodes, those edges have a higher number of neighboring edges (i.e., edges that are adjacent to the same node). Consequently, those edges form a densely connected subgraph in the line graph. Applying curvature-based clustering on the line graph allows us to identify these densely connected subgraphs, due to their higher curvature. Throughout the paper, we have provided empirical and theoretical evidence that all curvature notions considered here have higher value in densely connected subgraphs of any graph. However, our study has demonstrated that the difference of curvature values of edges between communities, which is directly linked to the performance of curvature-based clustering, is more pronounced for some notions. In particular, our analysis shows that ORC highlights this effect best. Among the different FRC variants studied here, we showed that incorporating higher-order faces (triangles in the original graph, triangles and quadrangles in the line graph) leads to the best results. 
	The effect of Alg.~\ref{alg.1} may also be interpreted as a \emph{graph coarsening} scheme. When applied to a graph $G$, our approach uncovers meso-scale structure, such as communities. The interpretation of Alg.~\ref{alg.1} as coarsening the line graph is perhaps less intuitive. Nevertheless, following again the homophily perspective above, we see that curvature-based edge clustering groups edges that connect similar nodes, uncovering crucial meso-scale structure in the original graph. 
	
	\subsubsection{Computational Considerations.} 
	An often cited argument in the previous literature is that variants of 2-complex FRC are preferable over ORC in applications, due to their faster computation. In particular, the known links between the spectrum of the graph Laplacian and clustering coefficients~\citep{jost2014ollivier} mentioned above make ORC a natural choice for curvature-based clustering. 
	Previous approaches have either computed ORC exactly via the Hungarian method ($O(\vert E \vert^3)$) or approximated it via Sinkhorn's method ($O(\vert E \vert^2$), both of which are significantly more expensive than computing Forman's curvature ($O(\vert E \vert)$). We propose above an approximation via the arithmetic mean of upper and lower curvature bounds. We illustrate above that this approximation is on par or, in certain instances even superior, to the Sinkhorn approximation. Importantly, our proposed approximation can be computed in $O(\vert E \vert)$, i.e., has complexity that is comparable to that of FRC. Hence, computational considerations do not provide a clear preference for our approximate ORC approach over FRC. In addition, we propose above a variant of Alg.~\ref{alg.1} via ORC, which does not require the computationally expensive construction of the line graph.
	
	%%%%%%%
	\section{Discussion}
	In this paper, we have proposed a unifying framework for curvature-based node clustering algorithms and systematically investigated the strength and weakness of different curvature notions, specifically variants of Forman’s and Ollivier’s Ricci curvature. In addition to single-membership community structure, which is the focus of most existing work, we have also considered the mixed-membership setting and extended curvature-based clustering approaches to this case. To this end, we have further studied the discrete curvature on the line graph, and related this to the curvature on the original. Furthermore, we have proposed an effective approximation of Ollivier’s notion of discrete curvature to overcome its scalability issues. This construction may be of independent interest. We emphasize that the results derived in this paper provide insights into the relationship of discrete curvatures on a graph and that of its dual, which may be of independent interest.
	
	\noindent Throughout the experiments, we observe consistently convincing performance of the curvature-based methods. Specifically, the ORC-based approach can outperform classical community detection methods, the Louvain algorithm and Spectral Clustering, in both synthetic and real networks, to detect either single- or mixed-membership community structure. Although the experimental results indicate quadratic complexity of the ORC-based method, the proposed approximation's performance is comparable to classical methods and, in relatively large or dense graphs, even computationally superior. The results agree with the linear complexity of the approximated ORC-based approach, as demonstrated in Sec.~\ref{sec:orc-comp}. The FRC-based approach, which has a lower complexity, can outperform Spectral Clustering in most cases, and when the graph is sufficiently large or dense, it can also outperform or have comparable performance to the Louvain algorithm. Note that we consider different variants of Forman’s Ricci curvature for single- and mixed-membership node clustering problems, where we only consider triangular faces in the former while incorporating the quadrangular faces in the latter. We also observe that FRC with triangular faces has a stronger relationship with the clustering coefficients in SBMs, while FRC with further quadrangular faces has a slightly stronger relationship with the ORC in MMBs. As byproduct, the results provide some experimental evidence to relate Forman’s to Ollivier’s Ricci curvature. %, but still leave the theoretical problem open. 
	
	\noindent Several extensions of this work could be avenues for future investigation: In the variant of Alg.~\ref{alg.1} presented here, several crucial hyperparameters are chosen heuristically. Specifically, we believe that the choice of cut-off points could be improved, potentially utilizing geometric information, such as \emph{curvature gaps}~\citep{gosztolai2021unfolding}. The choice of the stopping criterion ($T$) could be guided by established heuristics in the community detection literature. For instance, the Girvan-Newman algorithm constructs a dendrogram to inform the stopping criterion and edge threshold $\Delta$, when the number of communities is not known a priori. Our mixed-membership approaches require the construction of the line graph, which adds significant computational cost when applied to large-scale graphs. Approaches that avoid the construction of the line graph, perhaps utilizing the relationship between the Ricci curvature of a graph and its dual, could remove this bottleneck. We have also seen that using $1$ as the upper bound for ORC could already improve the efficiency in our experiments. These approximations may prove computationally beneficial in both the mixed-membership and single-membership settings.  While Alg.~\ref{alg.1} allows for clustering in both unweighted and weighted networks, it cannot be applied to directed networks in its current form. An extension to directed networks would significantly widen the range of possible applications. Further study of the behavior of discrete curvatures in various random graph models could serve to improve both theoretical understanding of these graphs, as well as to inform the development of improved algorithms for certain graph distributions.

	%%%%%%%%%%%%%%%%%%
	% Acknowledgements and Disclosure of Funding should go at the end, before appendices and references
%%%%%%%%%%%%%%%%%%
\section*{Acknowledgements}
Part of the computations in this paper were run on the FASRC Cannon cluster supported by the FAS Division of Science Research Computing Group at Harvard University. Part of the computations were enabled by resources provided by the Swedish National Infrastructure for Computing (SNIC) at the PDC Center for High Performance Computing, KTH Royal Institute of Technology, partially funded by the Swedish Research Council through grant agreement no.~2018-05973. We would also like to acknowledge the use of the University of Oxford Mathematical Institute compute facility in carrying out the preliminary work.
		
YT is funded by the Wallenberg Initiative on Networks and Quantum Information (WINQ). Part of the work was done when YT was at Mathematical Institute, University of Oxford, where she was funded by the EPSRC Centre for Doctoral Training in Industrially Focused Mathematical Modelling (EP/L015803/1) in collaboration with Tesco PLC. 
Part of this work was done while MW was at the University of Oxford, supported by a Hooke Research Fellowship.
Over the course of this work, ZL has been funded under NIH grant MH-19-147, NSF HDR TRIPODS grant 1934979, and NSF DMS grant 2113099. He would also like to acknowledge support from the Naval Engineering Education Consortium and Microsoft Research.

	%%%%%%%%%%%%%%%%%%
	% Manual newpage inserted to improve layout of sample file - not
	% needed in general before appendices/bibliography.
	
	\newpage
	
	\appendix
	%%% Theory
	\section{ORC on weighted graphs}
	
	\subsection{Proofs of Results in Section~\ref{sec:orc}}
	
	\textbf{Proof of Lemma~\ref{lem:up}:}
	Consider the subgraph of $G$ given represented schematically in Figure~\ref{fig:schema1}. 
	
	\begin{figure}[h]
		\centering
		\begin{tikzpicture}
			\path[fill=blue!30] plot[smooth cycle] coordinates 
			{(0,-0.4) (4.2,-0.4) (4.2,0.4) (0.8,0.4) (2.3,1.7) (1.9,2.4) (0,0.4) (-1.8,2.4) (-2.4,1.6)};
			\path[fill=red!30,opacity=0.7] plot[smooth cycle] coordinates 
			{(-0.2,-0.4) (4,-0.4) (6.4,1.6) (5.8,2.4) (4,0.4) (2.1,2.4) (1.7,1.7) (3.2,0.4) (-0.2,0.4)};
			
			\node [shape=circle,draw=black] (x) at (0,0) {$x$};
			\node [shape=circle,draw=black] (y) at (4,0) {$y$};
			\node [shape=circle,draw=black] (l) at (-2,2) {$\ell$};
			\node [shape=circle,draw=black] (c) at (2,2) {$c$};
			\node [shape=circle,draw=black] (r) at (6,2) {$r$};
			
			\draw[thick,black] (x) -- (l);
			\draw[thick,black] (x) -- (c);
			\draw[thick,black] (c) -- (y);
			\draw[thick,black] (y) -- (r);
			\draw[thick,black] (x) -- (y);
		\end{tikzpicture}
		\caption{Schematic of the subgraph $H$ of $G$, focusing on $x,y$ and their neighbors. Note that typically there are several nodes $\ell_1,\ldots,\ell_{j_1}$, $c_1,\ldots,c_{j_2}$, and $r_1,\ldots,r_{j_3}$, satisfying $\mathrm{deg}(x)=j_1+j_2+1, \mathrm{deg}(y)=j_2+j_3+1.$ The support of the $m_x$ measure is shown in blue, $m_y$ in red.}
		\label{fig:schema1}
	\end{figure}
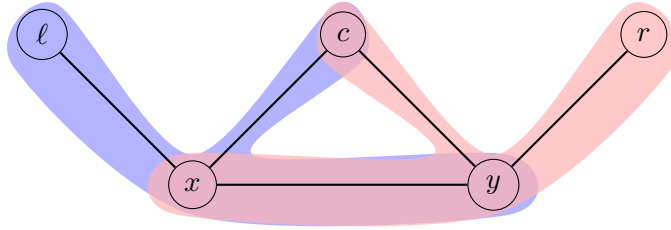
	
	Now define the following quantities:
	\begin{align*}
		L_x&= \sum_\ell m_x(\ell) & L_y &= \sum_\ell m_y(\ell)\; (=0)\\
		X_x&= m_x(x) & X_y &= m_y(x)\\
		C_x&= \sum_c m_x(c) & C_y &= \sum_c m_y(c)\\
		Y_x&= m_x(y) & Y_y &= m_y(y)\\
		R_x&= \sum_r m_x(r)\; (=0) & R_y &= \sum_r m_y(r).
	\end{align*}
	If we consider transport plans restricted to $H$, then all of the $L_x$ mass must be transported to $x$, and all of the $R_y$ mass must be transported from $y$. As such, we may consider a modified diagram as in Figure~\ref{fig:schema2}.
	
	\begin{figure}[h]
		\centering
		\begin{tikzpicture}
			\path[fill=blue!30] plot[smooth cycle] coordinates 
			{(-0.2,-0.4) (4.2,-0.4) (4.2,0.4) (0.8,0.4) (2.3,1.7) (1.9,2.4) (-0.2,0.4)};
			\path[fill=red!30,opacity=0.7] plot[smooth cycle] coordinates 
			{(-0.2,-0.4) (4.2,-0.4) (4.2,0.4) (2.1,2.4) (1.7,1.7) (3.2,0.4) (-0.2,0.4)};
			
			\node [shape=circle,draw=black] (x) at (0,0) {$x$};
			\node [shape=circle,draw=black] (y) at (4,0) {$y$};
			\node [shape=circle,draw=black] (c) at (2,2) {$c$};
			
			\draw[thick,black] (x) -- (c);
			\draw[thick,black] (c) -- (y);
			\draw[thick,black] (x) -- (y);
		\end{tikzpicture}
		\caption{Schematic of the subgraph $H'$ of $H$, focusing on $x,y$ and their common neighbors. The support of the $\tilde{m}_x$ measure is shown in blue, $\tilde{m}_y$ in red.}
		\label{fig:schema2}
	\end{figure}
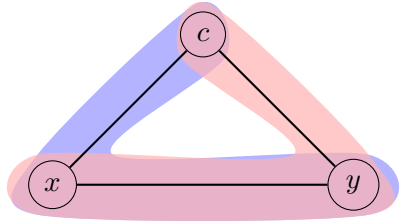
	
	Here we define $\tilde{m}_x, \tilde{m}_y$ as follows:
	$$ \tilde{m}_x(v) =\begin{cases} L_x+X_x &\text{if }v=x\\ m_x(c)&\text{if }v=c\\ Y_x+R_x\; (=Y_x) &\text{if }v=y,\end{cases}\quad \tilde{m}_y(v) =\begin{cases} L_y+X_y\;(=X_y) &\text{if }v=x\\ m_y(c)&\text{if }v=c\\ Y_y+R_y &\text{if }v=y.\end{cases}$$
	
	Then we have $$W_1(m_x,m_y) \leq \sum_\ell \omega(\ell,x) m_x(\ell) + \sum_r \omega(r,y) m_y(r) + W_1(\tilde{m}_x,\tilde{m}_y),$$ and this is tight when $x,y$ have no unshared neighbors. Indeed, if we only consider the subgraph $H$ of $G$, this upper bound is the exact value of $W_1(m_x,m_y)$. Define $\tilde{X}_x, \tilde{C}_x, \tilde{Y}_x, \tilde{X}_y, \tilde{C}_y, \tilde{Y}_y$ analogously for these new measures.
	
	Now for each vertex of type $c$, if $\tilde{m}_x(c)-\tilde{m}_y(c)>0$, we will send this amount to $y$ along $(c,y)$. Otherwise, we will send $\tilde{m}_y(c)-\tilde{m}_x(c)$ from $x$ along $(c,x)$. This results in the updated (signed) measure
	$$\hat{m}_x(v)= \begin{cases}\tilde{X}_x - \sum_{c} (\tilde{m}_y(c)-\tilde{m}_x(c))_+&\text{if }v=x\\ \tilde{m}_y(c)&\text{if }v=c\\ \tilde{Y}_x+\sum_c (\tilde{m}_x(c)-\tilde{m}_y(c))_+&\text{if }v=y.\end{cases}$$
	Finally, if $\hat{m}_x(x)-\tilde{X}_y>0$, send this amount to $y$ along the edge $(x,y)$. Otherwise, send $\hat{m}_x(y)-\tilde{Y}_y= -(\hat{m}_x(x)-\tilde{X}_y)$ from $y$ to $x$ along the $(x,y)$ edge.
	
	This gives the upper bound
	\begin{align*}
		W_1(\tilde{m}_x,\tilde{m}_y)&\leq \sum_c (\tilde{m}_x(c)-\tilde{m}_y(c))_+\omega(c,y)+(\tilde{m}_y(c)-\tilde{m}_x(c))_+\omega(c,x)\\
		&+\left|\tilde{X}_x-\tilde{X}_y -\sum_c (\tilde{m}_y(c)-\tilde{m}_x(c))_+\right|\omega(x,y).
	\end{align*}
	
	Finally, we have 
	\begin{align*}
		W_1(m_x,m_y)&\leq \sum_\ell \omega(\ell,x) m_x(\ell)+\sum_r \omega(r,y) m_y(r)\\
		&\quad+\sum_c \omega(c,y)(m_x(c)-m_y(c))_+ +\omega(c,x)(m_y(c)-m_x(c))_+\\
		&\quad+ \left|L_x+X_x - X_y - \sum_c (m_y(c)-m_x(c))_+\right|\omega(x,y)=U_{x,y}.
	\end{align*}
	\hfill$\square$
	
	For completeness, we state and prove the following lower bound on the Wasserstein distance:
	\begin{lem}
		\label{lem:wasslowerbound}
		For any $1$-Lipschitz function $h(v)$ on $G$, we have
		\begin{equation*}
			W_1(m_x,m_y)\geq \EE_{v\sim m_x}[h(v)] - \EE_{v\sim m_y}[h(v)]= \sum_{v\in N(x)\cup N(y)} h(v) (m_x(v)-m_y(v)) \; .
		\end{equation*}
	\end{lem}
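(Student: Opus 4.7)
The lemma is essentially one direction of Kantorovich--Rubinstein duality, which is already stated as a definition/characterization of $W_1$ in the paper (see the display equation just before Lemma~\ref{lem:up}). So the plan is simply to extract the $\geq$ half of that duality and rewrite the right-hand side in a form that exposes only the vertices where $m_x$ and $m_y$ differ.

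First, I would take any coupling $\pi\in\Pi(m_x,m_y)$ and any $1$-Lipschitz $h:V\to\RR$, and use the marginal conditions on $\pi$ to write
\[
\EE_{v\sim m_x}[h(v)] - \EE_{v\sim m_y}[h(v)] = \EE_{(u,v)\sim\pi}[h(u)-h(v)].
\]
Next I would apply the Lipschitz bound $h(u)-h(v)\leq d_G(u,v)$ pointwise, giving
\[
\EE_{v\sim m_x}[h(v)] - \EE_{v\sim m_y}[h(v)] \leq \EE_{(u,v)\sim\pi}[d_G(u,v)].
\]
Taking the infimum over $\pi\in\Pi(m_x,m_y)$ yields the desired lower bound $W_1(m_x,m_y)\geq \EE_{v\sim m_x}[h(v)] - \EE_{v\sim m_y}[h(v)]$. (Equivalently, one could invoke the $\sup$-formula for $W_1$ already recorded in the paper; for a single $h$ one inequality is immediate.)

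For the equality in the stated form, I would note that under the measure in~\eqref{eq:nodemeasure}, $m_x$ is supported on $\{x\}\cup N(x)$ and $m_y$ on $\{y\}\cup N(y)$, so $m_x(v)-m_y(v)$ vanishes off $N(x)\cup N(y)$ (with the convention that $x\in N(y)$ and $y\in N(x)$, since $x\sim y$). Hence
\[
\EE_{v\sim m_x}[h(v)] - \EE_{v\sim m_y}[h(v)] = \sum_{v\in V} h(v)(m_x(v)-m_y(v)) = \sum_{v\in N(x)\cup N(y)} h(v)(m_x(v)-m_y(v)),
\]
which finishes the identity. The argument has no real obstacle: the content of the lemma is packaged in Kantorovich--Rubinstein duality, and the rewriting of the expectation as a finite sum is just bookkeeping about the supports of $m_x$ and $m_y$. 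The one minor care-point is to be consistent about whether $x$ and $y$ are considered members of $N(x)\cup N(y)$; in either convention the sum displayed in the lemma equals $\EE_{v\sim m_x}[h(v)] - \EE_{v\sim m_y}[h(v)]$ because $h$ is only multiplied by the (signed) mass difference, which is the same in both settings.
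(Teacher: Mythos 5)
Your proposal is correct and follows essentially the same argument as the paper: take an arbitrary coupling, use the marginal conditions to rewrite $\EE_{v\sim m_x}[h(v)]-\EE_{v\sim m_y}[h(v)]$ as an expectation of $h(u)-h(v)$ under the coupling, bound it by $d_G(u,v)$ via the Lipschitz property, and take the infimum over couplings. The only difference is presentational (the paper starts from the transport cost and lower-bounds it, you start from the dual objective and upper-bound it), and your remark about the support of $m_x-m_y$ matches the paper's final rewriting of the sum over $N(x)\cup N(y)$.
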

	\begin{proof}
		Given a coupling $\pi$ for $m_x$ and $m_y$, and a 1-Lipschitz function $h:V\rightarrow\RR$, we have
		\begin{align*}
			\EE_{(u,w)\sim\pi}[d_G(u,w)]&=  \sum_{\substack{u\sim N(x)\\ w\sim N(y)}} d_G(u,w) \pi(u,w)\\
			&\geq \sum_{\substack{u\sim N(x)\\w\sim N(y)}} (h(u)-h(w))\pi(u,w)\\
			&=\sum_{v\in N(x)\cup N(y)} h(v) (m_x(v)-m_y(v)). 
		\end{align*}
		Since the choice of coupling was arbitrary, we conclude that $W_1(m_x,m_y)$ satisfies the same lower bound.
	\end{proof}
	
	Now we may prove Lemma~\ref{lem:low}.\\
	
	\textbf{Proof of Lemma~\ref{lem:low}:}
	Let $V_H= N(x)\cup N(y)$ be the set of vertices from the subgraph $H$ of $G$ that we considered in the upper bound for $W_1(m_x,m_y)$. A difficulty that arises in defining a 1-Lipschitz function $h$ on $G$ is that there are more edges and paths in $G$ than just those in $H$. These ``short-circuits" in the larger graph mean that a 1-Lipschitz function defined on $H$ does not necessarily extend to a 1-Lipschitz function on $G$. We define a partition of $V_H$ into three subsets $\mathcal{P}, \mathcal{Z}, \mathcal{N}$ based on whether $m_x(v)-m_y(v)$ is Positive, Zero, or Negative. Observe that for a vertex $v\in\mathcal{P}$, the excess mass $m_x(v)-m_y(v)$ must be transported away from $v$ in order to have a valid transport plan. In fact, it must be transported at least enough to reach $\mathcal{N}$, since 
	$$0=\sum_{v\in V} m_x(v) - m_y(v)= \sum_{v\in\mathcal{P}} (m_x(v)-m_y(v)) + \sum_{v\in\mathcal{N}} (m_x(v)-m_y(v))$$ implies that $m_x(\mathcal{P})-m_y(\mathcal{P})= m_y(\mathcal{N})-m_x(\mathcal{N}).$ In other words, all excess mass from $m_x$ in $\mathcal{P}$ must be transported to vertices in $\mathcal{N}$, or we cannot have a valid transport plan. 
	
	Motivated by this observation, we define the distance from a vertex $v\in V$ to a set of vertices $S\subseteq V$ as $d_G(v,S)= \min_{u\in S} d_G(v,u)$, where $d_G$ is the shortest path distance in $G$. Thus, $d_G(v,S)$ gives the distance in $G$ from the vertex $v$ to the nearest element in the set $S$. 
	
	Consider the following 1-Lipschitz functions parameterized by $\lambda\in[0,1]$: 
	\begin{equation}
		h_\lambda(v)= \begin{cases}
			\lambda d_G(v,\mathcal{N})&\text{if }v\in \mathcal{P}\\
			-(1-\lambda) d_G(v,\mathcal{P})&\text{if }v\in\mathcal{N}\\
			\lambda d_G(v,\mathcal{N})-(1-\lambda)d_G(v,\mathcal{P})&\text{otherwise.}
		\end{cases}
	\end{equation}
	Note that the third equation works for all $v\in V$, since $d_G(v,\mathcal{P})=0$ for $v\in\mathcal{P}$ and similarly for $\mathcal{N}$, but we split this into cases to show the behavior on the different sets more clearly.
	
	These functions are all 1-Lipschitz since they are convex combinations of the two 1-Lipschitz functions $d_G(\cdot,\mathcal{N})$ and $-d_G(\cdot,\mathcal{P})$. Let's show that these are 1-Lipschitz: Let $S\subseteq V$, and let $u,v\in V$. Let $w_u,w_v\in S$ be the nearest vertices in $S$ to $u,v$ respectively, so that $d_G(u,w_u)=d_G(u,S)$ and similarly for $w_v$. Then
	\begin{align*}
		d_G(v,S)&\leq d_G(v,w_u) \leq d_G(v,u)+d_G(u,w_u)\\
		&=d_G(u,v)+d_G(u,S)\\
		d_G(u,S)&\leq d_G(u,w_v)\leq d_G(u,v)+d_G(v,w_v)\\
		&=d_G(u,v)+d_G(v,S)\\
		\Longrightarrow\quad & |d_G(v,S)-d_G(u,S)|\leq d_G(u,v).
	\end{align*}
	
	But since the objective function $\EE_{v\sim m_x} [h(v)] - \EE_{v\sim m_y} [h(v)]$ is linear in $h$, we see that the maximizer of this quantity over $\{h_\lambda: \lambda\in[0,1]\}$ must be achieved at $h_0$ or $h_1$. This gives the following lower bound for $W_1(m_x,m_y):$
	
	\begin{align*}
		W_1(m_x,m_y)&\geq \max_{\lambda\in\{0,1\}}\sum_{v\in V_H} h_\lambda(v)(m_x(v)-m_y(v))\\
		&= \sum_{v\in \mathcal{P}} h_\lambda(v) \underbrace{(m_x(v)-m_y(v))}_{>0}+ \sum_{v\in \mathcal{Z}} h_\lambda(v) \underbrace{(m_x(v)-m_y(v))}_{=0}\\
		&\quad + \sum_{v\in \mathcal{N}} h_\lambda(v) \underbrace{(m_x(v)-m_y(v))}_{< 0}\\
		&= \max\left\{ \sum_{v\in\mathcal{P}} d_G(v,\mathcal{N})(m_x(v)-m_y(v)), \sum_{v\in\mathcal{N}} d_G(v,\mathcal{P})(m_y(v)-m_x(v))\right\}.
	\end{align*}
	\hfill$\square$
	
	We note that vertices of type $\ell$ will always belong to $\mathcal{P}$, and vertices of type $r$ will always belong to $\mathcal{N}$. Vertices of type $c$ may fall into either set, and there will typically be some of these vertices in both sets. When we set $\alpha=0$, then we always have $x\in\mathcal{N}$ and $y\in\mathcal{P}$, though if we choose $\alpha\geq 1/2$, $x$ cannot be in $\mathcal{N}$, and $y$ cannot be in $\mathcal{P}$.

	\subsection{Note on combinatorial ORC bound by Jost-Liu}
	\begin{rmk}\label{rmk:jost-liu-issue}
		We note that the lower bound for $W_1(m_x,m_y)$ given in \cite{jost2014ollivier} for weighted graphs using the measure $m_x(v)= w_{vx}/d_x$ is actually incorrect. Consider the weighted graph $G$ shown in Figure~\ref{fig:counterexample}, which also summarizes the measures $m_x, m_y$.
		\begin{figure}[h]
			\begin{subfigure}{0.48\textwidth}
				\centering
				\begin{tikzpicture}
					\node [circle, draw] (l) at (-2,2) {$\ell$};
					\node [circle, draw] (x) at (0,0) {$x$};
					\node [circle, draw] (c) at (2,2) {$c$};
					\node [circle, draw] (y) at (4,0) {$y$};
					
					\draw (l) -- (x) node[pos=0.5,fill=white] {3};
					\draw (x) -- (c) node[pos=0.5,fill=white] {2};
					\draw (c) -- (y) node[pos=0.5,fill=white] {2};
					\draw (l) -- (c) node[pos=0.5,fill=white] {1/5};
					\draw (x) -- (y) node[pos=0.5,fill=white] {1};
				\end{tikzpicture}
			\end{subfigure}
			\begin{subfigure}{0.48\textwidth}
				\centering
				\begin{tabular}{c|c|c|c|c}
					& $x$ & $y$ & $\ell$ & $c$  \\
					\hline
					$m_x$ & 0 & 1/6 & 1/2 & 1/3\\
					\hline
					$m_y$ & 1/3 & 0 & 0 & 2/3\\
					\hline
					$h$ & -6/5 & -1/5 & 1 & 4/5
				\end{tabular}
			\end{subfigure}
			\caption{The weighted graph $G$, the measures $m_x, m_y$ from \cite{jost2014ollivier}, and our 1-Lipschitz function $h$.}
			\label{fig:counterexample}
		\end{figure}
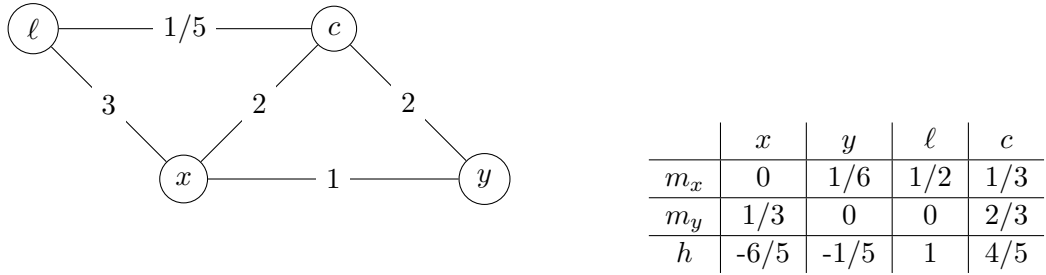
		An optimal transport plan sends 1/6 mass from $y$ to $x$ and from $c$ to $x$, and 1/2 mass from $\ell$ to $c$, for a cost of 3/5. We can verify the optimality of this transport plan with the 1-Lipschitz function $h$ given in the table in Figure~\ref{fig:counterexample}, since this also achieves the value 3/5. As such, the exact ORC of the edge $\{x,y\}$ in $G$ is $1-3/5=2/5.$ However, the upper bound on this ORC from \cite{jost2014ollivier} is 
		$$\frac{w_{cx}}{d_x} \wedge \frac{w_{cy}}{d_y} = \frac{1}{3} \wedge \frac{2}{3} = \frac{1}{3}< \frac{2}{5}.$$
	\end{rmk}
	
	\subsection{Proof of Theorem~\ref{thm:ORC-guarantee}}
	\label{sec:ORC-guarantee-proof}
	
	\begin{proof}
		Due to the homogeneity of the graphs $G_{a,b}$, which is preserved under the Ricci flow, it is sufficient to analyze the evolution of one edge of types (1), (2), (3).
		In the following, $d_i, D_i,w_i^t,\kappa_i^t$ denote the shortest-path distance, Wasserstein-1 distance, weight in iteration $t$, and ORC curvature of an edge of type ($i$) in iteration $t$, respectively.  Notice that here
		\begin{equation}
			\label{eq:flowweights}
			w_i^{t+1}=(1 - \kappa_i^t)w_i^t=w_i^t - \Big(1- \frac{D_i}{d_i} \Big)w_i^t
			= D_i \; ,
		\end{equation}
		since $d_i=w_i^t$. Throughout the proof, $x,y$ denote the vertices adjacent to the edge under consideration.
		
		\paragraph{Iteration 1.}	In the first iteration, all edges have the same weight (i.e., $1$), hence,  Eq.~\eqref{eq:G_ab-mass} reduces to the uniform distribution.  
		\begin{itemize}
			\item Edges of type (3) have symmetric neighborhoods,  with uniform mass distributions of $m_v = \frac{1}{a}$ on all neighbors of $x,y$.  Thus, the total transportation cost of moving mass from $x$'s neighbors to $y$'s neighbors is $D_3=\frac{1}{a}d_3$ and therefore $w_3^1=D_3=\frac{1}{a}$, since $d_3=1$.
			\item For edges of type (1),  note that $d_x=d_y=a+(b-1)$, hence $m_v=\frac{1}{a+b-1}$ for all neighbors of $x,y$.  By construction, $x,y$ have $b-2$ common neighbors; due to the symmetry of the neighborhoods,  no mass has to be moved for those neighbors.  The optimal transportation plan for the remaining mass is as follows: Move the remaining mass $\frac{a}{a+b-1}$ from the neighbors of $x$ (which are not neighbors of $y$) to $x$.  This incurs a cost of $\frac{a}{a+b-1} d_2$, since this mass is transported along edges of type (2).  Leave a mass of $\frac{1}{a+b-1}$ on $x$ and transport the remainder to $y$, which incurs a cost of $\frac{a-1}{a+b-1} d_1$, since it is transported along an edge of type (1).  Lastly,  we distribute mass $\frac{a}{a+b-1}$ from $y$ uniformly to its neighbors, which incurs a cost of $\frac{a}{a+b-1} d_2$, since it is transported along edges of type (2). In total,  the transportation cost amounts to
			\begin{equation*}
				D_1 = \frac{2a}{a+b-1} d_2 + \frac{a-1}{a+b-1} d_1 \; ,
			\end{equation*}
			To prove that this is an optimal flow, consider the following 1-Lipschitz function $h$:
			$$
			h(v)=
			\begin{cases}
				2& \text{if } v\sim x, v\not\sim y\\
				1&\text{if }v=x\\
				0&\text{if }v=y\text{ or }v\sim x,y\\
				-1&\text{if }v\sim y, v\not\sim x.
			\end{cases}
			$$
			Then $$\EE_{v\sim m_x}[h(v)]- \EE_{v\sim m_y}[h(v)]=2\frac{a}{a+b-1}+1\frac{(-1)}{a+b-1}-1\frac{(-a)}{a+b-1}=\frac{3a-1}{a+b-1}. $$
			This implies $w_3^1=D_3=\frac{3a-1}{a+b-1}$, since $d_1=d_2=1$.
			\item  For edges of type (2), let $x$ denote the internal node and $y$ the bridge node. By construction, $x,y$ have $a-1$ common neighbors (internal nodes). We want to leave a mass of $\frac{1}{a+b-1}$ on each of those nodes (corresponding to the mass distribution in the neighborhood of $y$), and transport the excess mass to $x$ and the unshared neighbors of $y$ in an optimal way. Observe that $\frac{1}{a}>\frac{1}{a+b-1}$, so we have a total of $(a-1)\left(\frac{1}{a}-\frac{1}{a+b-1}\right)$ excess mass to move. If this exceeds $\frac{1}{a+b-1},$ we send that amount to $x$, and the remaining amount to $y$; otherwise we send all of this mass to $x$. We can simplify this condition as follows:
			$$
			(a-1)\left(\frac{1}{a}-\frac{1}{a+b-1}\right)-\frac{1}{a+b-1}=\frac{(a-1)(b-1)-a}{a(a+b-1)}\geq0
			$$
			if and only if $(a-1)(b-1)-a= a(b-2)-(b-1)\geq0$. When $b=2$, this never holds, and when $b\geq 3$, this always holds, since $a(b-2)-(b-1)\geq a-b+1\geq0$, given that $a\geq b$. Now let us consider these cases in turn.\\
			
			\underline{Case $b=2$:} We send all of the excess mass from the common vertices to $x$, amounting to $\frac{a-1}{a(a+1)}<\frac{1}{a+1}$, meaning that we must send an additional $\frac{1}{a(a+1)}$ mass from $y$ to $x$, and the remaining $\frac{1}{a+1}$ is sent to $y$'s single unshared neighbor. The total cost is 
			$$D_2 = \frac{a-1}{a(a+1)}d_3+ \frac{1}{a(a+1)}d_2+\frac{1}{a+1}d_1.$$ 
			To see that this is an optimal flow, consider the following 1-Lipschitz function $h$:
			$$
			h(v)= \begin{cases}
				1&\text{if }v\sim x,y\text{ or }v=y\\
				0&\text{otherwise}.
			\end{cases}
			$$
			This achieves a dual objective function value of $\frac{2}{a+1}$, so $w_2^1=D_2= \frac{2}{a+1},$ since $d_1=d_2=d_3=1.$
			\\
			
			\underline{Case $b\geq 3$:} We send $\frac{1}{a+b-1}$ of the excess mass from the common vertices to $x$, and the remainder $\frac{(a-1)(b-1)-a}{a(a+b-1)}$ to $y$. Since there is already a mass of $\frac{1}{a}$ at $y$, this gives a total mass of $\frac{(a-1)(b-1)-a+(a+b-1)}{a(a+b-1)}=\frac{b-1}{a+b-1} $ to be sent to $y$'s unshared neighbors, as required. The total cost is
			$$D_2 = \frac{1}{a+b-1}d_3+ \frac{(a-1)(b-1)-a}{a(a+b-1)}d_2+\frac{b-1}{a+b-1}d_1.$$
			To see that this is an optimal flow, consider the following 1-Lipschitz function $h$:
			$$
			h(v) = \begin{cases}
				1&\text{if }v\sim x,y\\
				0&\text{if }v=x,y\\
				-1&\text{if }v\sim y, v\not\sim x.
			\end{cases}
			$$
			This achieves a dual objective function value 
			$$\EE_{v\sim m_x}[h(v)]-\EE_{v\sim m_y}[h(v)]= 1\frac{(a-1)(b-1)}{a(a+b-1)}-1\frac{-(b-1)}{a+b-1}=\frac{(2a-1)(b-1)}{a(a+b-1)},$$
			which gives $w_2^1 = D_2 = \frac{(2a-1)(b-1)}{a(a+b-1)}=\frac{2a-1}{a+b-1}\left(\frac{b-1}{a}\right)$, since $d_1=d_2=d_3=1.$
		\end{itemize}
		
		We summarize these calculations in the following table:
		\begin{center}
			\begin{tabular}{c|c|c|c}
				Type& 1 & 2 & 3\\
				\hline
				$b=2$ & $\frac{3a-1}{a+1}$ & $\frac{2}{a+1}$ & $\frac{1}{a}$\\
				\hline
				$b\geq 3$ & $\frac{3a-1}{a+b-1}$ & $\frac{(2a-1)(b-1)}{a(a+b-1)}$ & $\frac{1}{a}$
			\end{tabular}
		\end{center}
		
		In both cases, it is easy to check that $w_1^1>w_2^1>w_3^1$, using $a\geq b\geq 2$.  
		
		\paragraph{Iteration t+1.} In round $t\geq1$, the mass distributions on the neighborhood of a bridge node are given by
		\begin{equation*}
			m_x(y) = \begin{cases}
				\frac{1}{C_b} e^{-w_1^{t}}, &y \; {\rm is \; bridge \; node} \\
				\frac{1}{C_b} e^{-w_2^{t}}, &{\rm else} 
			\end{cases} \; ,
		\end{equation*}
		and for an internal node by
		\begin{equation*}
			m_x(y) = \begin{cases}
				\frac{1}{C_{in}} e^{-w_2^{t}}, &y \; {\rm is \; bridge \; node} \\
				\frac{1}{C_{in}} e^{-w_3^{t}}, &{\rm else}
			\end{cases} \; .
		\end{equation*}
		Here the normalizing constants for bridge nodes is given by
		\begin{equation*}
			C_b = a e^{-w_2^{t}} + (b-1) e^{-w_1^{t}} \; ,
		\end{equation*}
		and for internal nodes as
		\begin{equation*}
			C_{in}=e^{-w_2^{t}} + (a-1) e^{-w_3^{t}} \; .
		\end{equation*}
		By way of induction, we suppose that $w_1^t>w_2^t$ and $w_1^t>w_3^t$, where we proved this in the case $t=1$ above. 
		
		We again consider the three edge types separately.  Given $w_i^t$,  edge weights are updated as follows in iteration $t$:
		\begin{itemize}
			\item For edges of type (3),  due to the symmetry of the neighborhoods, we again only move mass from $x$ to $y$ at a total cost of $D_3=\frac{w_3^{t}}{C_{in}} e^{-w_3^{t}}$.  Hence,  the updated weights are $w_3^{t+1}=\frac{w_3^t}{C_{in}} e^{-w_3^t}$ by Equation~\eqref{eq:flowweights}.
			\item For edges of type (1), the transport plan is analogous to the one in iteration 1: We move a mass of $\frac{a}{C_b}e^{-w_2^t}$ from the neighbors of $x$ to $x$, leave a mass of $\frac{1}{C_b}e^{-w_1^t}$ there, and transport the remaining mass $\frac{a}{C_b}e^{-w_2^t} - \frac{1}{C_b}e^{-w_1^t}	$ to $y$.  Picking up a mass of $\frac{1}{C_b} e^{-w_1^t}$ previously on $y$, we distribute the mass among $y$'s neighbors.  The total transportation cost is
			\begin{equation*}
				D_1 = \frac{a}{C_b} e^{-w_2^t} d_2 + \Big( \frac{a}{C_b} e^{-w_2^t} - \frac{1}{C_b} e^{-w_1^t}	\Big)d_1+\frac{a}{C_b} e^{-w_2^t} d_2 \; .
			\end{equation*}
			To prove that this is an optimal flow, consider the following 1-Lipschitz function $h$:
			$$
			h(v)=
			\begin{cases}
				w_2^t+w_1^t& \text{if } v\sim x, v\not\sim y\\
				w_1^t&\text{if }v=x\\
				0&\text{if }v=y\text{ or }v\sim x,y\\
				-w_2^t&\text{if }v\sim y, v\not\sim x.
			\end{cases}
			$$
			Then $$\EE_{v\sim m_x}[h(v)]- \EE_{v\sim m_y}[h(v)]=(w_2^t+w_1^t)\frac{a}{C_b}e^{-w_2^t}+w_1^t\frac{(-1)}{C_b}e^{-w_1^t}-w_2^t\frac{(-a)}{C_b}e^{-w_2^t}. $$
			Consequently, the updated weights are $w_1^{t+1}=\frac{2a w_2^t + a w_1^t}{C_b} e^{-w_2^t}  - \frac{w_1^t}{C_b} e^{-w_1^t}$.
			\item For edges of type (2), let $x$ denote the internal node and $y$ the bridge node. At any of the $a-1$ common neighbors of $x,y$, we have
			$$m_x(v) = \frac{e^{-w_3^t}}{C_{in}},\quad m_y(v) = \frac{e^{-w_2^t}}{C_b}.$$ We define the following quantity
			\begin{align*}
				c_t =m_x(v)-m_y(v)&= \frac{e^{-w_3^t}C_b-e^{-w_2^t}C_{in}}{C_b C_{in}}\\
				&= \frac{ae^{-w_2^t-w_3^t}+(b-1)e^{-w_1^t-w_3^t}-e^{-2w_2^t}-(a-1)e^{-w_2^t-w_3^t}}{C_b C_{in}}\\
				&=\frac{(b-1)e^{-w_1^t-w_3^t}+e^{-w_2^t}(e^{-w_3^t}-e^{-w_2^t})}{C_b C_{in}}.
			\end{align*}
			When $w_2^t>w_3^t$, this is always positive, but if $w_2^t<w_3^t$, this could be negative. 
			
			If $c_t\geq 0$, then as before, we want to transport the excess mass from these common neighbors to $x$ and the unshared neighbors of $y$ in an optimal way. If the total excess mass exceeds $m_y(x)$, we send that amount to $x$, and the remaining amount to $y$; otherwise we send all of this mass to $x$. This condition may be written as
			$$ r_t:= (a-1)\frac{e^{-w_3^t}}{C_{in}}-a\frac{e^{-w_2^t}}{C_b}\geq 0. $$ Clearly, $r_t=(a-1)c_t - \frac{e^{-w_2^t}}{C_b}<(a-1)c_t$.

			\underline{Case $c_t, r_t\geq 0$:} We send $m_y(x)$ mass from the common neighbors to $x$, and the remaining portion to $y$. Then we send all of the mass at $y$ to its unshared neighbors, for a cost of 
			$$
			D_2 = \frac{e^{-w_2^t}}{C_b}d_3 + r_t d_2 + (b-1)\frac{e^{-w_1^t}}{C_b}d_1.
			$$
			Consider the 1-Lipschitz function $h$ given by
			$$
			h(v) = \begin{cases}
				w_2^t&\text{if }v\sim x,y\\
				w_2^t-w_3^t&\text{if }v=x\\
				0&\text{if }v=y\\
				-w_1^t&\text{if }v\sim y, v\not\sim x.
			\end{cases}
			$$
			This gives a dual objective function value
			$$ w_2^t (a-1)c_t+(w_2^t-w_3^t)\frac{-e^{-w_2^t}}{C_b}-w_1^t\frac{-(b-1)e^{-w_1^t}}{C_b},$$
			which matches the formula for $D_2$ above when we take into consideration that $d_j = w_j^t$ at iteration $t$. 
			
			\underline{Case $c_t\geq0, r_t\leq 0$:} We send all of the excess mass from the common neighbors to $x$, and get the remaining portion from $y$. We then send the rest of the mass at $y$ to its unshared neighbors, for a total cost of 
			$$
			D_2 = (a-1)c_t d_3+ (-r_t)d_2+(b-1)\frac{e^{-w_1^t}}{C_b}d_1.
			$$
			Consider the 1-Lipschitz function $h$ given by 
			$$
			h(v) = \begin{cases}
				w_3^t&\text{if }v\sim x,y\\
				0&\text{if }v=x\\
				w_2^t&\text{if }v=y\\
				w_2^t-w_1^t&\text{if }v\sim y, v\not\sim x.
			\end{cases}
			$$
			To show that this function is 1-Lipschitz, note that if $v$ is a common neighbor of $x,y$, then $|h(v)-h(y)|=|w_2^t-w_3^t|\leq w_2^t$: either $w_2^t>w_3^t$, in which case $|w_2^t-w_3^t|=w_2^t-w_3^t\leq w_2^t$, or else $w_2^t<w_3^t$, and $w_3^t-w_2^t\leq w_2^t$ follows by the triangle inequality since $w_3^t=W_1(m_v,m_x)\leq W_1(m_v,m_y)+W_1(m_y,m_x)=2w_2^t$, where $W_1$ is the Wasserstein-1 distance, and $m_x,m_v,m_y$ are the measures at the vertices $x,v,y$ with the weights from iteration $t-1$ (or when $t=1$, $w_3^0=1<2=w_2^0$).
			
			This gives a dual objective function value
			$$
			w_3^t (a-1)c_t+w_2^t\frac{e^{-w_2^t}}{C_{in}}+(w_2^t-w_1^t)\frac{-(b-1)e^{-w_1^t}}{C_b}.
			$$
			This gives the equality when we note that 
			$$-r_t= a\frac{e^{-w_2^t}}{C_b}-(a-1)\frac{e^{-w_3^t}}{C_{in}}= 1- \frac{(b-1)e^{-w_1^t}}{C_b} - \left(1- \frac{e^{-w_2^t}}{C_{in}}\right)= \frac{e^{-w_2^t}}{C_{in}}-\frac{(b-1)e^{-w_1^t}}{C_b},$$ where the second equality follows from the definitions of $C_b$ and $C_{in}$.
			
			\underline{Case $c_t, r_t\leq 0$:} If $c_t\leq 0$, then $r_t< (a-1)c_t\leq 0$. Since $m_y$ places more mass at the common vertices than $m_x$, the only vertex on which $m_x$ places more mass than $m_y$ must be $y$. Thus, we send all of the needed mass from $y$ to $x$, the common neighbors, and the unshared neighbors of $y$, for a cost of
			$$D_2= \frac{e^{-w_2^t}}{C_b}d_2+(a-1)(-c_t)d_2+(b-1)\frac{e^{-w_1^t}}{C_b}d_1.$$
			Consider the 1-Lipschitz function $h$ given by 
			$$
			h(v) = \begin{cases}
				-w_2^t&\text{if }v\sim x,y\\
				-w_2^t&\text{if }v=x\\
				0&\text{if }v=y\\
				-w_1^t&\text{if }v\sim y, v\not\sim x.
			\end{cases}
			$$
			This gives a dual objective function value
			$$
			-w_2^t (a-1)c_t+(-w_2^t)\left(\frac{-e^{-w_2^t}}{C_b}\right)+(b-1)(-w_1^t)\left(\frac{-e^{-w_1^t}}{C_b}\right),
			$$
			which matches the formula for $D_2$ above.
		\end{itemize}
		
		We summarize these calculations in the following table:
		\begin{center}
			\begin{tabular}{c|c}
				Type $i$ & $w_i^{t+1}$\\
				\hline
				1 & $w_1^t\frac{ae^{-w_2^t}-e^{-w_1^t}}{C_b}+2w_2^t\frac{ae^{-w_2^t}}{C_b}$\\
				\hline
				2 ($c_t,r_t\geq 0$) & $w_1^t \frac{(b-1)e^{-w_1^t}}{C_b}+w_2^t r_t + w_3^t \frac{e^{-w_2^t}}{C_b}$\\
				\hline
				2 ($c_t\geq 0, r_t\leq 0$) & $w_1^t\frac{(b-1)e^{-w_1^t}}{C_b}+w_2^t(-r_t)+w_3^t (a-1)c_t$\\
				\hline
				2 ($c_t, r_t\leq 0$) & $w_1^t\frac{(b-1)e^{-w_1^t}}{C_b}+w_2^t(-r_t)$ \\
				\hline
				3& $w_3^t\frac{e^{-w_3^t}}{C_{in}}$
			\end{tabular}
		\end{center}
		
		We are now ready to prove the main two theorem statements. First, we show that $w_3^t$ is decaying to 0. Since $a>b\geq 2$, we have that $a-1\geq 2$. Then since $C_{in}\geq (a-1)e^{-w_3^t}$, we get
		$$
		w_3^{t+1}=w_3^t \frac{e^{-w_3^t}}{C_{in}}\leq w_3^t \frac{1}{a-1}.
		$$
		This gives $w_3^{t+1}\leq [a (a-1)^t]^{-1}$, which tends to 0 exponentially quickly as $t\rightarrow\infty$.
		
		Now we wish to show that $w_1^t>w_2^t$ and $w_1^t>w_3^t$ for all $t\geq 1$. From the proof above, we know that $w_3^t<2w_2^t$. We will also need the inequalities
		\begin{equation}
			\label{eq:thm21-intermed1}
			ae^{-w_2^t}-be^{-w_1^t}> 0
		\end{equation}
		which holds from $a>b$ and the induction hypothesis $w_1^t>w_2^t$; and
		\begin{align}
			&\frac{ae^{-w_2^t}}{C_b}> r_t \label{eq:thm21-intermed2}\\
			\Leftrightarrow\quad& C_{in}ae^{-w_2^t}-C_{in}C_b r_t> 0\notag\\
			\Leftrightarrow\quad& 2ae^{-2w_2^t}+a(a-1)e^{-w_2^t-w_3^t}-(a-1)(b-1)e^{-w_1^t-w_3^t}>0,\notag
		\end{align}
		which holds since $e^{-w_2^t}>e^{-w_1^t}$ and $a>b-1$. Note that this implies 
		$$\frac{ae^{-w_2^t}}{C_b}\geq \frac{a-1}{2}\frac{e^{-w_3^t}}{C_{in}}\geq \frac{e^{-w_3^t}}{C_{in}}.$$
		
		We first show $w_1^{t+1}>w_3^{t+1}$.
		\begin{align*}
			w_1^{t+1}-w_3^{t+1}&= w_1^t \frac{ae^{-w_2^t}-e^{-w_1^t}}{C_b}+2w_2^t \frac{ae^{-w_2^t}}{C_b}- w_3^t\frac{e^{-w_3^t}}{C_{in}}\\
			&\geq w_1^t\frac{(b-1)e^{-w_1^t}}{C_b}>0,
		\end{align*}
		using Equation~(\ref{eq:thm21-intermed1}), $w_3^t<2w_2^t$, and the implication of Equation~(\ref{eq:thm21-intermed2}) just above.
		
		Since Equation~(\ref{eq:thm21-intermed1}) clearly shows that the coefficient of $w_1^t$ in $w_1^{t+1}-w_2^{t+1}$ will be nonnegative in any of the three cases (it is the same in all of them), we argue as follows:
		$$
		w_1^{t+1}-w_2^{t+1}\geq \begin{cases}
			2w_2^t\frac{ae^{-w_2^t}}{C_b}-w_2^t r_t - w_3^t\frac{e^{-w_2^t}}{C_b}&c_t,r_t\geq0\\
			2w_2^t\frac{ae^{-w_2^t}}{C_b}+w_2^t r_t - w_3^t(a-1)c_t& c_t\geq 0, r_t\leq 0\\
			2w_2^t\frac{ae^{-w_2^t}}{C_b}+w_2^t r_t& c_t,r_t\leq 0.
		\end{cases}
		$$
		
		The quantity in the first subcase may be bounded below as follows, using $w_3^t<2w_2^t$:
		$$
		w_2^t\left(2\frac{ae^{-w_2^t}}{C_b} - r_t\right)-w_3^t\frac{e^{-w_2^t}}{C_b}\geq w_2^t\left((a-2)\frac{e^{-w_2^t}}{C_b}+ \frac{ae^{-w_2^t}}{C_b}-r_t \right).
		$$
		This lower bound is always positive by Equation~(\ref{eq:thm21-intermed2}).
		
		For the second subcase, observe that $(a-1)c_t=r_t+\frac{e^{-w_2^t}}{C_b}\leq \frac{e^{-w_2^t}}{C_b}$, so we may lower bound the quantity there (again using $w_3^t<2w_2^t$) by
		$$
		w_2^t\left(\frac{ae^{-w_2^t}}{C_b}+\frac{(a-1)e^{-w_3^t}}{C_{in}}\right)-w_3^t\frac{e^{-w_2^t}}{C_b}\geq w_2^t\left(\frac{(a-2)e^{-w_2^t}}{C_b}+\frac{(a-1)e^{-w_3^t}}{C_{in}}\right)>0.
		$$
		
		The third subcase is simplest of all, since the quantity above simplifies to 
		$$
		w_2^t\left(\frac{ae^{-w_2^t}}{C_b}+\frac{(a-1)e^{-w_3^t}}{C_{in}}\right)\geq0.
		$$
		
		This completes the proof.

	\end{proof}

	%%% Data
	\section{Data sets}\label{apx:data}
	
	\paragraph{Real-world data.}
	We consider both synthetic networks generated from the random graph models (see sec.~\ref{sec:line-graph-curv}), and real networks\footnote{Obtained from NeuroData's open source data base~\url{https://neurodata.io/project/connectomes/}, accessed on 2 May 2022. The database hosts animal connectomes produced using data from a multitude of labs, across model species, using different modalities. }, which characterize the connectomes of a set of different animal species. A connectome is a specific, cell-to-cell mapping of axonal tracts between neurons, created from cellular data obtained, for instance, via electron microscopy. Here specfically, we consider networks constructed from mixed species of cats (\citep{deReus2013cat}, ``Cat"), male C.~elegans (\citep{jarrel2012worm}, ``Worm"), and rhesus (\citep{harriger2012macaque}, ``Macaque"). 
	% \color{red}
	Table~\ref{tab:animals} provides summary statistics for the real networks in our study. 
	\begin{table}[htb]
		\centering
		\begin{tabular}{c|cccccc}
			\hline
			& $n$ & $|E|$ & $|\mathcal{E}|$ & $d$ & $\sigma_d/d$ & Modularity \\
			\hline
			% Mouse   & $195$ & $214$ &  &  &  &  \\
			Cat     & $65$ & $730$ & $18859$ & $22.46$ & $0.44$ & $0.27$ \\
			Worm & $269$ & $2902$ & $93893$ & $21.58$ & $0.74$ & $0.36$ \\
			Macaque & $242$ & $3054$ & $114991$ & $25.24$ & $0.73$ & $0.36$\\
			\hline
		\end{tabular}
		\caption{Summary statistics of the real networks.}
		\label{tab:animals}
	\end{table}
	\begin{figure}[h]
		\centering
		\begin{tabular}{ccc}
			\includegraphics[width=.3\textwidth]{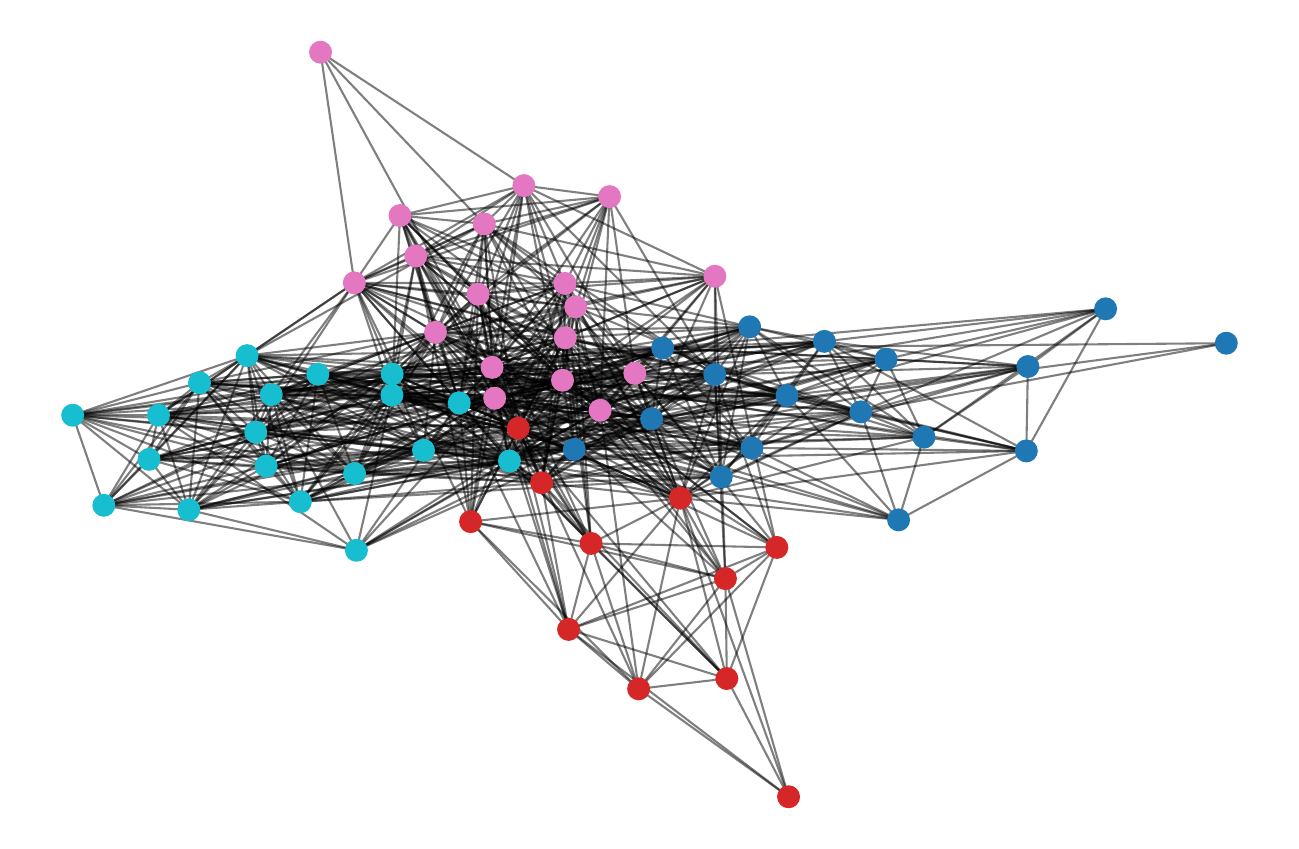} & \includegraphics[width=.3\textwidth]{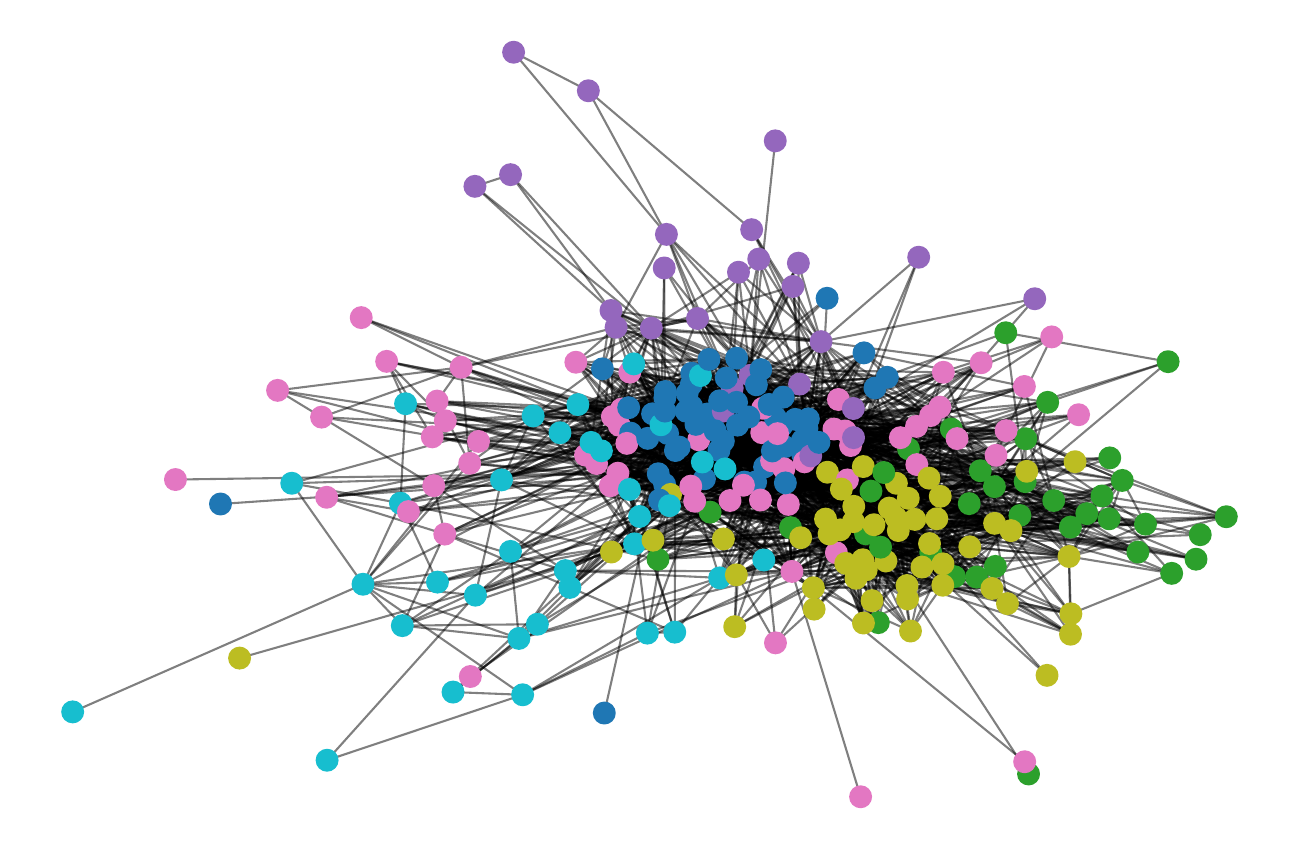} & \includegraphics[width=.3\textwidth]{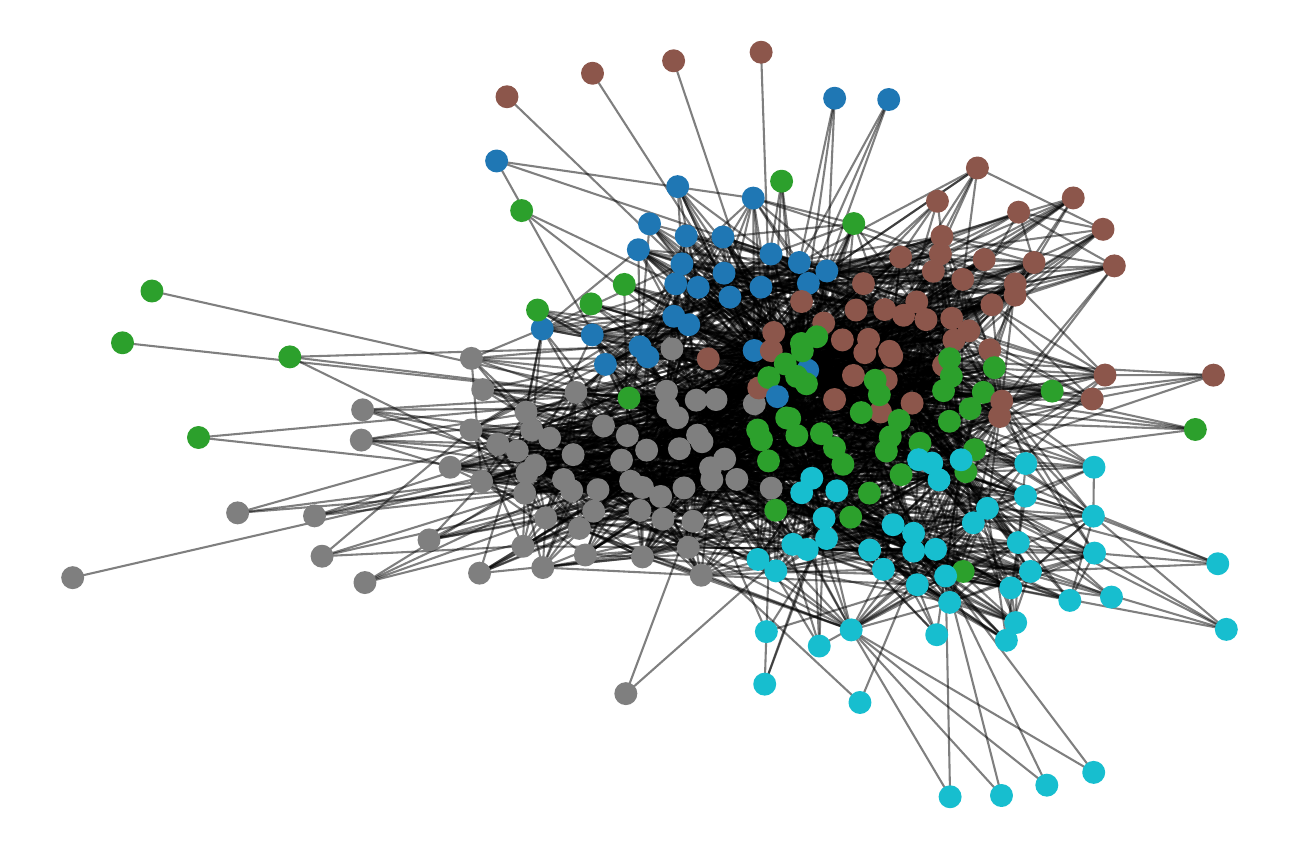}
		\end{tabular}
		\caption{The real networks, ``Cat" (left), ``Worm" (middle) and ``Macaque" (right), where we apply the Louvain algorithm to detect communities and vertices in different communities are shown in different colors.}
		\label{fig:graphs_real}
	\end{figure}
	
	\paragraph{Random Geometric Graphs.}
	In our experiments, we consider Random Geometric Graphs (RGGs) which incorporate the spatial perspective~\citep{Penrose_RGG_2003}. For completeness, we define the RGG below.
	
	\begin{defn}[Random Geometric Graphs]
		In this model, vertices are placed in the unit cube uniformly at random, and two vertices will be connected by an edge if the distance between the vertices is less than a previously specified value called radius. Specifically, let the unit cube be $[0,1)^{n_d}$ where $n_d$ defines its dimension, the radius be $r\in (0,1)$, and the distance function be $d(x,y),\ \forall x,y\in [0,1)^{n_d}$. Then there will be an edge between vertices $v_i$ and $v_j$ if 
		\begin{equation*}
			d(x_i, x_j) < r \; ,
		\end{equation*}
		where $x_i, x_j$ are the positions of the vertices $v_i, v_j$, respectively, in the unit cube $[0,1)^{n_d}$.
	\end{defn}
	Specifically, in our experiments, we start from the RGGs corresponding to $2$-d unit cubes.
	
	%%% Further experiments to explore FRC vs ORC %%%
	\section{Experiments}
	\subsection{Synthetic data}
	\label{app:exp-msbm}
	In Sec.~\ref{sec:exp-mmcd}, we observe that the performance of the algorithm via augmented FRC is not comparable with the performance of the algorithm via ORC, which raises the important problem of relating FRC to ORC. We find that FRC-3 has a stronger linear relationship with ORC as the graphs become denser; see Figs.~\ref{fig:exp-frcorc-L-30}, \ref{fig:exp-frcorc-L-60} and \ref{fig:exp-frcorc-L-90}. This is potentially because there are more and more triangular and quadrangular faces in the graphs when more edges are present, and these characteristics become more dominant for the ORC. Meanwhile, we find that, when $p_{in}$ is relatively small, all three variants of FRC cannot really separate the edges in $L$ between the nodes in the same community versus the others initially, which might be the reason of a relatively worse performance of the algorithm via FRC in the experiments. However, FRC-3 can separate these two types of edges when $p_{in}$ is relatively large, which indicates that the performance is expected to be better in the dense regime, where ORC might have computational issues.
	
	\subsection{Node neighborhood measures}
	We give some brief computational evidence that the choice of $\alpha$ in the node neighborhood measure as in Eq.~\eqref{eq:nodemeasure} has little impact on the performance of our curvature-based algorithms. Over a range of different choices of $\alpha$, and for two real-data networks, we see that the results are highly conserved.
	\begin{table}[!ht]
		\centering
		\begin{tabular}{c|ccccccc}
			\hline
			Data\textbackslash $\alpha$  & $0$ & $0.1$ & $0.2$ & $0.3$ & $0.4$ & $0.5$ & $0.6$\\
			\hline
			DBLP-1 & $0.54$ & $0.54$ & $0.54$ & $0.55$ & $0.54$ & $0.54$ & $0.48$\\
			DBLP-2 & $0.54$ & $0.56$ & $0.60$ & $0.60$ & $0.60$ & $0.60$ & $0.60$  \\
			\hline
		\end{tabular}
		\caption{Extended NMI of ORC-E via line graph curvature on the real data, where we choose different values of $\alpha$ in Eq.~\eqref{eq:nodemeasure}.}
		\label{tab:exp-alpha}
	\end{table}

	\begin{figure}[p]
		\centering
		\begin{tabular}{ccc}
			\includegraphics[height=.2\textheight]{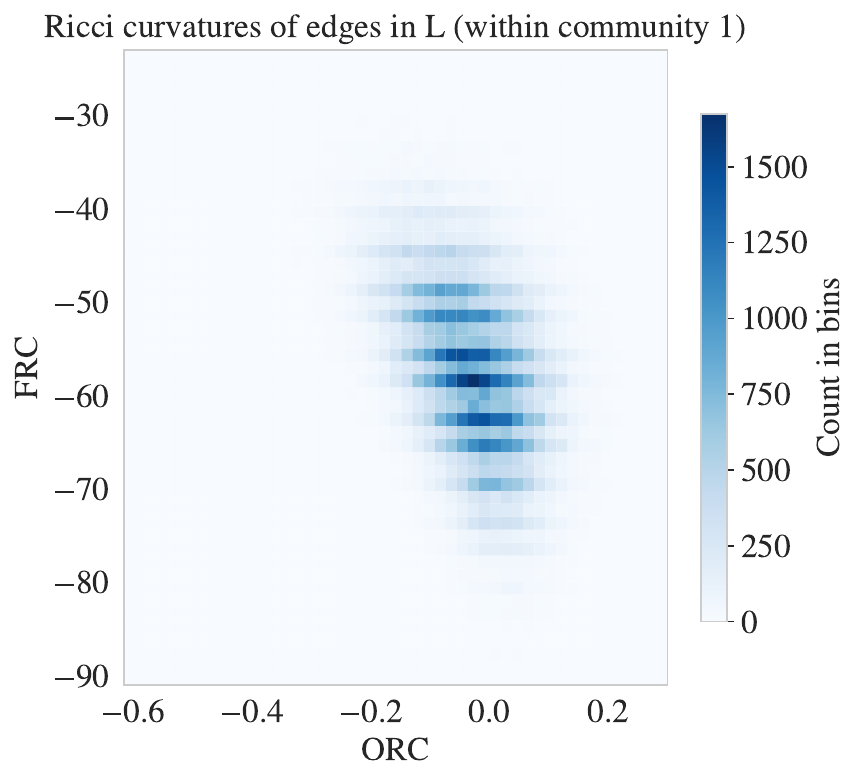} & \includegraphics[height=.2\textheight]{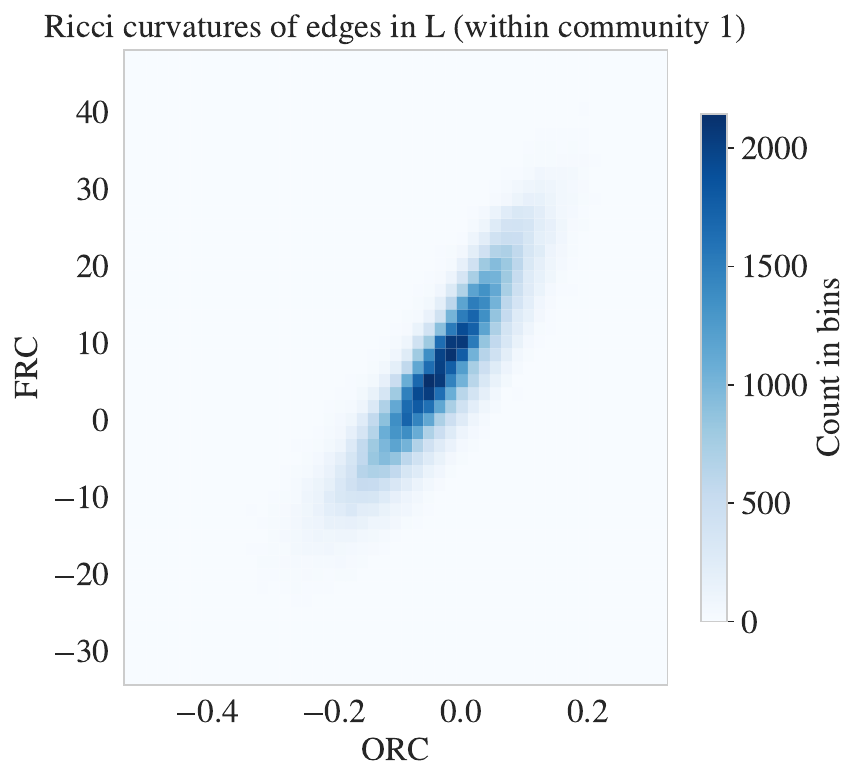} & \includegraphics[height=.2\textheight]{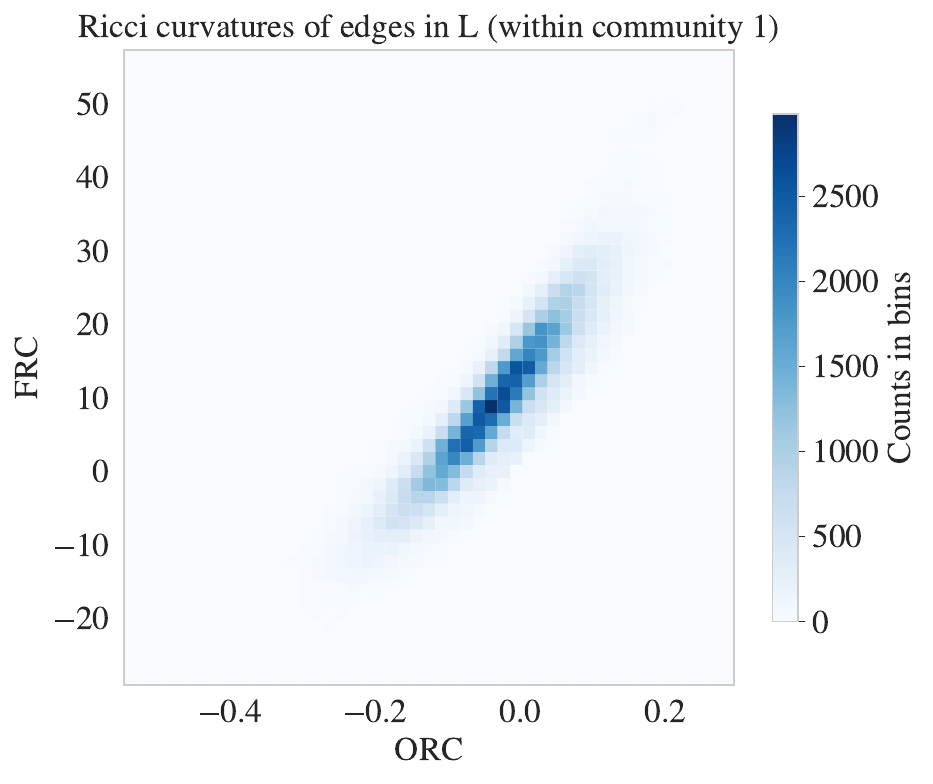}\\
			\includegraphics[height=.2\textheight]{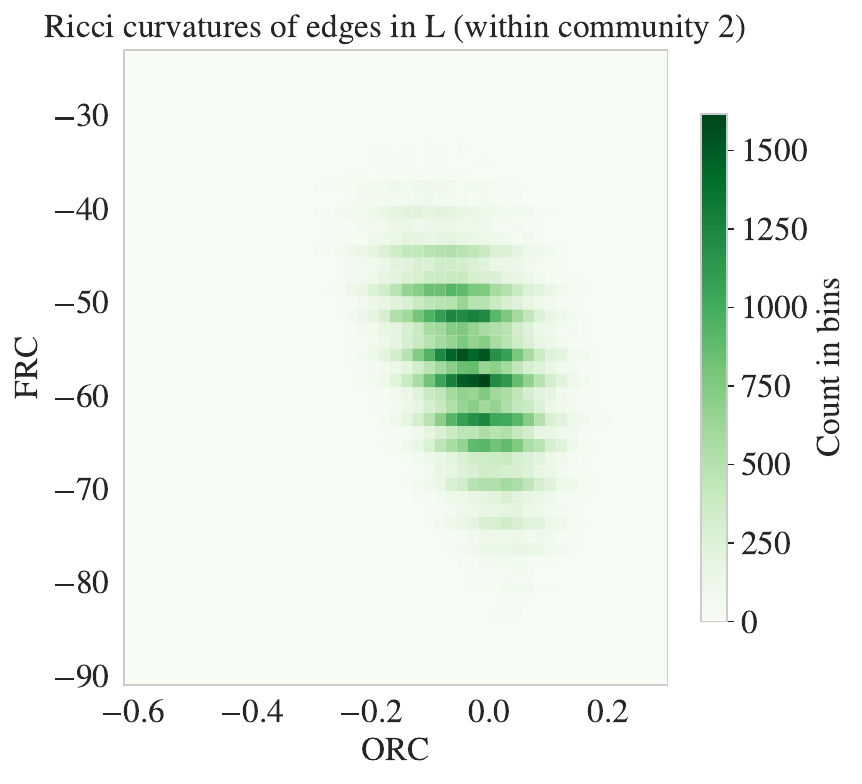} & \includegraphics[height=.2\textheight]{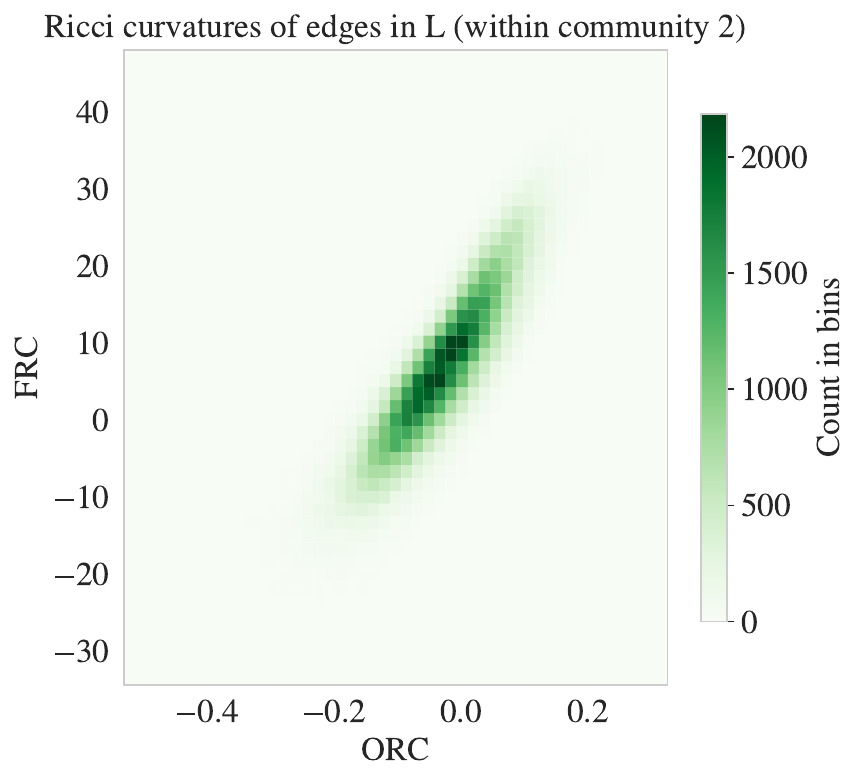} & \includegraphics[height=.2\textheight]{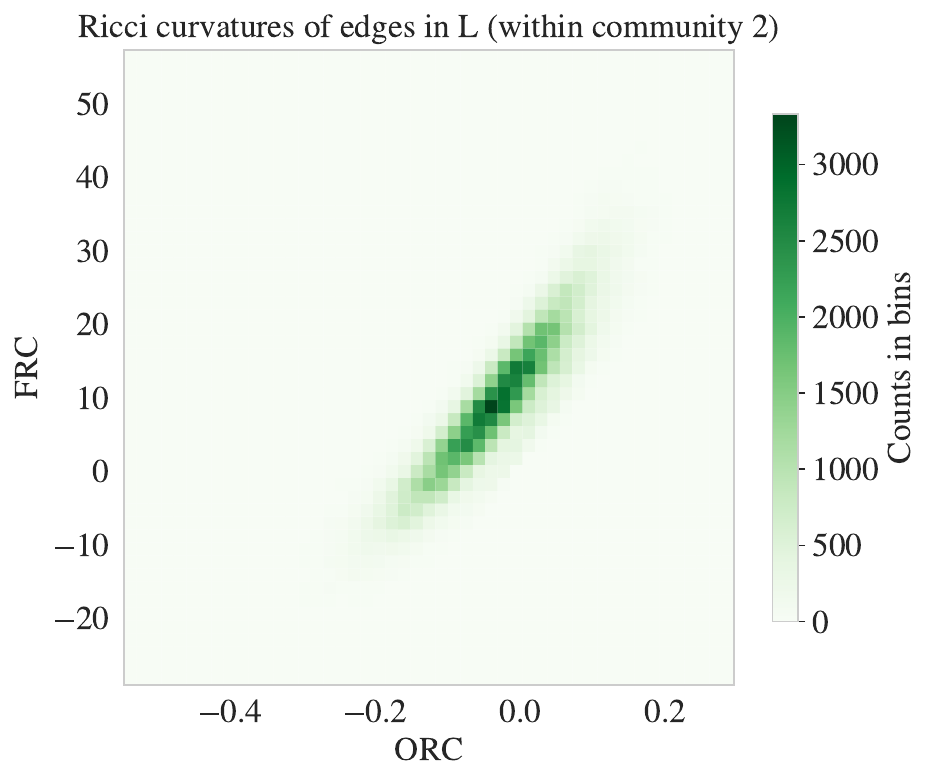}\\
			\includegraphics[height=.2\textheight]{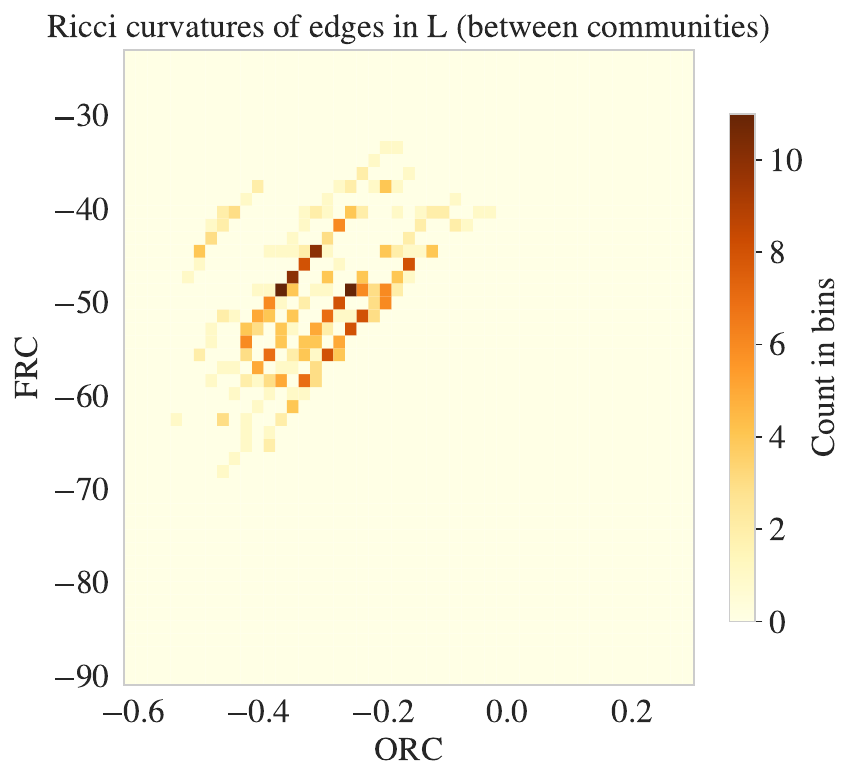} & \includegraphics[height=.2\textheight]{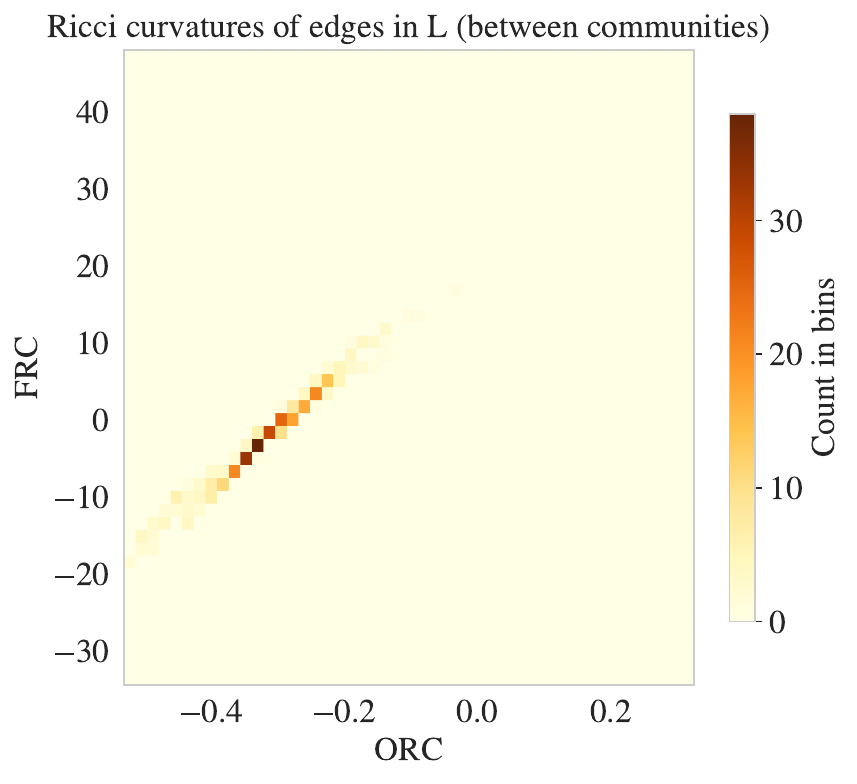} & \includegraphics[height=.2\textheight]{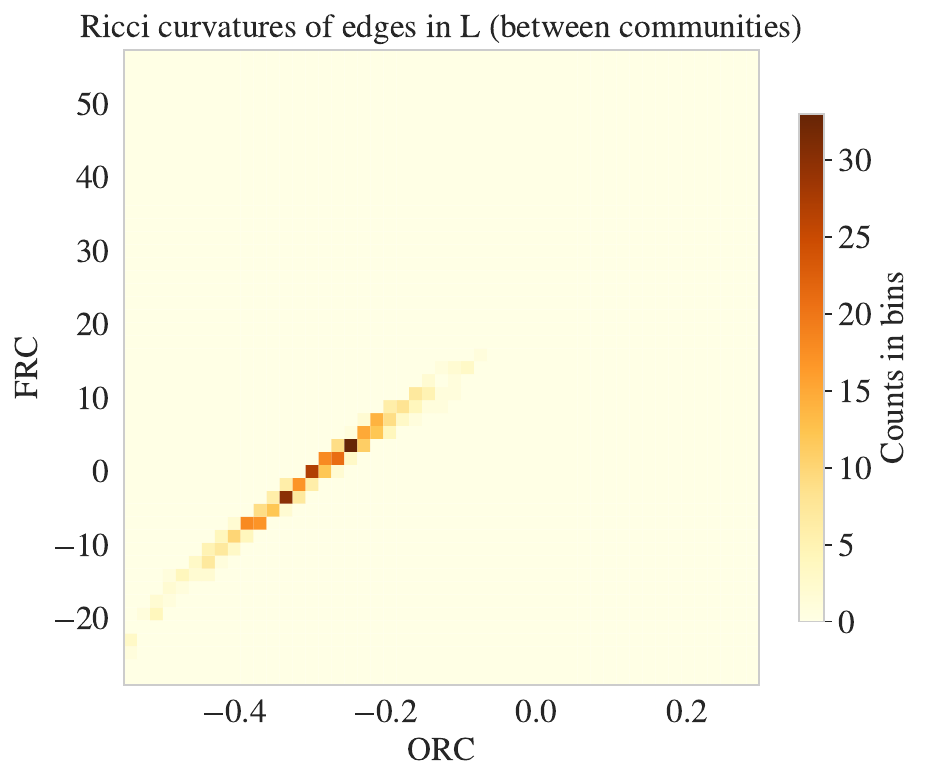}
		\end{tabular}
		\caption{2-d histograms of the ORC-E versus the FRC-1 (left), FRC-2 (middle) and FRC-3 (right) for edges in $L$, where the networks are obtained from the planted two-block MMB of size $n=300$, $p_{in} = 0.1$, $p_{out} = 0$, $n_o = 1$ node of mixed membership, and we generate $n_s=10$ networks for the results. (Row 1 and 2: edges between nodes within communities 1, in blue and 2, in green, respectively, in $L$; Row 3: edges between nodes in the different communities, in red).}
		\label{fig:exp-frcorc-L-30}
	\end{figure}
	
	\begin{figure}[p]
		\centering
		\begin{tabular}{ccc}
			\includegraphics[height=.2\textheight]{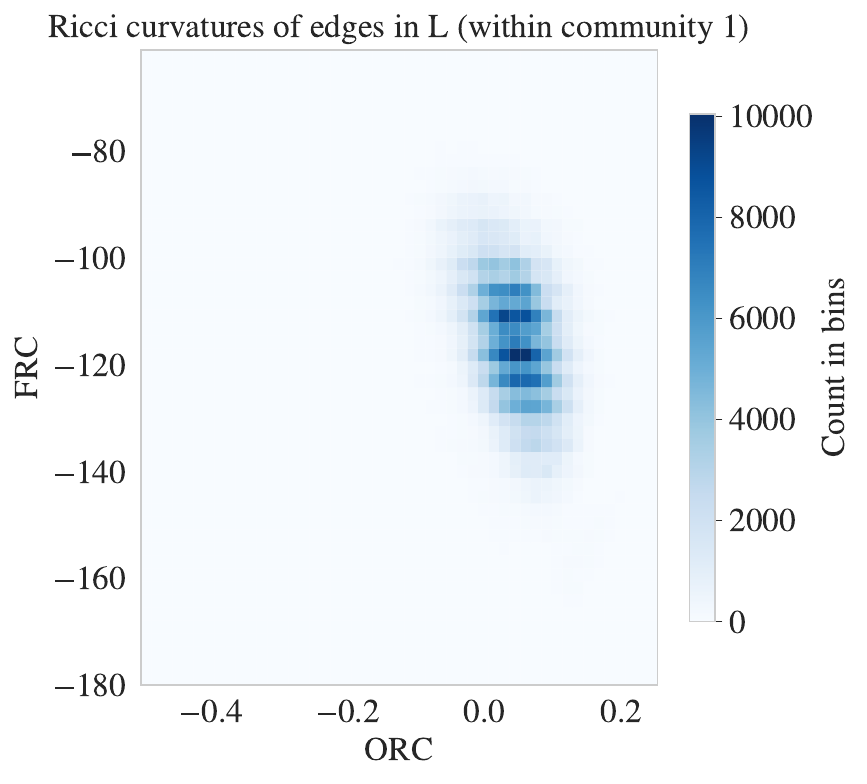} & \includegraphics[height=.2\textheight]{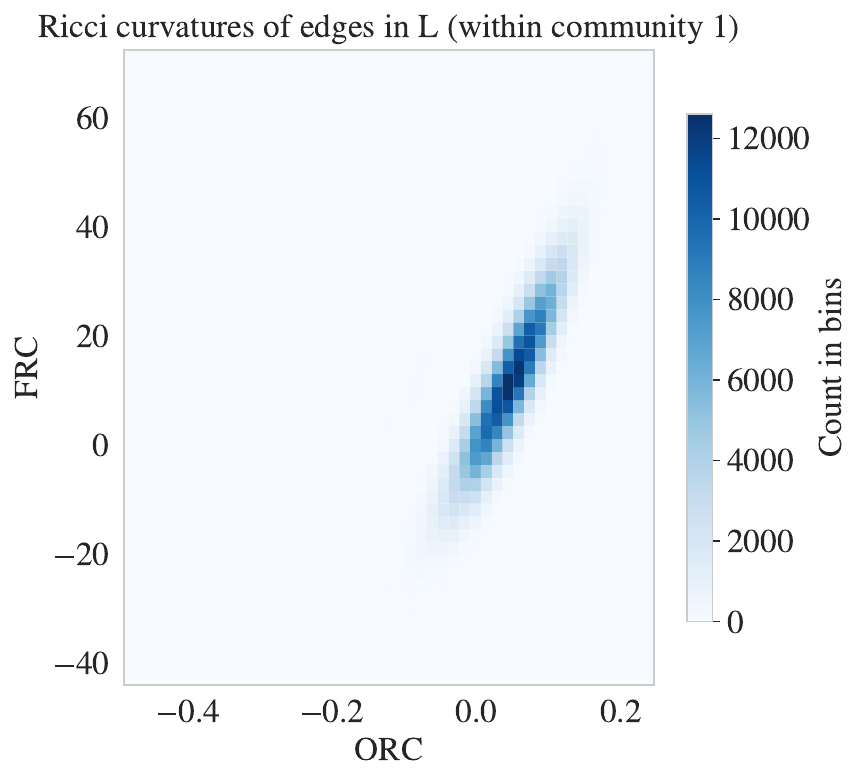} & \includegraphics[height=.2\textheight]{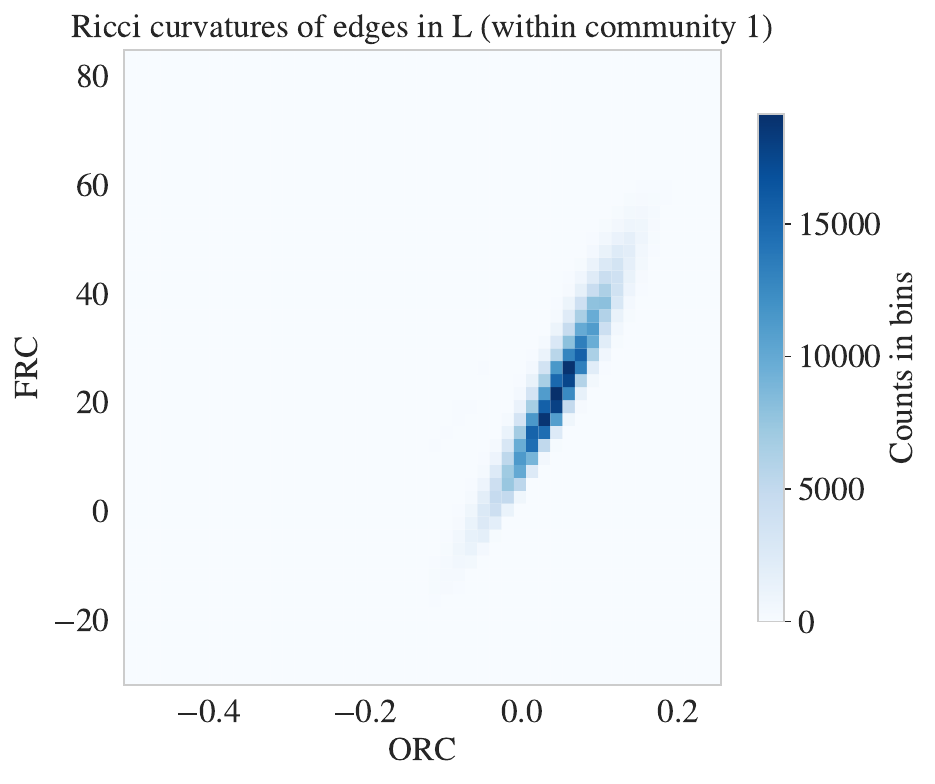}
			\\
			\includegraphics[height=.2\textheight]{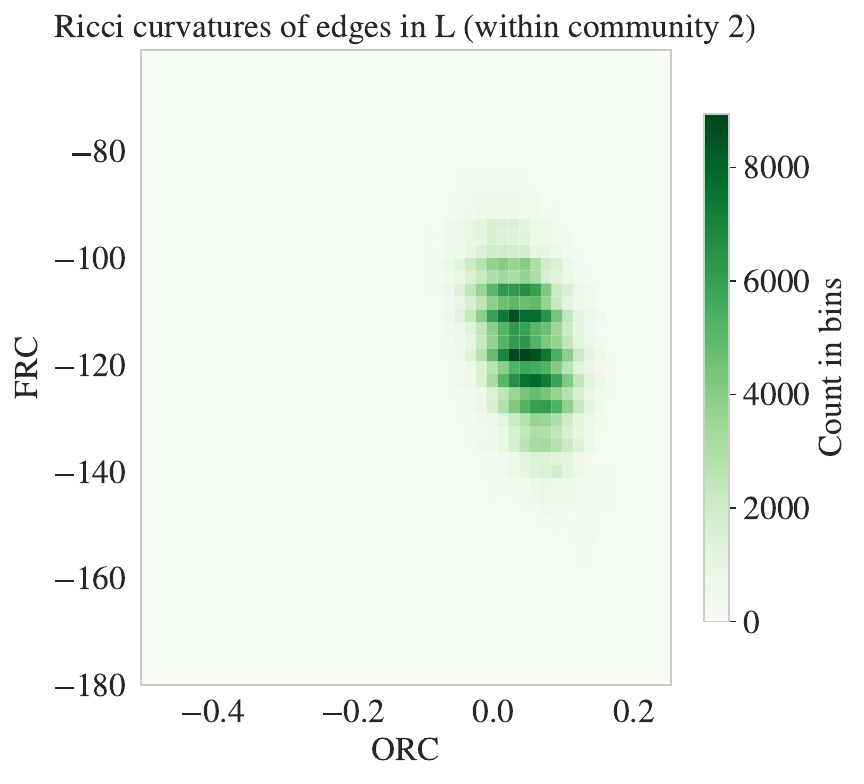} & \includegraphics[height=.2\textheight]{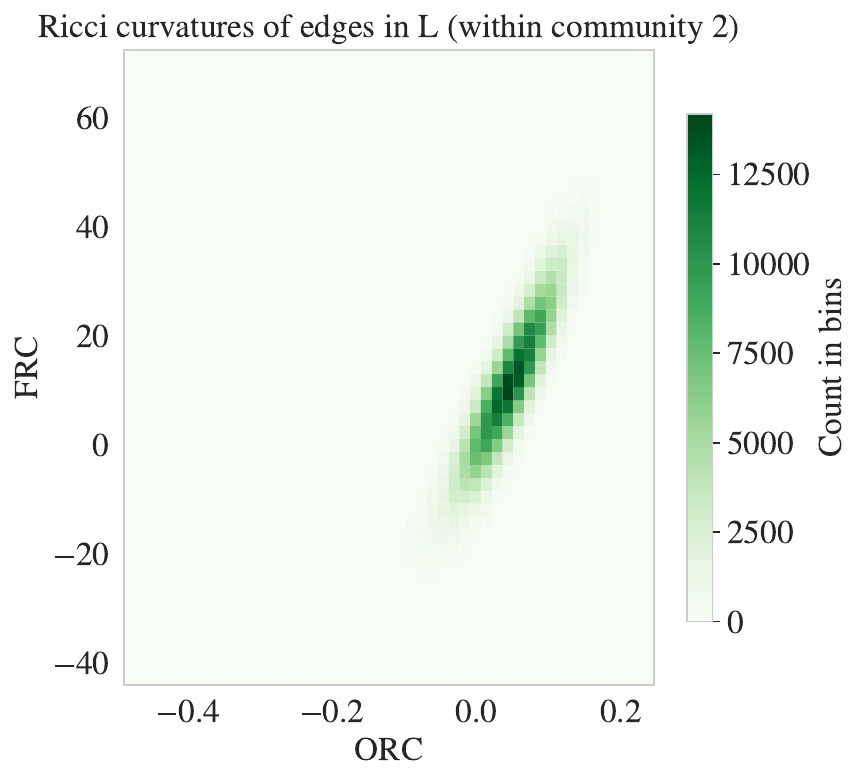} & \includegraphics[height=.2\textheight]{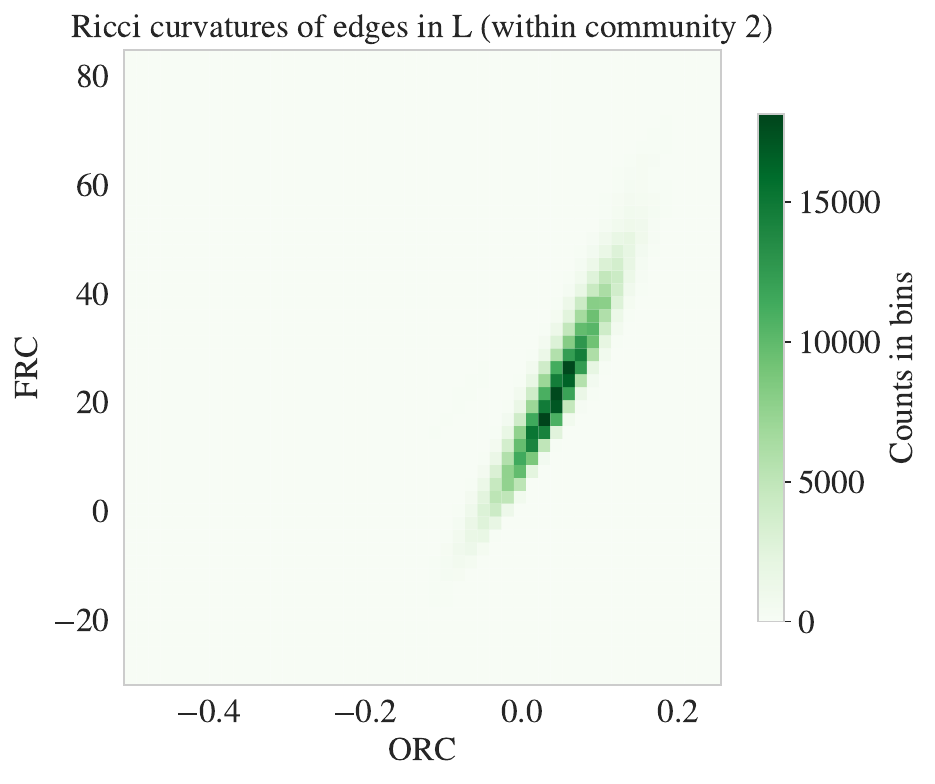}
			\\
			\includegraphics[height=.2\textheight]{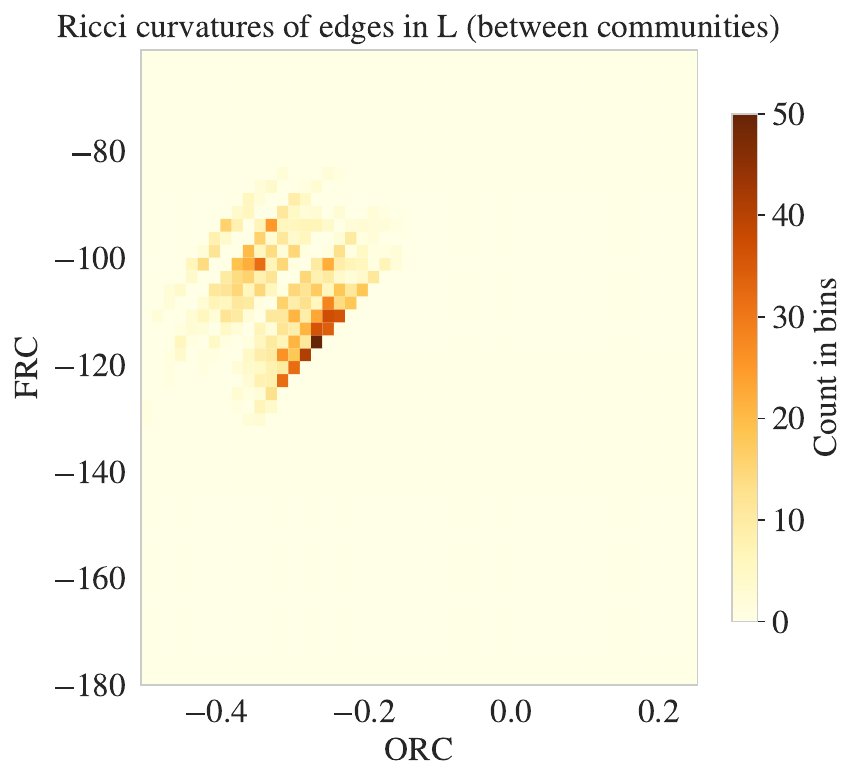} & \includegraphics[height=.2\textheight]{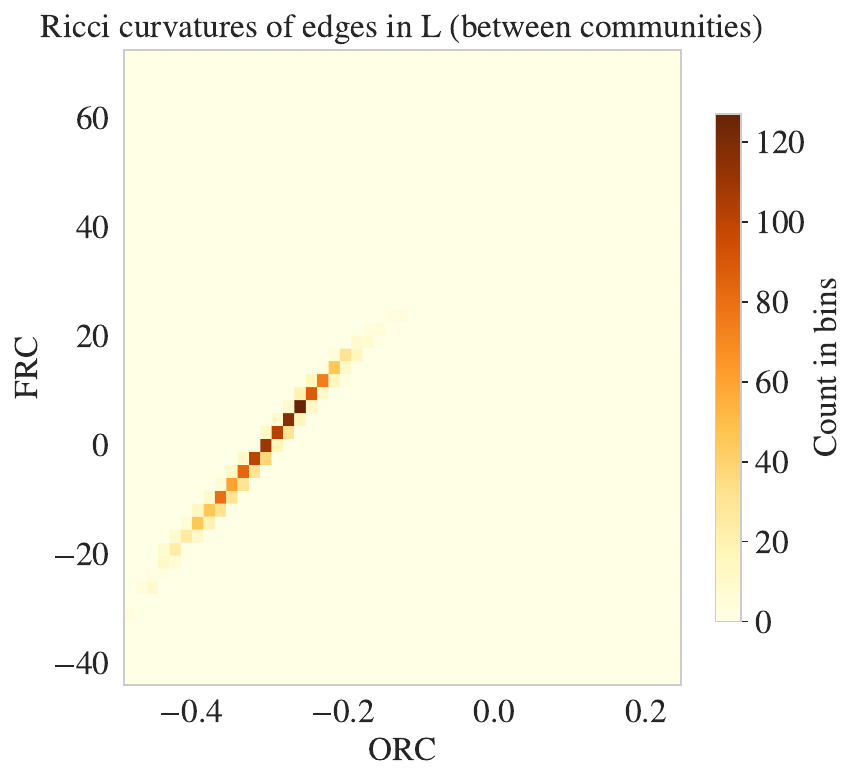} & \includegraphics[height=.2\textheight]{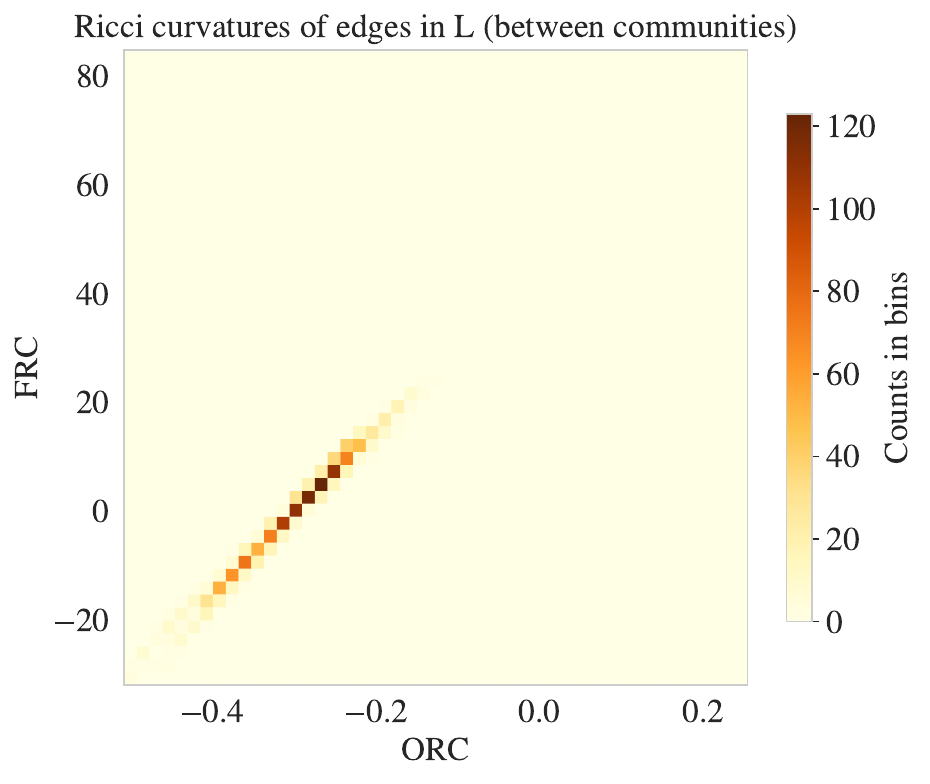}
		\end{tabular}
		\caption{2-d histograms of the ORC-E versus the FRC-1 (left), FRC-2 (middle) and FRC-3 (right) for edges in $L$, where the networks are obtained from the planted two-block MMB of size $n=300$, $p_{in} = 0.2$, $p_{out} = 0$, $n_o = 1$ node of mixed membership, and $n_s=10$.}
		\label{fig:exp-frcorc-L-60}
	\end{figure}
	
	\begin{figure}[p]
		\centering
		\begin{tabular}{ccc}
			\includegraphics[height=.2\textheight]{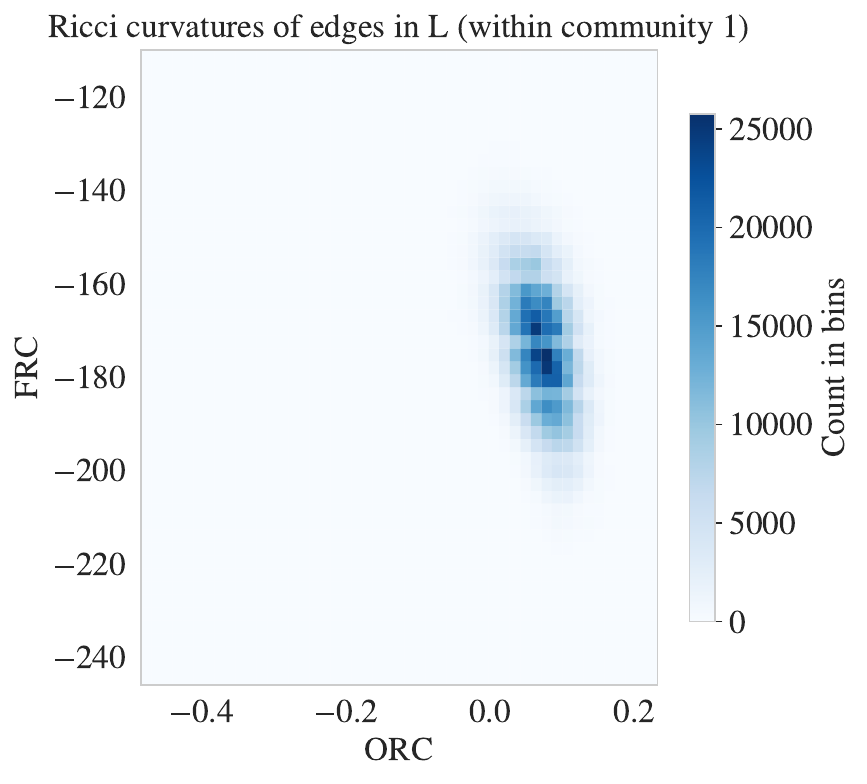} & \includegraphics[height=.2\textheight]{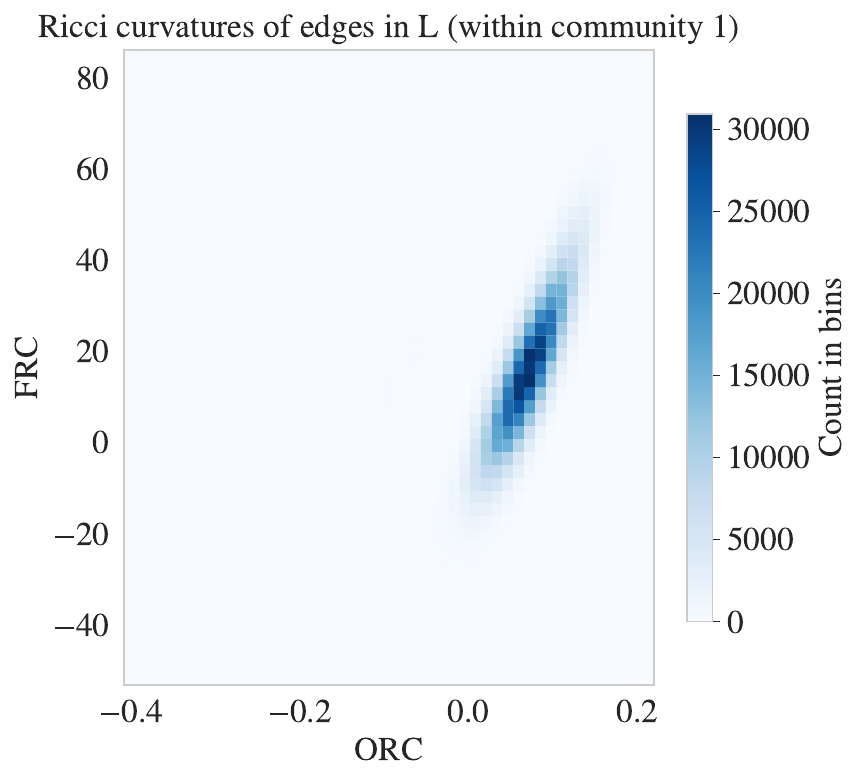} & \includegraphics[height=.2\textheight]{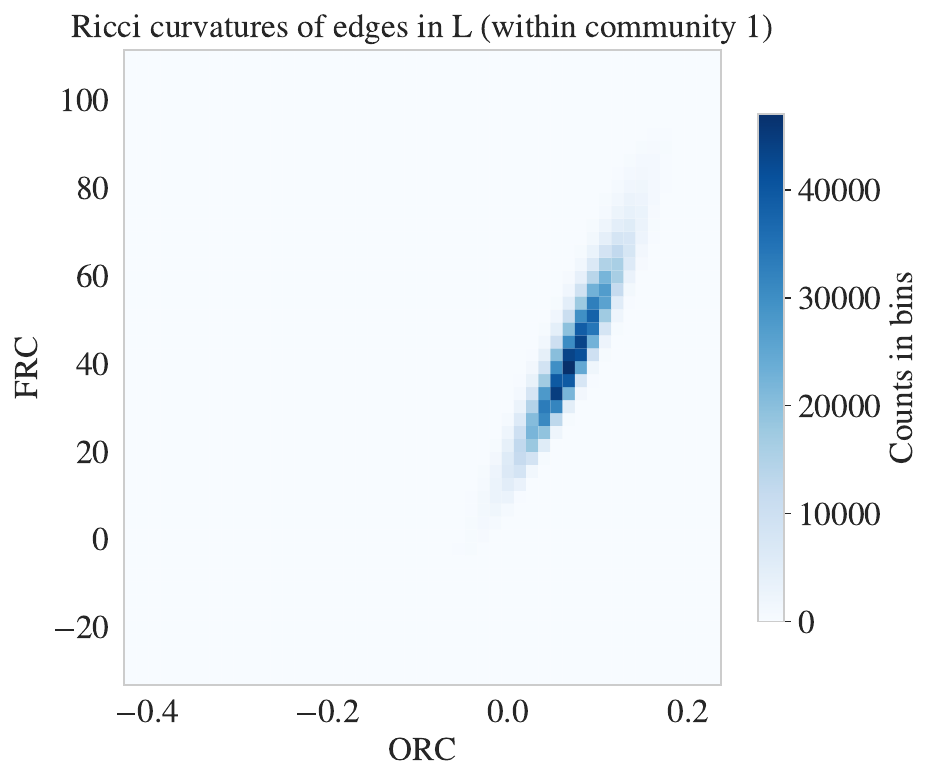}\\
			\includegraphics[height=.2\textheight]{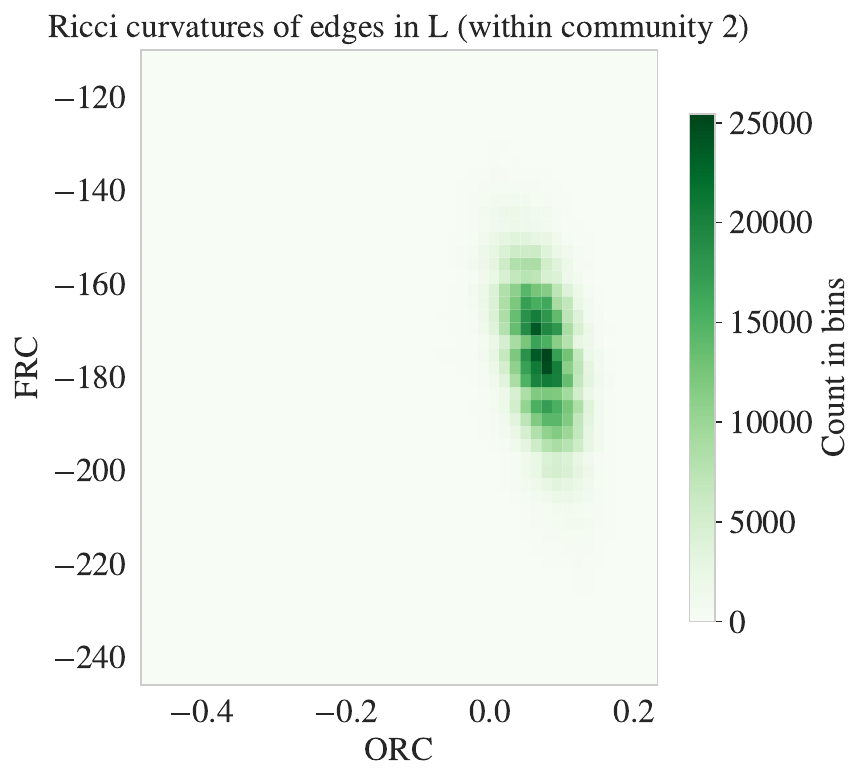} & \includegraphics[height=.2\textheight]{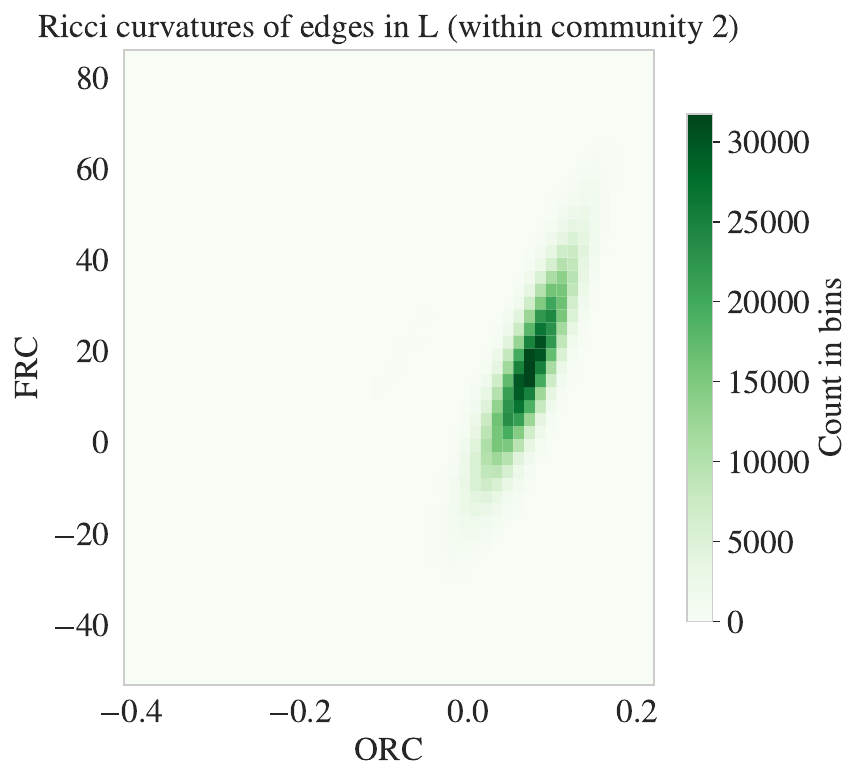} & \includegraphics[height=.2\textheight]{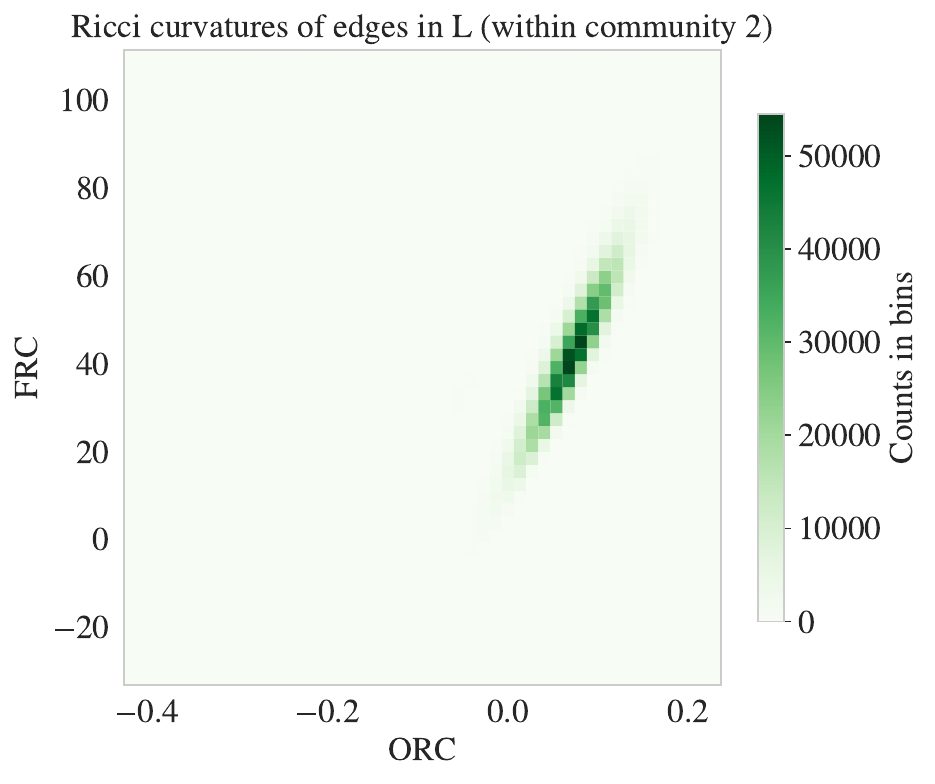}\\
			\includegraphics[height=.2\textheight]{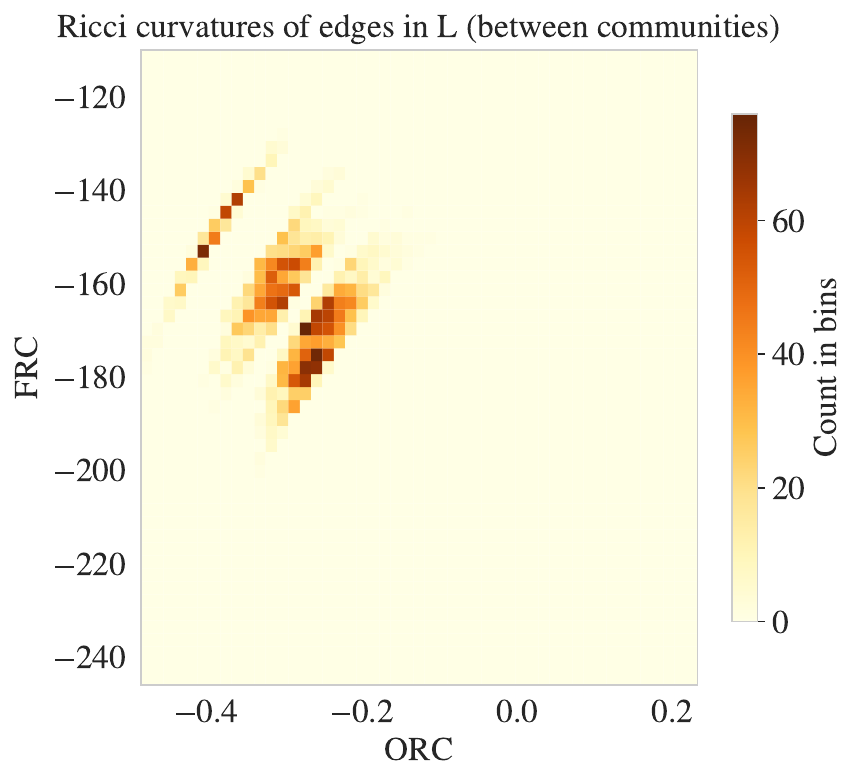} & \includegraphics[height=.2\textheight]{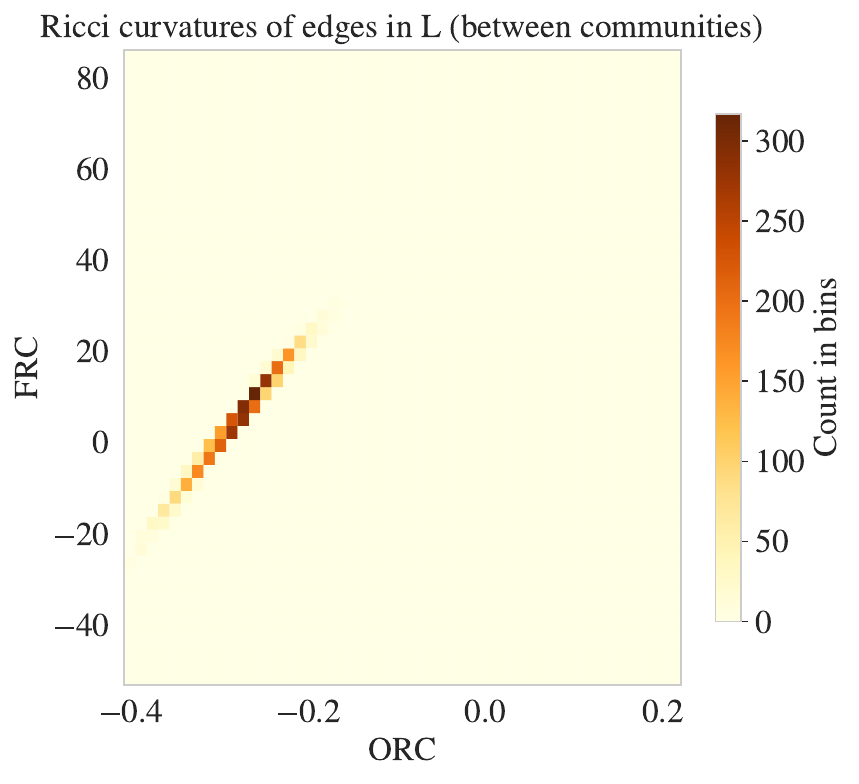} & \includegraphics[height=.2\textheight]{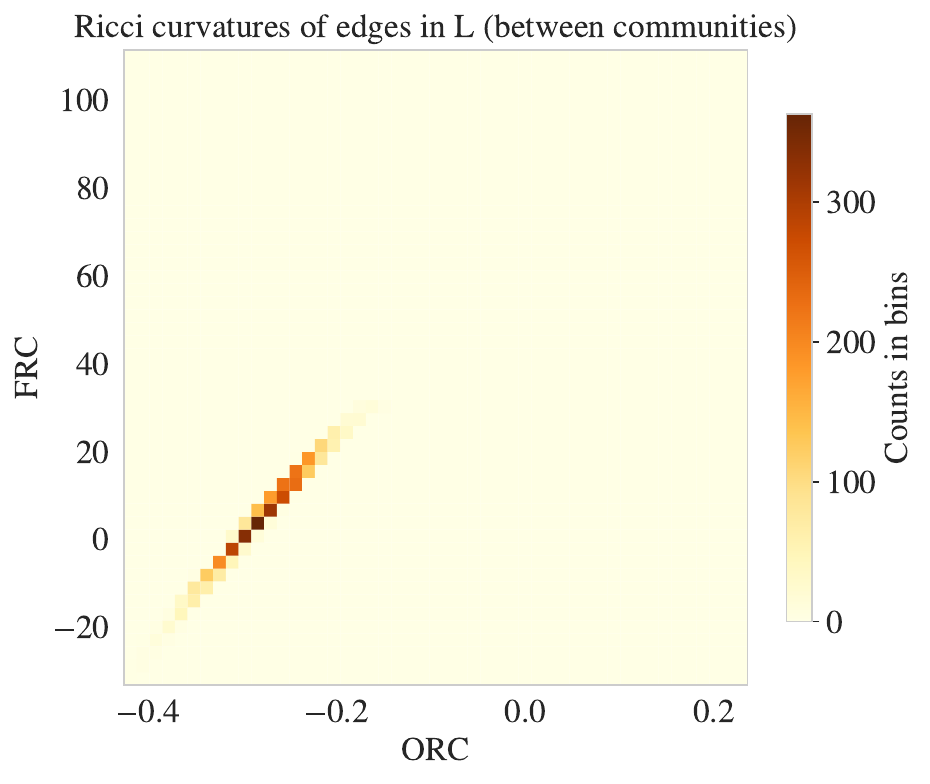}
		\end{tabular}
		\caption{2-d histograms of the ORC-E versus the FRC-1 (left), FRC-2 (middle) and FRC-3 (right) for edges in $L$, where the networks are obtained from the planted two-block MMB of size $n=300$, $p_{in} = 0.3$, $p_{out} = 0$, $n_o = 1$ node of mixed membership, and $n_s=10$.}
		\label{fig:exp-frcorc-L-90}
	\end{figure}
	
%%%%%%%%%%%%%%%%%%
\newpage
%\vskip 0.2in
\bibliographystyle{plainnat}
\bibliography{ref}

\end{document}